\documentclass[%
 reprint,
 superscriptaddress,
 amsmath,amssymb,
 aps,
 pra,
 showkeys,
]{revtex4-2}

\usepackage{graphicx}%
\usepackage{xcolor}
\usepackage{tikz}%
\usepackage{array}[=2016-10-06]%
\usepackage{dcolumn}%
\usepackage{bm}%
\usepackage{hyperref}%
\usepackage{physics} %
\usepackage{amsthm}
\usepackage{xspace}

\newtheorem{theorem}{Theorem}[section]
\newtheorem{corollary}{Corollary}[theorem]
\newtheorem{lemma}[theorem]{Lemma}
\newtheorem{proposition}[theorem]{Proposition}

\newtheorem{definition}{Definition}[section]

\DeclareMathOperator{\EX}{\mathbb{E}}%

\newcommand{\PM}{P\&M\xspace}

\newcommand{\dbit}[1]{\mathrm{#1}}
\newcommand{\dstr}[1]{\mathbf{#1}}
\newcommand{\numd}{n_\text{d}}
\newcommand{\nume}{n_\text{e}}
\newcommand{\nums}{n_\text{s}}
\newcommand{\hmarg}{\tau}
\newcommand{\cthr}{c_\text{thr}}
\newcommand{\aset}{\Omega}
\newcommand{\pknown}{p_\text{k}}
\newcommand{\ECprob}{\epsilon_\text{EC}}
\newcommand{\keydb}{\dstr{k}_\text{db}}
\newcommand{\keysh}{\dstr{k}_\text{sh}}
\newcommand{\num}{\ensuremath{n}}
\newcommand{\pidx}{\ensuremath{m}}
\newcommand{\ebit}{e_b}
\newcommand{\ephase}{e_p}
\newcommand{\keyrate}{\ensuremath{r}}
\newcommand{\Psubstr}[2][]{\dstr{s}_{\text{#2}#1}}

\newcommand{\CSSprob}{\epsilon_\text{thr}}
\newcommand{\Secret}{\epsilon_\text{sec}}
\newcommand{\Correct}{\epsilon_\text{corr}}
\newcommand{\SecretA}{\epsilon_\text{A}}
\newcommand{\SecretB}{\epsilon_\text{B}}
\newcommand{\SecretS}{\epsilon_\text{s}}
\newcommand{\CorrectS}{\epsilon_\text{s}}
\newcommand{\Abortprob}{p_\text{abort}}
\newcommand{\USDprob}{p_\text{USD}}

\newcommand{\pUSDN}[1]{{\USDprob^{(#1)}}}
\newcommand{\pUSDeff}{{\USDprob^{\text{eff}}}}
\newcommand{\pUSDsift}{{\USDprob^{\text{sift}}}}
\newcommand{\pUSDplus}{{\USDprob^{+}}}

\usepackage{letltxmacro}
\LetLtxMacro{\amstextorig}{\text}
\DeclareRobustCommand{\text}[1]{\amstextorig{\normalfont #1}}

\begin{document}

\title{One-Way Quantum Symmetric Private Information Retrieval Protocol From A Single Database Server Using NISQ Devices}

\author{Xiang Zou}
\email{xiang.zou@mail.utoronto.ca}
\thanks{Both authors contributed equally to this work.}
\affiliation{%
 Department of Physics, University of Toronto, 60 St. George Street, Toronto, ON, M5S 1A7, Canada
}

\author{H. F. Chau}
\email{hfchau@hku.hk}
\thanks{Both authors contributed equally to this work.}
\affiliation{%
 Department of Physics, The University of Hong Kong, Pokfulam Road, Hong Kong Island, Hong Kong
}%

\date{\today}

\begin{abstract}
Symmetric private information retrieval (SPIR) is the primitive for an inquirer to obtain the information of one specific bit only without leaking any information on the bit index to the database owner.  It is a useful cryptographic primitive but is very difficult to construct.  With a single server this is impossible for parties with unrestricted quantum power.  We show that it becomes possible, with information-theoretical security, when the server and the user hold only noisy intermediate-scale quantum (NISQ) devices without long-term quantum memory and measure each qubit on arrival, while the eavesdropper remains all-powerful; this is a stricter form of the memory restriction of the bounded- and noisy-quantum-storage models.  The server sends BB84 states one way; the user discriminates them unambiguously and so knows a random subset of the bits.  After a joint random permutation into blocks, the user announces the conclusive locations of one privately chosen block, and the server hashes and takes a random parity of these locations in every block; the permutation makes it exponentially unlikely that one set of locations fits two blocks.  We prove secrecy against the eavesdropper, database privacy through an explicit size rule for the announced set, and user privacy against a server who prepares the prescribed BB84 states and announces truthfully, exactly for single photons and up to a small multi-photon term over a tolerated noise channel.  The only addition to BB84 equipment is a passive linear-optical measurement with one ancillary mode.  For a $10^{4}$-bit database and database privacy $10^{-6}$, blocks of about a thousand qubits suffice over a tolerated noise channel for a user who applies the prescribed measurement; certifying an arbitrary memoryless user and committing to the permutation need larger auxiliary transmissions, which we quantify with finite keys.  Database privacy extends to decoy-state weak coherent pulses, and simulations show that the bound is nearly tight.
\end{abstract}

\keywords{Symmetric Private Information Retrieval, Quantum Key Distribution, Quantum Cryptography}%
\maketitle

\section{Introduction}

Secure access to remote databases is a central primitive in both classical and quantum cryptography~\cite{QPQ1, QPQ2, QPQ3, QSPIR_multiple_database1, QSPIR_multiple_database2, Insecurity_Quantum_Oblivious_transfer1}, with applications ranging from cloud storage and biometric identification to distributed ledgers and secure multiparty computation. In this context, private information retrieval (PIR) allows a client to retrieve partial data from a database held by a server without revealing which item was requested to the database holder, thereby protecting the privacy of the client's query. Symmetric private information retrieval (SPIR) strengthens this notion by additionally requiring database privacy.  More precisely, the client must not learn anything about the database beyond the requested item, even if the client is computationally unbounded~\cite{PIR2, QSPIR_theory1, QSPIR_multiple_database1, SPIR1, SPIR2, SPIR3}.

Classical PIR and SPIR have been extensively studied~\cite{PIR1, PIR2, PIR3, SPIR1, SPIR2, SPIR3}. In the information-theoretic setting, it is well known that both single-server PIR and SPIR suffer from strong limitations in classical communication. If only one database server is available and no computational assumptions are made, then any PIR scheme with perfect user privacy must communicate at least as many bits as the database holds, so that downloading the whole database is optimal~\cite{PIR1}, and information-theoretic SPIR with a single server is impossible~\cite{PIR2}.

To obtain nontrivial communication complexity and database privacy, classical SPIR protocols assume either multiple non-communicating database servers that share randomness~\cite{PIR2, SPIR1, SPIR2, SPIR3}, or they rely on computational hardness assumptions~\cite{KO97_PIR, NaorPinkas99}.
These multi-server SPIR protocols can achieve sublinear communication complexity~\cite{PIR2, SPIR3}, optimal download rates~\cite{SPIR1, SPIR2}, and strong security guarantees, but they require architectural assumptions, such as multiple independent data centers and pre-distributed shared randomness, that may be difficult or costly to realize in practice, if not impossible.  Besides, the independent data center assumption may not be justified in the real world.

Quantum communication offers new tools to revisit PIR and SPIR. Genuine quantum protocols can outperform classical ones in certain regimes. For example, quantum symmetrically-private information retrieval (QSPIR) schemes achieve sublinear communication with multiple servers and no shared randomness
while keeping the user's index private and the database private against honest users~\cite{QSPIR_theory1}.
A parallel line of work on quantum private queries (QPQ) has proposed cheat-sensitive protocols~\cite{QPQ1, QPQ2, QPQ3} in which a single quantum server can answer private queries to a classical database with strong user privacy and bounded database leakage per query, at the price of relaxing full information-theoretic guarantees: server misbehavior is detected rather than prevented.
More recently, quantum-key-distribution (QKD) assisted SPIR protocols~\cite{QSPIR_multiple_database1} and an experimental demonstration~\cite{Wang_SPIR_experiment} (see also the commentary in Ref.~\cite{QSPIR_multiple_database2}) have shown that quantum networks can be leveraged to distribute the secret keys and shared randomness needed for multi-database SPIR. Nevertheless, these schemes still fundamentally rely on multiple non-colluding databases or trust in specific network assumptions.

Despite this progress, the status of fully symmetric, information-theoretic PIR from a single classical database remains unclear in the quantum setting. Existing negative results show that, under natural correctness and privacy requirements, perfect single-database SPIR is impossible even when quantum communication is allowed~\cite{Insecurity_Quantum_Oblivious_transfer1}, mirroring the classical impossibility of information-theoretic single-server SPIR.

Is it possible to design a provably secure QSPIR scheme for single database server under a different set of reasonable security assumptions?  We give an affirmative answer to this question in this work.
Specifically, we report a one-way quantum-communication protocol that realizes SPIR between a single client and a single classical database server in an information-theoretic setting with statistical security.  Here our security model is deliberately asymmetric. The eavesdropper Eve is all-powerful, unbounded in quantum computation and quantum storage, and security against her should be made information-theoretic. Alice and Bob, in contrast, are assumed to have only present-day noisy intermediate-scale quantum (NISQ) technology so that long-term quantum storage is not allowed.  We also assume that Bob measures the photons one at a time; the storage restriction alone does not imply this [assumption (A2) of Appendix~\ref{app:usd_certification}].  For user privacy we assume in addition that Alice prepares the prescribed BB84 states and confirms truthfully which of Bob's set guesses were correct, while she may send any number of photons per position [assumption (P) of Section~\ref{sec:Alice_cheating}]; with several photons per position our bound needs a noiseless channel.
Using this security model, our protocol adopts the prepare-and-measure (\PM) QKD-type architecture in which Alice, who holds the database, sends single-photon polarization states drawn from the BB84 alphabet over a one-way quantum channel to Bob, who is interested in part of the database. Classical messages use an authenticated public channel.  The joint-permutation step additionally uses auxiliary BB84 transmissions in the same Alice-to-Bob direction to implement string commitment under the quantum-storage restriction.
Note that our protocol actively exploits the no quantum storage limitation.  As Bob is unable to store photon states, he is forced to measure each one on arrival, before receiving Alice's announcements that come with delay time much longer than the decoherence time of the quantum storage device in Bob's lab.

Restrictions of this kind on the quantum memory of the parties underlie the bounded-~\cite{DFSS_BQSM} and noisy-quantum-storage models~\cite{WST_noisy_storage, Koenig_Wehner_Wullschleger_storage}, in which oblivious transfer and bit commitment are possible with BB84 states and one-way quantum communication.  Our restriction is stricter, and our protocol differs in three respects.  It is a direct $1$-out-of-$\num$ construction rather than a composition of $1$-out-of-$2$ transfers; it keeps an external eavesdropper with unbounded quantum memory away from both the database and the index; and its database privacy rests on a single announced set applied to every block, with parameters and costs given explicitly for finite keys.

Operationally, an attractive feature of our protocol is that it only requires current QKD technology including low intensity laser sources, threshold photon detectors~\cite{Xu_QKD_review} as well as standard post-processing methods such as error correction and privacy amplification.  The one-way nature of the quantum communication, together with the fact that only a single database server is required, makes the architecture conceptually simple and directly compatible with existing quantum-communication links between client and data center.
The only element beyond a QKD receiver is Bob's unambiguous measurement, a three-outcome POVM on a polarization qubit.  It needs neither quantum memory nor any active element.  With one ancillary optical mode, it becomes a projective measurement of a single photon over three modes, which passive linear optics and three detectors implement.  Measurements of this kind have been demonstrated~\cite{Huttner_USD_experiment, Clarke_USD_experiment}.%

This work thus provides a concrete candidate for achieving SPIR with one client, one classical database, and one-way quantum communication, together with explicit resource trade-offs between the database size, the block length, the unambiguous discrimination success probability, and the tolerated quantum bit error rate.

The rest of this paper is organized as follows.  Section~\ref{sec:protocol} states our goals and the protocol.  Sections~\ref{sec:security_eve}, \ref{sec:Bob_Cheating} and~\ref{sec:Alice_cheating} analyze its security against Eve, Bob and Alice, respectively.  The central result is Theorem~\ref{thm:success_protocol}, which bounds what Bob learns beyond the one bit he is entitled to.  Section~\ref{sec:numerics} chooses the parameters, computes the key overhead with finite keys, and validates the underlying block statistics by Monte-Carlo simulation.  Section~\ref{sec:decoy_state} extends database privacy to weak coherent sources with decoy states, and Section~\ref{sec:conclusion} discusses the limitations of our analysis.

\section{Our Goals And Protocol}
\label{sec:protocol}

\subsection{Objectives}
\label{subsec:objectives}
We first state our objectives in simple terms.
Let $\Sigma=\{0, 1\}$. Suppose Alice holds a private bit-string database $\dstr{d} = (\dbit{d}_0, \dbit{d}_1, \cdots, \dbit{d}_{\num -1})\in\Sigma^\num$.  Bob wants to obtain the value of the $\pidx$-th bit in this database and to keep the index $\pidx$ private.  At the same time, Alice does not want Bob to know more than one bit of information on her database stored in a single server.  These two mutually untrusted parties can communicate through an authenticated classical public channel plus an insecure one-way quantum channel.  An all-powerful eavesdropper, Eve, who has unbounded quantum computational and storage power, reads every classical message and completely controls the quantum channel.  Alice and Bob, in contrast, are restricted to NISQ devices and, in particular, hold no long-term quantum storage.  Last but not least, we assume that Alice, Bob and Eve do not trust each other.  In fact, similar to the case of quantum bit commitment~\cite{Mayers_no_BC, Lo_Chau_no_BC}, all QSPIR protocols are insecure if any two of them collude.  Hence, we only consider the situation that at most one of them is dishonest.  In other words, whenever we analyze the cheating strategy of one party, the remaining two parties are assumed to follow the protocol honestly.  We adopt this convention throughout the security analysis below, as it greatly simplifies the discussion.

In simple terms, upon completion of this protocol, if all three of the following criteria are met, the protocol is considered successfully implemented.
\begin{enumerate}
 \item Eve obtains negligible information on Alice's database $\dstr{d}$ and Bob's index $\pidx$.
    \item Alice's information on Bob's index $\pidx$ is negligibly small.
    \item Bob correctly gets the value of the bit $\dbit{d}_\pidx$ he wants with high probability.  However, his knowledge on all the other bits in Alice's database $\dstr{d}$ is negligibly small.
\end{enumerate}
More importantly, the above conditions are listed in the order of decreasing priority.  Precisely, the dishonest Alice will not perform any action that tries to obtain information on Bob's index $\pidx$ if such action leaks information to Eve.  Likewise, the dishonest Bob will not perform any action to obtain more than one bit of information of the database if such action leaks information to Eve or Alice.   The beauty of this approach is that the potential rather than the actual presence of the common enemy Eve is sufficient to force Alice and Bob to cooperate in such a way to guarantee information-theoretic security.  Note, however, that our security proofs do not rely on this ordering: Sections~\ref{sec:Bob_Cheating} and~\ref{sec:Alice_cheating} bound a dishonest Bob directly, and a dishonest Alice under assumption (P) of Section~\ref{sec:Alice_cheating}.

For reference, we collect here the assumptions under which our bounds hold.  Eve is unbounded, controls the quantum channel and reads the authenticated classical channel, and at most one of the three parties is dishonest.  A dishonest Bob holds only a classical record once the waiting time of Step~\ref{QSPIR:initial} has elapsed [assumption (A1) of Appendix~\ref{app:usd_certification}] and measures the qubits one at a time, possibly adaptively but never jointly [assumption (A2), which (A1) does not imply]; the joint permutation of Step~\ref{QSPIR:shuffle} is sampled ideally, the error $\epsilon_{\mathrm{rnd}}$ of the real sampler being added to the bounds [assumption (A3)].  A dishonest Alice, for the index-privacy bound, emits at every position one and the same BB84 state in every photon she sends there, with any number of photons, and announces truthfully in Step~\ref{QSPIR:set_disclosure} which set guesses were correct and which positions are retained [assumption (P) of Section~\ref{sec:Alice_cheating}]; with weak coherent pulses she is further assumed to use the prescribed laser source [assumption (L) of Section~\ref{sec:decoy_state}].

We now discuss the underlying idea behind our protocol in Sec.~\ref{subsec:idea}, state our protocol in Sec.~\ref{subsec:protocol}, and write down the formal security definitions of our protocol in Sec.~\ref{subsec:definitions}.

\subsection{Underlying Ideas Of Our Protocol}
\label{subsec:idea}
 Below is our intuition in designing the protocol.  Suppose Alice and Bob are honest.  Further suppose Alice sends Bob a string of qubits each drawn from either the set
\begin{equation}
 S_0 = \{ \ket{0}, \ket{+} \} \enspace \text{or} \enspace S_1 = \{ \ket{1}, \ket{-} \}
 \label{E:S0_S1_def}
\end{equation}
 with equal probability.  Then Bob can correctly determine a portion of the qubits using unambiguous state discrimination (USD)~\cite{Ivanovic_USD, Dieks_USD, Peres_USD} if he could correctly guess which set a state is prepared.  More importantly, he knows which of them are correctly determined.  Note that the qubit preparation procedure by Alice is equivalent to that of the standard BB84 protocol with unbiased choice in preparation bases before.  Besides, after her announcement from which set each state is prepared, the scheme becomes B92-like~\cite{B92}.  Interestingly, Bob's USD procedure can be viewed as an imperfect QKD scheme generating an imperfect shared key between Alice and Bob.  Hence, using this key to perform one-time key pad to Alice's database $\dstr{d}$, Bob will obtain only a part of $\dstr{d}$.  Most importantly, he knows exactly which part he has correctly got.

Surely, we have to fine-tune the above method to ensure that Bob knows only one bit $\dbit{d}_\pidx$ he wants with high probability.  Besides, we have to strengthen it in order to tackle dishonest Alice and Bob.  We do so by grouping the qubits into blocks.  Bob announces the locations of the bits he determined in one block, and Alice applies a short error-correcting hash and a public random parity to those locations of every block, which yields one bit per block.  Besides, we randomize the ordering of the physical qubits used.  It turns out that a carefully designed scheme of this type can prevent Eve from obtaining information on $\dstr{d}$ and $\pidx$.  In addition, the scheme works using realistic apparatus with imperfections and loss.

\subsection{Our Protocol}
\label{subsec:protocol}

To simplify our analysis, we assume that Alice uses an ideal single-photon source up to Section~\ref{sec:decoy_state}, which turns to a phase-randomized Poissonian source with the decoy-state method.  The block statistics and the measurement test below are analyzed for finitely many qubits.  Only the key rate of Eq.~\eqref{eq:SP_rate_split} is first written in its asymptotic form, and Section~\ref{sec:Finite_key_analysis_qkd} replaces it by its finite-key counterpart.
For ease of reference, Table~\ref{tab:notation} summarizes the main symbols used throughout the paper.

\begin{enumerate}

 \item Alice randomly prepares qubits each in one of the four states, $\ket{0}, \ket{1}, \ket{+}, \ket{-}$ with equal probability and sends them to Bob. Bob acknowledges reception of each of them.  For every qubit received, Bob randomly picks either the QKD mode or the data acquisition mode.  For the QKD mode, he randomly chooses either $\sigma_Z \equiv \{ \ket{0},\ket{1}\}$ or $\sigma_X \equiv \{ \ket{\pm} \}$ bases to measure it.   Alice and Bob adopt the convention that the prepared, measured or inferred qubit states $\ket{0}$ or $\ket{-}$ ($\ket{1}$ or $\ket{+}$) correspond to the bit value $0$ ($1$) throughout the protocol.  For the data acquisition mode, he first randomly guesses with equal probability if the state is prepared in $S_0$ or $S_1$ defined in Eq.~\eqref{E:S0_S1_def}.  Then, he uses USD to unambiguously determine the state~\cite{Peres_USD, Barnett_Croke_review}.  More precisely, if he guesses that the received state is prepared from the set $S_0$, then he performs the POVM measurement with elements $(2-\sqrt{2}) \ket{1}\bra{1}$, $(2-\sqrt{2}) \ket{-}\bra{-}$ and $2(\sqrt{2}-1) (\cos \frac{\pi}{8} \ket{0} + \sin\frac{\pi}{8} \ket{1})(\cos\frac{\pi}{8} \bra{0} + \sin\frac{\pi}{8} \bra{1})$.  And for the case of $S_1$, the POVM elements used are $(2-\sqrt{2}) \ket{0}\bra{0}$, $(2-\sqrt{2}) \ket{+}\bra{+}$ and $2(\sqrt{2}-1)(\sin\frac{\pi}{8} \ket{0} - \cos\frac{\pi}{8} \ket{1})(\sin\frac{\pi}{8} \bra{0} - \cos\frac{\pi}{8} \bra{1})$.  He implements it with three detectors (Appendix~\ref{app:USD_POVM}).  If a conclusive detector clicks, he records the conclusive result, whether or not the inconclusive detector clicks too; if both conclusive detectors click, he discards the position as if the qubit were lost.  He counts the positions at which two or more detectors click; if their number $D$ exceeds a threshold $\tau_D$, fixed by the multi-photon probability of the source ($\tau_D=0$ for an ideal single-photon source), he aborts at the end of this step (Section~\ref{sec:decoy_state} adapts the count to weak coherent pulses).  Bob marks down the chosen bases used for every QKD mode qubit and the measurement results of all qubits. This process is repeated until $(4\nume + \nums)$ QKD mode qubits and at least $2(\numd \num {}+ n_\text{t})$ data acquisition mode qubits are obtained.  Of these unsifted data-acquisition positions, $n_\text{t}$ are sacrificed to the measurement test in Step~\ref{QSPIR:usd_test}; Step~\ref{QSPIR:set_disclosure} then retains $\numd\num$ non-test positions with correct set guesses, aborting if too few remain.  Alice waits for a time much longer than the coherence time of any NISQ quantum memory device used by Bob before proceeding to the next step.%
  \label{QSPIR:initial}

 \item Bob tells Alice which qubits are measured in QKD and data acquisition mode, respectively.  Alice and Bob each pick $2\nume$ of the QKD mode qubits at random, Bob first, from all of them, and Alice from the rest, so that about $\nume$ of each sample survive sifting.  For Alice's sample, Bob publicly announces the corresponding measurement bases used and outcomes obtained; for Bob's sample, Alice publicly announces the bases and bits she prepared.  Each party compares the announced data with its own record on its own sample, which it never reveals, estimates the bit flip $\ebit$ and phase flip $\ephase$ error rates of the quantum channel used, and announces the estimates.  They abort if the two estimates of either rate differ by more than a pre-agreed tolerance, and otherwise take the larger of the two.  Neither party can lower the rates so obtained: on each sample one party announces its data without seeing the other's, and the larger estimate is used.  In particular, Alice can neither keep a tampered position out of Bob's sample, which he picks first, from all QKD mode qubits, after the transmission, nor fit her bits to his outcomes, so a preparation that departs from the BB84 states raises Bob's estimate of the error rate of its basis (Section~\ref{sec:Alice_cheating}).  From the larger rates, they calculate the provable lower bound of the QKD key rate~\cite{Shor_Preskill_Proof1, Scarani_QKD_review}
\begin{equation}
    \keyrate = 1-H_2(\ebit)-H_2(\ephase),
    \label{eq:SP_rate_split}
\end{equation}
  where $H_2(\cdot)$ is the binary entropy function~\cite{Wilde_QIT_book}.  Alice and Bob proceed to the next step only if $\keyrate$ is higher than a pre-agreed (positive) threshold.
  \label{QSPIR:rate_estimation}

 \item For the remaining $\nums$ QKD mode qubits, Alice and Bob perform the standard BB84 QKD protocol.  That is, they publicly announce the bases, discard the mismatched ones to obtain a raw key, and perform classical error correction and privacy amplification using either forward or backward reconciliation to obtain a shared private key~\cite{BB84, Rates_for_realistic_code2, Privacy_Amplification, Scarani_QKD_review}.
  They confirm that their keys agree, as in any QKD session.  The key is split into an $\num$-bit pad $\keydb$ for the database and a $\lceil\log_2\num\rceil$-bit pad $\keysh$ for the shift, used in Steps~\ref{QSPIR:final_step} and~\ref{QSPIR:shift_idx}.  The key length is therefore
  \begin{equation}
   \ell=\num+\lceil\log_2\num\rceil .
   \label{eq:bnums_nums}
  \end{equation}
  Alice prepares the four states uniformly.  So whatever basis Bob measures in, a QKD-mode qubit survives sifting with probability $\tfrac12$.  The raw key thus has about $\nums/2$ bits, and asymptotically
  \begin{equation}
   \nums=2\ell/\keyrate .
   \label{eq:nums_sifting}
  \end{equation}
  \label{QSPIR:standard_secure_bits}

 \item Bob publicly announces the set $S_0$ or $S_1$ he guessed for every data-acquisition position.  These announcements are fixed before Alice selects the test sample.  Before revealing which set guesses are correct, Alice selects uniformly at random a subset $\mathcal{T}$ of $n_\text{t}$ positions from all data-acquisition positions, without sifting by guessed-set correctness.  For each $i\in\mathcal{T}$, Bob announces whether his recorded outcome was conclusive and, if so, the state he concluded, using his previously announced set.  Bob makes these declarations before Alice reveals any prepared states or correct-set information for the data-acquisition positions.  %
  For each prepared state $s$ and declaration $d$, Alice divides the number of tested positions with preparation $s$ and declaration $d$, correct and incorrect set guesses alike, by $n_\text{t}/4$, the expected number of tested positions with preparation $s$.  The result $\hat f(s,d)$ estimates $\Pr[d\mid s]$.  She aborts if any of the $24$ values $\hat f(s,d)$ deviates from the honest entry $T_{s,d}$ of Table~\ref{tab:usd_test} by more than $\delta_\text{test}$.  Here $24=4\times6$: there are four prepared states and six declarations, as each of the two guessed sets comes with one of its two states or with the inconclusive outcome.  This checks whether Bob's declarations are consistent with the prescribed measurement statistics.  The positions in $\mathcal{T}$ are then discarded; they never enter a block, so the conclusive pattern disclosed in this test concerns only discarded positions.
  \label{QSPIR:usd_test}

 \item If the measurement test is passed, Alice informs Bob which of his previously announced set guesses were correct among the remaining, non-test data-acquisition positions.  They abort if fewer than $\numd\num$ such positions remain.  If more remain, Alice selects uniformly at random $\numd\num$ of them and announces which ones to retain.  They keep only the records of these \PM qubits.
  Denote the probability that Bob can unambiguously determine the state by $\USDprob$.  (In the ideal situation, $\USDprob = 1-|\bra{0}\ket{+}| = 1-|\bra{1}\ket{-}|  = 1 - 1/\sqrt{2} \approx 0.293$~\cite{Ivanovic_USD, Dieks_USD, Peres_USD}.) If Bob can unambiguously determine the state, he marks it as ``unambiguous''.  He also jots down the bit value according to the convention in Step~\ref{QSPIR:initial}.  Otherwise, he marks it as ``ambiguous'' and records $0$ or $1$ at random.
  Likewise, Alice constructs her $n_\text{d} \num$ bit string using the same convention.  In this way, Alice and Bob share an imperfect bit-string of length $n_\text{d} \num$. Denote the bit string held by Alice as $\Psubstr{A}$, and that by Bob as $\Psubstr{B}$.
  \label{QSPIR:set_disclosure}

 \item Alice and Bob generate the blocking permutation using the storage-model string commitment specified in Appendix~\ref{app:storage_commitment}~\cite{Koenig_Wehner_Wullschleger_storage}.  The retained-position list of Step~\ref{QSPIR:set_disclosure} is fixed before this procedure begins.  Set $N=\numd\num$, $M_P=N!$, and $\ell_P=\lceil\log_2 M_P\rceil$.

 Alice privately samples $t_a$ uniformly from $\{0,\ldots,M_P-1\}$ and commits to its $\ell_P$-bit encoding.  The commitment uses a fresh auxiliary BB84 transmission from Alice to Bob, a storage-enforcing delay, and classical processing; these auxiliary records are separate from the data and QKD-key records.  Only after the commitment is complete does Bob freshly sample and announce a uniform $t_b\in\{0,\ldots,M_P-1\}$.  Alice then opens $t_a$, and Bob verifies the opening.  An invalid contribution, failed verification, or missing message aborts the session.

 Both parties compute
 \begin{equation}
     P=\operatorname{unrank}_{N}\!\left((t_a+t_b)\bmod M_P\right),
     \label{eq:joint_permutation_rank}
 \end{equation}
 where $\operatorname{unrank}_{N}$ is a fixed bijection from $\{0,\ldots,N!-1\}$ to the symmetric group $S(N)$ of permutations of $N$ positions.  Each honest rank is sampled exactly, for example by local rejection sampling.  Thus no pseudorandom expansion is used.  The ideal procedure samples a uniform permutation with the possibility of abort; the real procedure has statistical error $\epsilon_{\mathrm{rnd}}$ as specified in Appendix~\ref{app:storage_commitment}.  Uniformity is not asserted after conditioning on adversarially chosen completion.

 Alice uses $P$ to permute $\Psubstr{A}$ into $\Psubstr[0]{A}\cdots\Psubstr[(\num-1)]{A}$, each substring having length $\numd$.  Bob applies the same permutation to $\Psubstr{B}$ to obtain $\Psubstr[0]{B}\cdots\Psubstr[(\num-1)]{B}$.
  \label{QSPIR:shuffle}

 \item Alice and Bob fix the size of the announced set and the number of check symbols.  The set has $\cthr=\lfloor(\USDprob-\CSSprob)\numd\rfloor$ locations, where the slack $\CSSprob>0$ keeps $\cthr$ below the number of conclusive positions that a block typically carries.  Channel noise creates conclusive-but-erroneous bits at locations unknown to Bob.  Write $\tilde{\ebit}$ for their rate, fixed by the channel error rates estimated in Step~\ref{QSPIR:rate_estimation} (see Appendix~\ref{sec:Depolarizing_and_Dephasing_Noise}), $\gamma\ge0$ for a decoding margin, $t=\lfloor(\tilde{\ebit}+\gamma)\cthr\rfloor$ for the decoding radius, and $\ECprob$ for the tolerated decoding-failure probability.  They set the number of check symbols to
  \begin{equation}
      \hmarg=\Big\lceil \cthr H_2(\tilde{\ebit}+\gamma)+\log_2(1/\ECprob)\Big\rceil ,
      \label{eq:syndrome_length}
  \end{equation}
  and Alice proceeds only if $\cthr$ obeys the size rule of Eq.~\eqref{eq:size_rule}.  Lemma~\ref{lem:decodability} equally allows the smaller $\hmarg=\lceil\log_2\mathrm{Vol}(\cthr,t)+\log_2(1/\ECprob)\rceil$, with $\mathrm{Vol}(c,t)=\sum_{i\le t}\binom{c}{i}$, which the exact evaluations of Section~\ref{sec:numerics} use.  Note that $\hmarg$ pays for errors only; over a noiseless channel $\gamma=0$ and $\hmarg=\lceil\log_2(1/\ECprob)\rceil$.
  \label{QSPIR:code_sizing}

  \item Bob draws a block index $i$ uniformly at random, with fresh private randomness and independently of his measurement record.  If $\Psubstr[i]{B}$ carries fewer than $\cthr$ conclusive positions, he aborts without naming $i$.
  \label{QSPIR:block_draw}

  \item Otherwise, Bob draws a uniformly random set $\aset\subset\{1,\ldots,\numd\}$ of exactly $\cthr$ of the conclusive positions of $\Psubstr[i]{B}$.  He announces $\aset$ in the clear and the shift $\pidx-i$ one-time-padded with $\keysh$, and both parties cyclically shift all substrings by $\pidx-i$.  That is, they perform the operations $\Psubstr[j]{A} \to \Psubstr[(\pidx-i+j)]{A}$ and $\Psubstr[j]{B} \to \Psubstr[(\pidx-i+j)]{B}$ for all $j$.  Here the arithmetic is performed modulo $\num$.
   \label{QSPIR:shift_idx}

  \item Alice aborts unless $\aset$ consists of exactly $\cthr$ distinct locations.  Only after receiving $\aset$, she draws a uniformly random seed $\dstr{z}\in\{0,1\}^{\cthr}$ and a uniformly random check matrix $\dstr{H}\in\{0,1\}^{\hmarg\times\cthr}$.  Write $\Psubstr[j]{A}|_{\aset}$ for the $\cthr$ bits of block $j$ at the locations $\aset$; the same set is applied to every block.  Alice announces $\dstr{H}$, $\dstr{z}$ and, for every block $j$, the check symbols $\sigma_j=\dstr{H}\,\Psubstr[j]{A}|_{\aset}$, all in the clear.  Being common to all blocks, they reveal nothing about $\pidx$.  Alice takes the parity $\dstr{z}\cdot\Psubstr[j]{A}|_{\aset}$ of each block to obtain an $\num$-bit parity key $\Psubstr{s}$, and sends $\dstr{d}\oplus\Psubstr{s}\oplus\keydb$ to Bob.  Bob holds all of $\Psubstr[\pidx]{B}|_{\aset}$.  He corrects it with $\sigma_\pidx$ (Lemma~\ref{lem:decodability}), outputs $\perp$ if no unique string lies within the radius $t$, and otherwise computes the parity along $\dstr{z}$ and removes it together with the $\pidx$th bit of $\keydb$ from the $\pidx$th received bit to obtain $\dbit{d}_\pidx$.  Bob never signals $\perp$ to Alice, for she could otherwise corrupt the check symbols of chosen blocks and learn from his failure where $\pidx$ lies.
   \label{QSPIR:final_step}
\end{enumerate}

\begin{table}[!h]
\caption{\label{tab:usd_test}Honest declaration frequencies $\Pr[d\mid s]$ of the measurement of Step~\ref{QSPIR:initial} over a noiseless channel, before set sifting, checked on the unsifted test positions of Step~\ref{QSPIR:usd_test}.  Rows: prepared state; columns: the set Bob guessed and the state he claims, or ``?''\ if his outcome was inconclusive.  The factor $\tfrac12$ in each entry is Bob's uniform set guess.  The upper table gives the exact values and the lower one their decimal form; each entry is compared with the announced frequency to within $\delta_\text{test}$.  The derivation of the entries and the noisy case are given in Appendix~\ref{app:usd_certification}.}
\begin{ruledtabular}
\begin{tabular}{lcccccc}
 & \multicolumn{3}{c}{guessed $S_0$} & \multicolumn{3}{c}{guessed $S_1$}\\
prepared & $\ket0$ & $\ket+$ & ? & $\ket1$ & $\ket-$ & ?\\
\colrule
$\ket0$ & $\tfrac{2-\sqrt2}{4}$ & $0$ & $\tfrac{\sqrt2}{4}$ & $\tfrac{2-\sqrt2}{4}$ & $\tfrac{2-\sqrt2}{2}$ & $\tfrac{3\sqrt2-4}{4}$\\
$\ket+$ & $0$ & $\tfrac{2-\sqrt2}{4}$ & $\tfrac{\sqrt2}{4}$ & $\tfrac{2-\sqrt2}{2}$ & $\tfrac{2-\sqrt2}{4}$ & $\tfrac{3\sqrt2-4}{4}$\\
$\ket1$ & $\tfrac{2-\sqrt2}{4}$ & $\tfrac{2-\sqrt2}{2}$ & $\tfrac{3\sqrt2-4}{4}$ & $\tfrac{2-\sqrt2}{4}$ & $0$ & $\tfrac{\sqrt2}{4}$\\
$\ket-$ & $\tfrac{2-\sqrt2}{2}$ & $\tfrac{2-\sqrt2}{4}$ & $\tfrac{3\sqrt2-4}{4}$ & $0$ & $\tfrac{2-\sqrt2}{4}$ & $\tfrac{\sqrt2}{4}$\\
\end{tabular}
\end{ruledtabular}

\vspace{6pt}

\begin{ruledtabular}
\begin{tabular}{lcccccc}
 & \multicolumn{3}{c}{guessed $S_0$} & \multicolumn{3}{c}{guessed $S_1$}\\
prepared & $\ket0$ & $\ket+$ & ? & $\ket1$ & $\ket-$ & ?\\
\colrule
$\ket0$ & $0.1464$ & $0$ & $0.3536$ & $0.1464$ & $0.2929$ & $0.0607$\\
$\ket+$ & $0$ & $0.1464$ & $0.3536$ & $0.2929$ & $0.1464$ & $0.0607$\\
$\ket1$ & $0.1464$ & $0.2929$ & $0.0607$ & $0.1464$ & $0$ & $0.3536$\\
$\ket-$ & $0.2929$ & $0.1464$ & $0.0607$ & $0$ & $0.1464$ & $0.3536$\\
\end{tabular}
\end{ruledtabular}
\end{table}

The string commitment is an additional cryptographic resource.  It preserves one-way quantum communication and requires no honest quantum memory.  Its auxiliary qubits and waiting times are not included in the QKD-key overhead $\nums$; Appendix~\ref{app:storage_commitment} accounts for them separately.

For the string commitment of Step~\ref{QSPIR:shuffle}, we make the storage restriction explicit: after each prescribed delay $\Delta t_{\mathrm{com}}$, a dishonest participant retains at most $q_{\mathrm{com}}$ qubits across all private registers, with unrestricted classical storage.  The no-quantum-memory case is $q_{\mathrm{com}}=0$.  The auxiliary block length and security parameters must satisfy the storage-model bounds in Appendix~\ref{app:storage_commitment}; a noisy-memory implementation instead requires a specified storage channel.  This restriction applies to the participants, not to Eve.

We remark that the above protocol is very similar to the classic BB84 QKD scheme, except for the following crucial modifications.
\begin{itemize}
 \item In Step~\ref{QSPIR:initial}, after Alice sends out the states, she waits sufficiently long to ensure that Bob's quantum memory expires. This forecloses the cheating strategy in which Bob delays his measurement.

    \item By standard BB84 security analysis~\cite{Shor_Preskill_Proof1, Lo_Chau_proof, Scarani_QKD_review, Xu_QKD_review}, Steps~\ref{QSPIR:rate_estimation} and~\ref{QSPIR:standard_secure_bits} give a key about which Eve has negligible information, asymptotically and with finite keys (Section~\ref{sec:Finite_key_analysis_qkd}).  Note that Bob picks his sample first, from all QKD mode qubits, and Alice hers from the rest; on each sample only the other party announces its data, and the larger of the two estimates is used.  This step tests the channel, and it also catches a large departure of Alice's preparation from the BB84 states (Section~\ref{sec:Alice_cheating}); Bob's measurement in the data acquisition mode is tested separately, in Step~\ref{QSPIR:usd_test}.

    \item For each of the $\numd n$ states kept in the data acquisition mode, it becomes a variant of the B92 scheme~\cite{B92}. Indeed, once his set guess is confirmed, Bob faces two non-orthogonal states, keeps the conclusive outcomes, and tells Alice where they are, as a B92 receiver does.  The difference is that he does so for one block only, and that Alice applies his announcement to all $\num$ blocks at once and cannot tell to which block it belongs.
\end{itemize}

\subsection{Formal Security Definitions Of Our Protocol}
\label{subsec:definitions}
As our QSPIR scheme is based on a QKD setup, its performance and security can be characterized by the corresponding parameters in QKD supplemented by two additional security parameters, $\SecretB$ and $\SecretA$, to be introduced below.  Recall that in QKD, a key is called $\Delta$-secret if it is $\Delta$-close to a uniformly distributed key that is uncorrelated with the eavesdropper Eve.  That is,
\begin{equation}
 \frac{1}{2}\norm{\rho_{\text{KE}}-\omega_\text{K}\otimes\rho_\text{E}}_1\le\Delta,
\end{equation}
where $\omega_\text{K}$ denotes the maximally mixed state on the key register $\text{K}$, and $\rho_\text{E}$ the marginal state of Eve's system.  Moreover, a protocol is said to be $\Secret$-secret if it outputs $\Delta$-secret keys such that
\begin{equation}
    (1-\Abortprob)\Delta \le\Secret,
\end{equation}
where $\Abortprob$ is the probability of aborting the protocol~\cite{Renner_thesis, Finite_key_analysis1}.
Further recall that a protocol is said to be $\Correct$-correct if the keys held by Alice and Bob differ with probability at most $\Correct$~\cite{Finite_key_analysis1}.  These two concepts carry over directly to any QSPIR protocol.

\begin{definition}
 A QSPIR protocol is said to be $\Secret$-secret against Eve and $\Correct$-correct if the shared key of Step~\ref{QSPIR:standard_secure_bits}, which pads the shift in Step~\ref{QSPIR:shift_idx} and the database in Step~\ref{QSPIR:final_step}, is $\Secret$-secret to Eve and
 an honest Bob outputs a bit $\dbit{y}_\pidx\neq\dbit{d}_\pidx$ with probability at most $\Correct$ (an abort or $\perp$ is not an output).
 Additionally, our protocol is said to be $\SecretB$-secret against Bob if, whatever Bob does, his final state is within trace distance $\SecretB$ of a state that can be generated from a single database bit $\dbit{d}_j$, with $j$ chosen by the generator, for every database $\dstr{d}$.  For an honest Bob $j=\pidx$, and this says that the remaining database $(\dbit{d}_0,\cdots,\dbit{d}_{\pidx-1},\dbit{d}_{\pidx+1},\cdots,\dbit{d}_{\num-1})$ is $\SecretB$-secret to Bob.  Finally, it is said to be $\SecretA$-secret against Alice if whatever Alice does, her final states for any two indices $\pidx$ and $\pidx'$ are within trace distance $\SecretA$.  We bound $\SecretA$ for an Alice who obeys assumption (P) of Section~\ref{sec:Alice_cheating}.
\end{definition}

These security parameters depend on the free parameters of our protocol, $\numd$, $\CSSprob$, $\gamma$, $\ECprob$, $\nume$, $n_\text{t}$, $\delta_\text{test}$ and $\tau_D$, while $\nums$ is then fixed by Eq.~\eqref{eq:nums_sifting}.  For a database of size $\num$ over a channel with bit-flip and phase-flip error rates $\ebit$ and $\ephase$, our protocol attains required levels of $\Correct$, $\Secret$, $\SecretA$ and $\SecretB$ whenever these free parameters can be tuned accordingly.

\begin{table*}[t]

\caption{\label{tab:notation}Main symbols used throughout the paper.}
\begin{ruledtabular}
\begin{tabular}{lp{0.82\textwidth}}
Symbol & Meaning \\
\colrule
$\num$ & Size of Alice's database $\dstr{d}$; also the number of blocks. \\
$\pidx$ & Index of the database bit Bob wishes to retrieve. \\
$\numd$ & Block length, namely, the number of data-acquisition qubits per block. \\
$\nume$ & Size of each party's sifted sample used for channel error estimation; Step~\ref{QSPIR:rate_estimation} sacrifices $4\nume$ QKD-mode qubits. \\
$\nums$ & Number of QKD-mode qubits used to generate the shared key of Step~\ref{QSPIR:standard_secure_bits}, which pads the database and the shift; the protocol's key overhead, Eq.~\eqref{eq:nums_sifting}. \\
$\ell$ & Length of that key, $\ell=\num+\lceil\log_2\num\rceil$ [Eq.~\eqref{eq:bnums_nums}]. \\
$\keydb$, $\keysh$ & Its two parts: the $\num$-bit pad of the database (Step~\ref{QSPIR:final_step}) and the $\lceil\log_2\num\rceil$-bit pad of the shift (Step~\ref{QSPIR:shift_idx}). \\
$N$ & Number of retained data-acquisition positions, $N=\numd\num$ (Step~\ref{QSPIR:shuffle}); in Section~\ref{sec:Finite_key_analysis_qkd}, the sifted raw-key length. \\
$\hmarg$, $\gamma$, $t$ & Number of check symbols published per block [Eq.~\eqref{eq:syndrome_length}], decoding margin, and decoding radius $t=\lfloor(\tilde{e}_b+\gamma)\cthr\rfloor$. \\
$\Psubstr{A},\,\Psubstr{B}$ & Alice's and Bob's data-acquisition bit strings; $\Psubstr[i]{A}$ is the $i$th block. \\
$\ebit,\ \ephase$ & Channel bit-flip and phase-flip error rates. \\
$\keyrate$ & Shor--Preskill key rate, $\keyrate=1-H_2(\ebit)-H_2(\ephase)$. \\
$\USDprob$ & Single-copy USD success probability (ideally $1-1/\sqrt{2}\approx0.293$). \\
$\tilde{e}_b$ & Post-USD error rate, the fraction of Bob's conclusive bits that are wrong (Appendix~\ref{sec:Depolarizing_and_Dephasing_Noise}). \\
$\CSSprob$, $\cthr$ & User-defined slack and size of the announced set, $\cthr=\lfloor(\USDprob-\CSSprob)\numd\rfloor$ (Step~\ref{QSPIR:code_sizing}). \\
$\aset$ & Set of $\cthr$ within-block locations announced by Bob (Step~\ref{QSPIR:shift_idx}); $\Psubstr[j]{A}|_{\aset}$ denotes the bits of block $j$ at these locations. \\
$\dstr{H}$, $\dstr{z}$, $\sigma_j$ & Public $\hmarg\times\cthr$ check matrix, public random seed, and the check symbols $\sigma_j=\dstr{H}\,\Psubstr[j]{A}|_{\aset}$ of block $j$ (Step~\ref{QSPIR:final_step}). \\
$K_j$, $a_j$, $o_{jj'}$ & Locations of block $j$ whose bit Bob knows; number $a_j=|\aset\setminus K_j|$ of announced locations he misses in block $j$; overlap $o_{jj'}=|K_j\cap K_{j'}|$ of two blocks. \\
$x_j$, $W$ & Leakage score $x_j=\tfrac12\min(1,2^{\hmarg-a_j})$ of block $j$ (Lemma~\ref{lem:seeded_parity}) and leakage weight $W=\sum_jx_j-\max_jx_j$ [Eq.~\eqref{eq:epsB_main}]. \\
$\pknown$ & Known-rate: upper bound on the mean probability that Bob knows the bit of a correctly guessed, untested position ($\USDprob$ for the prescribed measurement over a noiseless channel). \\
$H_2(\cdot)$ & Binary entropy function. \\
$\ECprob$ & Tolerated probability that a wrong string matches the check symbols (Lemma~\ref{lem:decodability}). \\
$\epsilon_\perp$ & Budget for the probability that more than $t$ of Bob's conclusive bits on $\aset$ are wrong, the Chernoff term of Lemma~\ref{lem:decodability}; set equal to $\ECprob$ in Section~\ref{sec:operating_window}. \\
$\Secret,\ \Correct$ & Secrecy and correctness parameters of the protocol. \\
$\SecretS$ & Secrecy and correctness parameter of the QKD key of Step~\ref{QSPIR:standard_secure_bits}. \\
$\SecretA$ & Privacy of Bob's index $\pidx$ against Alice. \\
$\SecretB$ & Privacy of the unqueried database bits against Bob. \\
$\Abortprob$ & Probability that the protocol aborts. \\
$\epsilon_\text{abort}$ & Probability that an honest Bob aborts in Step~\ref{QSPIR:block_draw}. \\
$L_{\mathrm{com}},\ q_{\mathrm{com}}$ & Auxiliary qubits per commitment and the adversary's retained-qubit bound. \\
$D$, $\tau_D$, $\epsilon_D$ & Number of data-acquisition positions at which two or more of Bob's detectors click, its abort threshold (Step~\ref{QSPIR:initial}), and the failure probability of this count (Appendix~\ref{app:alice_multiphoton}).  The count stops Alice once her expected number of multi-clicks reaches $B_D$ (Lemma~\ref{lem:multiclick_count}), which gives the term $0.78B_D/\num$ of Eq.~\eqref{eq:epsA_main} (Lemma~\ref{lem:photon_number}). \\
$\epsilon_{\mathrm{rnd}}$ & Statistical error of the joint permutation of Step~\ref{QSPIR:shuffle}, including commitment errors. \\
$n_\text{t}$ & Unsifted data-acquisition positions sacrificed to test Bob's measurement before Alice's set disclosure (Step~\ref{QSPIR:usd_test}). \\
$\delta_\text{test}$ & Per-cell tolerance of that test, which fixes the effective rate $\USDprob^\text{test}$ of Eq.~\eqref{eq:usd_test_slack}. \\
$\USDprob^\text{cert}$, $\epsilon_\text{cert}$ & Certified known-rate of a cheating Bob and the failure probability of its certificate (Appendix~\ref{app:usd_certification}). \\
$\pknown^{\mathrm{WCP}}$, $\epsilon_{\mathrm{PE}}$ & Certified known-rate under weak coherent pulses [Eq.~\eqref{eq:pk_wcp}] and the failure probability of the declaration-frequency intervals (Appendix~\ref{app:decoy_receiver_certificate}). \\
\end{tabular}
\end{ruledtabular}

\end{table*}

\section{Security Against Eve}
\label{sec:security_eve}

Recall the convention of Section~\ref{subsec:objectives}: Eve is the only adversary here, while Alice and Bob follow the protocol.  Everything Eve learns about $\dstr{d}$ and $\pidx$ passes through two messages, the padded database $\dstr{d}\oplus\Psubstr{s}\oplus\keydb$ of Step~\ref{QSPIR:final_step} and the padded shift of Step~\ref{QSPIR:shift_idx}.  Both are one-time pads with the key of Step~\ref{QSPIR:standard_secure_bits}, generated by standard BB84 QKD, whose $\SecretS$-secrecy and $\CorrectS$-correctness against Eve are well established in the asymptotic and the finite-key regime~\cite{Shor_Preskill_Proof1, Lo_Chau_proof, Finite_key_analysis1, Scarani_QKD_review, Xu_QKD_review}.  Since the secrecy definition is composable~\cite{Renner_Koenig_composable, Portmann_Renner_review}, the protocol is $\Secret$-secret against Eve with $\Secret=\SecretS$, whatever Eve learns from the data qubits, the check symbols or the seed.  Note that no other message depends on $\pidx$: the permutation is uniform, and the set $\aset$ of Step~\ref{QSPIR:shift_idx} is drawn from a block that Bob chooses independently of $\pidx$.  Whatever Eve learns about that block is useless to her without the padded shift.

\section{Security Against Bob}
\label{sec:Bob_Cheating}

Can Bob learn more about Alice's database than the single bit he is entitled to?  In this section, we show that the answer is statistically no.  By the convention of Section~\ref{subsec:objectives}, Bob is the only dishonest party here: Alice follows the protocol and Eve does not collude with him.  More precisely, we show that whatever set Bob announces, he learns about the database at most the one bit he is entitled to, up to an explicit $\SecretB$ whose leading term decays exponentially in $\cthr$ (Theorem~\ref{thm:success_protocol}).  If he measures differently, the test of Step~\ref{QSPIR:usd_test} either stops him or certifies a rate for which the same bound holds.

Before turning to a dishonest Bob, note that even an honest Bob may fail to obtain $\dbit{d}_\pidx$.  This happens in exactly three ways, which Theorem~\ref{thm:success_protocol}(i) bounds.
\begin{enumerate}
    \item[(a)] \emph{Abort.}  The block Bob draws in Step~\ref{QSPIR:block_draw} carries fewer than $\cthr$ conclusive positions.  He stops before announcing $\aset$, so the abort reveals nothing about $\pidx$.  Its probability is at most $e^{-2\CSSprob^{2}\numd}$.
    \item[(b)] \emph{Declared failure, $\perp$.}  Bob has announced $\aset$, but channel noise has flipped some of his conclusive bits there.  In Step~\ref{QSPIR:final_step} he finds either no string within distance $t$ of his own that matches $\sigma_\pidx$, because more than $t$ bits are flipped, or more than one, because a wrong string matches by chance.  He knows that he has failed and outputs $\perp$ instead of a bit, and he keeps this to himself.  Its probability is at most $\ECprob+e^{-\cthr D(\tilde{\ebit}+\gamma\Vert\tilde{\ebit})}$, with $D$ the binary relative entropy.
    \item[(c)] \emph{Wrong bit.}  Bob finds a unique match, but it is not Alice's string, or his copy of the key $\keydb$ differs from Alice's.  He then outputs a wrong $\dbit{d}_\pidx$ without noticing.  Its probability is at most $\ECprob+\CorrectS$.
\end{enumerate}
Note that (b) differs from (c) in what Bob knows: in (b) he knows he has no bit, in (c) he does not know his bit is wrong.  We remark that if all agents are honest and the communication channels are noiseless, then all his bits on $\aset$ are correct, so with $\gamma=0$ case (b) cannot occur and (c) reduces to a key mismatch.

We proceed in the following order.  We first describe the test of Step~\ref{QSPIR:usd_test}, which forces Bob to use the prescribed measurement.  We then show that the ambiguous outcomes of that measurement carry no information, so that every position Bob misses is an erasure.  Next we show that the check symbols of Step~\ref{QSPIR:final_step} let Bob decode the block he announced his set from, while the parity of any other block leaks an explicitly bounded amount, and combine the two into the main theorem.

\subsection{Verifying That Bob Measures As Prescribed}
\label{subsec:usd_test}

Every bound of this section takes Bob's per-qubit knowledge to be that of the unambiguous measurement of Step~\ref{QSPIR:initial}.  But what forces him to use it?  Nothing in the quantum phase does.  Note that the bit convention of Step~\ref{QSPIR:initial} places the two states carrying a given bit value at right angles on the Bloch sphere, so the two bit-value mixtures $\rho_0=\tfrac12(\ket0\bra0+\ket-\bra-)$ and $\rho_1=\tfrac12(\ket1\bra1+\ket+\bra+)$ satisfy $\tfrac12\norm{\rho_0-\rho_1}_1=1/\sqrt2$.  A single fixed measurement along $(\sigma_Z-\sigma_X)/\sqrt2$ therefore returns the correct bit of every retained qubit with probability $\tfrac12+\tfrac{1}{2\sqrt2}\approx0.854$, without any quantum memory and without knowing which set was prepared.  Such a Bob holds no ambiguous positions at all, only positions whose bit he guesses correctly with probability $0.854$.  The erasure property on which every bound of this section rests (Section~\ref{subsec:soft_info}) then fails.  More importantly, the no-long-term-memory assumption does not exclude this.

The ordering of Steps~\ref{QSPIR:usd_test} and~\ref{QSPIR:set_disclosure} is essential. Bob first announces all guessed sets, Alice then samples and tests unsifted positions, and only after the test does she disclose which guesses were correct among the non-test positions.  Thus Table~\ref{tab:usd_test} describes the tested population before conditioning on correct set guesses.  The test positions are removed before the $\numd\num$ positions for blocking are selected.

Step~\ref{QSPIR:usd_test} closes this gap by testing the receiver, just as Step~\ref{QSPIR:rate_estimation} tests the channel.  Why does reproducing Table~\ref{tab:usd_test} prevent Bob's cheating measurement?  The four BB84 states span the real qubit operator space, so the four numbers $\Pr[d\mid s]$ determine the POVM element behind a declaration $d$ up to a $\sigma_Y$ component, which positivity excludes when the table matches a rank-one honest element, and the honest elements are rank one and hence admit no finer splitting that could carry extra information.  The zero entries carry most of the weight: an honest conclusive claim comes from an element annihilating the alternative state, whereas a minimum-error outcome is wrong in every direction.

The test has a finite tolerance, and the freedom it leaves is what the size rule of Eq.~\eqref{eq:size_rule} must be evaluated against.  Maximizing Bob's probability of guessing a retained bit over all single-copy measurements whose declarations match Table~\ref{tab:usd_test} to within $\delta_\text{test}$ is a semidefinite program in the POVM elements; restricted to rank-one elements with Bloch vectors in the $XZ$ plane, which suffices here, it becomes a linear program.  Writing its value as the effective rate $\USDprob^\text{test}$ whose record would guess equally well, we find
\begin{equation}
    \USDprob^\text{test}-\USDprob\ \approx\ C\sqrt{\delta_\text{test}},\qquad C\approx1.1,
    \label{eq:usd_test_slack}
\end{equation}
numerically $+0.011$, $+0.037$ and $+0.133$ at $\delta_\text{test}=10^{-4},10^{-3},10^{-2}$ (the constant drifts to $\approx1.3$ at the upper end).  The square root is generic rather than accidental: the zero entries of Table~\ref{tab:usd_test} are quadratic minima under a tilt of the measurement, while Bob's payoff grows linearly in that tilt.  As $\delta_\text{test}\to0$ the excess vanishes and only the prescribed measurement survives.

Appendix~\ref{app:usd_certification} turns this test into a theorem.  For a Bob who measures the qubits individually, the erasure hypothesis of Theorem~\ref{thm:success_protocol} is replaced by a certified rate $\USDprob^\text{cert}$, at an additive cost $\epsilon_\text{cert}$ in $\SecretB$.  The price is the test size: the slack $\USDprob^\text{cert}-\USDprob$ falls only as $n_\text{t}^{-1/4}$.  The appendix assumes that Bob measures the qubits one at a time.

\subsection{Ambiguous Measurement Results Carry No Information On The Bit Value}
\label{subsec:soft_info}
Does Bob gain any partial knowledge on the states that he fails to unambiguously identify?  For the measurement of Step~\ref{QSPIR:initial}, the answer is no.  Suppose Bob guessed $S_0=\{\ket{0},\ket{+}\}$ correctly and obtained the inconclusive outcome.  Its POVM element is proportional to $\ket{\phi_0}\bra{\phi_0}$ with $\ket{\phi_0}=\cos\frac{\pi}{8}\ket{0}+\sin\frac{\pi}{8}\ket{1}$, which lies halfway between the two states of the set:
\begin{equation}
    |\braket{\phi_0}{0}|^2=|\braket{\phi_0}{+}|^2=\cos^2\tfrac{\pi}{8}.
\end{equation}
Both states therefore produce the inconclusive outcome equally often, and by Bayes' theorem
\begin{equation}
\begin{split}
    &P(\text{A sends $\ket{0}$} \mid \text{inconclusive})\\
    =&P(\text{A sends $\ket{+}$} \mid \text{inconclusive})\\
    =&\tfrac{1}{2}.
\end{split}
\end{equation}
The same holds for $S_1$ with $\ket{\phi_1}=\sin\frac{\pi}{8}\ket{0}-\cos\frac{\pi}{8}\ket{1}$.  Thus $p_a$, the larger of Bob's two posterior probabilities given the inconclusive outcome, is $1/2$: every ambiguous position is a genuine erasure, while the success probability is the optimal $\USDprob=1-|\braket{0}{+}|\approx0.293$.  The POVM needs one ancilla mode (Appendix~\ref{app:USD_POVM}) but no quantum memory, since Bob measures each qubit on arrival.

Note that the halfway property survives channel noise that is symmetric under interchanging the $\sigma_Z$ and $\sigma_X$ bases, such as the depolarizing channel (Appendix~\ref{sec:Depolarizing_and_Dephasing_Noise}).  Other noise tilts the posterior slightly; for example, a coherent rotation by $0.05$~rad ($0.1$~rad on the Bloch sphere) keeps $e_b=e_p$ yet gives $p_a\approx0.52$.  More importantly, a cheating Bob need not use this measurement at all; this is what the test of Section~\ref{subsec:usd_test} guards against.

\subsection{Why Bob Announces Locations}
\label{subsec:reconciliation}
We first recall the objects of Steps~\ref{QSPIR:code_sizing}--\ref{QSPIR:final_step}.  After the joint permutation, Alice's data string is divided into $\num$ blocks $\Psubstr[j]{A}$ of $\numd$ bits each, and Bob holds the corresponding blocks $\Psubstr[j]{B}$, of which he knows only the conclusive positions.  He draws one block privately, and announces a set $\aset\subset\{1,\ldots,\numd\}$ of $\cthr=\lfloor(\USDprob-\CSSprob)\numd\rfloor$ conclusive locations of that block, where $\CSSprob>0$ is a small user defined value.  Alice then applies the same $\aset$ to every block: she publishes a random $\hmarg\times\cthr$ binary matrix $\dstr{H}$, the check symbols $\sigma_j=\dstr{H}\,\Psubstr[j]{A}|_{\aset}$ of every block $j$, and a random seed $\dstr{z}\in\{0,1\}^{\cthr}$, and the bit of block $j$ is the parity $\dstr{z}\cdot\Psubstr[j]{A}|_{\aset}$.  The number $\hmarg$ of check symbols is fixed by Eq.~\eqref{eq:syndrome_length}: it pays for the errors among Bob's conclusive bits and is much smaller than $\cthr$.  Throughout, $K_j\subset\{1,\ldots,\numd\}$ denotes the set of locations of block $j$ whose bit Bob knows, $a_j:=|\aset\setminus K_j|$ the number of announced locations he misses in block $j$, and $o_{jj'}:=|K_j\cap K_{j'}|$ the overlap of two blocks.
{

Why does Bob announce locations?  In a block he has not drawn, the set $\aset$ falls on positions of which he knows only a fraction $\USDprob$.  He therefore misses about $(1-\USDprob)\cthr$ of its bits, far more than the $\hmarg$ check symbols can supply.  In his own block he misses none.  The separation between the two is linear in $\numd$.  Two elementary facts carry the whole section.

\begin{lemma}[Decodability]
\label{lem:decodability}
Let the conclusive bits of Bob's block be in error independently with probability at most $\tilde{\ebit}$.  For a uniformly random $\dstr{H}$ sized by Eq.~\eqref{eq:syndrome_length}, Bob recovers $\Psubstr[\pidx]{A}|_{\aset}$ except with probability at most $\ECprob+e^{-\cthr D(\tilde{\ebit}+\gamma\Vert\tilde{\ebit})}$, and outputs a wrong string with probability at most $\ECprob$.  Here $D(x\Vert y)=x\ln\frac{x}{y}+(1-x)\ln\frac{1-x}{1-y}$ is the binary relative entropy.
\end{lemma}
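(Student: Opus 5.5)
Let $x=\Psubstr[\pidx]{A}|_{\aset}\in\{0,1\}^{\cthr}$ be Alice's string and $y=\Psubstr[\pidx]{B}|_{\aset}$ Bob's copy, and let $e=x\oplus y$ be the error pattern. I would split failure into two events. Event $E_1$: $|e|>t$, so the true string lies outside Bob's decoding ball. Event $E_2$: some $x'\neq x$ with $d(x',y)\le t$ satisfies $\dstr{H}x'=\sigma_\pidx=\dstr{H}x$. If neither event occurs, $x$ is the unique string in the ball consistent with the check symbols, and Bob recovers it. A wrong output requires a unique match that is not $x$, which forces $E_2$. So the two claims reduce to bounding $\Pr[E_1]$ by the Chernoff term and $\Pr[E_2]$ by $\ECprob$.

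For $E_1$, I use the hypothesis that the $\cthr$ conclusive bits on $\aset$ err independently with probability at most $\tilde{\ebit}$. Then $|e|$ is stochastically dominated by $\mathrm{Bin}(\cthr,\tilde{\ebit})$. The relative-entropy Chernoff bound gives $\Pr[|e|\ge(\tilde{\ebit}+\gamma)\cthr]\le e^{-\cthr D(\tilde{\ebit}+\gamma\Vert\tilde{\ebit})}$. Since $t=\lfloor(\tilde{\ebit}+\gamma)\cthr\rfloor$, the event $|e|>t$ implies $|e|\ge(\tilde{\ebit}+\gamma)\cthr$, so this settles $E_1$. (When $\gamma=0$ the bound is trivial, consistent with the noiseless remark.) Note that $\aset$ is chosen by Bob from conclusive positions before $\dstr{H}$ is drawn, so this bound does not depend on $\dstr{H}$.

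For $E_2$, I use the fact that $\dstr{H}$ is drawn uniformly after $\aset$, and hence independently of $x$ and $y$. For any fixed $x'\neq x$, the difference $x\oplus x'$ is nonzero, so $\Pr_{\dstr{H}}[\dstr{H}(x\oplus x')=0]=2^{-\hmarg}$: each row of $\dstr{H}$ is a uniform vector, and its inner product with a nonzero vector is a uniform bit. A union bound over the at most $\mathrm{Vol}(\cthr,t)$ strings in the ball around $y$ gives $\Pr[E_2]\le\mathrm{Vol}(\cthr,t)\,2^{-\hmarg}$. With the smaller choice $\hmarg\ge\log_2\mathrm{Vol}(\cthr,t)+\log_2(1/\ECprob)$, this is at most $\ECprob$. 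For Eq.~\eqref{eq:syndrome_length}, I invoke the standard estimate $\mathrm{Vol}(c,t)\le 2^{cH_2(t/c)}\le 2^{cH_2(\tilde{\ebit}+\gamma)}$, valid when $t/c\le\tfrac12$, which I assume here, as the paper's regime of small $\tilde{\ebit}+\gamma$ implies. The bound then follows the same way.

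The main obstacle is the union bound in $E_2$: it is taken over a ball centred at $y$, which is random, and it must be conditioned correctly. I would handle this by conditioning on $(x,y)$ first, which is legitimate because of the independence of $\dstr{H}$, and then averaging. A second point to check is that the wrong-output probability needs no Chernoff term. This holds because when $E_1$ occurs but $E_2$ does not, no string in the ball matches, so Bob outputs $\perp$ rather than a wrong bit.
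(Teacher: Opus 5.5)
Your proposal is correct and is essentially the paper's proof: a Chernoff bound for the event that more than $t$ of Bob's conclusive bits on $\aset$ are wrong, plus a union bound over the at most $\mathrm{Vol}(\cthr,t)$ candidates within distance $t$ of Bob's string. Each candidate matches the check symbols with probability $2^{-\hmarg}$ since $\dstr{H}$ is drawn after $\aset$, and the wrong-output event needs only this second term. Two cosmetic points: the step $H_2(t/\cthr)\le H_2(\tilde{\ebit}+\gamma)$ needs $\tilde{\ebit}+\gamma\le\tfrac12$ (the condition the paper states), not just $t/\cthr\le\tfrac12$. Also, for $\tilde{\ebit}=0$ the first event is impossible, which is sharper than saying the Chernoff bound is trivial.
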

\begin{proof}
The error weight exceeds $t$ with probability at most $e^{-\cthr D(\tilde{\ebit}+\gamma\Vert\tilde{\ebit})}$ (Chernoff), and never if $\tilde{\ebit}=0$.  Otherwise the true error pattern is among the at most $\mathrm{Vol}(\cthr,t)\le2^{\cthr H_2(\tilde{\ebit}+\gamma)}$ patterns of weight at most $t$, where $\mathrm{Vol}(c,t)=\sum_{i\le t}\binom{c}{i}$~\cite{MacWilliams_Sloane} and $\tilde{\ebit}+\gamma\le\tfrac12$.  Any other one differs from it by a nonzero vector, whose check symbols are uniform, so it matches $\sigma_\pidx$ with probability $2^{-\hmarg}$.  A union bound and Eq.~\eqref{eq:syndrome_length} give $\ECprob$ for the event that a wrong pattern of weight at most $t$ matches, whatever the true weight.
\end{proof}

A random $\dstr{H}$ is not required.  The next lemma holds for every $\dstr{H}$, since it uses only the number of its rows and the uniform seed.  Alice may thus use any efficiently decodable code, supplemented by $\lceil\log_2(1/\ECprob)\rceil$ random check rows drawn after $\aset$, which a miscorrected block fails except with probability $\ECprob$.

\begin{lemma}[Leakage of the seeded parity]
\label{lem:seeded_parity}
Fix a block $j$, let $E_j:=\aset\setminus K_j$ be the announced locations that Bob misses, so that $a_j=|E_j|$, and write $\dstr{H}_{E_j}$ for the columns of $\dstr{H}$ indexed by $E_j$.  Let the values of the blocks on the sets $E_j$ be uniform given Bob's entire record $V$, the set $\aset$ included, and independent across blocks (the erasure hypothesis; Section~\ref{subsec:soft_info} shows that the measurement of Step~\ref{QSPIR:initial} satisfies it, and Appendix~\ref{app:usd_certification} restores it, at a certified known-rate, for a Bob who measures differently).  Let $L_j=\dstr{z}\cdot\Psubstr[j]{A}|_{\aset}$ with $\dstr{z}$ uniform and drawn after $\aset$.  Then, with $V$ now including $\sigma_j$, $\dstr{H}$ and $\dstr{z}$, and for every $\dstr{H}$,
\begin{equation}
    \begin{split}
    \tfrac12\big\Vert\rho_{L_jV}-\omega_L\otimes\rho_V\big\Vert_1&=\tfrac12\,2^{\operatorname{rank}\dstr{H}_{E_j}-a_j}\\
    &\le\;x_j:=\tfrac12\min\!\big(1,2^{\hmarg-a_j}\big),
    \end{split}
    \label{eq:seeded_leak}
\end{equation}
where $\omega_L$ is the uniform bit.
\end{lemma}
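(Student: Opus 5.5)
The plan is to reduce everything to a classical linear-algebra computation over $\mathbb{F}_2$. Since $L_j$ and every ingredient of $V$ that matters here ($\sigma_j$, $\dstr{H}$, $\dstr{z}$, the known bits) are classical, $\rho_{L_jV}$ is a classical–quantum state in which Bob's residual quantum register, if any, is by the erasure hypothesis independent of the values on $E_j$. The trace distance in Eq.~\eqref{eq:seeded_leak} is then the expectation over $V$ of the distance $\tfrac12\big|\Pr[L_j=0\mid V]-\Pr[L_j=1\mid V]\big|$ between the conditional law of $L_j$ and the uniform bit. I would condition in stages: first on everything Bob knows except $\sigma_j$ and $\dstr{z}$, then on $\sigma_j$, and finally average over $\dstr{z}$.

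\emph{Step 1 (decomposition).} Split the announced locations into $\aset\cap K_j$ and $E_j$, and write $\Psubstr[j]{A}|_{\aset}=(x,u)$ with $x$ known to Bob and $u\in\{0,1\}^{a_j}$. By the erasure hypothesis, $u$ is uniform given $V$ without $\sigma_j$. The check symbols and seeds of the other blocks $j'\neq j$ involve only the values of those blocks, which are independent of $u$, so they are irrelevant. Bob's information about $u$ therefore comes only from
\[
\sigma_j=\dstr{H}_{K}x\oplus\dstr{H}_{E_j}u ,
\]
where $\dstr{H}_K$ denotes the columns of $\dstr{H}$ indexed by $\aset\cap K_j$. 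Since $x$ is known, this is equivalent to learning $\dstr{H}_{E_j}u$. Given $\sigma_j$, the string $u$ is therefore uniform on a coset of $\ker\dstr{H}_{E_j}$, which has dimension $a_j-r$ with $r=\operatorname{rank}\dstr{H}_{E_j}$.

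\emph{Step 2 (fixed seed).} Write $\dstr{z}=(z_K,z_E)$, so that
\[
L_j=z_K\cdot x\oplus z_E\cdot u .
\]
The term $z_K\cdot x$ is known to Bob. If $z_E$ lies in the row space of $\dstr{H}_{E_j}$, then $z_E\cdot u$ is a linear function of $\dstr{H}_{E_j}u$, so $L_j$ is determined and the conditional distance is $\tfrac12$. Otherwise the functional $z_E$ is nonconstant on the coset, which makes $L_j$ exactly uniform, with conditional distance $0$. This uses the standard fact that a linear functional not vanishing on a subspace takes each value on exactly half of every coset.

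\emph{Step 3 (average over the seed).} Because $\dstr{z}$ is uniform and drawn after $\aset$, $z_E$ is uniform on $\{0,1\}^{a_j}$ and independent of $\dstr{H}$ and $u$. It therefore lands in the $2^{r}$-element row space with probability $2^{r-a_j}$. Averaging the conditional distances of Step 2 gives exactly $\tfrac12\,2^{r-a_j}$. Finally, $\dstr{H}_{E_j}$ is a $\hmarg\times a_j$ matrix, so $r\le\min(\hmarg,a_j)$, which yields the bound $x_j$ for every $\dstr{H}$.

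The main obstacle is Step 1: justifying rigorously that conditioning on Bob's full record, which includes $\aset$, the other blocks' data and any residual quantum register, leaves $u$ uniform and independent of everything except $\sigma_j$. This is precisely where the erasure hypothesis and the cross-block independence are invoked. I would state it as a Markov chain $u\to\dstr{H}_{E_j}u\to V$ and verify it directly from the hypothesis before carrying out the $\mathbb{F}_2$ computation.
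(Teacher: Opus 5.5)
Your proposal is correct and follows the paper's proof essentially step for step: given $V$ the missed values are uniform on a coset of $\ker\dstr{H}_{E_j}$, the parity is determined exactly when $z_E$ lies in the row space of $\dstr{H}_{E_j}$ and is balanced otherwise, and the uniform seed lands in that row space with probability $2^{\operatorname{rank}\dstr{H}_{E_j}-a_j}$, with $\operatorname{rank}\dstr{H}_{E_j}\le\min(\hmarg,a_j)$. Your extra care in Step~1 (the Markov-chain framing and the remark that the other blocks' check symbols are irrelevant by cross-block independence) only makes explicit what the paper takes directly from the erasure hypothesis.
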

\begin{proof}
Given $V$, the unknown values $\dstr{x}\in\{0,1\}^{a_j}$ are uniform on the affine space $\{\dstr{x}:\dstr{H}_{E_j}\dstr{x}=\sigma'\}$, where $\sigma'$ is $\sigma_j$ corrected by the known positions, and $L_j=L_j^{0}\oplus\dstr{z}_{E_j}\cdot\dstr{x}$, where the parity $L_j^{0}$ over the known locations is known to Bob.  If $\dstr{z}_{E_j}$ lies in the row space of $\dstr{H}_{E_j}$, then $\dstr{z}_{E_j}\cdot\dstr{x}$ is constant on this space and Bob knows $L_j$.  Otherwise $\dstr{z}_{E_j}\cdot\dstr{x}$ is a nonconstant linear functional on the kernel of $\dstr{H}_{E_j}$, hence balanced, and $L_j$ is uniform and independent of $V$.  The first case has trace distance $\tfrac12$, the second $0$, and over the uniform $\dstr{z}_{E_j}$ the first occurs with probability $2^{\operatorname{rank}\dstr{H}_{E_j}}/2^{a_j}$.  Finally, $\operatorname{rank}\dstr{H}_{E_j}\le\min(\hmarg,a_j)$.  Appendix~\ref{app:LHL_application} restates this in classical--quantum notation.
\end{proof}

Note that the bound is exact and needs no leftover hashing.  The $\hmarg$ check symbols can fill at most $\hmarg$ missed locations, and every further one halves the leak.  We call $x_j$ the leakage score of block $j$.  Channel noise only weakens Bob, for giving him the error pattern of his conclusive bits reduces his view to an erasure view, as in the noiseless case.
}

\subsection{Bob Gains At Most One Bit}
\label{sec:Bob_get_more_info}
\label{subsec:PA_uncorrectable}

Can Bob learn more than the single bit he is entitled to?  Two things could go wrong.  A second block could be decodable, or the blocks he cannot decode could still leak part of their parity bits.  Theorem~\ref{thm:success_protocol}(ii) below covers both at once.  By Lemma~\ref{lem:seeded_parity}, block $j$ leaks its parity bit with trace distance at most its leakage score $x_j$.  Extracting the bit by a public random parity is thus itself a privacy-amplification step, and an exact one; Theorem~\ref{thm:PA_block} of Appendix~\ref{app:LHL_application} restates it in classical--quantum notation.  Grant Bob the block with the largest score, since he is entitled to one bit.  What he gains beyond it is the leakage weight $W=\sum_jx_j-\max_jx_j$ (Proposition~\ref{prop:simulator}).

\begin{figure}
    \centering
    \begin{tikzpicture}[x=0.33cm,y=0.33cm]
      \foreach \r/\lab/\pat in {2/{block $i$}/{1,4,7,12,16,18},
                                 1/{block $j$}/{0,4,9,12,15,19},
                                 0/{block $j'$}/{2,5,7,11,14,18}}{
        \node[anchor=east] at (-0.4,\r*1.4+0.5) {\lab};
        \foreach \c in {0,...,19} \draw[gray!60] (\c,\r*1.4) rectangle ++(1,1);
        \foreach \c in \pat \fill[black!70] (\c+0.12,\r*1.4+0.12) rectangle ++(0.76,0.76);
      }
      \foreach \c in {1,4,7,12,16,18} \draw[red!75!black,thick] (\c,-0.2) rectangle ++(1,4.2);
    \end{tikzpicture}
    \caption{\label{fig:common_set_cartoon}Why one announced set cannot serve two blocks.  Each row is a block of $\numd$ positions after the joint permutation of Step~\ref{QSPIR:shuffle}; a filled cell is a location whose bit Bob knows ($K_i$ in the top row, $K_j$ and $K_{j'}$ below).  The red boxes mark the set $\aset$, which Bob draws from the known locations of his block $i$.  Applied to any other block, the same columns hit a known location only inside that block's overlap with $K_i$, which the permutation keeps near $\pknown^{2}\numd$; every other box of the lower rows is one of the $a_j$ (or $a_{j'}$) locations that Bob misses there, and the $\hmarg$ check symbols can fill at most $\hmarg$ of them.}
\end{figure}

Why is $W$ small for every set Bob may announce?  A location of $\aset$ outside $K_j\cap K_{j'}$ is missed in at least one of the two blocks.  Hence $a_j+a_{j'}\ge\cthr-o_{jj'}$, and the worse of the two blocks misses at least half of that.  Now the overlap $o_{jj'}$ does not depend on $\aset$, and the joint permutation of Step~\ref{QSPIR:shuffle} makes it small.  Write $\pknown$ for the known-rate, an upper bound on the mean probability that Bob knows the bit of a correctly guessed, untested position; for the measurement of Step~\ref{QSPIR:initial}, $\pknown=\USDprob$ over a noiseless channel and $\pknown=\USDprob(1+p)$ over a depolarizing channel of strength $p$, and the test of Step~\ref{QSPIR:usd_test} certifies a value for a Bob who measures differently (Section~\ref{subsec:usd_test} and Appendix~\ref{app:usd_certification}).  Since the permutation is independent of Bob's record, a location is known in two given blocks with probability about $\pknown^{2}$.  Two blocks therefore share only about $\pknown^{2}\numd$ known locations, against the $\cthr\approx\USDprob\numd$ that the set must contain.  Figure~\ref{fig:common_set_cartoon} shows the situation: the boxes of $\aset$ fall on known locations of block $i$ by construction, but in every other block they hit known locations only inside that block's overlap with $K_i$.  Note that the permutation is essential here.  Were Bob able to align his known positions across two blocks, the overlap would be $\pknown\numd$ and one set would serve both.  Appendix~\ref{app:simulator_proof} turns this argument into Eq.~\eqref{eq:EW_common} and gives the parameters in closed form.  For $\num=10^{4}$ and $\SecretB=\ECprob=10^{-6}$, a block of $\numd\approx900$ suffices over a noiseless channel and $\numd\approx4000$ at $\tilde{\ebit}=2\%$ (Table~\ref{tab:common_set} and Fig.~\ref{fig:CS_window}).

Note that exact erasure matters.  Were Bob's posterior on an ambiguous position even moderately above $1/2$, the positions he misses would no longer be uniform and Lemma~\ref{lem:seeded_parity} would not apply; this is why Step~\ref{QSPIR:usd_test} tests his measurement (Section~\ref{subsec:usd_test}).

\begin{theorem}[Correctness, database privacy and resources]
\label{thm:success_protocol}
Fix $\num$, $\numd$, $\CSSprob$, $\gamma$ and $\ECprob$, and set the number $\hmarg$ of check symbols by Eq.~\eqref{eq:syndrome_length}.  Let the locations known to Bob carry independent flags whose mean rate over the positions from which the blocks are drawn uniformly, the correctly guessed untested ones, is at most the known-rate $\pknown$.  For the measurement of Step~\ref{QSPIR:initial}, $\pknown=\USDprob$ over a noiseless channel and $\pknown=\USDprob(1+p)$ over a depolarizing channel of strength $p$, where Bob is granted his error pattern.  Let $\epsilon_{\mathrm{rnd}}$ be the statistical error of the joint permutation of Step~\ref{QSPIR:shuffle}, in the sense of Eq.~\eqref{eq:permutation_abort_bound}, commitment errors included (Appendix~\ref{app:storage_commitment}).  Under the erasure hypothesis of Lemma~\ref{lem:seeded_parity}, the protocol of Section~\ref{subsec:protocol}
\begin{enumerate}
    \item[(i)] outputs a wrong bit with probability at most $\ECprob+\CorrectS$, so that it is $\Correct$-correct with $\Correct=\ECprob+\CorrectS$, and $\perp$ with probability at most $\ECprob+e^{-\cthr D(\tilde{\ebit}+\gamma\Vert\tilde{\ebit})}$, and an honest Bob aborts in Step~\ref{QSPIR:block_draw} with probability $\epsilon_\text{abort}\le e^{-2\CSSprob^{2}\numd}$;
    \item[(ii)] is $\SecretB$-secret against Bob with
    \begin{equation}
        \begin{split}
        \SecretB&\;\le\;\EX[W]+\epsilon_{\mathrm{rnd}},\\
        W&:=\sum_j x_j-\max_j x_j,
        \end{split}
        \label{eq:epsB_main}
    \end{equation}
    where $W$ is the leakage weight and, for \emph{every} rule by which Bob chooses $\aset$,
    \begin{equation}
        \EX[W]\;\le\;\frac{\num^{2}}{4}\;2^{-(\cthr-2\hmarg)/2}\,\Big[1+(\sqrt2-1)\,\pknown^{2}\Big]^{\numd};
        \label{eq:EW_common}
    \end{equation}
    for a Bob who measures differently, under assumptions (A1)--(A3) of Appendix~\ref{app:usd_certification}, both displays hold with $\pknown=\USDprob^\text{cert}$ and $\epsilon_\text{cert}$ added to $\SecretB$, where $\USDprob^\text{cert}$ is the known-rate certified by the test of Step~\ref{QSPIR:usd_test} and $\epsilon_\text{cert}$ is the probability that the test passes although Bob's mean known-rate exceeds $\USDprob^\text{cert}$ [Eq.~\eqref{eq:p_cert} and Proposition~\ref{prop:cert_rate}];
    \item[(iii)] is $\Secret$-secret against Eve with $\Secret=\SecretS$ (Section~\ref{sec:security_eve});
    \item[(iv)] consumes $\ell=\num+\lceil\log_2\num\rceil$ bits of QKD key, i.e.\ $\nums=2\ell/\keyrate$ QKD-mode qubits asymptotically [Eqs.~\eqref{eq:bnums_nums} and~\eqref{eq:nums_sifting}] (Section~\ref{sec:Finite_key_analysis_qkd} gives the finite-key count), $\hmarg$ public bits per block, and $\numd$ upstream bits for $\aset$.
\end{enumerate}
\end{theorem}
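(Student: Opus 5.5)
Parts (i), (iii) and (iv) are bookkeeping, and I would dispose of them first.

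For (i), the abort bound follows from Hoeffding. The honest block drawn in Step~\ref{QSPIR:block_draw} has $\numd$ positions, each conclusive independently with probability $\USDprob$. The number of conclusive positions falls below $\cthr\le(\USDprob-\CSSprob)\numd$ with probability at most $e^{-2\CSSprob^{2}\numd}$. The $\perp$ and wrong-string bounds are exactly Lemma~\ref{lem:decodability}. A wrong output bit arises either from a wrong decoded string or from a mismatch of $\keydb$, so a union bound gives $\ECprob+\CorrectS$. Part (iii) is the composability argument of Section~\ref{sec:security_eve}. Part (iv) reads off Eqs.~\eqref{eq:bnums_nums} and~\eqref{eq:nums_sifting} and the message sizes of Steps~\ref{QSPIR:shift_idx} and~\ref{QSPIR:final_step}.

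For (ii) I would first replace the real permutation by the ideal uniform one, at cost $\epsilon_{\mathrm{rnd}}$. I then build a simulator (Proposition~\ref{prop:simulator}) that receives only $\dbit{d}_{j^\ast}$, where $j^\ast$ is the block maximizing $x_j$ after the shift. The simulator outputs Bob's real view, except that every other parity bit $L_j$ is replaced by a uniform bit. The one-time pad $\dstr{d}\oplus\Psubstr{s}\oplus\keydb$ reduces Bob's view of $\dbit{d}_j$ to his view of $L_j$.

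Under the erasure hypothesis the blocks' unknown values are independent given $V$. The $L_j$ are therefore conditionally independent, and a hybrid argument over the $\num-1$ replaced blocks, each costing $x_j$ by Lemma~\ref{lem:seeded_parity}, gives trace distance at most $\EX[W]$. The $\max$ is removed because $j^\ast$ is chosen by the generator, which the definition permits.

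The main work is Eq.~\eqref{eq:EW_common}, which must hold uniformly over Bob's choice of $\aset$. I would pair blocks. Since $a_j+a_{j'}\ge\cthr-o_{jj'}$, we get
\[
\min(x_j,x_{j'})\le\tfrac12\,2^{\hmarg-\max(a_j,a_{j'})}\le\tfrac12\,2^{\hmarg-(\cthr-o_{jj'})/2}.
\]
We also have $\sum_jx_j-\max_jx_j\le\sum_{j<j'}\min(x_j,x_{j'})$: all blocks but the largest are each dominated by their min with the maximizing block, and the remaining terms are nonnegative. This removes the dependence on $\aset$ entirely. It leaves
\[
\EX[W]\le\binom{\num}{2}\tfrac12\,2^{\hmarg-\cthr/2}\,\EX\big[2^{o_{jj'}/2}\big].
\]
With $\binom{\num}{2}\le\num^2/2$ this gives the prefactor $\num^2/4$. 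The slack between $2^{\hmarg}$ and the claimed $2^{\hmarg}$ from $2^{-(\cthr-2\hmarg)/2}$ matches, since $\tfrac12 2^{\hmarg-\cthr/2}=\tfrac12 2^{-(\cthr-2\hmarg)/2}$.

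The obstacle is the moment bound
\[
\EX\big[2^{o_{jj'}/2}\big]\le\big[1+(\sqrt2-1)\pknown^2\big]^{\numd}.
\]
If location $k$ is known in both blocks with independent probability $q_k$, with mean at most $\pknown^2$, then each factor is $1+(\sqrt2-1)q_k$. Concavity of the logarithm (AM--GM) bounds the product by the stated power.

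The difficulty is justifying this independence. Under the uniform permutation, the known flags of the $N$ retained positions are placed by sampling without replacement. I would invoke Hoeffding's comparison: exponential moments of sums sampled without replacement are dominated by those with replacement. This requires writing $o_{jj'}$ as a convex functional of the permuted flags, which I expect to need care. If it resists, I would condition on $K_j$ and bound $|K_j\cap K_{j'}|$ as hypergeometric given $|K_j|$, then average using the independence of the flags. For a cheating Bob, Proposition~\ref{prop:cert_rate} substitutes $\pknown=\USDprob^\text{cert}$ outside the event of probability $\epsilon_\text{cert}$, and the whole argument then carries through unchanged.
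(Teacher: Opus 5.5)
Your treatment of (i), (iii), (iv), the simulator, the reduction $W\le\sum_{j<j'}\min(x_j,x_{j'})$ and the pair bound follows the paper. Your hybrid over the replaced parities gives the same $\sum_{j\ne j^\ast}x_j$ as the paper's estimate $1-2^{-k}\le k/2$. One small imprecision: the $\epsilon_{\mathrm{rnd}}$ step cannot literally replace the sampler by a uniform one, because the sampler is not uniform conditioned on acceptance. It must be applied, as in Eq.~\eqref{eq:permutation_abort_bound}, to the $[0,1]$-valued $\min(1,W)$.

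\textbf{The main gap: the moment bound.} The real gap is $\EX\,2^{o_{jj'}/2}\le[1+(\sqrt2-1)\pknown^{2}]^{\numd}$, which is Lemma~\ref{lem:overlap_mgf}, and you leave it open. It is trivial only when all flag rates are equal. Then the $2\numd$ flags of two blocks are i.i.d.\ whatever the permutation, and $o_{jj'}\sim\mathrm{Bin}(\numd,\pknown^{2})$. The theorem, however, assumes only that the \emph{mean} rate is at most $\pknown$. The certified clause needs exactly this case, since Proposition~\ref{prop:cert_rate} controls only $\bar q$.

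There your AM--GM step fails. For a fixed draw it gives $\prod_l[1+(\sqrt2-1)q_{P_l}q_{P'_l}]\le[1+(\sqrt2-1)Q]^{\numd}$, with $Q$ the location average of $q_{P_l}q_{P'_l}$. But $Q$ depends on the draw and can exceed $\bar q^{2}$. Averaging a convex power of $Q$ over draws goes the wrong way by Jensen.

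Hoeffding's comparison theorem does not rescue this. It controls convex functions of a single sum of values sampled without replacement from one population. Here $o_{jj'}=\sum_l F_{P_l}F_{P'_l}$, with $F_i$ the flag of position $i$, couples the sampled positions in disjoint pairs and is not of that form. Your hypergeometric fallback conditions on realized flag counts and again averages convex functions of random counts. It does not obviously return the stated constant.

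\textbf{The missing idea.} Condition on the draw, not on the flags. The flags are independent of the permutation, so given the positions $P_l,P'_l$ of the two blocks, $o_{jj'}$ is a sum of independent Bernoulli variables with means $q_{P_l}q_{P'_l}$. Hence $\EX[2^{o_{jj'}/2}\mid P,P']=\prod_l[1+(\sqrt2-1)q_{P_l}q_{P'_l}]$.

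The rates at the $2\numd$ distinct positions are coordinates of a uniformly permuted vector, hence negatively associated (Joag-Dev--Proschan). Each factor is positive and non-decreasing in its own disjoint pair $(q_{P_l},q_{P'_l})$. So the mean of the product is at most the product of the means. Finally, $(P_l,P'_l)$ is a uniform ordered pair of distinct positions, so $\EX[q_{P_l}q_{P'_l}]=\bar q^{2}-v/(N-1)\le\bar q^{2}$, with $v$ the variance of the $q_i$.

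\textbf{A second gap: the certified clause.} Your last sentence skips what this clause needs before any of the above applies. A Bob who measures differently does not satisfy the erasure hypothesis, so Lemma~\ref{lem:seeded_parity} does not apply to his record directly. The paper (Theorem~\ref{thm:cert_privacy}) proceeds in three steps:
\begin{enumerate}
\item It hands him the flags of Lemma~\ref{lem:erasure_domination}, which restore the erasure hypothesis exactly, with non-uniform rates $q_i=2g_i-1$.
\item It obtains their independence given his record from Lemma~\ref{lem:cert_independence}, under (A2).
\item It uses that the blocks are a uniform partition of the very population on which $\bar q$ is certified.
\end{enumerate}
It is precisely these non-uniform rates that make the negative-association argument above indispensable.
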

\begin{proof}
(i) is Lemma~\ref{lem:decodability} together with the correctness of the QKD key; the abort bound is Hoeffding's inequality~\cite{Hoeffding_1963}, for every position of an honest Bob is conclusive with probability at least $\USDprob$.  The first display of (ii) is Lemma~\ref{lem:seeded_parity} together with Proposition~\ref{prop:simulator}; Eq.~\eqref{eq:EW_common} is proved in Appendix~\ref{app:simulator_proof}, and the certified case in Theorem~\ref{thm:cert_privacy} of Appendix~\ref{app:usd_certification}.  (iii) and (iv) are read off Steps~\ref{QSPIR:standard_secure_bits} and~\ref{QSPIR:final_step}.
\end{proof}

Equation~\eqref{eq:EW_common} turns a target $\SecretB$ into a size rule for the announced set (Appendix~\ref{app:common_set}),
\begin{equation}
    \begin{split}
    \cthr\;\ge\;&2\hmarg+2\numd\log_2\!\Big[1+(\sqrt2-1)\,\pknown^{2}\Big]\\
    &+4\log_2\num+2\log_2\frac{1}{\SecretB}-4 ,
    \end{split}
    \label{eq:size_rule}
\end{equation}
where $\SecretB$ stands for the part of the budget left for $\EX[W]$, and whose second term equals $0.1007\,\numd$ at $\pknown=1-1/\sqrt2$.  Note that the check symbols enter twice, once for each of the two blocks a cheating Bob must fit into one set.  Note also that at fixed $\hmarg$ every two further locations halve the bound.  More importantly, the rule is compatible with $\cthr<\USDprob\numd$ only if $\USDprob\big[1-2H_2(\tilde{\ebit}+\gamma)\big]$ exceeds $2\log_2[1+(\sqrt2-1)\pknown^{2}]$.  Over a depolarizing channel, where $\pknown=\USDprob(1+p)$ and $\tilde{\ebit}=p/(1+p)$, this allows $\tilde{\ebit}$ up to $5.5\%$ as $\numd\to\infty$.

Appendix~\ref{sec:Depolarizing_and_Dephasing_Noise} expresses $\tilde{\ebit}$ through the estimates of Step~\ref{QSPIR:rate_estimation}: for a Pauli channel $\tilde{\ebit}=(\ebit+\ephase)/(1+\ebit+\ephase)$, about twice the channel error rate.

The trace-distance parameter $\SecretB$ also controls the Shannon information.  Let $\mathbf{x'}$ be the $\num-1$ database bits outside the one Bob is entitled to, for a uniformly random database, and $\mathbf{y'}$ his estimates of them.  Applying the Alicki--Fannes--Winter continuity bound~\cite{Alicki_Fannes, Winter_continuity} to the $(\num-1)$-bit register $\mathbf{x'}$, as in Corollary~\ref{cor:Shannon_from_trace} of Appendix~\ref{app:LHL_application}, gives $I(\mathbf{x'};\mathbf{y'})\le\epsilon_I$ with
\begin{equation}
    \epsilon_I\;=\;2(n-1)\,\SecretB+(1+\SecretB)\,H_2\!\left(\frac{\SecretB}{1+\SecretB}\right).
    \label{eq:epsI_epsB}
\end{equation}
Since $\SecretB$ is exponentially small in $\cthr$, so is $\epsilon_I=O\big(\SecretB\,[n+\log_2(1/\SecretB)]\big)$.

\section{Security Against Alice}
\label{sec:Alice_cheating}
From the discussions in Sec.~\ref{subsec:objectives}, Alice is now the only dishonest party. Bob follows the protocol honestly, and the quantum channel is controlled by Eve, who does not collude with Alice.  A dishonest Alice controls the data states she sends in Step~\ref{QSPIR:initial}, the auxiliary commitment states of Step~\ref{QSPIR:shuffle}, and her half of the classical conversation.  Bob, in turn, sends only two messages that depend on his private record of the positions that enter the blocks, both in Step~\ref{QSPIR:shift_idx}: the set $\aset$ and the cyclic shift, the latter computed from $\pidx$ itself.  Index privacy therefore hinges on one question: from Alice's point of view, can either message be correlated with $\pidx$?  If Alice prepares the legitimate states, the answer is no, by symmetry.  Every kept qubit is unambiguously determined with one and the same probability $\USDprob$ no matter which of the four BB84 states was sent.  Hence, to Alice the conclusive pattern is an i.i.d.\ coin-flip pattern that she can neither observe nor bias, and all $\num$ substrings look statistically identical.  The block $i$ is drawn uniformly, the set $\aset$ is a uniformly random set of $\cthr$ locations whichever block it comes from, and the announced shift $\pidx-i$ is a one-time-padded copy of $\pidx$.  Against an honestly preparing Alice, index privacy therefore holds exactly, over any channel whose noise does not distinguish the states she sent (Proposition~\ref{prop:index_privacy_honest} of Appendix~\ref{app:alice_audit}).

Can a dishonest Alice break this symmetry?  Apart from her confirmations in Step~\ref{QSPIR:set_disclosure}, her classical messages cannot.  The error rates estimated in Step~\ref{QSPIR:rate_estimation} enter the protocol only through global parameters, the key rate and the number of check symbols of Eq.~\eqref{eq:syndrome_length}, which are common to all $\num$ blocks, as are $\dstr{H}$ and $\dstr{z}$; overstating a symmetric error rate merely inflates the overhead.  Wrongly prepared states are another matter.  They can bias which of Bob's positions are conclusive, and Bob's own record cannot detect this (Appendix~\ref{app:alice_audit}).  A false confirmation in Step~\ref{QSPIR:set_disclosure}, that a wrong set guess was correct, has the same effect without any tampering: $\ket0$ kept under the guess $S_1$ is conclusive with probability $3\USDprob\approx0.88$.  We therefore assume
\begin{itemize}
    \item[(P)] At every position, every photon that Alice emits carries one and the same of the four BB84 states, and Alice announces truthfully in Step~\ref{QSPIR:set_disclosure} which set guesses were correct and which positions are retained.  The number of photons per position is arbitrary.
\end{itemize}
The parameter estimation of Step~\ref{QSPIR:rate_estimation} supports the first half of (P).  There Bob picks his own sample first, from all QKD mode qubits after the transmission, so Alice cannot keep the positions she tampers with out of it, and she announces her bits on it without seeing his outcomes.  A preparation that departs from the BB84 states therefore raises Bob's estimates, the protocol uses the larger of the two parties' estimates, and a large departure makes it abort.  The second half of (P) is not tested, and the estimation does not prove the first half either.  A small rotation biases the conclusive probability at first order in its angle but the error rates only at second order [Eq.~\eqref{eq:tamper_linear_gain}], so it can hide in the statistical fluctuations of the estimates and in the channel noise.  We prove no bound against such an Alice (Appendix~\ref{app:alice_audit}).

Several photons per position are a separate danger, for Bob's detectors cannot tell one photon from several.  A bright pulse of a legitimate state is almost always conclusive for the measurement of Step~\ref{QSPIR:initial}, and (P) does not exclude it.  It does, however, often make two detectors click.  On average, such a pulse changes Bob's conclusive record at its position with probability at most $0.39$ times its multi-click probability, and the count $D$ of Step~\ref{QSPIR:initial} limits the total multi-click probability of Alice's pulses.  Since the block of Step~\ref{QSPIR:block_draw} is drawn uniformly and privately, whatever bias remains acts through one random block only (Appendix~\ref{app:alice_multiphoton}).  Hence, under (P) and over a noiseless channel,
\begin{equation}
    \SecretA\;\le\;\frac{0.78\,B_D}{\num}\;+\;\epsilon_D
    \label{eq:epsA_main}
\end{equation}
(Theorem~\ref{thm:alice_multiphoton} of Appendix~\ref{app:alice_audit}).  The count stops Alice, except with probability $\epsilon_D$, once her expected number of multi-clicks reaches $B_D=\tau_D+\Lambda_D+\sqrt{2\tau_D\Lambda_D+\Lambda_D^{2}}$ with $\Lambda_D=\ln(1/\epsilon_D)$ (Lemma~\ref{lem:multiclick_count}).  The bound is $0.002$ at $\num=10^{4}$, $\tau_D=0$ and $\epsilon_D=10^{-6}$.  With one photon per position, $\SecretA=0$ over any channel whose noise does not distinguish the four states and acts alike at every position (Proposition~\ref{prop:index_privacy_honest}).

The joint generation in Step~\ref{QSPIR:shuffle} keeps a dishonest Bob from choosing the permutation.  A permutation chosen by Bob alone would let him align his conclusive positions across several blocks, so that one set serves all of them.  Alice's rank is bound before honest Bob supplies a fresh uniform rank, so Alice cannot choose the resulting permutation.  Conversely, hiding prevents a dishonest Bob from learning honest Alice's rank before supplying his own contribution.  Appendix~\ref{app:storage_commitment} states the statistical error and the abort qualification; the sampler does not guarantee a uniform distribution among adversarially selected completed runs.

\section{Parameters, Resources And Numerical Validation}
\label{sec:numerics}
In this section, we choose the parameters of the protocol, estimate its cost, and validate numerically the block statistics that underlie the security analysis of Section~\ref{sec:Bob_Cheating}.  The simulations are conducted in Python using NumPy, and the graphs are plotted with Matplotlib.

\subsection{Block Length And Operating Window}
\label{sec:operating_window}

The one design decision of the protocol is the size $\cthr$ of the announced set.  Two constraints pull in opposite directions.  Correctness bounds $\cthr$ from above, Bob can only put locations he knows into $\aset$, so the block he draws must carry at least $\cthr$ conclusive positions.  That count is $\mathrm{Bin}(\numd,\pknown)$ with mean $\pknown\numd\approx0.293\,\numd$, and $\cthr$ must sit below the value that the count exceeds except with the abort probability $\epsilon_\text{abort}$ of Step~\ref{QSPIR:block_draw}.  Database privacy bounds $\cthr$ from below through the size rule of Eq.~\eqref{eq:size_rule}, a smaller set could be fitted by a cheating Bob into the locations he knows in \emph{two} blocks, which share about $\pknown^{2}\numd$ of them, with the $\hmarg$ check symbols making up the difference in each block.  The rule therefore charges $2\hmarg$ plus about $0.1\,\numd$ plus terms logarithmic in $\num$ and $1/\SecretB$.  Both limits grow linearly in $\numd$, the upper one at rate $\USDprob\approx0.29$ and the lower one at rate about $0.10$ plus the check-symbol rate $2\hmarg/\numd$, so a long enough block always leaves room between them, while channel noise narrows it: erroneous conclusive bits raise $\pknown$ slightly, and, more importantly, they require more check symbols, $\hmarg\approx\cthr H_2(\tilde{\ebit}+\gamma)$, which are counted twice.

Figure~\ref{fig:CS_window} evaluates the two constraints exactly, with the binomial law of the overlap $o_{jj'}$ in place of the moment bound of Lemma~\ref{lem:overlap_mgf}.  Panel~(a) fixes $\num=10^{4}$ and $\numd=3000$ and plots both limits against the post-USD error rate $\tilde{\ebit}$, the fraction of Bob's conclusive bits that are wrong ($\tilde{\ebit}=p/(1+p)$ for a depolarizing channel of strength $p$, Appendix~\ref{sec:Depolarizing_and_Dephasing_Noise}).  The blue line is the largest $\cthr$ for which an honest Bob aborts with probability at most $10^{-3}$, about $800$ over a noiseless channel; it rises slowly with noise because noise creates extra conclusive outcomes.  The red line is the smallest $\cthr$ allowed by the size rule at $\SecretB=10^{-6}$, about $430$ over a noiseless channel ($2\hmarg=40$, $0.1007\numd=302$, $4\log_2\num=53$ and $2\log_2(1/\SecretB)=40$, less the constant $4$); it rises steeply with noise through $\hmarg$.  The shaded region between them is the operating window: every $\cthr$ in it gives both correctness and privacy.  The two lines cross at $\tilde{\ebit}\approx1.5\%$, beyond which no set size works at this block length.

Panel~(b) shows the remedy.  Since the window widens with $\numd$, it plots the smallest block length for which it is open, against $\tilde{\ebit}$ and for three values of $\SecretB$: about $900$, $2400$, $4000$ and $7100$ positions at $\tilde{\ebit}=0$, $1\%$, $2\%$ and $3\%$ for $\SecretB=10^{-6}$.  The three curves lie close together because $\SecretB$ enters the size rule only through the additive term $2\log_2(1/\SecretB)$, so the block length grows only logarithmically in $1/\SecretB$.  All of them diverge, beyond the plotted range, near $5.8\%$, where the gain $\pknown[1-2H_2(\tilde{\ebit}+\gamma)]$ per location no longer exceeds the cost $2\log_2[1+(\sqrt2-1)\pknown^{2}]$ and no block length suffices; the figure lets $\cthr$ reach the honest count $\pknown\numd$, and for $\cthr<\USDprob\numd$ this is the threshold of about $5.5\%$ of Section~\ref{sec:Bob_get_more_info}.  To use the figure, one reads the block length off panel~(b) for the noise level at hand and then takes any $\cthr$ inside the window of panel~(a), recomputed at that block length; Appendix~\ref{app:simulator_proof} gives the same choice in closed form, and Table~\ref{tab:common_set} lists it next to the exact evaluation.  Figure~\ref{fig:USD_number_per_block_histogram} below shows the conclusive counts of simulated blocks against such a $\cthr$.

\begin{figure*}
    \centering
    \includegraphics[width=0.9\textwidth]{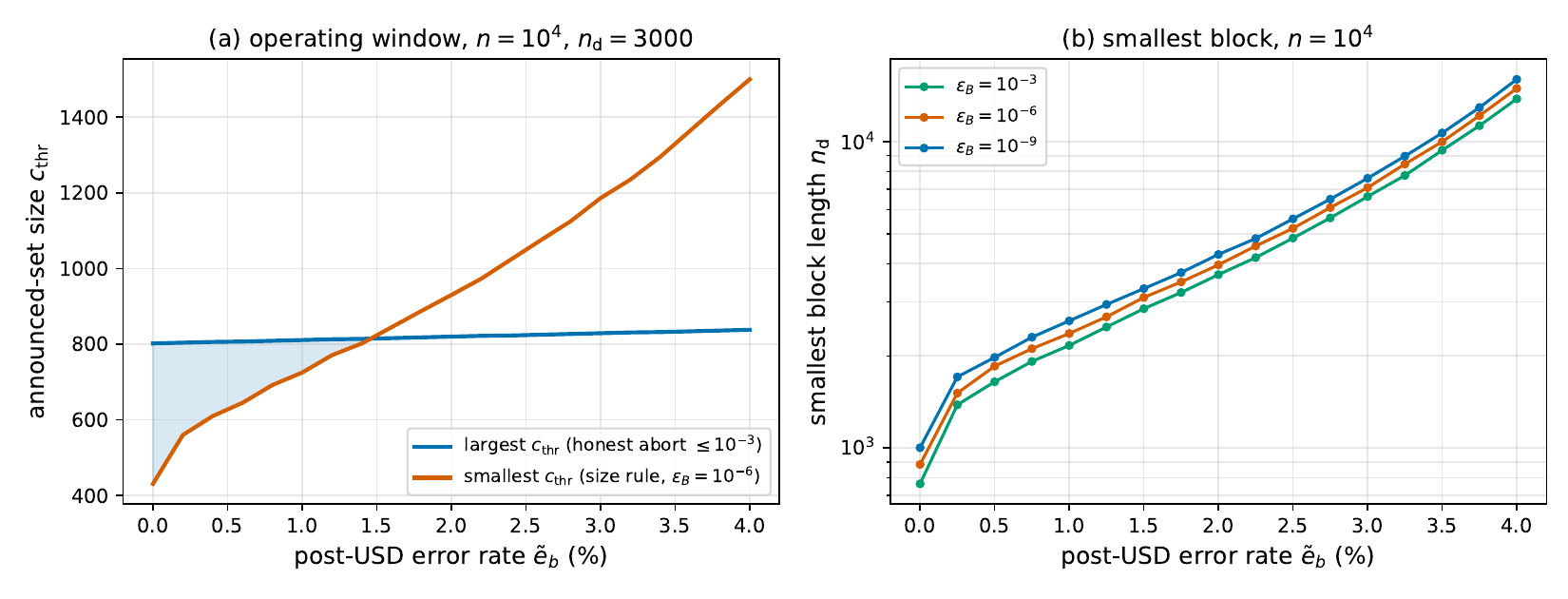}
    \caption{Operating window of the announced set at $\num=10^{4}$, for a depolarizing channel with post-USD error rate $\tilde{\ebit}$, known-rate $\pknown=\USDprob/(1-\tilde{\ebit})$, and $\ECprob=\epsilon_\perp=10^{-6}$ for the decoding-failure and $\perp$ probabilities.  (a)~At $\numd=3000$: the smallest $\cthr$ allowed by the size rule at $\SecretB=10^{-6}$ (red) and the largest one for which an honest Bob aborts with probability at most $10^{-3}$ in Step~\ref{QSPIR:block_draw} (blue); the protocol operates in the shaded region.  (b)~The smallest block length $\numd$ for which the window is open.  Both panels use the exact binomial law of the overlap $o_{jj'}$, the decoding radius $t$ at the exact binomial quantile, and $\hmarg=\lceil\log_2\mathrm{Vol}(\cthr,t)+\log_2(1/\ECprob)\rceil$.}
    \label{fig:CS_window}
\end{figure*}

\begin{table}[h]
\caption{\label{tab:common_set}Smallest block at $\num=10^{4}$, $\SecretB=\ECprob=\epsilon_\perp=10^{-6}$ ($\epsilon_\perp$ bounds the Chernoff term of Lemma~\ref{lem:decodability}), $\epsilon_\text{abort}=10^{-3}$ and $\pknown=\USDprob/(1-\tilde{\ebit})$.  Left: the smallest multiple of ten satisfying Eq.~\eqref{eq:size_rule}, with $\CSSprob$ and $\gamma$ chosen as in Appendix~\ref{app:simulator_proof} and $\hmarg$ from Eq.~\eqref{eq:syndrome_length}.  Right: exact evaluation.}
\begin{ruledtabular}
\begin{tabular}{c|ccc|cccc}
$\tilde{\ebit}$ & $\numd$ & $\cthr$ & $\hmarg$ & $\numd$ & $\cthr$ & $t$ & $\hmarg$\\
\hline
$0$ & $980$ & $228$ & $20$ & $884$ & $218$ & $0$ & $20$\\
$1\%$ & $4360$ & $1154$ & $308$ & $2365$ & $632$ & $21$ & $150$\\
$2\%$ & $6800$ & $1838$ & $518$ & $3962$ & $1096$ & $47$ & $296$\\
$3\%$ & $12330$ & $3405$ & $998$ & $7079$ & $2019$ & $100$ & $590$\\
\end{tabular}
\end{ruledtabular}
\end{table}

\subsection{Key Overhead With Finite Keys}
\label{sec:Finite_key_analysis_qkd}
So far, our estimate of the overhead $\nums$ in Theorem~\ref{thm:success_protocol} rests on the asymptotic key rate.  What happens if Alice sends only finitely many qubits and real codes are used?  Interestingly, Theorem~\ref{thm:success_protocol} carries over.  All we need is to replace the asymptotic key rate $\keyrate$ of Eq.~\eqref{eq:SP_rate_split} by $r_{\text{real}}$, and the condition of Step~\ref{QSPIR:rate_estimation} by $r_{\text{real}}>0$, which defines a finite-key threshold error rate $e_{\text{thr}}^{\text{real}}$.

We quote the approximate realistic key rate of the finite-key analysis of Ref.~\cite{Finite_key_analysis1} (see also Ref.~\cite{Finite_key_analysis2}, and Refs.~\cite{Scarani_QKD_review, Xu_QKD_review} for reviews):

\begin{equation}
    r_{\text{real}}\approx q-H_2(e_{p,\text{tol}}+\xi)-\lambda H_2(e_{b,\text{tol}})-\frac{1}{N}\log_2 \frac{2}{\epsilon_{\text{sec}}^2\epsilon_{\text{cor}}}
    \label{eq:realistic_rate}
\end{equation}
Here the statistical-fluctuation slack $\xi$ (denoted $\mu$ in Ref.~\cite{Finite_key_analysis1}; we reserve $\mu$ for photon intensities) is
\begin{equation}
    \xi := \sqrt{\frac{N+\nume}{N\nume}\,\frac{\nume+1}{\nume}\,\ln\frac{4}{\epsilon_{\text{sec}}}},
    \label{eq:xi_slack}
\end{equation}
with $N$ the raw-key length of the QKD-mode transmission and $\nume$ the size of each parameter-estimation sample, both counted after sifting.  Since the larger of the two estimates of Step~\ref{QSPIR:rate_estimation} is used, one honest estimate suffices, and Eq.~\eqref{eq:xi_slack} applies with the size of one sample.  Here $\epsilon_{\text{sec}}$ and $\epsilon_{\text{cor}}$ are the secrecy and correctness parameters of the QKD key of Step~\ref{QSPIR:standard_secure_bits}, which Table~\ref{tab:notation} writes $\SecretS$ for both.  The argument $4/\epsilon_{\text{sec}}$ follows from Eqs.~(S3)--(S4) of Ref.~\cite{Finite_key_analysis1}, in its updated version arXiv:1103.4130v2 that uses the secrecy criterion of Section~\ref{subsec:definitions}, with $\epsilon=\epsilon_{\text{sec}}/4$.  The rate is per raw-key bit; the sifting factor of Eq.~\eqref{eq:nums_sifting} is applied separately.

Here $q$ is the source preparation quality, defined by $q:=-\log_2\max|\bra{\psi_X}\ket{\psi_Z}|^{2}$ in terms of the maximal overlap between the source's $X$- and $Z$-basis states~\cite{Finite_key_analysis1}.  For ideal BB84 state preparation, $q=1$.  %
The phase-error tolerance $e_{p,\text{tol}}$ carries the statistical-fluctuation penalty $\xi$ [Eq.~\eqref{eq:xi_slack}].  In contrast, the bit-error tolerance $e_{b,\text{tol}}$ multiplies the error-correction inefficiency $\lambda$ because reconciliation acts on the directly measured bit values.  For a symmetric channel, set $e_{b,\text{tol}}=e_{p,\text{tol}}=e_{\text{tol}}$.  Last but not least, $\lambda\ge1$ is the inefficiency ratio of the error-correction code actually used, $\text{leak}_{\text{EC}}\approx\lambda N H_2(e_{b,\text{tol}})$~\cite{Finite_key_analysis1}, with $\lambda=1$ the Shannon limit.  Practical codes reach $\lambda\approx1.1$--$1.2$~\cite{Rates_for_realistic_code2, Rates_for_realistic_code3}.  In all numerical evaluations below, we set $\lambda=1.16$~\cite{Lim_decoy_finite}.

The finite-key threshold is now a \emph{curve} in the $(e_b, e_p)$ plane, obtained by setting
\begin{equation}
    r_{\text{real}}>0
\end{equation}
This condition defines the threshold curve $(e_{b,\text{thr}}^{\text{real}}, e_{p,\text{thr}}^{\text{real}}) = f(\Phi[N, \nume, \ell, e_{\text{tol}}, \epsilon_{\text{cor}}, \text{leak}_{\text{EC}}], q, \epsilon_{\text{sec}})$, where $\Phi$ collects the finite-key protocol parameters, including raw-key size, parameter-estimation sample, key length, error tolerance, correctness, and error-correction leakage~\cite{Finite_key_analysis1}.  We write $e_{\text{thr}}^{\text{real}}$ for its symmetric point $e_b=e_p$.  Written out for the key length, with $\xi$ evaluated at the sample size $\nume$, the condition reads

\begin{equation}
    \begin{split}
        &N(q-H_2(e_{p,\text{thr}}^{\text{real}}+\xi)-\lambda H_2(e_{b,\text{thr}}^{\text{real}}))\\
        &\qquad>\log_2 \frac{2}{\epsilon_{\text{sec}}^2\epsilon_{\text{cor}}}\\
        &H_2(e_{p,\text{thr}}^{\text{real}}+\xi)+\lambda H_2(e_{b,\text{thr}}^{\text{real}})\\
        &\qquad<q-\frac{1}{N}\log_2 \frac{2}{\epsilon_{\text{sec}}^2\epsilon_{\text{cor}}}\\
    \end{split}
\end{equation}
On the symmetric slice $e_{b,\text{thr}}^{\text{real}}=e_{p,\text{thr}}^{\text{real}}=e_{\text{thr}}^{\text{real}}$, it is a condition on $e_{\text{thr}}^{\text{real}}$ alone.  Two trends are apparent.  First, larger $\epsilon_{\text{sec}}$ and $\epsilon_{\text{cor}}$ raise $e_{\text{thr}}^{\text{real}}$.  Second, $e_{\text{thr}}^{\text{real}}$ increases with $N$ and $\nume$, and approaches the asymptotic limit $e_{\text{thr}}\approx9.8\%$, the root of $(1+\lambda)H_2(e)=1$.  Note that $N$ is not tied to the data transmission.  It is fixed by the key length alone, as we now show.

Next, we estimate the overhead under finite-key effects.  By Theorem~\ref{thm:success_protocol}(iv), the key length is $\ell=\num+\lceil\log_2\num\rceil$ bits.  Let $N$ be the smallest raw-key length whose finite-key output on the symmetric slice $e_b=e_p=e$, $N\big(q-H_2(e+\xi)-\lambda H_2(e)\big)-\log_2\frac{2}{\epsilon_{\text{sec}}^2\epsilon_{\text{cor}}}$ [Eq.~\eqref{eq:realistic_rate}] reaches $\ell$.  Since sifting halves both the key-generating qubits and the samples, the QKD mode needs $\nums=2N$ qubits in Step~\ref{QSPIR:standard_secure_bits} and $4\nume$ in Step~\ref{QSPIR:rate_estimation}, two samples of $2\nume$.  Table~\ref{tab:overhead} lists their sum, minimized over $\nume$.

\begin{table}[t]

\caption{\label{tab:overhead}Minimum number of QKD-mode qubits $4\nume+\nums=2(N+2\nume)$ for the key length of Eq.~\eqref{eq:bnums_nums}, on the symmetric slice $e_b=e_p=e$ with $\lambda=1.16$, $\epsilon_{\text{sec}}=\epsilon_{\text{cor}}=0.01$ and the sample size $\nume$ optimized.}
\begin{ruledtabular}
\begin{tabular}{lcccc}
$\num$ & $e=0.005$ & $e=0.01$ & $e=0.02$ & $e=0.05$ \\
\colrule
$10^{4}$ & $4.1\times10^{4}$ & $4.4\times10^{4}$ & $5.3\times10^{4}$ & $1.0\times10^{5}$ \\
$10^{5}$ & $3.0\times10^{5}$ & $3.3\times10^{5}$ & $3.9\times10^{5}$ & $7.2\times10^{5}$ \\
$10^{6}$ & $2.6\times10^{6}$ & $2.8\times10^{6}$ & $3.3\times10^{6}$ & $6.1\times10^{6}$ \\
\end{tabular}
\end{ruledtabular}
\end{table}

The overhead is linear in the database size: at large $\num$ it approaches $2\num/r_{\text{real}}$, while at small $\num$ the finite-key terms of Eq.~\eqref{eq:realistic_rate} dominate, so that its ratio to $\num$ falls from $4.1$ at $\num=10^{4}$ to $2.6$ at $\num=10^{6}$ for $e=0.005$.  Note that a fixed sample $\nume=\sqrt{N}$ is a poor choice at small $\num$: it costs $7.7\times10^{4}$ qubits instead of $4.1\times10^{4}$ at $\num=10^{4}$ and $e=0.005$.  As in QKD, a channel error rate near the threshold drives the overhead up sharply.  These are the costs of the two pads only.  The data transmission adds $\numd\num$ retained positions, and the measurement test and the commitment add their own samples (Appendices~\ref{app:usd_certification} and~\ref{app:storage_commitment}).

\subsection{Conclusive Count Per Block}
\label{sec:sim_histogram}

\begin{figure}
    \centering
    \includegraphics[width=\linewidth]{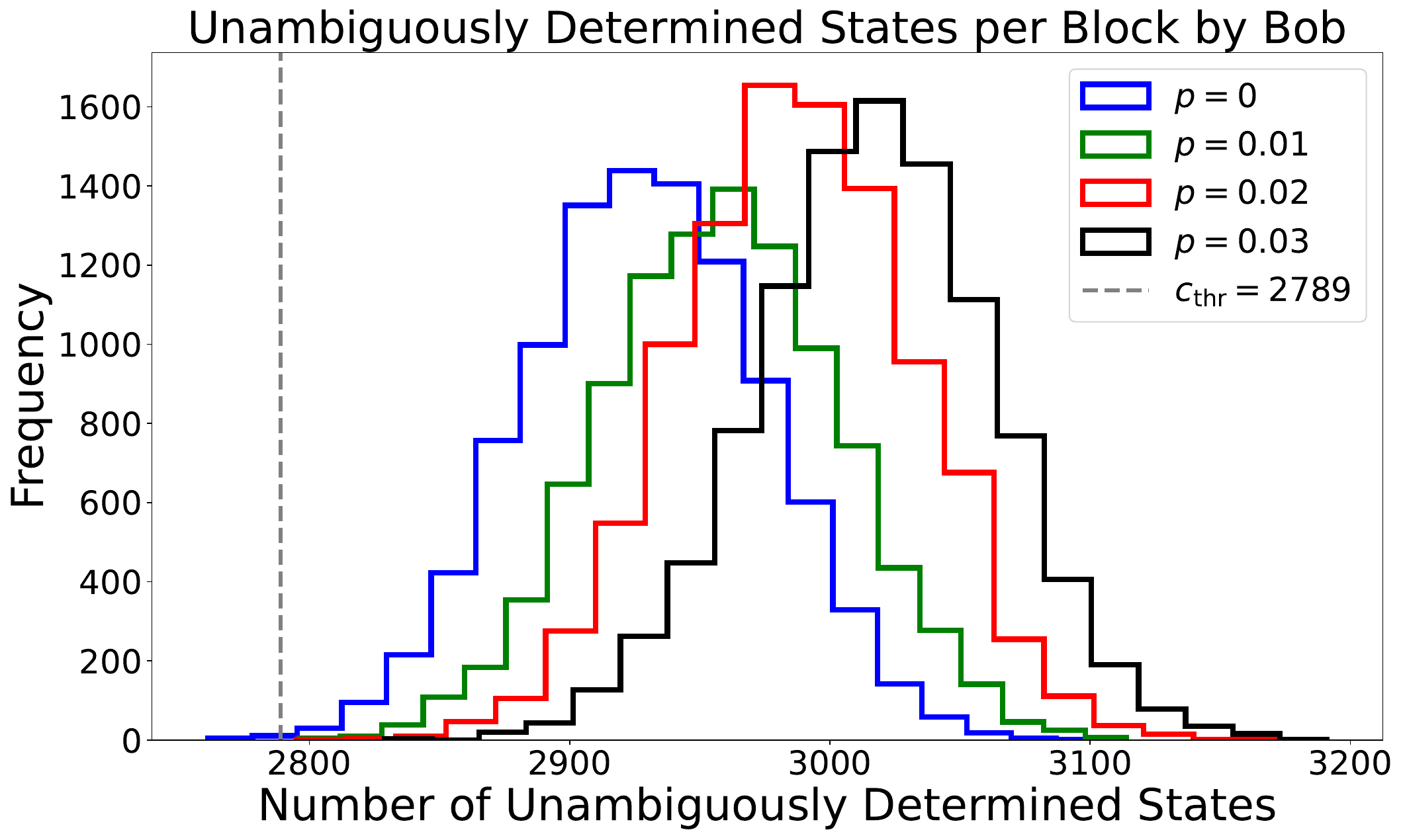}
    \caption{Histogram of the number of conclusive positions per block in a typical run, for $\num=\numd=10^{4}$ and depolarizing noise $p\in\{0, 0.01, 0.02, 0.03\}$.  The dashed line is the set size $\cthr=2789$, for which an honest Bob aborts with probability $10^{-3}$ over a noiseless channel; it obeys the size rule of Eq.~\eqref{eq:size_rule} at $\SecretB=10^{-6}$ for all four noise levels, with $\hmarg$ as in the exact evaluation of Table~\ref{tab:common_set}.}
    \label{fig:USD_number_per_block_histogram}
\end{figure}
In Fig.~\ref{fig:USD_number_per_block_histogram}, the histogram shows the number of conclusive positions per block, for $\num=\numd=10^{4}$ and depolarizing noise $p\in\{0, 0.01, 0.02, 0.03\}$, that is, $\tilde{\ebit}$ up to $2.9\%$, inside the operating window at this block length (Appendix~\ref{app:simulator_proof}). When $p=0$, we can clearly see that the histogram peaks at around $2929$, which means that most of the blocks have around a fraction $\USDprob=1-1/\sqrt{2}$ of their states unambiguously determined. The peak of the number of unambiguously determined states shifts to the right when we introduce noise in the simulation, to $\USDprob(1+p)\numd$; the analysis is in Appendix~\ref{sec:Depolarizing_and_Dephasing_Noise}.

Regardless of the noise level, the count is binomial, and approximately Gaussian near its mean.  This is what the protocol uses.  A block drawn at random carries at least $\cthr<\USDprob\numd$ conclusive positions with high probability, so an honest Bob rarely aborts, while the conclusive patterns of different blocks are independent, so that two blocks overlap in only about $\pknown^{2}\numd$ known locations.

The per-block conclusive count over $10^{4}$
noiseless blocks of the measurement in Step~\ref{QSPIR:initial} has
empirical mean $2928.6$ against the binomial mean $\numd\USDprob=2928.9$ and variance ratio
(empirical over binomial $\numd\USDprob(1-\USDprob)$) of $1.000$: the count is
$\mathrm{Bin}(\numd,\USDprob)$ to within sampling error, with no over-dispersion.  Note, however, that Eq.~\eqref{eq:EW_common} rests on the overlap of two blocks rather than on the count of one.  We test the overlap directly in Section~\ref{sec:sim_pair_attack}.

\subsection{Overlap Of Two Blocks And Tightness Of The Size Rule}
\label{sec:sim_pair_attack}

\begin{figure*}
    \centering
    \includegraphics[width=\textwidth]{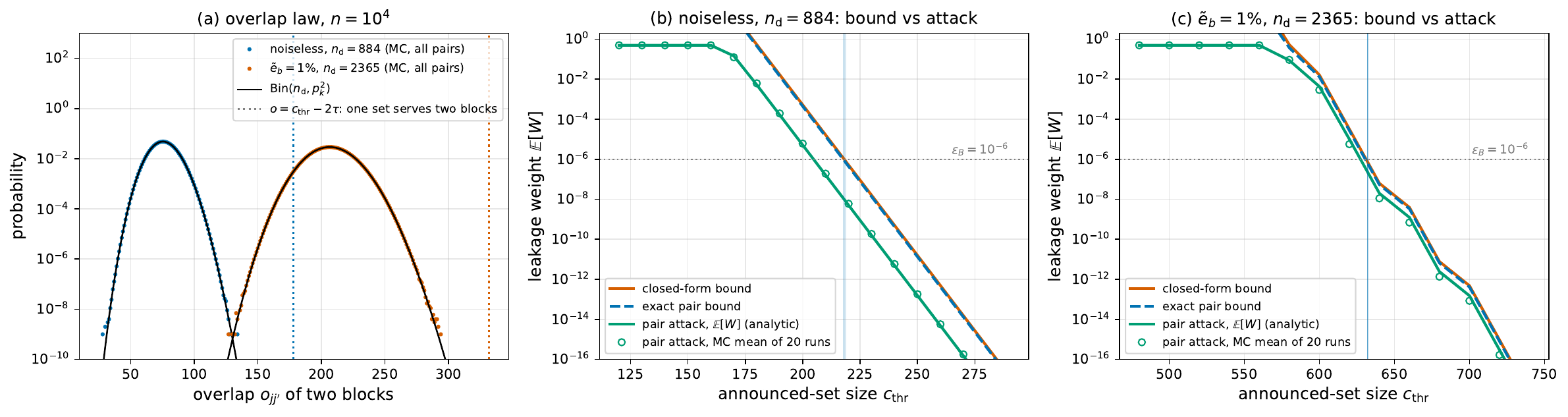}
    \caption{Overlap law and tightness of Eq.~\eqref{eq:EW_common} at $\num=10^{4}$.  (a)~Overlap of all pairs of blocks (dots) against $\mathrm{Bin}(\numd,\pknown^{2})$ (lines); dotted lines mark $o_{jj'}=\cthr-2\hmarg$, where one set would serve two blocks.  (b),~(c)~Leakage weight of the pair attack (green: circles are means over the $20$ databases, the line averages over the law of the largest overlap) against the bound (orange: Eq.~\eqref{eq:EW_common}; dashed: its exact form of Appendix~\ref{app:simulator_proof}), over a noiseless channel and at $\tilde{\ebit}=1\%$.  The vertical line is the smallest $\cthr$ allowed by the size rule.}
    \label{fig:CS1_pair_attack}
\end{figure*}

Is the overlap law behind Eq.~\eqref{eq:EW_common} borne out, and how much does the bound give away?  We simulate $20$ databases of $\num=10^{4}$ blocks at each of the exact-evaluation blocks of Table~\ref{tab:common_set} for $\tilde{\ebit}=0$ and $1\%$, $\numd=884$ and $2365$, each permuted explicitly as in Step~\ref{QSPIR:shuffle}, and record the overlap $o_{jj'}$ of all $\binom{\num}{2}$ pairs, $10^{9}$ per block length.  Figure~\ref{fig:CS1_pair_attack}(a) shows that it follows $\mathrm{Bin}(\numd,\pknown^{2})$ down to probabilities of $10^{-9}$.  Its mean is $75.84$ against $\numd\pknown^{2}=75.84$ with variance ratio $1.000$ over a noiseless channel, and $207.01$ against $207.00$ with ratio $0.999$ at $\tilde{\ebit}=1\%$.  The moment $\log_2\EX\,2^{o_{jj'}/2}$ that enters the bound is $44.55$ and $121.36$, against $44.53$ and $121.51$ from Lemma~\ref{lem:overlap_mgf}.

To test the tightness we let Bob cheat.  He searches the pair of blocks with the largest overlap, puts their common known locations into $\aset$, and splits the rest evenly between the locations he knows in one of the two blocks only.  This meets Lemma~\ref{lem:pair_bound} with equality: both blocks miss about $(\cthr-o_{jj'})/2$ locations.  Panels~(b) and~(c) compare his leakage weight, evaluated over all $\num$ blocks, with Eq.~\eqref{eq:EW_common}.  The circles are the means of $W$ over the $20$ databases, and the green line averages $W$ over the law of the largest overlap, computed as if the pairs were independent.  Below the size rule the attack succeeds.  For $\cthr\lesssim160$ over a noiseless channel it yields $W=\tfrac12$, that is, a second bit.  At the operating point the bound exceeds the line by a factor $2^{6.7}$, which costs about $13$ of the $218$ announced locations, and at $\tilde{\ebit}=1\%$ by $2^{1.7}$, or $3$ of $632$.  Over a noiseless channel the circles agree with the line.  At $\tilde{\ebit}=1\%$ they lie a factor $1.7$ below it near the operating point, and three further independent sets of $20$ databases lie lower still: in the four sets the bound exceeds the mean leak by $2^{2.5}$ to $2^{3.5}$, or $5$ to $7$ locations.  The mean is then dominated by rare databases whose best pair has an unusually large overlap.  Twenty databases seldom contain one, and the independence assumed by the line may overstate how often one occurs.  Either way, the size rule is nearly tight.  Note that an honest Bob leaks far less, $W\approx2^{-105}$, for his set misses about $(1-\USDprob)\cthr$ locations of every other block.

\section{SPIR With Decoy States}
\label{sec:decoy_state}
Our analysis so far assumes a single-photon QKD setting, whereas in most realistic experimental QKD settings, to avoid photon-number-splitting (PNS) attacks, decoy states are the standard defence.  We now show that our scheme extends to the decoy-state setting, at a lower signal intensity than in QKD. In this section, without loss of generality, we demonstrate the key idea using a specific BB84 QKD protocol where the raw key is generated by the $X$ measurement results, and channel phase-error checks are done by the $Z$ measurement results. The argument generalizes easily to other BB84 QKD schemes.

The decoy-state estimates below concern the data and QKD-key transmissions.  The auxiliary commitment transmission must independently satisfy a noise- and loss-tolerant storage-model analysis~\cite{Ng_noisy_storage_commitment}; the data-mode decoy bounds alone do not certify it.  Its additional resources are described in Appendix~\ref{app:storage_commitment}.

Throughout the discussion of decoy-state SPIR, we assume that the parties have the following powers.

Dishonest Bob, Alice's database-privacy adversary, is all-powerful except for long-term quantum memory. He is thus able to perform photon-number resolution, lossless collection, optimal multi-copy photon discrimination, and so on.  As in assumption (A2) of Appendix~\ref{app:usd_certification}, he measures the pulses one at a time, possibly adaptively; within one pulse his measurement is arbitrary.

Eve, the protocol's secrecy adversary, has full channel control, including the PNS attack; her power is handled by the standard decoy-state estimates~\cite{Hwang_decoy, Lo_Ma_Chen_decoy, Wang_decoy, Ma_Qi_Zhao_Lo_decoy}, exactly as in decoy-state QKD (reviewed in Ref.~\cite{Xu_QKD_review}).

Dishonest Alice, Bob's index-privacy adversary, controls the source.  We assume (L) that it is a laser: every pulse is a coherent state whose intensity she may choose slot by slot, with one of the four BB84 polarizations as (P) of Section~\ref{sec:Alice_cheating} requires, but she learns nothing about its photon number.  Without (L), a source that told her the photon number of every pulse while emitting the honest ensemble could not be detected by any test, and would give her an estimated $\SecretA\approx0.07$--$0.10$ at $\mu_1=0.05$ even with the multi-click count of Step~\ref{QSPIR:initial}.

Under (L), the danger is pulses brighter than $\mu_1$, which the count of Step~\ref{QSPIR:initial} addresses as in Appendix~\ref{app:alice_multiphoton}.  An honest source, however, already produces multi-clicks, a fraction $\USDprob(1-\USDprob)\eta\mu_1\approx10^{-3}$ of the detections at the transmittance $\eta=0.1$ and signal intensity $\mu_1=0.05$ used below.  Bob therefore replaces the count $D$ by $\tilde D=D-N_\mathrm{R}\hat P_C\hat P_I$, taken over the $N_\mathrm{R}$ data-acquisition positions with a correct set guess once Alice has disclosed them (Step~\ref{QSPIR:set_disclosure}), where $\hat P_C$ and $\hat P_I$ are the fractions of these positions at which a conclusive and the inconclusive detector clicked.  For pulses of one intensity and polarization the two clicks are independent (Poisson thinning), so $\EX\tilde D=0$ for an honest source, whereas pulses of different intensities raise $\EX\tilde D$ (Chebyshev's sum inequality).  Bob aborts there, before the permutation, if $\tilde D$ exceeds a threshold $\tau_D$ of a few $\sqrt D$.  Pulses much brighter than $\mu_1$ pay for their effect with multi-clicks, as in Theorem~\ref{thm:alice_multiphoton}.  Pulses only slightly brighter give Alice little information on Bob's block.  They change the conclusive probability of a position by only about $0.10\,\eta\,\Delta\mu$, with $\Delta\mu$ the excess intensity.  For example, making half the pulses $10\%$ brighter and half $10\%$ dimmer passes the count, yet moves her posterior by only about $0.002$.  A Monte-Carlo simulation near the operating point supports the rule ($\num=10^{4}$, $\numd=2610$, $\eta=0.1$; honest $\EX D=2.7\times10^{4}$ and $\tau_D=508$ for an honest abort probability of $10^{-3}$).  The strongest attack by pulses of a legitimate polarization that still passes the count with probability $\tfrac12$ moves Alice's posterior on Bob's block by $0.03$ in total variation on average; it uses about $\tau_D$ very bright pulses, each of which is conclusive and clicks two detectors with certainty.  Without the count, the same attacks reach $0.8$ to $1$.

The honest receiver applies the measurement of Step~\ref{QSPIR:initial}, for the set he guessed, to the whole pulse with passive optics, so that every photon of the pulse meets the same POVM.  Bob calls a position conclusive if a conclusive detector fires (Step~\ref{QSPIR:initial}).  Each photon fails independently with probability $\nu=1/\sqrt2$, so he attains the rate $\pUSDN{N}=1-\nu^{N}$ of Lemma~\ref{lem:beff}, which is also the optimum.  The privacy analysis, by contrast, holds against the dishonest Bob described above.

\subsection{Decoy-State QKD For SPIR}
\label{sec:decoy_for_spir}

As shown in Section~\ref{subsec:reconciliation}, all we require is an announced set whose size obeys Eq.~\eqref{eq:size_rule} at the certified known-rate, check symbols sized by Eq.~\eqref{eq:syndrome_length} at the conclusive-bit error rate, and a QKD key of positive rate for the pads. We now show that this conclusion survives when the ideal single-photon source assumed in Sections~\ref{sec:Bob_Cheating} and~\ref{sec:Finite_key_analysis_qkd} is replaced by a realistic phase-randomized weak coherent pulse (WCP) source of mean photon number $\mu$, whose photon number is Poisson distributed, $p_N(\mu)=e^{-\mu}\mu^N/N!$.

We adopt the standard three-intensity (vacuum $+$ weak-decoy $+$ signal) layout~\cite{Ma_Qi_Zhao_Lo_decoy}: the signal intensity $\mu_1$ carries the $\numd n$ data (USD) pulses, while the weak-decoy and vacuum intensities $\mu_2,\mu_3$ are interspersed at random, sacrificed, and used only for parameter estimation. We take $\mu_1=0.05$, $\mu_2=\mu_1/10$, $\mu_3=0$. The signal intensity sits an order of magnitude below the QKD-conventional $\mu_1\approx0.5$ because in our protocol multi-photon pulses do more than reduce the key rate: they inflate the known-rate at which the size rule must be evaluated (at least the tagging bound $\pUSDplus$ of Lemma~\ref{lem:beff_plus}), and this inflation does \emph{not} vanish with channel loss. The rule has a solution only if $\pUSDplus\lesssim0.5$ [Eq.~\eqref{eq:mu1_constraint} of Section~\ref{sec:photon_tagging}], and the block length grows quickly as this limit is approached; the price of $\mu_1=0.05$ is a ${\sim}10\times$ throughput overhead relative to $\mu_1=0.5$. Because every \emph{data} pulse then carries the single intensity $\mu_1$, no intensity label correlates with Bob's conclusive pattern when Alice follows the protocol (a cheating Alice is treated at the start of this section); the decoy estimates below are drawn only from the sacrificed subset.

The decoy-state quantities are standard~\cite{Ma_Qi_Zhao_Lo_decoy, Decoy_state_key_rate1} (see Ref.~\cite{Xu_QKD_review} for a review); we summarize them here: the per-intensity gain $Q_{X,\mu}$ and bit-error rate $E_{X,\mu}$ at intensity $\mu$; the $N$-photon yields $Y_{X,N}$, with $Y_{X,0},Y_{X,1}$ the vacuum and single-photon yields; the single-photon bit error $e_{X,1}$ and phase error $e_p$; the asymptotic secret-key rate per emitted pulse $R_{\text{decoy}}$ and the rate per sifted detection $r_{\text{decoy}}:=R_{\text{decoy}}/\big(p_X^2\langle Q_{X,\mu}\rangle\big)$, where $p_X$ is the probability of choosing the $X$ basis and the intensity average $\langle\cdot\rangle$ is defined below; the number of $X$-sifted pulses $s_X$, with vacuum and single-photon counts $s_{X,0},s_{X,1}$; the intensity average $\langle f(\mu)\rangle:=\sum_\mu p_\mu f(\mu)$ over the three intensities, $p_\mu$ being the probability of intensity $\mu$; the worst-case (Hoeffding) fluctuated gain $Q^{\langle\langle\cdot\rangle\rangle}:=Q\pm\Delta Q$; and the security factor $\chi(3)=21$~\cite{Lim_decoy_finite}, whose argument is the number of intensities. The finite-key channel-error tolerance $e_{\text{tol}}$ and source preparation quality $q$ are those of Section~\ref{sec:Finite_key_analysis_qkd}.

The multi-photon tail $N\ge2$ of the Poisson distribution creates two logically distinct effects, which must be controlled separately because they bear on different security goals. First, it exposes the protocol to the photon-number-splitting (PNS) attack against Eve-secrecy, which is handled in the standard way by the decoy single-photon estimates~\cite{Hwang_decoy, Lo_Ma_Chen_decoy, Wang_decoy, Decoy_state_key_rate1}. Secondly --- and with no analogue in ordinary QKD --- an $N$-photon data pulse delivers $N$ identical copies of the announced state to Bob, enhancing his unambiguous discrimination and threatening database privacy. The second effect is handled as in the single-photon case: Section~\ref{sec:decoy_receiver_test} tests Bob's declarations intensity by intensity, and under assumptions (A1)--(A3) of Appendix~\ref{app:usd_certification}, read pulse by pulse, the test certifies the known-rate at which the size rule is evaluated.

\subsection{The Effective USD Rate Under Weak Coherent Pulses}

\begin{lemma}[Effective USD rate]
\label{lem:beff}
Let the announced alternatives $\{\ket{0},\ket{+}\}$ have overlap $\nu=|\bra{0}\ket{+}|=1/\sqrt2$ and single-copy USD ceiling $\USDprob=1-\nu\approx0.293$. If a data pulse delivers $N$ photons to Bob, the optimal $N$-copy unambiguous discrimination of $\ket{0}^{\otimes N}$ from $\ket{+}^{\otimes N}$ succeeds with probability
\begin{equation}
    \begin{split}
    \pUSDN{N}&=1-\nu^{N}=1-2^{-N/2},\\
    \pUSDN{0}&=0,\quad \pUSDN{1}=\USDprob,\quad \pUSDN{2}=\tfrac12 .
    \end{split}
    \label{eq:bN}
\end{equation}
With channel-plus-detector transmittance $\eta$, the per-pulse and per-detected-pulse probabilities that the honest receiver obtains a conclusive USD outcome are
\begin{equation}
\begin{split}
    \pUSDeff&=1-e^{-\USDprob\,\eta\mu_1},\\
    \pUSDsift&=\frac{1-e^{-\USDprob\,\eta\mu_1}}{1-e^{-\eta\mu_1}}\\
    &=\USDprob\Big[1+\tfrac{1-\USDprob}{2}\,\eta\mu_1+O\big((\eta\mu_1)^2\big)\Big]\ \ge\ \USDprob .
\end{split}
    \label{eq:beff}
\end{equation}
\end{lemma}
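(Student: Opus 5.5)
The proof splits into two parts: the $N$-copy USD value in Eq.~\eqref{eq:bN}, then the Poisson averaging in Eq.~\eqref{eq:beff}.

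\emph{First part: the $N$-copy rate.} Two pure states with equal priors are unambiguously discriminated with optimal success probability $1-|\langle\psi|\phi\rangle|$. This is the Ivanovic--Dieks--Peres bound already used for $\USDprob$ in Step~\ref{QSPIR:set_disclosure}. It applies to any pair of pure states, so it applies to $\ket{0}^{\otimes N}$ and $\ket{+}^{\otimes N}$. Their overlap factorizes, $\langle 0|+\rangle^{N}=\nu^{N}=2^{-N/2}$, which gives $\pUSDN{N}=1-2^{-N/2}$. The cases $N=0,1,2$ follow by substitution; for $N=0$ the two "states" coincide and nothing can be concluded.

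For the honest receiver I then check that he attains this optimum with the passive setup described just before the lemma. Every photon meets the same single-copy POVM of Step~\ref{QSPIR:initial}, and for either state of the guessed set each photon lands on the inconclusive detector with probability $\nu$. This is the halfway property of Section~\ref{subsec:soft_info}, since $\cos^2\frac{\pi}{8}\cdot2(\sqrt2-1)=1/\sqrt2$. The photons fail independently, so the probability that no conclusive detector fires is $\nu^{N}$. A double conclusive click cannot happen for a single pure state, because each conclusive element annihilates one of the two states. The receiver therefore reaches $1-\nu^{N}$ exactly.

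\emph{Second part: the per-pulse rates.} A phase-randomized pulse of intensity $\mu_1$ passing a channel-plus-detector of transmittance $\eta$ delivers a Poisson number of photons with mean $\eta\mu_1$, by binomial thinning of the Poisson law. Averaging gives
\[
\pUSDeff=\sum_N e^{-\eta\mu_1}\frac{(\eta\mu_1)^N}{N!}\bigl(1-\nu^N\bigr)=1-e^{-(1-\nu)\eta\mu_1}=1-e^{-\USDprob\eta\mu_1}.
\]
Equivalently, each detected photon independently reaches a conclusive detector with probability $\USDprob$, so conclusive photons form a Poisson process of mean $\USDprob\eta\mu_1$.

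Conditioning on detection, i.e.\ on $N\ge1$ with probability $1-e^{-\eta\mu_1}$, gives the ratio $\pUSDsift$. Writing $x=\eta\mu_1$, I expand numerator and denominator to second order:
\[
\frac{\USDprob x-\USDprob^2x^2/2}{x-x^2/2}=\USDprob\Bigl[1+\tfrac{1-\USDprob}{2}x+O(x^2)\Bigr].
\]
For the inequality $\pUSDsift\ge\USDprob$ at all $x$, I use that $g(a)=1-e^{-ax}$ is concave in $a$ with $g(0)=0$. Concavity gives $g(\USDprob\cdot1)\ge\USDprob\, g(1)$ for $\USDprob\in[0,1]$, which is the claim.

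None of these steps is a real obstacle; the only care needed is in the second paragraph. The optimality statement concerns the ideal $N$-copy discrimination, whereas the honest receiver's attained rate needs the halfway-failure probability $\nu$ and the independence of photons under the passive linear-optical implementation of Appendix~\ref{app:USD_POVM}. I will state that per-photon independence explicitly, since the passive optics act mode-wise on a product state.
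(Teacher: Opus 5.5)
Your proposal is correct and follows the paper's proof. Both use the Ivanovi\'c--Dieks--Peres bound on the $N$-copy overlap $\nu^{N}$, average over the Poisson photon number of mean $\eta\mu_1$, remove the vacuum weight, and expand to first order. The differences are minor: you prove $\pUSDsift\ge\USDprob$ by concavity of $a\mapsto1-e^{-a\eta\mu_1}$, whereas the paper shows that $(1-\USDprob)-e^{-\USDprob x}+\USDprob e^{-x}$ vanishes at $x=0$ and is increasing; and you check inside the proof that the passive receiver attains $1-\nu^{N}$, which the paper asserts in the text just before the lemma.
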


The proof (Appendix~\ref{app:decoy_proofs}) applies the Ivanovi\'c--Dieks--Peres bound to the $N$-copy overlap $\nu^{N}$ and averages over the Poisson photon-number statistics of the attenuated coherent state.

The fractional enhancement over the single-photon ceiling is $\tfrac{1-\USDprob}{2}\eta\mu_1=\tfrac{\nu}{2}\eta\mu_1\approx0.35\,\eta\mu_1$. Throughout, $\USDprob=1-\nu\approx0.293$ is both the single-copy optimum and the rate of the measurement of Step~\ref{QSPIR:initial}.

\subsection{Photon Tagging And The Limit On The Signal Intensity}
\label{sec:photon_tagging}

\begin{lemma}[Photon tagging]
\label{lem:beff_plus}
Let the yields and the gain below refer to one population of detected signal pulses, in which a vacuum event carries no information on the bit and Bob knows the bit of a single-photon event with rate at most $\USDprob$.

Let $Y_{X,0}^{-}$ and $Y_{X,1}^{-}$ be the decoy lower bounds on the vacuum and single-photon yields obtained from the standard three-intensity decoy estimates~\cite{Ma_Qi_Zhao_Lo_decoy, Decoy_state_key_rate1} evaluated at worst-case (Hoeffding) fluctuated gains, and let $Q_{X,\mu_1}$ be the observed signal-intensity gain, with $\Delta Q_{X,\mu_1}$ its Hoeffding fluctuation~\cite{Decoy_state_key_rate1}. Define
\begin{equation}
    \Delta^{+}:=\min\!\Big(1,\ \max\!\Big(0,\
    1-\frac{e^{-\mu_1}Y_{X,0}^{-}+\mu_1 e^{-\mu_1}Y_{X,1}^{-}}{Q_{X,\mu_1}+\Delta Q_{X,\mu_1}}\Big)\Big).
    \label{eq:Delta_plus}
\end{equation}
Then, except with probability at most $5\epsilon_{\text{sec}}/\chi(3)$ (one $\epsilon_{\text{sec}}/\chi(3)$ per one-sided Hoeffding event, enumerated in the proof), $\Delta^{+}$ upper-bounds the fraction of \emph{detected} signal pulses carrying two or more photons, and consequently the rate at which Bob knows the bit of a detected signal pulse is at most $\pUSDplus$; in particular
\begin{equation}
    \pUSDsift\ \le\ \pUSDplus:=\USDprob+\Delta^{+}(1-\USDprob).
    \label{eq:beff_plus}
\end{equation}
\end{lemma}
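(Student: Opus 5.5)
The plan is to write the signal gain as a photon-number mixture, lower-bound its vacuum and single-photon parts with the decoy estimates, and read the multi-photon fraction off the remainder. For a phase-randomized source the detected signal population decomposes as
\begin{equation}
Q_{X,\mu_1}=\sum_{N\ge0}p_N(\mu_1)\,Y_{X,N},\qquad p_N(\mu_1)=e^{-\mu_1}\mu_1^N/N! .
\end{equation}
The detected fraction of $N$-photon pulses is therefore $f_N=p_N(\mu_1)Y_{X,N}/Q_{X,\mu_1}$, and the multi-photon fraction is
\begin{equation}
f_{\ge2}=1-\frac{e^{-\mu_1}Y_{X,0}+\mu_1e^{-\mu_1}Y_{X,1}}{Q_{X,\mu_1}} .
\end{equation}
Phase randomization makes the photon number a classical label, and the yields $Y_{X,N}$ are common to all intensities. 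This is the standard decoy premise that the cited estimates already use, and I would assume it here.

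Next I would replace every quantity by a worst-case value that can only increase $f_{\ge2}$. The numerator is lowered to $e^{-\mu_1}Y_{X,0}^{-}+\mu_1e^{-\mu_1}Y_{X,1}^{-}$. The denominator is raised from the true expected gain to $Q_{X,\mu_1}+\Delta Q_{X,\mu_1}$; this raise is valid on the Hoeffding event that the observed gain plus its fluctuation exceeds the expectation. Each of these substitutions makes the ratio smaller and so $1-\text{ratio}$ larger. Clipping the result to $[0,1]$ keeps it an upper bound, since a fraction lies in that interval anyway. Hence $f_{\ge2}\le\Delta^{+}$ on the good event.

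For the failure probability I would enumerate the one-sided Hoeffding events. The bounds $Y_{X,0}^{-}$ and $Y_{X,1}^{-}$ are computed from the fluctuated gains of the three intensities, and each gain is bounded on the side the estimate needs. With the vacuum bound reusing the vacuum-intensity gain, this gives four events, and the upper fluctuation of $Q_{X,\mu_1}$ in the denominator gives a fifth. If each event fails with probability $\epsilon_{\text{sec}}/\chi(3)$, a union bound gives $5\epsilon_{\text{sec}}/\chi(3)$. This bookkeeping is the main obstacle. The count of five depends on exactly which gains enter $Y_{X,0}^{-}$ and $Y_{X,1}^{-}$ in the three-intensity formulas of Ref.~\cite{Ma_Qi_Zhao_Lo_decoy, Decoy_state_key_rate1}. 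I would write those formulas out and check that no gain is used in both directions; if one were, it would count twice.

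Finally I would bound the known-rate by conditioning on the photon class. Vacuum events contribute rate $0$ by hypothesis, and single-photon events contribute at most $\USDprob$, also by hypothesis. I grant Bob the bit of every multi-photon event, rate $1$. The mean rate is then at most
\begin{equation}
\USDprob(1-f_{\ge2})+f_{\ge2}=\USDprob+f_{\ge2}(1-\USDprob),
\end{equation}
which is increasing in $f_{\ge2}$ and hence at most $\pUSDplus$. This step is short but relies on a convention worth stating: the hypothesis on single photons is read as $\Pr[\text{known}\mid N=1]\le\USDprob$, and vacuum contributes zero. The first-order expansion of $\pUSDsift$ in Lemma~\ref{lem:beff} is not exact, so I would not use it. Instead, $\pUSDsift\le\pUSDplus$ follows because the honest receiver is a particular strategy of this form, with exact rates $\pUSDN{N}\le1$ weighted by the true fractions.
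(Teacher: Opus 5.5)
Your proposal follows the paper's proof. It uses the same photon-number decomposition of the signal gain and the same replacements in the enlarging direction: decoy lower bounds in the numerator, the upward-fluctuated observed gain in the denominator, then clipping to $[0,1]$. The final step is also the same, charging vacuum detections with $0$, single photons with $\USDprob$ and every multi-photon detection with $1$.

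The one correction concerns the bookkeeping you flagged as the main obstacle. The upward fluctuation of $Q_{X,\mu_1}$ in the denominator is not a new, fifth event. $Q_{X,\mu_1}$ enters $Y_{X,1}^{-}$ with a negative coefficient, so it is already bounded from above there, and the paper notes that the two events coincide. The paper obtains its five events from the yield bounds alone:
\begin{itemize}
\item for $Y_{X,0}^{-}$: $Q_{X,\mu_3}$ from below and $Q_{X,\mu_2}$ from above;
\item for $Y_{X,1}^{-}$: $Q_{X,\mu_2}$ from below, and $Q_{X,\mu_3}$ and $Q_{X,\mu_1}$ from above.
\end{itemize}
Opposite sides of the same gain count as distinct events. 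Your total of five still gives a valid bound, since listing an event twice only loosens a union bound, but the reason you give for it is not the right one.
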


The proof (Appendix~\ref{app:decoy_proofs}) enumerates the five one-sided Hoeffding events behind the direction $\Delta^{+}\ge\Delta$ and then charges every multi-photon detection the worst case $\pUSDN{N}\le1$.

Note that the single-photon hypothesis cannot simply be assumed, even for the measurement of Step~\ref{QSPIR:initial}: a Bob who reports his inconclusive photons as lost knows every retained bit.  Nor need the QKD-mode yields describe the data positions, for Bob names the mode of a pulse after measuring it.  The test of Section~\ref{sec:decoy_receiver_test} settles both points.  It works on the retained data positions, where $\Delta^{+}$ becomes $\Delta_R^{+}$ of Eq.~\eqref{eq:decoy_retained_tag}, and it replaces $\USDprob$ by $\min\{1,B(\delta_1)\}$ [Eq.~\eqref{eq:pk_wcp}].

We now explain why the signal intensity of Section~\ref{sec:decoy_for_spir} must be small. Unlike ordinary decoy-state QKD, where multi-photon pulses are merely tagged and their key contribution discarded, our SPIR layer evaluates the size rule of Proposition~\ref{prop:db_privacy_decoy} at the certified known-rate of Proposition~\ref{prop:cert_rate_wcp}, which lies above the tagging bound $\pUSDplus$ of Lemma~\ref{lem:beff_plus} even for an infinite test, whereas the honest Bob is conclusive at the rate $\pUSDsift$ of Lemma~\ref{lem:beff}, only slightly above $\USDprob$ ($0.2934$ at $\eta\mu_1=0.005$), and announces a set sized from $\USDprob$, $\cthr<\USDprob\numd$.  The gap between $\pUSDplus$ and $\pUSDsift$ does \emph{not} vanish at lower transmittance: for a pure-loss channel the multi-photon detected fraction obeys
\begin{equation}
    \Delta^{+}\ \ge\ 1-\frac{\mu_1 e^{-\mu_1}\,\eta}{1-e^{-\eta\mu_1}}\ \xrightarrow[\ \eta\to0\ ]{}\ 1-e^{-\mu_1},
    \label{eq:Delta_plus_scaling}
\end{equation}
so, for exact statistics, $\pUSDplus-\pUSDsift\le\Delta^{+}(1-\USDprob)=O(\mu_1)$ for every $\eta$.  With finite samples the Hoeffding terms add a floor to $\Delta^{+}$ that does not vanish as $\mu_1\to0$.  Multi-photon pulses survive loss more often, so a lossier channel makes the gap larger, not smaller. The size rule of Eq.~\eqref{eq:size_rule} has a solution with $\cthr<\USDprob\numd$ only if
\begin{equation}
    2\log_2\!\Big[1+(\sqrt2-1)\big(\pUSDplus\big)^{2}\Big]\;<\;\USDprob\big[1-2H_2(\tilde{e}_b+\gamma)\big],
    \label{eq:mu1_constraint}
\end{equation}
that is, $\pUSDplus<0.51$ or $\Delta^{+}<0.30$ over a noiseless channel. At the QKD-conventional $\mu_1=0.5$ one has $\Delta^{+}\approx0.23$--$0.39$ and the condition fails or is nearly saturated. Figure~\ref{fig:DS4_known_rate} shows the price at $\eta=0.1$, $\num=10^{4}$ and $\SecretB=10^{-6}$ over a noiseless channel.  With the tagging bound $\pUSDplus$ as the known-rate, the closed-form choice of Appendix~\ref{app:simulator_proof} gives $\numd=1140$, $2340$ and $6700$ at $\mu_1=0.05$, $0.2$ and $0.3$, and no solution at $\mu_1\ge0.4$.  No finite test certifies this bound, however.  The known-rate that the test of Section~\ref{sec:decoy_receiver_test} certifies, $\pknown^{\mathrm{WCP}}$ of Proposition~\ref{prop:cert_rate_wcp}, stays above a floor even for an infinite test, because the weak decoy $\mu_2=\mu_1/10$ resolves the single-photon yields only to a relative error of order $\mu_2$, and the guessing bound turns this error into its square root (Appendix~\ref{app:decoy_receiver_certificate}).  At the floor the block is $\numd=1540$ at $\mu_1=0.05$ and $2660$ at $\mu_1=0.1$, a test of $10^{11}$ pulses gives $2610$ and $4540$, and at $\mu_1=0.2$ the floor already sits at the limit of Eq.~\eqref{eq:mu1_constraint}.  This fixes the operating point $\mu_1=0.05$, $\mu_2=\mu_1/10$, $\mu_3=0$ adopted in Section~\ref{sec:decoy_for_spir}, where the block is $1.6$ to $2.7$ times the single-photon one. The throughput price is a factor ${\sim}0.5/\mu_1\approx10$ in transmitted pulses per data symbol relative to $\mu_1=0.5$, with correspondingly larger Hoeffding fluctuation terms in the decoy estimates.

\begin{figure*}
    \centering
    \includegraphics[width=0.9\textwidth]{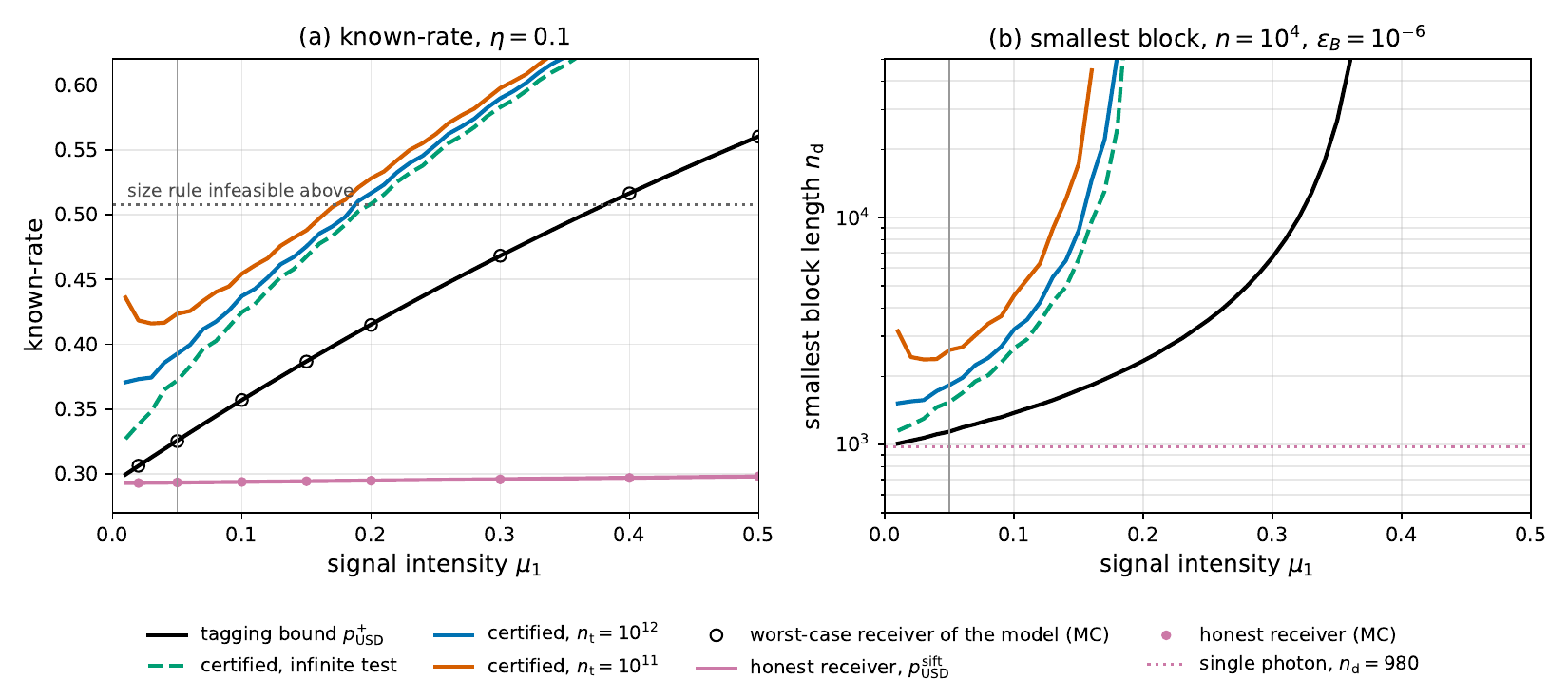}
    \caption{The price of weak coherent pulses at transmittance $\eta=0.1$ over a noiseless channel.  (a)~Known-rate against the signal intensity $\mu_1$: the tagging bound $\pUSDplus$ of Lemma~\ref{lem:beff_plus} (black line), attained by the worst-case receiver of the model, which collects every emitted photon and always knows the bit of a multi-photon pulse (circles, Monte-Carlo); the certified known-rate $\pknown^{\mathrm{WCP}}$ of Proposition~\ref{prop:cert_rate_wcp} for an infinite test (dashed) and for tests of $n_\text{t}=10^{12}$ and $10^{11}$ pulses, with $\mu_2=\mu_1/10$, intensity probabilities $(0.5,0.4,0.1)$, $\epsilon_{\mathrm{PE}}=\epsilon_\text{cert}=10^{-6}$ and the test counts at their expected values; and the conclusive rate $\pUSDsift$ of the honest receiver [pink line, Eq.~\eqref{eq:beff}; dots, photon-level Monte-Carlo].  Above the dotted line, $\pknown=0.51$, the size rule has no solution [Eq.~\eqref{eq:mu1_constraint}].  (b)~Smallest block from the closed-form choice of Appendix~\ref{app:simulator_proof} at $\num=10^{4}$ and $\SecretB=10^{-6}$ for each known-rate of (a), against the single-photon block $\numd=980$ of Table~\ref{tab:common_set}.  The vertical line marks the operating point $\mu_1=0.05$.}
    \label{fig:DS4_known_rate}
\end{figure*}

We comment on the origin of the gap $\pUSDplus-\pUSDsift$: it is not due to measurement ability. $\pUSDsift$ already credits Bob with the \emph{optimal} multi-copy discrimination: Lemma~\ref{lem:beff} evaluates the Poisson mixture of the $N$-copy optima $\pUSDN{N}$. What separates the two rates is instead the channel and the detection record. $\pUSDsift$ assumes an honest pure-loss channel: the channel's properties decide which photons reach Bob, and every click is reported truthfully. $\pUSDplus$ must additionally cover a Bob who collects all $N$ emitted photons losslessly, measures the photon number nondestructively, and post-selects which pulses to report as detected, disguising the discarded ones as channel loss. We must allow for this when Bob is dishonest.

Equation~\eqref{eq:beff_plus} is deliberately loose: it credits the non-multi-photon fraction with the full single-photon rate $\USDprob$ (ignoring that dark-count detections contribute $0$) and charges every multi-photon detection the worst case $\pUSDN{N}=1$. Both choices over-estimate $\pUSDsift$, which is exactly the conservative direction for database privacy. Crucially, Alice computes $\pUSDplus$ from the sacrificed decoy/test subset alone, she never observes which data pulses Bob actually resolved, so estimating the known-rate, rather than measuring Bob's conclusive pattern, is what preserves index privacy.

\subsection{Intensity-Resolved Measurement Test}
\label{sec:decoy_receiver_test}
The test below replaces Step~\ref{QSPIR:usd_test} when the source emits weak coherent pulses; its fixed-batch sampling and disclosure order replace the corresponding single-photon rules in Steps~\ref{QSPIR:initial}--\ref{QSPIR:usd_test}.  Bob fixes his losses, modes and set guesses for a fixed batch of emitted pulses, and he declares his outcomes on the sampled test slots before Alice reveals their states and intensities.  At a fixed photon number a pulse does not depend on its intensity~\cite{Ma_Qi_Zhao_Lo_decoy}.  Sorting the declarations by intensity, Alice can therefore apply the decoy equations to each declaration, bound the declaration statistics of single photons, and compare these bounds with Table~\ref{tab:usd_test}.  Their largest deviation $\delta_1$ from the table [Eq.~\eqref{eq:decoy_single_deviation}] limits how well Bob guesses the bit of a single photon (Lemma~\ref{lem:guessing_bound}), while every retained multi-photon position, a fraction at most $\Delta_R^+$ [Eq.~\eqref{eq:decoy_retained_tag}], is counted as known.  Together they give the certified guessing probability $g_R^{\mathrm{cert}}$ of Eq.~\eqref{eq:decoy_retained_guess}.  Appendix~\ref{app:decoy_receiver_test} states the test in full.

What does this give for the blocks?  By Lemma~\ref{lem:erasure_domination}, a bit that Bob guesses with probability $g$ is worth a known-flag of rate $2g-1$.  Proposition~\ref{prop:cert_rate_wcp} shows that, on passing the test, the mean flag rate of the retained positions is at most a certified rate $\pknown^{\mathrm{WCP}}$ [Eq.~\eqref{eq:pk_wcp_full}], which for a large test sample reads
\begin{equation}
    \pknown^{\mathrm{WCP}}\approx2g_R^{\mathrm{cert}}-1=(1-\Delta_R^{+})\min\{1,B(\delta_1)\}+\Delta_R^{+},
    \label{eq:pk_wcp}
\end{equation}
with $B$ of Eq.~\eqref{eq:decoy_G_bound}.  Alice accepts only if the size rule of Eq.~\eqref{eq:size_rule} holds at this rate, and then proceeds as in Step~\ref{QSPIR:set_disclosure}.  A mean rate is all that Lemma~\ref{lem:overlap_mgf} needs.  Note that Bob may measure adaptively; only joint measurements on several pulses remain excluded, as in assumption (A2).  Note also that $\pknown^{\mathrm{WCP}}$ tends to the rate $\pUSDplus$ of Lemma~\ref{lem:beff_plus} only as $\delta_1\to0$, which needs a vanishing weak decoy as well as a large test; at $\mu_2=\mu_1/10$ it stays above a floor (Fig.~\ref{fig:DS4_known_rate} and Appendix~\ref{app:decoy_receiver_certificate}).

\subsection{Switching From Single-Photon To Decoy-State Analysis}

\begin{proposition}[Four conservative substitutions]
\label{prop:four_subs}
The database-privacy conclusions here and in Theorem~\ref{thm:decoy_success} retain the additional hypotheses of Proposition~\ref{prop:db_privacy_decoy} and the underlying single-photon security analysis.

The single-photon analysis of Sections~\ref{sec:Bob_Cheating} and~\ref{sec:Finite_key_analysis_qkd} extends to the WCP source under the following substitutions, each in its worst-case direction, leaving every theorem statement and proof structurally unchanged.
\begin{enumerate}
    \item[\textbf{(S1)}] \emph{Key rate and threshold.} Three quantities must be kept apart: the key length $\ell$ of Eq.~\eqref{eq:bnums_nums}; the secret bits per \emph{emitted} pulse, $R_{\text{decoy}}$; and the secret bits per \emph{sifted detection}, $r_{\text{decoy}}=R_{\text{decoy}}/\big(p_X^{2}\langle Q_{X,\mu}\rangle\big)$, which is the quantity comparable to the single-photon $r$ of Theorem~\ref{thm:success_protocol}.  The QKD mode then needs $\ell/R_{\text{decoy}}$ emitted pulses, that is, $\ell/r_{\text{decoy}}$ sifted detections, and $e_{\text{thr}}\mapsto e_{\text{thr}}^{\text{decoy}}:=\sup\{e_p:R_{\text{decoy}}(e_p)>0\}$~\cite{Decoy_state_key_rate1, Ma_Qi_Zhao_Lo_decoy, Lim_decoy_finite}. The key requirement of Eq.~\eqref{eq:bnums_nums} is independent of the rate, so the smaller rate $r_{\text{decoy}}\le r$ only enlarges the raw overhead $\nums$.
    \item[\textbf{(S2)}] \emph{Known-rate.} $\pknown\mapsto\pknown^{\mathrm{WCP}}$, the certified rate of Proposition~\ref{prop:cert_rate_wcp}, which is at least $\pUSDplus=\USDprob+\Delta^{+}(1-\USDprob)$ of Lemma~\ref{lem:beff_plus} (Fig.~\ref{fig:DS4_known_rate}), as the known-rate $\pknown$ in the size rule of Eq.~\eqref{eq:size_rule}; the set size $\cthr$ remains fixed by $\USDprob$ (Step~\ref{QSPIR:code_sizing}); database privacy is preserved by Proposition~\ref{prop:db_privacy_decoy}.
    \item[\textbf{(S3)}] \emph{Statistical fluctuations.} All decoy estimates~\cite{Ma_Qi_Zhao_Lo_decoy, Decoy_state_key_rate1} are evaluated at the worst-case fluctuated gains $Q^{\langle\langle\cdot\rangle\rangle}$, and the budget $\epsilon_{\text{sec}}$ is partitioned by a union bound over the QKD-secrecy event, the at most six decoy-estimation statements (each allotted $\epsilon_{\text{sec}}/\chi(3)$), the database-privacy bound of Proposition~\ref{prop:db_privacy_decoy}, and correctness. The resulting certified counts $s_{X,0},s_{X,1}$ enter the finite-key rate of Eq.~\eqref{eq:r_decoy_real}.
    \item[\textbf{(S4)}] \emph{Two error rates.} The channel error rates $\ephase$ and $\ebit$ become the single-photon phase error $e_p$, which enters $R_{\text{decoy}}$, privacy amplification, and $e_{\text{thr}}^{\text{decoy}}$, and the single-photon bit error $e_{X,1}$, which enters error correction of the QKD key.  The check symbols of Eq.~\eqref{eq:syndrome_length} are sized with the conclusive-bit error rate $\tilde{\ebit}$.  It needs no certificate: if Bob overstates it, $\hmarg$ grows, and Alice checks Eq.~\eqref{eq:size_rule} with the $\hmarg$ she uses.  For correctness it is estimated directly on the conclusive declarations of the signal-intensity test slots of Section~\ref{sec:decoy_receiver_test}, not inferred from $e_{X,1}$, because single-photon errors do not bound the errors of multi-photon data pulses.
\end{enumerate}
\end{proposition}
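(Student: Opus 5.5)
The plan is to go through the single-photon proof chain and find every place where a source-dependent quantity enters. Then check that replacing it as in (S1)--(S4) only weakens the inequality in its conservative direction. The chain is Theorem~\ref{thm:success_protocol} with its ingredients Lemmas~\ref{lem:decodability} and~\ref{lem:seeded_parity}, Proposition~\ref{prop:simulator}, Eq.~\eqref{eq:EW_common}, and the finite-key rate of Section~\ref{sec:Finite_key_analysis_qkd}. The structural claim, that statements and proofs are unchanged, follows once each lemma is seen to depend on the source only through a scalar: a key rate, a known-rate, an error rate, or a failure probability. Each scalar then gets a valid bound.

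\textbf{(S1).} The key length $\ell$ of Eq.~\eqref{eq:bnums_nums} is fixed by the pads alone. Only the conversion from $\ell$ to transmitted pulses uses the rate. So I would replace the Shor--Preskill rate by the standard decoy-state rate, whose secrecy is quoted from the decoy literature. Sections~\ref{sec:security_eve} and Theorem~\ref{thm:success_protocol}(iii) then carry over by composability. The bookkeeping identity $r_{\text{decoy}}=R_{\text{decoy}}/(p_X^2\langle Q_{X,\mu}\rangle)$ just converts per-pulse to per-detection. Since $r_{\text{decoy}}\le r$, the count $\nums=\ell/r_{\text{decoy}}$ only grows.

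\textbf{(S2)} is the substantive step. Eq.~\eqref{eq:EW_common} is monotone increasing in $\pknown$, so any certified upper bound on the mean known-flag rate may be inserted. Proposition~\ref{prop:cert_rate_wcp} supplies this bound through Lemma~\ref{lem:erasure_domination}: it turns guessing probability into an erasure-dominated flag. Lemma~\ref{lem:overlap_mgf} needs only a mean rate, and Proposition~\ref{prop:db_privacy_decoy} packages the result. The claim $\pknown^{\mathrm{WCP}}\ge\pUSDplus$ follows from Eq.~\eqref{eq:pk_wcp}, because $\min\{1,B(\delta_1)\}\ge\USDprob$ and $\Delta_R^+$ dominates the tagged fraction. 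Leaving $\cthr$ sized by $\USDprob$ is harmless for correctness, since $\pUSDsift\ge\USDprob$ by Lemma~\ref{lem:beff} and the Hoeffding abort bound still applies.

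\textbf{(S3)} is a union bound. Each decoy statement fails with probability $\epsilon_{\text{sec}}/\chi(3)$; at most six occur, including the five of Lemma~\ref{lem:beff_plus}. These are added to the QKD, privacy and correctness terms, and evaluating at fluctuated gains makes each estimate one-sided correct.

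\textbf{(S4).} The phase and bit errors enter only privacy amplification and error correction of the QKD key. The check symbols depend only on $\tilde{\ebit}$, and Lemma~\ref{lem:decodability} needs it merely as an upper bound for correctness. I would estimate $\tilde{\ebit}$ on the signal-intensity conclusive test declarations, with a Hoeffding term. Overstating it only enlarges $\hmarg$, which Alice then feeds into Eq.~\eqref{eq:size_rule}.

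The main obstacle is (S2): confirming that the erasure hypothesis of Lemma~\ref{lem:seeded_parity} survives multi-photon pulses. I would handle it by granting Bob the full bit on every tagged position and applying erasure domination on single photons, reading assumptions (A1)--(A3) pulse by pulse.
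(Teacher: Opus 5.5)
The gap is in (S1). Your handling of (S2)--(S4) agrees with the paper. The paper proves (S2) simply by citing Propositions~\ref{prop:cert_rate_wcp} and~\ref{prop:db_privacy_decoy}. Your device of handing Bob the flags of Lemma~\ref{lem:erasure_domination}, with tagged positions charged as known, is exactly the proof of the latter. The paper treats (S3) and (S4) as prescriptions.

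\textbf{(S1).} You write ``since $r_{\text{decoy}}\le r$'' and use it, but this inequality is the one claim the paper's proof actually derives, and it is not automatic. $r_{\text{decoy}}$ credits every vacuum detection with a full secret bit (the $f_0$ term), while paying error correction on the whole sifted string at $e_{\text{tol}}$. So the comparison depends on how $r$ is read. If you read it as $1-H_2(e_{\text{tol}})-H_2(e_p)$, then
\[
r_{\text{decoy}}-r=-f_{\ge2}+(f_0+f_{\ge2})H_2(e_p)-(\lambda-1)H_2(e_{\text{tol}}),
\]
which equals $f_0H_2(e_p)>0$ when $f_{\ge2}=0$ and $\lambda=1$. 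The missing step has three parts:
\begin{itemize}
\item Read $r$ with $\ebit\mapsto e_{X,1}$, as in (S4).
\item Decompose $e_{\text{tol}}=f_0/2+f_1e_{X,1}+f_{\ge2}e_{\ge2}$ over the vacuum, single- and multi-photon sifted detections.
\item Use concavity of $H_2$ to get $H_2(e_{\text{tol}})\ge f_0+f_1H_2(e_{X,1})$.
\end{itemize}
With $\lambda\ge1$ this gives
\[
r_{\text{decoy}}=f_0+f_1[1-H_2(e_p)]-\lambda H_2(e_{\text{tol}})\le f_1[1-H_2(e_{X,1})-H_2(e_p)]=f_1r\le r
\]
whenever $r\ge0$. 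The vacuum credit is cancelled by the error-correction cost of the vacuum events, whose error rate is $\tfrac12$.

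\textbf{(S2), a smaller point.} Your derivation of $\pknown^{\mathrm{WCP}}\ge\pUSDplus$ from Eq.~\eqref{eq:pk_wcp} does not go through as stated. $\Delta_R^{+}$ of Eq.~\eqref{eq:decoy_retained_tag} and $\Delta^{+}$ of Eq.~\eqref{eq:Delta_plus} upper-bound the multi-photon fractions of different populations. The first concerns retained data slots, the second detected QKD-mode $X$ signals, and they come from different samples with different confidence intervals. So neither dominates the other in general. ``$\Delta_R^+$ dominates the tagged fraction'' gives only $\Delta_R^+\ge f_{\ge2}$ on the retained population. The paper supports the comparison only numerically, through Fig.~\ref{fig:DS4_known_rate} and the floor on $\delta_1$ that persists even for an infinite test. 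This claim is not load-bearing: database privacy needs only that $\pknown^{\mathrm{WCP}}$ bound the mean flag rate, which Proposition~\ref{prop:cert_rate_wcp} provides.
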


The proof is deferred to Appendix~\ref{app:decoy_proofs}: (S1) is definitional --- the downstream analysis consumes the QKD layer only through the certified scalars --- while (S2) is Proposition~\ref{prop:cert_rate_wcp} together with Proposition~\ref{prop:db_privacy_decoy}.

For the test of Section~\ref{sec:decoy_receiver_test}, allocate $\epsilon_{\mathrm{PE}}$ separately; its $84$ declaration-frequency intervals are not among the decoy-estimation statements of (S3).

In the finite-key regime, substitution (S1) replaces the asymptotic preparation-quality rate by the decoy single-photon secret length~\cite{Decoy_state_key_rate1, Lim_decoy_finite}; the realistic rate of Eq.~\eqref{eq:realistic_rate} becomes, for the padding key generated from the $s_X$ sifted pulses,
\begin{equation}
\begin{split}
    r_{\text{decoy}}^{\text{real}}\approx{}&\frac{s_{X,0}+s_{X,1}\big[1-H_2(e_p)\big]-\lambda\,s_X H_2(e_{\text{tol}})}{s_X}\\
    &-\frac{6\log_2\frac{\chi(3)}{\epsilon_{\text{sec}}}+\log_2\frac{2}{\epsilon_{\text{cor}}}}{s_X},
\end{split}
    \label{eq:r_decoy_real}
\end{equation}
per sifted detection. Three conventions of Eq.~\eqref{eq:realistic_rate} carry over in modified form. The perfect-source fraction $q$ becomes the certified single-photon secret fraction $(s_{X,0}+s_{X,1}[1-H_2(e_p)])/s_X$. The statistical slack $\xi$ of $H_2(e_{\text{tol}}+\xi)$ is now carried by the phase error itself.  With $e_{Z,1}^{+}$ the decoy upper bound on the single-photon error rate in the $Z$ basis and $s_{Z,1}^{-},s_{X,1}^{-}$ the certified single-photon counts,
\begin{equation}
    e_p:=e_{Z,1}^{+}+\sqrt{\frac{s_{X,1}^{-}+s_{Z,1}^{-}}{s_{X,1}^{-}\,s_{Z,1}^{-}}\,\frac{s_{Z,1}^{-}+1}{s_{Z,1}^{-}}\,\ln\frac{\chi(3)}{\epsilon_{\text{sec}}}}\,,
    \label{eq:ep_margin}
\end{equation}
which is the margin of Eq.~\eqref{eq:xi_slack} applied to the single-photon events.  Note that it is defined for all counts.  The margin $\bar\gamma$ of Ref.~\cite{Decoy_state_key_rate1} is not: at $\epsilon_{\text{sec}}=0.01$ its logarithm turns negative once the counts exceed about $1.4\times10^{8}$ at a $1\%$ error rate.  The price is a larger margin where both are defined, $3.9\times10^{-3}$ against $3.1\times10^{-4}$ at $10^{6}$ counts and a $1\%$ error rate.  Together with the $\Delta Q$ corrections, this is why $\xi$ does not reappear in Eq.~\eqref{eq:r_decoy_real}. The $\lambda$ error-correction term keeps its convention of leaking on the whole sifted string, multi-photon included, while the constant log term acquires the decoy-estimation penalty $6\log_2(\chi(3)/\epsilon_{\text{sec}})$. Finally, the data pulses generate no key; they never enter Eq.~\eqref{eq:r_decoy_real} and are governed instead by Proposition~\ref{prop:db_privacy_decoy}.

\subsection{Database Privacy And The Existence Theorem}

\begin{proposition}[Database privacy under WCP]
\label{prop:db_privacy_decoy}
Assume (A1)--(A3) of Appendix~\ref{app:usd_certification}, read pulse by pulse, and let the test of Section~\ref{sec:decoy_receiver_test} pass.

Evaluate the size rule at the upper-bounded rate: let $\cthr$ obey Eq.~\eqref{eq:size_rule} with the known-rate $\pknown=\pknown^{\mathrm{WCP}}$, which is at least $\pUSDplus$ of Lemma~\ref{lem:beff_plus} even for an infinite test.  Then Theorem~\ref{thm:success_protocol}(ii) holds for the WCP source,
\begin{equation}
\begin{split}
    \SecretB\le{}&\epsilon_{\mathrm{PE}}+\epsilon_\text{cert}+\epsilon_{\mathrm{rnd}}\\
    &+\frac{\num^{2}}{4}\,2^{-(\cthr-2\hmarg)/2}\Big[1+(\sqrt2-1)\big(\pknown^{\mathrm{WCP}}\big)^{2}\Big]^{\numd} .
\end{split}
    \label{eq:p1_decoy}
\end{equation}
\end{proposition}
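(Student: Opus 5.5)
The plan is to reduce Proposition~\ref{prop:db_privacy_decoy} to the single-photon Theorem~\ref{thm:success_protocol}(ii) in its certified form, Theorem~\ref{thm:cert_privacy} of Appendix~\ref{app:usd_certification}. Only two inputs of that argument depend on the source. The first is the erasure hypothesis of Lemma~\ref{lem:seeded_parity}, with independent known-flags. The second is a bound $\pknown$ on the mean flag rate over the retained, correctly guessed, untested positions. All later steps hold for any such mean bound: Lemma~\ref{lem:overlap_mgf}, the pair bound, Proposition~\ref{prop:simulator} and Eq.~\eqref{eq:EW_common}. So the work is to supply these two inputs for WCP pulses, and then to take a union bound over the failure events.

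\textbf{Step 1 (erasure domination).} Under (A1)--(A2), read pulse by pulse, Bob ends with a classical record obtained by a possibly adaptive measurement applied to each pulse separately. His posterior on the bit of retained position $k$ is then governed by a guessing probability $g_k$, given everything earlier in the record. Lemma~\ref{lem:erasure_domination} replaces each such position by an erasure channel with known-flag rate $2g_k-1$. This degrades Bob only, so it cannot decrease his knowledge, and any trace-distance bound proved for the erasure model holds for him. Adaptivity makes the flags a martingale-type sequence rather than i.i.d. I would handle this as the certification appendix does: the permutation of Step~\ref{QSPIR:shuffle} is independent of the record (up to $\epsilon_{\mathrm{rnd}}$), and Lemma~\ref{lem:overlap_mgf} needs only the mean flag rate, bounded through the product form of the moment generating function.

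\textbf{Step 2 (certified rate).} Invoke Proposition~\ref{prop:cert_rate_wcp}. If the intensity-resolved test passes, the mean flag rate of the retained positions is at most $\pknown^{\mathrm{WCP}}$, except with probability $\epsilon_{\mathrm{PE}}+\epsilon_\text{cert}$. Here $\epsilon_{\mathrm{PE}}$ covers the $84$ declaration-frequency intervals, and $\epsilon_\text{cert}$ covers the sampling step from test slots to retained slots. Inside that proposition, single photons contribute at most $\min\{1,B(\delta_1)\}$ by Lemma~\ref{lem:guessing_bound}. Multi-photon retained positions, a fraction at most $\Delta_R^+$, are charged as fully known. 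This gives Eq.~\eqref{eq:pk_wcp}. The remark that $\pknown^{\mathrm{WCP}}\ge\pUSDplus$ is not needed for the bound; it only explains why Alice's acceptance check is conservative.

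\textbf{Step 3 (assembly).} On the good event, apply Eq.~\eqref{eq:EW_common} with $\pknown=\pknown^{\mathrm{WCP}}$. The size rule, with the $\hmarg$ Alice actually uses, is checked by Alice, and $\hmarg$ enters Lemma~\ref{lem:seeded_parity} only through its row count, so a misstated $\tilde{\ebit}$ cannot hurt. Proposition~\ref{prop:simulator} then gives the simulator, with error $\EX[W]+\epsilon_{\mathrm{rnd}}$. On the bad event, charge trace distance $1$, which costs its probability $\epsilon_{\mathrm{PE}}+\epsilon_\text{cert}$. Adding the two contributions yields Eq.~\eqref{eq:p1_decoy}.

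The main obstacle is Step 1's claim that positions whose flags are set by multi-photon pulses, or by adaptive strategies, still give values that are uniform on the missed sets $E_j$ and independent across blocks given $V$. Lemma~\ref{lem:seeded_parity} requires exactly this. I would argue it via an explicit degrading simulation: Bob's per-pulse channel factors as an erasure flag, followed by a map on which his record depends only through the revealed bit. The flags are computed before the permutation, so after permuting they stay independent of block membership.
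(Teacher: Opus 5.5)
Your overall route is the paper's. You hand Bob the flags of Lemma~\ref{lem:erasure_domination}, with every retained multi-photon position charged as known, and bound their mean rate on the retained positions by $\pknown^{\mathrm{WCP}}$ through Proposition~\ref{prop:cert_rate_wcp}. You then rerun Lemma~\ref{lem:overlap_mgf} and the proof of Eq.~\eqref{eq:EW_common}, and add $\epsilon_{\mathrm{rnd}}$ and $\epsilon_{\mathrm{PE}}+\epsilon_\text{cert}$ for the permutation and the two certificates. Steps~2 and~3 are fine.

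The gap is in Step~1, at exactly the point you identify as the main obstacle. Lemma~\ref{lem:overlap_mgf} can work with the mean rate $\bar q$ alone only because its flags are \emph{independent} with fixed probabilities $q_i$; Step~1 of its proof is a product over locations. Flags whose rates depend on earlier outcomes, your ``martingale-type'' sequence, do not satisfy this, and for them a mean bound says nothing. For instance, flags that are all equal and set with probability $p$ have mean rate $p$. Yet $\EX\,2^{o_{jj'}/2}\ge p\,2^{\numd/2}$, which exceeds $[1+(\sqrt2-1)p^{2}]^{\numd}$ by a factor exponential in $\numd$ for every $0<p<1$. The independence of the permutation from Bob's record, which you invoke, does not repair this. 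A degrading simulation that draws the flag of pulse $i$ at a rate fixed by $y_{<i}$ also produces flags that are correlated through the record, so Lemma~\ref{lem:overlap_mgf} cannot be applied to them as they stand.

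The missing idea is to condition on Bob's entire classical record $y$ at once, which fixes all the rates $q_i$ simultaneously. Under (A2) read pulse by pulse, Lemma~\ref{lem:decoy_sample} writes the joint law of states, photon numbers, intensity labels and $y$ as a product of one-slot factors $\tfrac14p_{\mu(i)}p_{N_i}(\mu(i))\operatorname{Tr}[\Pi^{(i)}_{y_i|y_{<i}}\rho_{s_i}^{\otimes N_i}]$. Adaptivity couples the outcomes, not the slots, once $y$ is fixed. Conditioning on the announced sets, the discarded test slots and the retained list keeps this product form, as remarked after Lemma~\ref{lem:cert_independence}.

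Given $y$, the flags drawn from each slot's posterior are therefore independent, with rates $q_i$ fixed by $y$. The unflagged bits are exactly uniform and independent across blocks given $(y,K)$. This gives both the hypothesis of Lemma~\ref{lem:overlap_mgf}, applied conditionally on $y$ with the $\bar q$ that Proposition~\ref{prop:cert_rate_wcp} certifies, and the erasure hypothesis of Lemma~\ref{lem:seeded_parity}. It is the one line ``the flags are independent given his record'' in the paper's proof.

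Mind also the direction of the reduction. The flags are given to Bob in addition to $y$, which strengthens him, and the bound transfers to $y$ alone by data processing. No degrading of Bob's view is needed.
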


The proof (Appendix~\ref{app:decoy_proofs}) reduces the claim to the single-photon analysis: the flags of Lemma~\ref{lem:erasure_domination} have mean rate at most $\pknown^{\mathrm{WCP}}$ on the retained positions, and the right side of Eq.~\eqref{eq:EW_common} increases with the known-rate.

\begin{theorem}[Existence of a decoy-state SPIR scheme]
\label{thm:decoy_success}
Let Alice use a phase-randomized WCP source with intensities $\mu_1>\mu_2>\mu_3=0$, and let $r_{\text{decoy}},e_p,e_{X,1}$ be the decoy-certified secret-key rate, single-photon phase error and single-photon bit error, $\pknown^{\mathrm{WCP}}$ the certified known-rate of Proposition~\ref{prop:cert_rate_wcp}, which is at least the tagging bound $\pUSDplus$ of Lemma~\ref{lem:beff_plus} (Fig.~\ref{fig:DS4_known_rate}), and $\tilde{e}_b$ the conclusive-bit error rate of item (S4). Assume (A1)--(A3) of Appendix~\ref{app:usd_certification}, read pulse by pulse. Then for every single-photon phase error $e_p<e_{\text{thr}}^{\text{decoy}}$ the SPIR protocol of Section~\ref{sec:protocol}, run with these intensities, with the test of Section~\ref{sec:decoy_receiver_test} in place of Step~\ref{QSPIR:usd_test}, with the check symbols of Eq.~\eqref{eq:syndrome_length}, and with a set size obeying the rule at this known-rate,
\begin{equation}
\begin{split}
\cthr\;\ge\;&2\hmarg+2\numd\log_2\!\Big[1+(\sqrt2-1)\big(\pknown^{\mathrm{WCP}}\big)^{2}\Big]\\
&+4\log_2\num+2\log_2\frac{1}{\SecretB}-4 ,
\end{split}
    \label{eq:overhead_decoy}
\end{equation}
at the cost of $\big(\num+\lceil\log_2\num\rceil\big)/R_{\text{decoy}}$ emitted QKD-mode pulses, that is, $\big(\num+\lceil\log_2\num\rceil\big)/r_{\text{decoy}}$ sifted detections [item (S1) of Proposition~\ref{prop:four_subs}], is $\epsilon_{\text{sec}}$-secret against Eve, leaks to Bob at most $\SecretB+\epsilon_{\mathrm{PE}}+\epsilon_\text{cert}+\epsilon_{\mathrm{rnd}}$ [Eq.~\eqref{eq:p1_decoy}], where $\SecretB$ is the budget for $\EX[W]$ in Eq.~\eqref{eq:overhead_decoy}, and leaves Bob's index private against an Alice who uses the prescribed source (Proposition~\ref{prop:index_privacy_honest}); for an Alice under (P) and (L), see the multi-click count at the start of this section.
\end{theorem}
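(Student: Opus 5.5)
The plan is to assemble Theorem~\ref{thm:decoy_success} from the single-photon Theorem~\ref{thm:success_protocol}, using the four substitutions of Proposition~\ref{prop:four_subs}. No new estimate is needed beyond the decoy certificates already stated. The claims are checked one at a time: secrecy against Eve, the key cost, correctness, database privacy, and index privacy. The failure probabilities are then combined by a union bound, following the partition of $\epsilon_{\text{sec}}$ in (S3).

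\emph{Eve and cost.} Everything Eve sees that depends on $\dstr{d}$ or $\pidx$ is one-time-padded with the key of Step~\ref{QSPIR:standard_secure_bits} (Section~\ref{sec:security_eve}). It therefore suffices that this key is $\epsilon_{\text{sec}}$-secret under PNS attacks. That is standard decoy-state finite-key security, evaluated at the fluctuated gains of (S3), with rate given by Eq.~\eqref{eq:r_decoy_real}. The hypothesis $e_p<e_{\text{thr}}^{\text{decoy}}$ makes $R_{\text{decoy}}>0$ by the definition in (S1). A key of length $\ell=\num+\lceil\log_2\num\rceil$ then costs $\ell/R_{\text{decoy}}$ emitted pulses, or $\ell/r_{\text{decoy}}$ sifted detections. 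Composability carries secrecy over to the padded messages. Correctness and the $\perp$ and abort bounds follow Theorem~\ref{thm:success_protocol}(i). The only change is that $\tilde e_b$ is estimated on the signal-intensity test slots, as in (S4), so that Lemma~\ref{lem:decodability} applies to Bob's conclusive bits, multi-photon pulses included. The honest abort bound uses $\pUSDsift\ge\USDprob$ from Lemma~\ref{lem:beff}, so Hoeffding's inequality still applies to $\cthr<\USDprob\numd$.

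\emph{Database privacy.} Under (A1)--(A3) read pulse by pulse, and on passing the test of Section~\ref{sec:decoy_receiver_test}, Proposition~\ref{prop:cert_rate_wcp} with Lemma~\ref{lem:erasure_domination} replaces Bob's record by independent known-flags. These flags have mean rate at most $\pknown^{\mathrm{WCP}}$ on the retained positions, except with probability $\epsilon_{\mathrm{PE}}+\epsilon_\text{cert}$. This is exactly the erasure hypothesis of Lemma~\ref{lem:seeded_parity}, with $\pknown=\pknown^{\mathrm{WCP}}$ in Theorem~\ref{thm:success_protocol}(ii). Invoking Proposition~\ref{prop:db_privacy_decoy}, namely Eq.~\eqref{eq:p1_decoy}, the set-size condition Eq.~\eqref{eq:overhead_decoy} is exactly the rearrangement of the bound
\[
\tfrac{\num^2}{4}\,2^{-(\cthr-2\hmarg)/2}\big[1+(\sqrt2-1)(\pknown^{\mathrm{WCP}})^2\big]^{\numd}\le\SecretB .
\]
The total leakage is therefore $\SecretB+\epsilon_{\mathrm{PE}}+\epsilon_\text{cert}+\epsilon_{\mathrm{rnd}}$. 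One point must be checked here: the right side of Eq.~\eqref{eq:EW_common} increases with $\pknown$. This is what makes it legitimate to substitute any upper bound on the rate.

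\emph{Index privacy.} When Alice uses the prescribed source, every data pulse has the single intensity $\mu_1$ and a uniform BB84 polarization. The conclusive probability $\pUSDeff$ is then the same for all four states. The symmetry argument of Proposition~\ref{prop:index_privacy_honest} goes through verbatim: the blocks are exchangeable, $i$ is uniform, $\aset$ is uniform, and the shift is padded. The decoy estimates use only sacrificed slots, so they reveal nothing about the conclusive pattern.

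The main obstacle is the database-privacy step. There one must be sure that the certificate of Proposition~\ref{prop:cert_rate_wcp} really controls the \emph{retained} data positions, and not only the test slots or the QKD-mode yields. This matters because Bob names the mode after measuring and may disguise pulses as loss. I would first verify that the fixed-batch sampling order makes the test slots a uniform sample of the same population. I would then verify that $\Delta_R^+$ of Eq.~\eqref{eq:decoy_retained_tag} bounds the multi-photon fraction of that population. Only after these checks would I apply the monotonicity of the bound.
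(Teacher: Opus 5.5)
Your proposal is correct and follows essentially the same route as the paper. The paper's proof likewise carries the substitutions of Proposition~\ref{prop:four_subs} through Theorem~\ref{thm:success_protocol}; it gets a finite cost from $R_{\text{decoy}}>0$ for $e_p<e_{\text{thr}}^{\text{decoy}}$, Eve-secrecy from the decoy-certified single-photon and vacuum key, database privacy from Proposition~\ref{prop:db_privacy_decoy} with Lemma~\ref{lem:seeded_parity}, and index privacy from the single signal intensity together with Proposition~\ref{prop:index_privacy_honest}. The retained-population checks you single out as the main obstacle (fixed-batch sampling order and the bound $\Delta_R^+$) are already contained in Proposition~\ref{prop:cert_rate_wcp}, so the theorem only needs to invoke that proposition.
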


The proof (Appendix~\ref{app:decoy_proofs}) carries the substitutions of Proposition~\ref{prop:four_subs} through Theorem~\ref{thm:success_protocol}; the key requirement of Eq.~\eqref{eq:bnums_nums} is finite and its conversion to a number of pulses has its only pole at $R_{\text{decoy}}=0$, so a finite cost exists throughout $e_p<e_{\text{thr}}^{\text{decoy}}$.

The price of the substitution is the block length: the second term of the size rule grows from $0.101\,\numd$ at $\pknown=\USDprob$ to $0.124\,\numd$ at $\mu_1=0.05$, and to $0.207\,\numd$ at the rate $0.42$ certified by a test of $10^{11}$ pulses.  Correctness is not affected.  The set size $\cthr$ is fixed from the single-photon rate $\USDprob$, and the honest rate $\pUSDsift$, $0.2934$ at $\eta\mu_1=0.005$, lies slightly above it, so the honest abort probability can only fall.

\subsection{Numerical Evidence}
\label{sec:decoy_numerics}
We validate the decoy-layer analysis by a photon-number-level Monte-Carlo simulation of the WCP source through a pure-loss channel of transmittance $\eta$, simulating side by side the \emph{honest} receiver (the optimal $N$-copy unambiguous measurement of Eq.~\eqref{eq:bN} on the delivered photons, as in Step~\ref{QSPIR:initial}) and the strongest adversary charged by Lemma~\ref{lem:beff_plus}. Errors enter through the intrinsic misalignment (detector) error rate $e_{\text{det}}$: each \emph{detected} signal yields the wrong bit with probability $e_{\text{det}}$, independently of the channel loss, so the error-weighted gain follows the standard decoy channel model $E_{X,\mu}Q_{X,\mu}=e_0Y_{X,0}+e_{\text{det}}\,(1-e^{-\eta\mu})$~\cite{Ma_Qi_Zhao_Lo_decoy}, here with vanishing dark-count yield $Y_{X,0}=0$ (and $e_0=\tfrac12$). With $Y_{X,0}=0$, $e_{\text{det}}$ is simply the QBER of the detected signals and plays the role of the channel error rate $e=\ebit=\ephase$ of the symmetric single-photon analysis; we take $e_{\text{det}}=0.01$ in the Monte-Carlo simulations, whereas the block lengths of Section~\ref{sec:photon_tagging} are for a noiseless channel. The simulation methodology, together with the validation of the rate formulas of Lemmas~\ref{lem:beff} and~\ref{lem:beff_plus} (the latter is reached by the worst-case receiver of our simulation model, which collects every emitted photon; this shows that the bound cannot be improved within the model, not that such a receiver is physical), the scaling law of Eq.~\eqref{eq:Delta_plus_scaling}, and the binomial law of the per-block count, which fixes the honest abort probability (variance ratio $1.000\pm0.004$), is collected in Appendix~\ref{app:decoy_numerics_validation}. The operating point itself follows from Eq.~\eqref{eq:mu1_constraint}; no simulation is needed for it.

\section{Conclusion}
\label{sec:conclusion}

This paper proposes a one-way, \PM quantum protocol that realizes SPIR between a single client and a single classical database, with statistical security guarantees. When the client requests one bit of the database, our scheme delivers it without leaking extra information to the client, while keeping the index of the requested bit hidden from the server.

In particular, the construction combines a QKD-style transmission phase (to ensure security against eavesdropping), unambiguous state discrimination (to create client-side asymmetry in which bits are known), and a random blocking in which the client announces the locations he determined in one block and the server applies a short error-correcting hash and a random parity to those locations of every block, which ensures that the client can decode essentially only one requested item while learning at most $\SecretB$ of additional database information, together with explicit parameter trade-offs and a key cost linear in the database size.

Security against an all-powerful eavesdropper is obtained by adopting standard QKD steps (parameter estimation, reconciliation, and privacy amplification) to bound the adversary's information, and by one-time-padding the database and the shift with the generated secret key. The client's privacy rests on his private, uniform choice of block, which makes all blocks look alike to a server who prepares the prescribed states [assumption (P)], and on the multi-click count, which limits what she gains from pulses of several photons. Database privacy, by contrast, rests on the asymmetric model of Section~\ref{subsec:objectives}, where Eve is unbounded, while the parties are NISQ-limited. Our protocol makes constructive use of it: the client's inability to store photons forces him to measure every qubit on arrival, and the measurement test of Step~\ref{QSPIR:usd_test} then confines him to the unambiguous-discrimination rate on which the size of the announced set rests.

The analysis makes explicit the trade-offs among database size $n$, block size $\numd$, USD success probability $\USDprob$, the channel bit-flip and phase-flip error rates $(e_b, e_p)$, the size $\cthr$ of the announced set, the number of check symbols $\hmarg$, and the QKD-key overhead $\nums$, which is linear in $\num$.

The joint permutation of Step~\ref{QSPIR:shuffle} uses a storage-model string commitment, with an explicit statistical error and an auxiliary-transmission cost.  That cost is not small: Appendix~\ref{app:storage_commitment} estimates it at more than an order of magnitude above the data transmission, and the overheads of Section~\ref{sec:numerics} do not include it.

By choosing the size of the announced set according to Eq.~\eqref{eq:size_rule}, the server ensures that the client gains negligible information on other bits, for every set the client may announce, and at a post-USD error rate of up to a few percent.  The binomial law of the conclusive pattern assumed in the analysis is confirmed by Monte-Carlo simulation, and an explicit attack shows that the size rule is nearly tight.  Four limitations remain.  Index privacy is proved under assumption (P), for a server who prepares the prescribed states and confirms the client's guesses truthfully: exactly for one photon per position, and up to a small multi-photon term over a noiseless channel.  A server who tampers with her states, for example by small rotations hidden in the channel noise, or who falsely confirms guesses, is not covered, and for weak coherent pulses index privacy rests on the laser assumption (L) and on simulations.  And a client with long-term quantum memory who measures after the announcements of Step~\ref{QSPIR:final_step}, or jointly before them at the present set sizes, is not covered by our bounds (Appendix~\ref{sec:Bob_delayed_measurement}), so the no-long-term-quantum-memory assumption is load-bearing.  The certificate of the client's measurement (Appendix~\ref{app:usd_certification}) is stated for a noiseless channel and costs a test sample of about $2n_\text{t}$ transmitted qubits, with $n_\text{t}$ between $10^{8}$ and $10^{10}$ for a slack of $0.09$ to $0.02$ in the certified rate; together with the commitment of the previous paragraph, it dominates the quantum communication.  Finally, with weak coherent pulses the signal intensity must stay about ten times below its QKD value, which costs the same factor in throughput (Section~\ref{sec:decoy_state}).

In summary, we proposed a single-database quantum SPIR with statistical security and one-way quantum communication, for parties who cannot store quantum states, a stricter form of the restriction in the bounded- and noisy-storage models~\cite{DFSS_BQSM, WST_noisy_storage}.  It uses standard BB84/QKD-style \PM components plus a passive three-outcome receiver, rather than the multiple non-colluding servers or computational assumptions of previous proposals and experiments~\cite{QSPIR_theory1, QSPIR_multiple_database1, Wang_SPIR_experiment, KO97_PIR}.  Our protocol is compatible with current BB84-style QKD setups.  Three questions remain open: a certificate of the client's measurement that tolerates channel noise, a cheaper joint permutation than the storage-model commitment of Step~\ref{QSPIR:shuffle}, and index privacy against a server who departs from assumption (P).

\appendix

\section{Unambiguous State Discrimination via Generalized Measurement}
\label{app:USD_POVM}

This appendix constructs the optimal symmetric USD POVM whose inconclusive outcome is information-free, as claimed in Section~\ref{subsec:soft_info}.

Here, we demonstrate that if Bob performs a generalized measurement, he can unambiguously identify the state sent by Alice with some probability, and if he fails to do so, he cannot gain any partial information on the initial state.

Without loss of generality, suppose Bob guessed the set $S_0=\{\ket{0},\ket{+}\}$ and the guess is correct, as it is for every position retained in Step~\ref{QSPIR:set_disclosure}; the case of $S_1$ follows by symmetry.  Bob defines generalized POVM elements $P_p, P_q, P_r$ on the qubit plus one ancilla level $\ket{2}$, where $P_p+P_q+P_r=I_3$, with

\begin{equation}
\begin{split}
    \ket{p}&=c_1\ket{1}+c_2\ket{2}\\
    \ket{q}&=c_1\ket{-}+c_2\ket{2}\\
\end{split}
\end{equation}

and $P_p=\ket{p}\bra{p}$, $P_q=\ket{q}\bra{q}$, $P_r=I_3-P_p-P_q$.  Since $\braket{0}{p}=0$ and $\braket{+}{q}=0$, a click on $P_p$ excludes $\ket{0}$ and a click on $P_q$ excludes $\ket{+}$.  Normalization, $c_1^2+c_2^2=1$, and orthogonality, $\braket{p}{q}=-c_1^2/\sqrt{2}+c_2^2=0$, give $c_1^2=2-\sqrt{2}$ and $c_2^2=\sqrt{2}-1$.  Then

\begin{gather}
    P_p=\left(
\begin{array}{ccc}
 0 & 0 & 0 \\
 0 & c_1^2 & c_1c_2 \\
 0 & c_1c_2 & c_2^2 \\
\end{array}
\right),\nonumber\\
P_q=\left(
\begin{array}{ccc}
 \frac{c_1^2}{2} & -\frac{c_1^2}{2}
   & \frac{c_1c_2}{\sqrt{2}} \\
 -\frac{c_1^2}{2} & \frac{c_1^2}{2}
   & -\frac{c_1c_2}{\sqrt{2}} \\
 \frac{c_1c_2}{\sqrt{2}} & -\frac{c_1c_2}{\sqrt{2}} & c_2^2 \\
\end{array}
\right),
\end{gather}
\begin{equation}
P_r=\left(
\begin{array}{ccc}
 1-\frac{c_1^2}{2} &
   \frac{c_1^2}{2} & -\frac{c_1c_2}{\sqrt{2}} \\
 \frac{c_1^2}{2} & 1-\frac{3
   c_1^2}{2} & \frac{(\sqrt{2}-2)c_1c_2}{2} \\
 -\frac{c_1c_2}{\sqrt{2}} & \frac{(\sqrt{2}-2)c_1c_2}{2} & 1-2 c_2^2 \\
\end{array}
\right)
\end{equation}

Here $c_1c_2=\sqrt{3\sqrt{2}-4}\approx0.493$, so $P_p$, $P_q$ and $P_r=\ket{r}\bra{r}$ are rank-one orthogonal projectors with $P_p+P_q+P_r=I_3$.  Restricted to the qubit, $P_p$ and $P_q$ reduce to the elements $(2-\sqrt{2})\ket{1}\bra{1}$ and $(2-\sqrt{2})\ket{-}\bra{-}$ of Step~\ref{QSPIR:initial}.  Physically, $\ket{2}$ is an ancillary optical mode and $\{P_p,P_q,P_r\}$ is a projective measurement of one photon over three modes: passive linear optics routes $\ket{p}$, $\ket{q}$ and $\ket{r}$ to three output ports, each watched by one detector.  On the qubit, $P_r$ reduces to the inconclusive element $2(\sqrt2-1)\ket{\phi_0}\bra{\phi_0}$ of Step~\ref{QSPIR:initial}.

For a correct guess, the state Bob receives is the mixed state,
\begin{equation}
\begin{split}
    \rho_{S_0} &= \frac{1}{2}(\ket{0}\bra{0}+\ket{+}\bra{+})\\
    &=\left(
\begin{array}{ccc}
 \frac{3}{4} & \frac{1}{4} &
   0 \\
 \frac{1}{4} & \frac{1}{4} &
   0 \\
 0 & 0 & 0 \\
\end{array}
\right)
\end{split}
\end{equation}

If Bob's outcome is $P_p$, he concludes unambiguously that Alice sent $\ket{+}$; if it is $P_q$, that she sent $\ket{0}$.  Each happens with probability $1-1/\sqrt{2}\approx0.293$ for the corresponding state.  An inconclusive result occurs if the outcome is $P_r$.

It is easy to check that for $\rho_{0}=\ket{0}\bra{0}$ and $\rho_{+}=\ket{+}\bra{+}$, the probability of projecting onto $\ket{r}$ is the same; mathematically,

\begin{equation}
    \mathrm{Tr}[\rho_{0}P_r]=\mathrm{Tr}[\rho_{+}P_r]=\mathrm{Tr}[\rho_{S_0}P_r]=\frac{1}{\sqrt{2}}
\end{equation}

Hence, if the non-deterministic result is obtained, Bob cannot gain any partial information on the initial state.  On a wrong guess the outcome does carry information on the state, but such positions are discarded in Step~\ref{QSPIR:set_disclosure}.

\section{Depolarizing and Dephasing Noise}
\label{sec:Depolarizing_and_Dephasing_Noise}

A quantum channel can be described by a CPTP map, which can be written in many different representations. For example, in the Kraus representation, a channel on a state $\rho$, $\mathcal{E}(\rho)$, can be written as:
\begin{equation}
    \mathcal{E}(\rho)=\sum_l A_l\rho A_l^\dagger
\end{equation}
where the $A_l$ are called Kraus operators and satisfy the trace-preservation condition
\begin{equation}
    \sum_lA_l^\dagger A_l=I .
\end{equation}
Here, we consider the two most common types of errors, the depolarizing noise and the dephasing noise, and we explore their effect on the error rate of USD.

Recall that for a state $\rho$ and a projective operator $P$, the probability of projecting $\rho$ onto $P$ is given by
\begin{equation}
    p_{\text{projection}}(\rho, P)=\mathrm{Tr}[\rho P]
\end{equation}

\subsection{Depolarizing Noise}
The general form of depolarizing noise with depolarizing probability $p$ on a $d$-level quantum system in Kraus representation is:
\begin{equation}
    \mathcal{E}^{\text{depolar}}_{p}(\rho)=(1-p)\rho+p\frac{I}{d}
\end{equation}

Consider the retained positions, where Bob's set guess was correct, and the measurement of Step~\ref{QSPIR:initial}.  Each conclusive element is $(2-\sqrt{2})$ times a projector onto a state of the \emph{complementary} basis.  For instance, for $S_0$ the element $(2-\sqrt{2})\ket{1}\bra{1}$ announces $\ket{+}$.  A state prepared in one basis and measured in the complementary one gives either result with probability $1/2$, and depolarization does not change this:
\begin{equation}
    \mathrm{Tr}[\mathcal{E}^{\text{depolar}}_p(\ket{+}\bra{+})\ket{1}\bra{1}]=1/2 .
\end{equation}
Hence the correct-conclusive probability stays $(2-\sqrt{2})/2=\USDprob$ for every prepared state.  What does change is the false-positive rate.  For example, if Alice prepares $\ket{0}$ and Bob guessed $S_0$,
\begin{equation}
\mathrm{Tr}[\mathcal{E}^{\text{depolar}}_p(\ket{0}\bra{0})\,(2-\sqrt{2})\ket{1}\bra{1}]=(2-\sqrt{2})\,\frac{p}{2}=\USDprob\, p ,
\end{equation}
so Bob wrongly announces $\ket{+}$.  The same holds for each of the four states.  Hence, among retained positions,
\begin{equation}
    \text{False Positive}=\USDprob\,p,\quad
    \Pr[\text{conclusive}]=\USDprob(1+p),
\end{equation}
and the false-positive error rate is
\begin{equation}
    e_{\text{depolar}}=\frac{\USDprob\, p}{\USDprob+\USDprob\, p}=\frac{p}{1+p}
\end{equation}
This is the post-USD error rate $\tilde{\ebit}$ of Step~\ref{QSPIR:code_sizing}.  The channel itself flips a bit with probability $\ebit=\ephase=p/2$, so $\tilde{\ebit}=2\ebit/(1+2\ebit)$ is about twice the channel error rate.  The inconclusive element is symmetric between the two states of each set, and depolarization preserves this symmetry, so the inconclusive posterior remains exactly $1/2$ (Section~\ref{subsec:soft_info}).

\subsection{Dephasing Noise}
The general form of dephasing noise in the $\sigma_X$, $\sigma_Y$, and $\sigma_Z$ directions, with probabilities $p_x, p_y, p_z$, on a qubit is:
\begin{equation}
    \mathcal{E}^{\text{dephase}}_{p_x, p_y, p_z}(\rho)=(1-p_x-p_y-p_z)\rho+\sum_{D=X, Y, Z}p_D\sigma_D\rho\sigma_D^\dagger
\end{equation}

As before, a state measured in the complementary basis gives either result with probability $1/2$ under any Pauli channel, so the correct-conclusive probability stays $\USDprob$.  The false positives come from a Pauli error that flips the prepared state onto the state it is tested against:
\begin{equation}
\mathrm{Tr}[\mathcal{E}^{\text{dephase}}_{p_x, p_y, p_z}(\ket{0}\bra{0})\ket{1}\bra{1}]=p_x+p_y
\end{equation}
\begin{equation}
\mathrm{Tr}[\mathcal{E}^{\text{dephase}}_{p_x, p_y, p_z}(\ket{1}\bra{1})\ket{0}\bra{0}]=p_x+p_y
\end{equation}
\begin{equation}
\mathrm{Tr}[\mathcal{E}^{\text{dephase}}_{p_x, p_y, p_z}(\ket{+}\bra{+})\ket{-}\bra{-}]=p_y+p_z
\end{equation}
\begin{equation}
\mathrm{Tr}[\mathcal{E}^{\text{dephase}}_{p_x, p_y, p_z}(\ket{-}\bra{-})\ket{+}\bra{+}]=p_y+p_z
\end{equation}

Each is weighted by $2-\sqrt{2}=2\USDprob$, and the four prepared states are equally likely.  Hence, the probability of a state measurement being falsely considered a successful USD is:
\begin{equation}
\begin{split}
    \text{False Positive}&=\frac{1}{4}\times2\USDprob\times\left(2p_x+2p_y+2p_y+2p_z\right)\\
    &=\USDprob\,(p_x+2p_y+p_z)
    \end{split}
\end{equation}

Hence, the false-positive error rate is:
\begin{equation}
    \begin{split}
    e_{\text{dephase}}&=\frac{\USDprob\,(p_x+2p_y+p_z)}{\USDprob+\USDprob\,(p_x+2p_y+p_z)}\\
    &=\frac{p_x+2p_y+p_z}{1+p_x+2p_y+p_z}
    \end{split}
\end{equation}
This is $\tilde{\ebit}$ for this channel.  With the channel's bit-flip and phase-flip rates $\ebit=p_x+p_y$ and $\ephase=p_y+p_z$ it reads $\tilde{\ebit}=(\ebit+\ephase)/(1+\ebit+\ephase)$.  This is how Step~\ref{QSPIR:code_sizing} obtains $\tilde{\ebit}$ from the estimates of Step~\ref{QSPIR:rate_estimation}, taken at their upper confidence limits, for the margin $\gamma$ covers only the count of errors inside $\aset$.  Unlike depolarization, dephasing with $p_x\ne p_z$ breaks the symmetry between the $\sigma_Z$ and $\sigma_X$ bases, so the inconclusive posterior deviates slightly from $1/2$ (e.g.\ $0.504$ for $p_z=0.02$, $p_x=p_y=0$).  The error rate then also differs between claims, $2\ebit/(1+2\ebit)$ among claims of an $X$-basis state and $2\ephase/(1+2\ephase)$ among claims of a $Z$-basis state, so the check symbols must be sized with the larger of the two, and the erasure hypothesis of Lemma~\ref{lem:seeded_parity} holds only approximately (Section~\ref{subsec:soft_info}).

\section{Storage-Model String Commitment for the Joint Permutation}
\label{app:storage_commitment}

Step~\ref{QSPIR:shuffle} uses a commitment to a classical string, with Alice as committer and Bob as verifier.  We instantiate it through weak string erasure (WSE) and classical processing~\cite{Koenig_Wehner_Wullschleger_storage}.  All auxiliary quantum communication is from Alice to Bob.  The construction is statistical under a specified storage restriction; it does not assume unrestricted, unconditionally secure quantum bit commitment, which is excluded by Refs.~\cite{Mayers_no_BC, Lo_Chau_no_BC}.

\subsection{Storage assumption and auxiliary parameters}

First consider noiseless single-photon auxiliary transmissions.  Let $L_{\mathrm{com}}$ be their length per attempt.  After a prescribed delay $\Delta t_{\mathrm{com}}$, dishonest Bob can retain at most $q_{\mathrm{com}}$ qubits in total, including ancillary registers, while retaining unlimited classical information.  Honest parties measure immediately and store only classical records.  The strict no-memory model sets $q_{\mathrm{com}}=0$.

Choose a WSE min-entropy rate $\lambda_{\mathrm{com}}>0$ and error $\varepsilon_{\mathrm{WSE}}$ certified by the storage-model uncertainty bound in Ref.~\cite{Koenig_Wehner_Wullschleger_storage}, Theorem~III.2.  For a bounded memory, a positive asymptotic rate is available when $q_{\mathrm{com}}/L_{\mathrm{com}}<1/2-\zeta$ for some fixed $\zeta>0$; finite lengths must satisfy that theorem's error bound.  The entropy guarantee includes Bob's measurement record and the announced auxiliary bases.

Before the attempt, the parties fix a binary linear code $\mathcal D$ with parameters $[L_{\mathrm{com}},K_{\mathrm{com}},d_{\mathrm{com}}]$, a syndrome map $\operatorname{Syn}_{\mathcal D}$, and a two-universal hash family $\{f_u\}$ with $\ell_P$ output bits.  For an error parameter $\varepsilon_*>0$, choose
\begin{subequations}
\label{eq:commitment_parameters}
\begin{align}
 d_{\mathrm{com}}&\ge\left\lceil2\log_2(1/\varepsilon_*)\right\rceil,\\
 \ell_P&\le\lambda_{\mathrm{com}}L_{\mathrm{com}}
       -(L_{\mathrm{com}}-K_{\mathrm{com}})-2\log_2(1/\varepsilon_*).
\end{align}
\end{subequations}
Here $\ell_P=\lceil\log_2((\numd\num)!)\rceil$, with the right-hand side rounded down when choosing an integer length.  The code $\mathcal D$ is a separate classical code, unrelated to the check matrix $\dstr{H}$ of Step~\ref{QSPIR:final_step}.

\subsection{Commitment and opening}

The following is the WSE-based construction of Ref.~\cite{Koenig_Wehner_Wullschleger_storage}, Protocols~1, 2a and~2b, with the standard masking step for a chosen input string.  The invocation uses fresh auxiliary states and local randomness, after the retained data list has been fixed.

\begin{enumerate}
 \item Alice samples independent uniform $X,\Theta\in\{0,1\}^{L_{\mathrm{com}}}$ and sends the auxiliary states $H^{\Theta_j}\ket{X_j}$, with $H$ the Hadamard gate.  Bob measures in independent uniform bases $\widehat\Theta_j$ and records $\widehat X_j$.

 \item After the delay $\Delta t_{\mathrm{com}}$, Alice announces $\Theta$.  Bob privately forms $I=\{j:\Theta_j=\widehat\Theta_j\}$ and keeps $\widehat X_I$.  He reveals neither $I$ nor his measurement outcomes.  In the noiseless honest execution, $\widehat X_I=X_I$.

 \item To commit to $y=\operatorname{bin}_{\ell_P}(t_a)$, Alice samples a hash seed $u$ and sends
 \begin{equation}
     w=\operatorname{Syn}_{\mathcal D}(X),\qquad
     u,\qquad e=y\oplus f_u(X).
     \label{eq:commitment_mask}
 \end{equation}
 Bob checks the message formats and that $u$ belongs to the agreed hash family.  The commitment phase is then complete.  Only now does Bob generate his fresh rank $t_b$ (Step~\ref{QSPIR:shuffle}).

 \item Alice opens by revealing $X$.  Bob verifies the syndrome and $X_I=\widehat X_I$, recovers $y=e\oplus f_u(X)$, and rejects unless its integer value is in $\{0,\ldots,M_P-1\}$.  Any failure aborts the session.
\end{enumerate}

Theorem~IV.2 of Ref.~\cite{Koenig_Wehner_Wullschleger_storage} gives error
\begin{equation}
    \varepsilon_{\mathrm{com}}=2\varepsilon_{\mathrm{WSE}}+\varepsilon_*
    \label{eq:commitment_error}
\end{equation}
under Eq.~\eqref{eq:commitment_parameters}.  Its hiding condition bounds the distance of the random mask from uniform given dishonest Bob's view.  Its binding condition supplies a fixed classical committed string in an ideal state close to the real state, and bounds the probability of accepting a different opening.  These are the properties needed for the chosen-input masking step above.  They do not follow merely from excluding one particular entanglement attack.

\subsection{Uniform sampling, aborts, and sequential use}

With ideal commitment, binding fixes Alice's valid rank before honest Bob samples his uniform rank.  For every fixed $t_a$ and $r\in\{0,\ldots,M_P-1\}$,
\begin{equation}
 \Pr[(t_a+t_b)\bmod M_P=r]=\frac{1}{M_P}.
\end{equation}
If Bob is dishonest, hiding makes his choice independent of honest Alice's uniform rank.  Addition modulo $M_P$ again preserves uniformity.  Applying the fixed bijection $\operatorname{unrank}_N$ therefore produces a uniform candidate permutation.  This argument allows the dishonest contribution to depend on the earlier data record.

An opening can nevertheless be withheld after $t_b$ is known.  Accordingly, the security statement includes an abort output.  For any event $E\subseteq S(N)$ fixed from information preceding the honest fresh contribution, the ideal event bound becomes, for the real sampler,
\begin{equation}
 \Pr[\mathrm{accept}\ \land\ P\in E]
 \le \frac{|E|}{N!}+\epsilon_{\mathrm{rnd}}.
 \label{eq:permutation_abort_bound}
\end{equation}
For one invocation, $\epsilon_{\mathrm{rnd}}=2\varepsilon_{\mathrm{com}}$ conservatively covers the ideal-state replacement and binding error, as well as hiding and honest correctness.  This bound does not imply that $P$ is uniform conditioned on acceptance: a dishonest committer could open only for even permutations.  The bound extends from events to functions $f$ of $P$ and the preceding information with values in $[0,1]$: integrating it over the level sets of $f$ gives $\EX[f\,1_{\mathrm{accept}}]\le\EX_{\mathrm{unif}}f+\epsilon_{\mathrm{rnd}}$.  The proofs apply it to quantities capped at $1$, such as $\min(1,W)$.

The invocation uses its own auxiliary records, and no other quantum subprotocol is interleaved with it.  To use the error bound inside SPIR, it must hold with the preceding transcript and permitted retained state as auxiliary information.  This is the sequential security interface required here; standalone security alone must not be treated as an unrestricted composition theorem~\cite{Wehner_Wullschleger_composition}.  For the strict no-memory case, earlier participant records are classical at the storage checkpoints.  Extensions with nonzero retained quantum memory must also satisfy the chosen composition framework's memory conditions.

Under this interface, the protocol invokes the sampler once, for the blocking of Step~\ref{QSPIR:shuffle}, and the invocation contributes at most $\epsilon_{\mathrm{rnd}}$ to an unconditioned statistical security budget.  There are no further attempts.  A failed opening terminates the session.

\subsection{Resource cost and nonideal implementations}

In the noiseless construction, the invocation consumes $L_{\mathrm{com}}$ additional transmitted qubits and one delay of duration $\Delta t_{\mathrm{com}}$.  These costs are additional to the data acquisition and the QKD-key overhead $\nums$.  They are not small.  Bob learns about half of $X$ by measuring every auxiliary qubit in one basis, so $\lambda_{\mathrm{com}}\le\tfrac12$ and $L_{\mathrm{com}}>2\ell_P$.  At $\num=10^{4}$ and $\numd=980$ the rank has $\ell_P\approx2.1\times10^{8}$ bits, so the commitment needs more than $4.2\times10^{8}$ auxiliary qubits, about $22$ times the $2\times10^{7}$ data-acquisition positions of Step~\ref{QSPIR:initial} without the test sample.  The rank length $\ell_P$ is determined by the permutation space and cannot be replaced by a short pseudorandom seed while retaining uniformity on $S(N)$.

For noisy or lossy auxiliary transmissions, the equality test above must be replaced by a storage-model commitment with a corresponding finite-size correctness and security analysis, such as the treatment in Ref.~\cite{Ng_noisy_storage_commitment}.  Its source, detection, error, and memory parameters determine the admissible code and commitment length.  A weak-coherent-pulse implementation requires its own multiphoton analysis for this auxiliary transmission.  The data-mode decoy estimates of Section~\ref{sec:decoy_state} do not supply that analysis, and the numerical overheads in this paper do not include the commitment cost.

\section{Proof of the Database-Privacy Bound}
\label{app:simulator_proof}
This appendix proves Theorem~\ref{thm:success_protocol}(ii) and gives the parameters in closed form.  Why is $\SecretB$ defined through a single database bit of the generator's choice?  A cheating Bob has no fixed index, so ``the other $\num-1$ bits'' means nothing until someone names his bit.  The generator, called the simulator below, does.  Recall that $K_j$ is the set of locations of block $j$ known to Bob, $a_j=|\aset\setminus K_j|$, $x_j=\tfrac12\min(1,2^{\hmarg-a_j})$ and $o_{jj'}=|K_j\cap K_{j'}|$.

\begin{proposition}[Simulator]
\label{prop:simulator}
Under the erasure hypothesis of Lemma~\ref{lem:seeded_parity}, Bob's final state is within trace distance $\EX[W]$ of a state generated from one database bit, with $W=\sum_jx_j-\max_jx_j$.
\end{proposition}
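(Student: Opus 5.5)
I would construct the simulator explicitly and then bound its distance from the real state by a hybrid argument over the blocks, with Lemma~\ref{lem:seeded_parity} as the only quantitative input.

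\textbf{Construction.} Fix the database $\dstr{d}$ and Bob's strategy. Bob's final state has three parts: his record $V$ (measurement outcomes, the announced $\aset$, the padded shift, $\dstr{H}$, $\dstr{z}$ and all $\sigma_j$), the string $\dstr{d}\oplus\Psubstr{s}\oplus\keydb$, and his view of the pad $\keydb$. Since Eve is honest here, the only relevant feature of the key is that Bob and Alice share it, so Bob effectively holds $\dstr{d}\oplus\Psubstr{s}$, that is, the bits $\dbit{d}_j\oplus L_j$ with $L_j$ the parity of block $j$ after the shift. The simulator runs Alice's side of the protocol honestly with Bob, on data-acquisition strings that it samples itself. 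Once $\aset$ and the shift are fixed, it computes all $K_j$ and hence all $x_j$, and chooses $j^\ast=\arg\max_j x_j$. This choice is a function of $V$ alone, which the simulator holds. It then queries the single bit $\dbit{d}_{j^\ast}$, outputs $\dbit{d}_{j^\ast}\oplus L_{j^\ast}$ in position $j^\ast$, and outputs a fresh uniform bit in every other position. The result is a state generated from one database bit, with the index chosen by the generator, as the definition of $\SecretB$ requires.

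\textbf{Hybrid argument.} Order the blocks $j\neq j^\ast$ and replace $\dbit{d}_j\oplus L_j$ by a uniform bit one block at a time. By the erasure hypothesis, the values of block $j$ on $E_j$ are uniform given $V$ and independent across blocks. So conditioned on $V$ and on the blocks already handled, block $j$ still satisfies the hypothesis of Lemma~\ref{lem:seeded_parity}. Replacing $L_j$ by a uniform bit therefore moves the state by at most $x_j$ in trace distance. XORing with the fixed $\dbit{d}_j$ cannot increase this distance, since it is a bijection. The triangle inequality gives total distance at most $\sum_{j\neq j^\ast}x_j=W$ for each fixed $V$. Averaging over $V$ gives the bound $\EX[W]$, because trace distance is jointly convex for classical-quantum states.

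\textbf{Main obstacle.} The hard step is making the conditioning rigorous. The index $j^\ast$ and the sets $E_j$ depend on Bob's adaptive choice of $\aset$, and the parities share the common $\dstr{H}$ and $\dstr{z}$. The argument therefore needs independence of the $E_j$-values across blocks jointly given $V$, including all the $\sigma_{j'}$, and not just for each block separately. I would handle this as follows.

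1. Condition on the full $V$, which includes $\dstr{z}$ and $\dstr{H}$.
2. Note that, given $V$, the unknown block values factor into a product of uniform distributions on the affine spaces $\{\dstr{x}:\dstr{H}_{E_j}\dstr{x}=\sigma'_j\}$. This holds because each $\sigma_j$ depends only on block $j$.
3. Apply the per-block computation from the proof of Lemma~\ref{lem:seeded_parity} inside this product measure. This makes each hybrid step exact: the distance is the indicator that $\dstr{z}_{E_j}$ lies in the row space of $\dstr{H}_{E_j}$, times $\tfrac12$, and this is at most $x_j$ after averaging over $\dstr{z}$.

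One point needs care in the final step. The bound is stated in terms of $x_j$, which averages over $\dstr{z}$, whereas $j^\ast$ is chosen after $\dstr{z}$ is known. To avoid any mismatch, I would let $j^\ast$ depend only on $\aset$ and the $K_j$, which are fixed before $\dstr{z}$ is drawn. This is exactly how $x_j$ is defined.
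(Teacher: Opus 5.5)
Your proposal is correct and follows the paper's proof: it uses the same simulator (dummy Alice, $j^{*}=\arg\max_j x_j$, which depends only on $\aset$ and the $K_j$ and so is fixed before $\dstr{z}$, with the true bit on block $j^{*}$ and fresh uniform bits elsewhere), the same per-block factorization given all check symbols, and the same averaging over the uniform seed. The one difference is cosmetic: for each fixed seed you bound the distance by a block-by-block hybrid, $\sum_{j\ne j^{*}}\tfrac12\mathbf{1}[\dstr{z}_{E_j}\in\text{row space of }\dstr{H}_{E_j}]$, whereas the paper computes it exactly as $1-2^{-|\mathcal{B}\setminus\{j^{*}\}|}$ and then uses $1-2^{-m}\le m/2$; also, your claim of a bound ``at most $x_j$ for each fixed $V$'' holds only after averaging over $\dstr{z}$, as your last paragraph correctly recognizes.
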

\begin{proof}
The simulator plays honest Alice with a dummy database up to and including $\dstr{H}$, $\dstr{z}$ and all check symbols, none of which depends on $\dstr{d}$.  It runs Bob's strategy internally, so it knows his classical record and hence every $x_j$, sets $j^{*}=\arg\max_jx_j$ and asks for the database bit that the announced shift places on block $j^{*}$.  In the last message it outputs the true bit at that position and fresh uniform bits elsewhere.

Fix $\dstr{H}$, the sets $K_j$ and the set $\aset$, which are all determined before $\dstr{z}$ is drawn, and let $\mathcal{B}$ be the set of blocks with $\dstr{z}_{E_j}$ in the row space of $\dstr{H}_{E_j}$.  By Lemma~\ref{lem:seeded_parity}, the parity of a block in $\mathcal{B}$ is a function of Bob's view, while the parities outside $\mathcal{B}$ are uniform and independent of that view and of each other, the unknown values of distinct blocks being independent.  Bob knows $\keydb$, so the last message is a bijective image of these parities.  The real and the simulated state therefore differ only on $\mathcal{B}\setminus\{j^{*}\}$, where the real bits are determined and the simulated ones uniform.  Their distance is $1-2^{-|\mathcal{B}\setminus\{j^{*}\}|}\le\tfrac12|\mathcal{B}\setminus\{j^{*}\}|$.  Over the uniform seed, $\Pr[j\in\mathcal{B}]=2^{\operatorname{rank}\dstr{H}_{E_j}-a_j}\le2x_j$, so the expected distance is at most $\sum_{j\ne j^{*}}x_j=W$.
\end{proof}

\subsection{How the leak depends on the missed locations}
\label{app:leak_vs_missed}
Before proving Eq.~\eqref{eq:EW_common}, we look at the leakage score.  It depends on block $j$ only through the number $a_j$ of announced locations that Bob misses there, and it has two regimes:
\begin{equation*}
    x_j=\tfrac12\min\big(1,2^{\hmarg-a_j}\big)=
    \begin{cases}
    \tfrac12, & a_j\le\hmarg,\\[2pt]
    \tfrac12\,2^{-(a_j-\hmarg)}, & a_j>\hmarg.
    \end{cases}
\end{equation*}
Why this shape?  By Lemma~\ref{lem:seeded_parity}, Bob learns the parity of block $j$ only if the seed $\dstr{z}$, restricted to the $a_j$ missed locations, lies in the row space of the check matrix restricted to them.  This space holds at most $2^{\min(\hmarg,a_j)}$ of the $2^{a_j}$ restrictions.  If Bob misses at most $\hmarg$ locations, the $\hmarg$ check symbols can fill all of them, and the bit may leak completely.  The factor $\tfrac12$ is the trace distance between a known bit and a uniform one.  Beyond $\hmarg$, every further missed location doubles the number of restrictions but not the row space, and so halves the leak.  For example, an honest Bob misses about $(1-\USDprob)\cthr$ locations, many more than $\hmarg$, in every block but his own, and none in his own block, whose bit is the one he is entitled to.

Three properties carry the proof below.  (i)~$x_j$ is non-increasing in $a_j$, so a lower bound on the misses of a block is an upper bound on its leak.  (ii)~$x_j$ decays exponentially in $a_j$, so the bound $\max(a_j,a_{j'})\ge(\cthr-o_{jj'})/2$ of Lemma~\ref{lem:pair_bound} turns into the factor $2^{-(\cthr-o_{jj'})/2}$ of Eq.~\eqref{eq:pair_leak}.  Two further announced locations thus halve the bound, which is why the size rule of Eq.~\eqref{eq:size_rule} doubles every term but $\cthr$.  (iii)~The cap at $\tfrac12$ binds only for $a_j\le\hmarg$.  The second line of Eq.~\eqref{eq:pair_leak} drops it, which overestimates the leak only for pairs with $o_{jj'}>\cthr-2\hmarg$, the pairs that one set could serve; the exact evaluation of Fig.~\ref{fig:CS_window} and Table~\ref{tab:common_set} keeps it.

\subsection{One set cannot serve two blocks}
\label{app:common_set}
Next, we prove Eq.~\eqref{eq:EW_common}.  The point is that the bound holds for every set $\aset$ of $\cthr$ locations, however Bob chooses it, so no union over sets is needed.

\begin{lemma}[Pair bound]
\label{lem:pair_bound}
For all blocks $j\ne j'$ and every $\aset$ with $|\aset|=\cthr$,
$\max(a_j,a_{j'})\ge\tfrac12\big(\cthr-o_{jj'}\big)$.
\end{lemma}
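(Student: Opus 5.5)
The plan is a direct counting argument on the set $\aset$, with no probability involved.  The statement is purely combinatorial: it holds for the fixed sets $K_j$, $K_{j'}$ and every $\aset$ of size $\cthr$.  The permutation and Bob's rule for choosing $\aset$ never enter, which is why no union over sets is needed later.

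First, partition $\aset$ according to membership in $K_j$ and $K_{j'}$.  A location of $\aset$ in $K_j\cap K_{j'}$ is missed in neither block.  Every other location of $\aset$ lies outside $K_j$ or outside $K_{j'}$, so it is counted in $\aset\setminus K_j$ or in $\aset\setminus K_{j'}$ (possibly both).  Hence
\begin{equation*}
    \aset\setminus(K_j\cap K_{j'})\subseteq(\aset\setminus K_j)\cup(\aset\setminus K_{j'}),
\end{equation*}
and the union bound on cardinalities gives
\begin{equation*}
    a_j+a_{j'}\ \ge\ |\aset|-|\aset\cap K_j\cap K_{j'}|\ \ge\ \cthr-o_{jj'},
\end{equation*}
using $|\aset\cap K_j\cap K_{j'}|\le|K_j\cap K_{j'}|=o_{jj'}$.

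Second, the maximum of two numbers is at least their mean, so
\begin{equation*}
    \max(a_j,a_{j'})\ \ge\ \tfrac12(a_j+a_{j'})\ \ge\ \tfrac12(\cthr-o_{jj'}).
\end{equation*}
If $o_{jj'}\ge\cthr$ the claim is trivial, since the right side is then nonpositive.

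There is no real obstacle here.  The only point to state explicitly is that $o_{jj'}$ is defined independently of $\aset$.  The bound therefore transfers the adversary's freedom in choosing $\aset$ entirely onto the overlap, whose law under the joint permutation is controlled separately (Lemma~\ref{lem:overlap_mgf}).  Equality is attained by the pair attack of Section~\ref{sec:sim_pair_attack}: $\aset$ contains all common known locations and splits the rest evenly between $K_j\setminus K_{j'}$ and $K_{j'}\setminus K_j$.  This shows the factor $\tfrac12$ cannot be improved.
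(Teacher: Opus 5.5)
Your argument is correct and is essentially the paper's proof. Every location of $\aset$ outside $K_j\cap K_{j'}$ is missed in at least one of the two blocks, which gives $a_j+a_{j'}\ge\cthr-|\aset\cap K_j\cap K_{j'}|\ge\cthr-o_{jj'}$, and you then make explicit the final step $\max(a_j,a_{j'})\ge\tfrac12(a_j+a_{j'})$, which the paper leaves implicit.
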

\begin{proof}
A location of $\aset$ outside $K_j\cap K_{j'}$ is missed in at least one of the two blocks.  Hence $a_j+a_{j'}\ge\cthr-|\aset\cap K_j\cap K_{j'}|\ge\cthr-o_{jj'}$.
\end{proof}

\begin{lemma}[Overlap of two random blocks]
\label{lem:overlap_mgf}
Let each of $N$ positions carry a flag, set independently with probability $q_i$ for position $i$, and write $\bar q=\frac1N\sum_iq_i$.  Fill blocks of $\numd$ locations with positions drawn uniformly without replacement, independently of the flags.  Let $K_j$ be the set of flagged locations of block $j$ and $o_{jj'}=|K_j\cap K_{j'}|$.  Then, for $j\ne j'$ and every $\vartheta\ge0$,
\begin{equation*}
    \EX\,e^{\vartheta o_{jj'}}\;\le\;\big[1+(e^{\vartheta}-1)\,\bar q^{2}\big]^{\numd}.
\end{equation*}
\end{lemma}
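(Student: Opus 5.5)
The plan is to condition on the permutation and expand the moment generating function as a polynomial in the flag probabilities. The joint permutation of Step~\ref{QSPIR:shuffle} is independent of the flags. After it is applied, location $l\in\{1,\ldots,\numd\}$ of block $j$ holds some position $a_l$, and the same location of block $j'$ holds a position $b_l$. The $2\numd$ positions $a_1,\ldots,a_{\numd},b_1,\ldots,b_{\numd}$ are distinct. Taken as an ordered tuple, they are a uniformly random tuple of distinct elements of $\{1,\ldots,N\}$. The overlap is then
\[
o_{jj'}=\sum_{l=1}^{\numd}F_{a_l}F_{b_l},
\]
where $F_i$ is the flag of position $i$. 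Conditioned on the positions, the $\numd$ summands involve disjoint pairs of independent flags, so they are independent Bernoulli variables with means $q_{a_l}q_{b_l}$. Writing $c:=e^{\vartheta}-1\ge0$, this gives
\[
\EX\big[e^{\vartheta o_{jj'}}\,\big|\,a,b\big]=\prod_{l=1}^{\numd}\big(1+c\,q_{a_l}q_{b_l}\big).
\]

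Next I expand the product over subsets $L\subseteq\{1,\ldots,\numd\}$ and average over the positions:
\[
\EX\,e^{\vartheta o_{jj'}}=\sum_{L}c^{|L|}\,\EX\prod_{l\in L}q_{a_l}q_{b_l}.
\]
Fix $L$ with $|L|=k$. The $2k$ positions $\{a_l,b_l\}_{l\in L}$ form a uniformly random $2k$-subset of the $N$ positions. The expectation is therefore the normalized elementary symmetric mean $e_{2k}(q)/\binom{N}{2k}$ of the $q_i$. If $2\numd>N$ the blocks could not both exist, so $2k\le N$ throughout.

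The key inequality is Maclaurin's inequality for nonnegative reals:
\[
\big(e_{m}(q)/\tbinom{N}{m}\big)^{1/m}\le e_1(q)/N=\bar q .
\]
It gives $\EX\prod_{l\in L}q_{a_l}q_{b_l}\le\bar q^{2k}$. Because $c\ge0$ (this is where $\vartheta\ge0$ is used), every term of the expansion has a nonnegative coefficient, so the bound can be applied term by term:
\[
\EX\,e^{\vartheta o_{jj'}}\le\sum_{k=0}^{\numd}\binom{\numd}{k}c^{k}\bar q^{2k}=\big[1+(e^{\vartheta}-1)\bar q^{2}\big]^{\numd},
\]
which is the claim. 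As a sanity check, equality holds when all $q_i$ are equal. In that case $o_{jj'}\sim\mathrm{Bin}(\numd,\bar q^{2})$, which matches the simulation of Section~\ref{sec:sim_pair_attack}.

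The main obstacle is the step from sampling without replacement to the i.i.d.\ bound. Simpler routes run into trouble here. Applying Jensen or the AM--GM inequality to the conditional product yields a convex function of $\sum_l q_{a_l}q_{b_l}$, and that pushes the inequality in the wrong direction. Hoeffding's with/without-replacement comparison does not directly cover products over pairs. The subset expansion avoids both problems, since it reduces everything to symmetric means of distinct draws, where Maclaurin's inequality applies exactly.

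The remaining care is to check that the positions filling two distinct blocks really are a uniform ordered tuple of distinct positions, independent of the flags. This follows from the uniformity of the permutation. For the real sampler, the deviation is absorbed separately through Eq.~\eqref{eq:permutation_abort_bound}.
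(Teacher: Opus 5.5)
Your proof is correct, but it takes a different route from the paper's. Both arguments start from the same conditional product $\prod_l(1+c\,q_{a_l}q_{b_l})$.

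The paper then averages over the draw by invoking negative association of values sampled without replacement (Joag-Dev--Proschan). Each factor is positive and non-decreasing in its two arguments, so the expectation of the product is at most the product of the expectations. The paper then computes the pair mean exactly, $\EX(q_{P_l}q_{P'_l})=\bar q^{2}-v/(N-1)\le\bar q^{2}$.

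You instead expand the product over subsets. You note that every term involves $2k$ distinct positions forming a uniform $2k$-subset, which gives the exact identity $\EX e^{\vartheta o_{jj'}}=\sum_k\binom{\numd}{k}c^{k}S_{2k}$ with $S_m=e_m(q)/\binom{N}{m}$. You then close with Maclaurin's inequality.

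Your route is elementary and self-contained. It needs neither the theory of negatively associated variables nor a check that the factors are monotone, only that the coefficients $c^{k}$ and the $q_i$ are nonnegative. That is where $\vartheta\ge0$ enters for you, just as monotonicity is where it enters for the paper.

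The paper's route gives the more intuitive picture (``drawing without replacement can only lower the average''). It also gives an explicit variance correction showing that an uneven spread of the $q_i$ lowers the overlap. Your argument recovers this as well: the Maclaurin chain also gives $S_{2k}\le S_2^{k}$, and $S_2=\bar q^{2}-v/(N-1)$, so you reproduce the paper's sharper intermediate bound $\big[1+c\,(\bar q^{2}-v/(N-1))\big]^{\numd}$.
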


What does the lemma say?  A flag marks a position whose bit Bob knows.  The right side is the moment generating function of $\mathrm{Bin}(\numd,\bar q^{2})$, which is the law of $o_{jj'}$ when every $q_i$ equals $\bar q$, as for the measurement of Step~\ref{QSPIR:initial}.  A Bob who measures differently may know some positions well and others not at all, and Appendix~\ref{app:usd_certification} bounds only his mean rate $\bar q$.  The lemma shows that such an uneven spread cannot raise the moment, as long as the blocks are drawn independently of what he knows.

\begin{proof}
Let $P_l$ and $P'_l$ be the positions at location $l$ of blocks $j$ and $j'$, and write $P,P'$ for the two lists.  These $2\numd$ positions are distinct.  We first fix them and then average over the draw.

\emph{Step 1: fixed draw.}  Location $l$ counts in $o_{jj'}$ exactly when $P_l$ and $P'_l$ are both flagged, which happens with probability $q_{P_l}q_{P'_l}$.  Different locations involve different positions, and the flags are independent of each other and of the draw.  So $o_{jj'}$ is a sum of $\numd$ independent Bernoulli variables, and
\begin{equation*}
    \EX\big[e^{\vartheta o_{jj'}}\,\big|\,P,P'\big]=\prod_{l=1}^{\numd}\big[1+(e^{\vartheta}-1)\,q_{P_l}q_{P'_l}\big].
\end{equation*}

\emph{Step 2: average over the draw.}  Were the positions drawn with replacement, the $\numd$ factors would be independent, and the average of their product would equal the product of their averages.  Drawing without replacement can only lower that average: a location that receives two positions of large $q$ leaves fewer of them for the other locations.  Formally, values drawn without replacement are negatively associated, and for negatively associated variables the average of a product of positive non-decreasing functions of disjoint sets of the variables is at most the product of the averages~\cite{JoagDev_Proschan_NA}.  The factor of location $l$ is positive, depends on $q_{P_l}$ and $q_{P'_l}$ only, and is non-decreasing in both since $\vartheta\ge0$.  Hence
\begin{equation*}
    \EX\,e^{\vartheta o_{jj'}}\le\prod_{l=1}^{\numd}\big[1+(e^{\vartheta}-1)\,\EX(q_{P_l}q_{P'_l})\big].
\end{equation*}

\emph{Step 3: mean of $q_{P_l}q_{P'_l}$.}  The pair $(P_l,P'_l)$ is a uniformly random ordered pair of distinct positions, so
\begin{equation*}
    \begin{split}
    \EX(q_{P_l}q_{P'_l})&=\frac{\big(\sum_iq_i\big)^{2}-\sum_iq_i^{2}}{N(N-1)}\\
    &=\bar q^{2}-\frac{v}{N-1}\;\le\;\bar q^{2},
    \end{split}
\end{equation*}
where $v=\frac1N\sum_i(q_i-\bar q)^{2}$ is the variance of the $q_i$.  An uneven spread thus lowers the mean.  As $e^{\vartheta}-1\ge0$, inserting this into Step~2 proves the lemma.  If every $q_i$ equals $\bar q$, the bound holds with equality.
\end{proof}

\begin{proof}[Proof of Eq.~\eqref{eq:EW_common}]
Order the $x_j$ decreasingly.  Then $W$ is the sum of all but the largest, and $\sum_{j<j'}\min(x_j,x_{j'})$ counts the $r$th largest $r-1$ times, so $W\le\sum_{j<j'}\min(x_j,x_{j'})$.  Since $x_j$ is non-increasing in $a_j$, Lemma~\ref{lem:pair_bound} gives
\begin{equation}
    \begin{split}
    \min(x_j,x_{j'})&\le\tfrac12\min\!\Big(1,2^{\hmarg-(\cthr-o_{jj'})/2}\Big)\\
    &\le\tfrac12\,2^{\,[o_{jj'}-(\cthr-2\hmarg)]/2},
    \end{split}
    \label{eq:pair_leak}
\end{equation}
which no longer depends on $\aset$.  Lemma~\ref{lem:overlap_mgf} with $\vartheta=\tfrac12\ln2$ bounds $\EX\,2^{o_{jj'}/2}$ by $[1+(\sqrt2-1)\pknown^{2}]^{\numd}$, and summing over the $\binom{\num}{2}\le\num^{2}/2$ pairs gives the claim.  Replacing the ideal permutation sampler by the real one costs $\epsilon_{\mathrm{rnd}}$ for its one invocation, since the blocks must be a uniform partition of the retained positions, by Eq.~\eqref{eq:permutation_abort_bound} applied to $\min(1,W)$, which gives the term $\epsilon_{\mathrm{rnd}}$ of Eq.~\eqref{eq:epsB_main}.  For the measurement of Step~\ref{QSPIR:initial} the flags are i.i.d., $o_{jj'}\sim\mathrm{Bin}(\numd,\pknown^{2})$ exactly, and the first line of Eq.~\eqref{eq:pair_leak} can be averaged over this law, with $(\cthr-o_{jj'})/2$ rounded up since the $a_j$ are integers; this exact form is used in Fig.~\ref{fig:CS_window} and in the right half of Table~\ref{tab:common_set}.
\end{proof}

\begin{proof}[Proof of Eq.~\eqref{eq:size_rule}]
Let $\SecretB$ stand for the budget left for $\EX[W]$ once the permutation terms of Eq.~\eqref{eq:epsB_main} are paid.  By Eq.~\eqref{eq:EW_common}, $\EX[W]\le\SecretB$ holds whenever
\begin{equation*}
    \frac{\num^{2}}{4}\;2^{-(\cthr-2\hmarg)/2}\,\Big[1+(\sqrt2-1)\,\pknown^{2}\Big]^{\numd}\;\le\;\SecretB .
\end{equation*}
Both sides are positive and $\log_2$ is increasing, so we may take $\log_2$ of both sides, and the logarithm of a product is the sum of the logarithms of its factors.  Write $h_\text{o}=\log_2[1+(\sqrt2-1)\pknown^{2}]$.  The condition therefore reads
\begin{equation*}
    2\log_2\num-2-\frac{\cthr-2\hmarg}{2}+\numd h_\text{o}\;\le\;-\log_2\frac{1}{\SecretB}.
\end{equation*}
Rearranging the terms gives,
\begin{equation*}
    \cthr\;\ge\;2\hmarg+2\numd h_\text{o}+4\log_2\num+2\log_2\frac{1}{\SecretB}-4 ,
\end{equation*}
which is Eq.~\eqref{eq:size_rule}.
\end{proof}

\subsection{Choosing the parameters}
Can the parameters be chosen without numerics?  Yes.  (a) For a target abort probability $\epsilon_\text{abort}$ take $\CSSprob=\sqrt{\ln(1/\epsilon_\text{abort})/(2\numd)}$ [Theorem~\ref{thm:success_protocol}(i)].  (b) For a target probability $\epsilon_\perp$ of the Chernoff term of Lemma~\ref{lem:decodability} put $\Lambda=\ln(1/\epsilon_\perp)/\cthr$ and
\begin{equation}
    \gamma=\Lambda+\sqrt{\Lambda^{2}+2\tilde{\ebit}\Lambda},
    \label{eq:gamma_closed}
\end{equation}
which suffices since $D(x\Vert y)\ge(x-y)^2/(2x)$ for $x\ge y$; for $\tilde{\ebit}=0$ no conclusive bit is in error, and $\gamma=0$, so that $t=0$ and Lemma~\ref{lem:decodability} would hold even without check symbols.  We keep $\hmarg=\lceil\log_2(1/\ECprob)\rceil=20$ there, so that an error outside the model is still detected, Bob then outputting $\perp$, except with probability $\ECprob$; with $\hmarg=0$ the smallest noiseless block of Table~\ref{tab:common_set} would be $730$ instead of $980$, and $644$ instead of $884$ in the exact evaluation.  (c) Put $h=H_2(\tilde{\ebit}+\gamma)$.  Eq.~\eqref{eq:syndrome_length} and $\lceil x\rceil<x+1$ give $\hmarg\le\cthr h+\log_2(1/\ECprob)+1$.  The smaller choice $\hmarg=\lceil\log_2\mathrm{Vol}(\cthr,t)+\log_2(1/\ECprob)\rceil$ obeys the same bound, since $t\le(\tilde{\ebit}+\gamma)\cthr$ and $\mathrm{Vol}(\cthr,t)\le2^{\cthr H_2(t/\cthr)}\le2^{\cthr h}$ for $\tilde{\ebit}+\gamma\le\tfrac12$~\cite{MacWilliams_Sloane}.  Inserting this bound on $\hmarg$ into Eq.~\eqref{eq:size_rule} and moving the term $2\cthr h$ to the left shows that Eq.~\eqref{eq:size_rule} holds whenever
\begin{equation}
    \begin{split}
    &\cthr(1-2h)-2\numd\log_2\!\big[1+(\sqrt2-1)\pknown^{2}\big]\\
    &\qquad\ge\;4\log_2\num+2\log_2\frac{1}{\SecretB\,\ECprob}-2 .
    \end{split}
    \label{eq:nd_closed}
\end{equation}
The left side grows linearly in $\numd$ once $\USDprob(1-2h)$ exceeds $2\log_2[1+(\sqrt2-1)\pknown^{2}]$, so the smallest block follows by increasing $\numd$.  Table~\ref{tab:common_set} lists the smallest multiple of ten for which Eq.~\eqref{eq:size_rule} itself holds with these choices, slightly below what Eq.~\eqref{eq:nd_closed} demands, together with the exact evaluation, in which $\cthr$ and $t$ sit at the exact binomial quantiles and $\hmarg=\lceil\log_2\mathrm{Vol}(\cthr,t)+\log_2(1/\ECprob)\rceil$.
At a fixed block length the set size has a window.  For $\numd=10^{4}$ the exact evaluation allows $\cthr$ between $1119$ and $2789$ over a noiseless channel and between $2083$ and $2848$ at $\tilde{\ebit}=2\%$, and the window closes between $3\%$ and $4\%$; inside these ranges, isolated values of $\cthr$ at which $t$ or $\hmarg$ jumps by one can fail.  Note that the known-rate enters only through $\pknown^{2}$.  Over a noiseless channel the rule remains feasible up to $\pknown\approx0.51$, which is what makes the certified rate of Appendix~\ref{app:usd_certification} and the decoy-certified rate of Section~\ref{sec:decoy_state} affordable.

\subsection{Classical--Quantum Form And Shannon Corollary}
\label{app:LHL_application}

This subsection restates the exact leakage bound of Lemma~\ref{lem:seeded_parity}, Eq.~\eqref{eq:seeded_leak}, in classical--quantum notation, together with its Shannon-information corollary.  As there, Bob knows the bits at the locations he determined, and the others are erasures ($p_a=\tfrac12$ in the notation of Section~\ref{subsec:soft_info}).

We first fix the classical--quantum state notation, following Refs.~\cite{Renner_thesis, Tomamichel_leftover_hashing}. The joint state of the bits $S$ of a block at the announced locations $\aset$ and Bob's side information $V$ is the cq-state
\begin{equation}
    \rho_{\text{SV}}=\sum_{s\in\{0,1\}^{\cthr}}P_S(s)\,\ket{s}\!\bra{s}_S\otimes\rho_V^{\,s},
    \label{eq:cq_state_block}
\end{equation}
where $P_S(s)$ is the probability that the block carries the physical bit-string $s$ at these locations and $\rho_V^{\,s}$ is the state of $V$ conditioned on $S=s$. Tracing out the block gives the side-information marginal
\begin{equation}
    \rho_V:=\mathrm{Tr}_S\,\rho_{\text{SV}}=\sum_{s\in\{0,1\}^{\cthr}}P_S(s)\,\rho_V^{\,s}.
    \label{eq:cq_marginal_V}
\end{equation}
Applying the seeded parity $f(S)=\dstr{z}\cdot S$ of Theorem~\ref{thm:PA_block} below maps $S$ to the one-bit register $L=f(S)$ and yields the classical--quantum state
\begin{equation}
    \rho_{\text{LV}}=\sum_{s\in\{0,1\}^{\cthr}}P_S(s)\,\ket{f(s)}\!\bra{f(s)}_L\otimes\rho_V^{\,s},
    \label{eq:cq_state_LV}
\end{equation}
whose ideal reference is $\omega_L\otimes\rho_V$ with $\omega_L:=\tfrac12\,\mathbf{1}$ the maximally mixed state of $L$.

\begin{theorem}[Leakage of the block parity]
\label{thm:PA_block}
Let $S\in\{0,1\}^{\cthr}$ be the physical bits of a block at the announced locations $\aset$, of which Bob misses the $a$ locations $E$, let $V$ be Bob's quantum-classical side information, including the check symbols $\sigma=\dstr{H}S$, the matrix $\dstr{H}$ and the seed $\dstr{z}$ of Step~\ref{QSPIR:final_step}, and let $L=\dstr{z}\cdot S$.  Assume the erasure hypothesis: given $V$ without $\sigma$, the bits of $S$ on $E$ are uniform and independent of everything else ($p_a=\tfrac12$).  Then, for every fixed $\dstr{H}$,
\begin{equation}
    \tfrac{1}{2}\big\Vert \rho_{\text{LV}}-\omega_{L}\otimes\rho_{V}\big\Vert_{1}\;=\;\tfrac12\,2^{\operatorname{rank}\dstr{H}_E-a}\;\le\;\tfrac12\min\!\big(1,2^{\hmarg-a}\big).
    \label{eq:Hmin_block}
\end{equation}
The same bound holds when the bits Bob knows are noisy.
\end{theorem}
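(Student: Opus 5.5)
The plan is to reduce Theorem~\ref{thm:PA_block} to a purely classical computation, conditioned on everything Bob holds except the missed bits. First I would condition on the part $V'$ of $V$ that excludes $\sigma$ and $\dstr{z}$. This part includes $\dstr{H}$, the known bits $S_{\aset\setminus E}$ and any quantum remainder. By the erasure hypothesis, $S_E$ is uniform on $\{0,1\}^{a}$ and independent of $V'$. So the cq-state factorizes as $\rho_{S_EV'}=\omega_{S_E}\otimes\rho_{V'}$, and the only correlation between $L$ and $V$ runs through the classical registers $\sigma$ and $\dstr{z}$. Writing $\sigma=\dstr{H}_E S_E\oplus\dstr{H}_{\aset\setminus E}S_{\aset\setminus E}$ and $L=L^0\oplus\dstr{z}_E\cdot S_E$ with $L^0$ a function of $V'$ and $\dstr{z}$, the trace distance becomes an average over the classical values $(v',\dstr{z},\sigma)$ of the total-variation distance between the conditional law of $L$ and the uniform bit. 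This uses the standard block-diagonal decomposition of $\norm{\cdot}_1$ for states classical on the conditioning registers.

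Next I would do the linear algebra exactly as in the proof of Lemma~\ref{lem:seeded_parity}. Given $\sigma$, the vector $S_E$ is uniform on an affine coset of $\ker\dstr{H}_E$. If $\dstr{z}_E$ lies in the row space of $\dstr{H}_E$, then $\dstr{z}_E\cdot S_E$ is constant on the coset, so the conditional distance is $\tfrac12$. Otherwise $\dstr{z}_E\cdot S_E$ is a nonzero linear functional restricted to that coset, hence balanced, and the conditional distance is $0$. Since $\dstr{z}$ is uniform and drawn after $\aset$, $\dstr{z}_E$ is uniform on $\{0,1\}^{a}$ and independent of $\dstr{H}$ and $V'$. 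The row space has $2^{\operatorname{rank}\dstr{H}_E}$ elements, which gives the equality $\tfrac12\,2^{\operatorname{rank}\dstr{H}_E-a}$. The inequality then follows from $\operatorname{rank}\dstr{H}_E\le\min(\hmarg,a)$.

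For the noisy clause, I would grant Bob the error pattern of his known bits, as remarked after Lemma~\ref{lem:seeded_parity}. This turns his noisy view into the noiseless erasure view with the true bits on $\aset\setminus E$. His actual state is obtained from that view by a CPTP map (forgetting the pattern) acting on $V$ alone. Trace distance is contractive under such maps, and $\omega_L\otimes\rho_V$ is mapped to $\omega_L\otimes(\text{new marginal})$. So the same bound carries over.

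The main obstacle I expect is justifying the first step rigorously. I need to ensure that quantum side information in $V'$ cannot correlate with $S_E$ beyond the stated hypothesis. I also need to verify that the seed's independence from $V'$ survives the conditioning, so that the quantum part contributes a common factor $\rho_{V'}$ to both branches and cancels. I would handle this by stating the hypothesis as the operator factorization above, and then carrying $\rho_{V'}$ as a tensor factor through the whole computation.
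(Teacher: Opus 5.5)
Your proposal is correct and follows essentially the paper's own proof. Like the paper, you reduce to Lemma~\ref{lem:seeded_parity} using the fact that $\rho_{LV}$ is block diagonal in the classical seed and check symbols. You then average the two cases, $\dstr{z}_E$ in the row space of $\dstr{H}_E$ (distance $\tfrac12$) or not (distance $0$), over the uniform $\dstr{z}_E$, and handle noise by adjoining the error pattern $\dstr{e}$ and tracing it out. In the noisy clause, state explicitly, as the paper does, that $\dstr{e}$ is independent of $S$. That independence is what guarantees the enlarged view still satisfies the erasure hypothesis. After the partial trace only the inequality survives, not the equality.
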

\begin{proof}
This is Lemma~\ref{lem:seeded_parity} with $E_j=E$ and $a_j=a$: given $V$, the bits on $E$ are uniform on the affine space fixed by the check symbols, the parity is determined by $V$ if $\dstr{z}_E$ lies in the row space of $\dstr{H}_E$ and uniform otherwise, and $\rho_{LV}$ is block diagonal in the seed, so the trace distance is the average of the two cases.

If Bob's conclusive bits are noisy, he holds $S_{\bar E}\oplus\dstr{e}$ instead of the bits $S_{\bar E}$ at the locations $\bar E=\aset\setminus E$ that he knows, with an error pattern $\dstr{e}$ independent of $S$.  Adjoin $\dstr{e}$ to $V$: the enlarged view determines $S_{\bar E}$, so the bound applies to it, and tracing $\dstr{e}$ out cannot increase the trace distance.
\end{proof}

\begin{corollary}[Shannon-information corollary]
\label{cor:Shannon_from_trace}
Let $\hat L$ be the bit Bob computes from his view $V$, and let $\delta\le x_j$ be the trace distance of Eq.~\eqref{eq:seeded_leak}.  By the Alicki--Fannes--Winter continuity of mutual information~\cite{Alicki_Fannes, Winter_continuity} for the one-bit register $L$, and by data processing, $I(L;\hat L)\le I(L;V)$ and
\begin{equation}
    I(L;\hat L)\;\le\;2\delta+(1+\delta)\,H_2\!\left(\frac{\delta}{1+\delta}\right)\;=\;O\!\big(\delta\log_2\tfrac{1}{\delta}\big),
    \label{eq:Shannon_corollary}
\end{equation}
where $H_2(\cdot)$ is the binary entropy.  For an $m$-bit register the first term becomes $2m\delta$, which gives Eq.~\eqref{eq:epsI_epsB} with $m=\num-1$.
\end{corollary}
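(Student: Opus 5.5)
The plan is to derive the corollary from Theorem~\ref{thm:PA_block}, or equivalently Lemma~\ref{lem:seeded_parity}, in three moves: a data-processing inequality, a continuity bound for conditional entropy, and a count of the bits in the register.

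\emph{Data processing.} The estimate $\hat L$ is obtained from $V$ by some measurement followed by classical post-processing. Both are quantum channels acting on $V$ alone, so the data-processing inequality gives $I(L;\hat L)\le I(L;V)$ at once. This first inequality needs nothing about the protocol.

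\emph{Continuity.} Write
\begin{equation*}
I(L;V)=H(L)-H(L|V)\le 1-H(L|V)=H(L|V)_{\omega_L\otimes\rho_V}-H(L|V)_{\rho_{LV}},
\end{equation*}
which uses that the reference state $\omega_L\otimes\rho_V$ has conditional entropy exactly one bit. By Theorem~\ref{thm:PA_block}, $\rho_{LV}$ and $\omega_L\otimes\rho_V$ are within trace distance $\delta\le x_j$. Winter's tight form of the Alicki--Fannes inequality says that for states $\epsilon$-close in trace distance on a $d$-dimensional conditioned system,
\begin{equation*}
|H(A|B)_\rho-H(A|B)_\sigma|\le 2\epsilon\log_2 d+(1+\epsilon)H_2\!\left(\frac{\epsilon}{1+\epsilon}\right).
\end{equation*}
Taking $A=L$, $B=V$, $d=2$ and $\epsilon=\delta$ yields Eq.~\eqref{eq:Shannon_corollary}. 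Monotonicity of the right side in $\delta$, at least for small $\delta$, then lets me replace $\delta$ by the score $x_j$ when that is convenient. The asymptotic form follows because $(1+\delta)H_2(\delta/(1+\delta))=\delta\log_2\frac{1+\delta}{\delta}+\log_2(1+\delta)=O(\delta\log_2\frac1\delta)$.

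\emph{The $m$-bit register.} I take the register $\mathbf{x'}$ of $m=\num-1$ bits. Proposition~\ref{prop:simulator} gives trace distance $\SecretB$ between the real state and the simulated state, in which those bits are uniform and independent of Bob's view. Because the reference state is again a product of a maximally mixed register with $\rho_V$, the same chain of steps applies with $\log_2 d=m$, producing Eq.~\eqref{eq:epsI_epsB}.

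The only delicate point is applying continuity to conditional entropy rather than to mutual information directly. Winter's bound with the $(1+\epsilon)$ form has no dependence on the dimension of $V$, which may be large. It is therefore exactly the right tool, and I expect no real obstacle beyond quoting it correctly.
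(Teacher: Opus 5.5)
Your proposal is correct and follows the paper's own inline argument. Data processing gives $I(L;\hat L)\le I(L;V)$. Winter's tight Alicki--Fannes bound on the conditional entropy, taken against a reference in which the register is uniform given $V$, then yields Eq.~\eqref{eq:Shannon_corollary} with $\log_2 d=1$ for the parity bit, and Eq.~\eqref{eq:epsI_epsB} with $\log_2 d=\num-1$ for $\mathbf{x'}$ via the simulator state of Proposition~\ref{prop:simulator}. Your hedge on monotonicity is unnecessary: since $\frac{d}{d\delta}\big[(1+\delta)H_2(\frac{\delta}{1+\delta})\big]=\log_2\frac{1+\delta}{\delta}>0$, the right side increases on all of $(0,1]$ and may be evaluated at $x_j$.
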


\section{Certifying Bob's Measurement}
\label{app:usd_certification}
Theorem~\ref{thm:success_protocol} rests on the erasure hypothesis, which holds for the measurement of Step~\ref{QSPIR:initial}.  But what if Bob measures differently?  In this appendix, we show that the test of Step~\ref{QSPIR:usd_test} replaces the hypothesis by a certified rate $\USDprob^\text{cert}$ slightly above $\USDprob$, at an additive cost $\epsilon_\text{cert}$ in $\SecretB$.  We assume the following.
\begin{itemize}
    \item[(A1)] After the waiting time of Step~\ref{QSPIR:initial}, Bob holds only a classical record $y$.
    \item[(A2)] Bob measures the qubits one at a time, possibly adaptively: qubit $i$ is measured by a POVM $\{\Pi^{(i)}_{y_i|y_{<i}}\}$ chosen from his earlier outcomes.  Note that the no-long-term-memory assumption does not imply (A2); joint measurements within the coherence time are not covered.
    \item[(A3)] Alice is honest and the permutation sampler is ideal; its error $\epsilon_{\mathrm{rnd}}$ is added as in Theorem~\ref{thm:success_protocol}.
\end{itemize}
Write $N_\text{acq}$ for the number of data-acquisition positions, $N_\text{c}=N_\text{acq}-n_\text{t}$ for the number of untested ones, $s_i$ for the state Alice prepared at position $i$, $S_{G_i}$ with $G_i\in\{0,1\}$ for the set Bob announced, $D_i$ for his declaration if $i\in\mathcal{T}$, and $\rho_s$, $\mathbf{n}_s$ for the four states and their Bloch vectors.  For a set $S_G$, write $u_0,u_1$ for its bit-$0$ and bit-$1$ states: $u_0=\ket0$, $u_1=\ket+$ for $S_0$, and $u_0=\ket-$, $u_1=\ket1$ for $S_1$.  The entries of Table~\ref{tab:usd_test} follow from $\Pr[\text{claim }u_b\mid\psi]=\tfrac12(2-\sqrt2)|\braket{u_{1-b}^{\perp}}{\psi}|^{2}$, where $u^{\perp}$ is the state orthogonal to $u$, together with normalization; the overlaps $|\braket{u^{\perp}}{\psi}|^{2}$ take only the values $0,\tfrac12,1$, and the frequencies are not conditioned on the set guess being correct.  Over a depolarizing channel of strength $p$ the same expressions apply with $|\braket{u^{\perp}}{\psi}|^{2}\mapsto(1-p)|\braket{u^{\perp}}{\psi}|^{2}+p/2$, which lifts the zero entries to $(2-\sqrt2)p/4$.

\begin{lemma}[Independence given the record]
\label{lem:cert_independence}
Under (A2) and (A3), $\Pr[s_1,s_2,\ldots\mid y]=\prod_iP_i(s_i)$ with $P_i(s)\propto\operatorname{Tr}[\Pi^{(i)}_{y_i|y_{<i}}\rho_s]$.  Moreover, $P_i(s)=\tfrac14(1+\mathbf{v}_i\cdot\mathbf{n}_s)$ for a vector $\mathbf{v}_i$ in the $XZ$ plane with $|\mathbf{v}_i|\le1$.
\end{lemma}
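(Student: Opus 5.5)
The proof is a Bayes computation over the joint distribution of Alice's preparations and Bob's record. Under (A3) Alice is honest, so the states $s_i$ are drawn i.i.d.\ uniformly from the four BB84 states, independently of everything Bob does. Under (A2) Bob's record arises from sequential single-qubit POVMs, and for a fixed preparation sequence $s_1,s_2,\ldots$ the probability of the record $y=(y_1,y_2,\ldots)$ factorizes by the chain rule:
\begin{equation*}
\Pr[y\mid s_1,s_2,\ldots]=\prod_i\operatorname{Tr}\big[\Pi^{(i)}_{y_i|y_{<i}}\rho_{s_i}\big].
\end{equation*}
Two facts justify this factorization. First, qubit $i$ is in state $\rho_{s_i}$ and is untouched before its measurement. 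Second, the POVM applied to it depends only on $y_{<i}$, not on the earlier states except through those outcomes. I would also fold Bob's set announcements $G_i$, his mode choices and his acknowledgements into $y_i$, as classical functions of his outcomes and private randomness. Private randomness can be absorbed into the POVM by coarse-graining, so (A1) is not even needed here.

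Multiplying by the uniform prior $4^{-N}$ and normalizing gives
\begin{equation*}
\Pr[s\mid y]=\prod_i\frac{\operatorname{Tr}[\Pi^{(i)}_{y_i|y_{<i}}\rho_{s_i}]}{\sum_{s'}\operatorname{Tr}[\Pi^{(i)}_{y_i|y_{<i}}\rho_{s'}]}.
\end{equation*}
The normalizer factorizes because the sum over all sequences is a product of per-position sums. This is the first claim, with $P_i(s)\propto\operatorname{Tr}[\Pi^{(i)}_{y_i|y_{<i}}\rho_s]$. The only point needing care is that the per-position denominator is nonzero, which holds whenever $y$ has positive probability. The later classical announcements of Alice, such as set correctness, are functions of $s$ and can be conditioned on afterwards; they do not affect this statement about $y$ alone.

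For the Bloch form, I would write $\Pi=\Pi^{(i)}_{y_i|y_{<i}}=\alpha I+\boldsymbol\beta\cdot\boldsymbol\sigma$, with $\alpha>0$ and $|\boldsymbol\beta|\le\alpha$ by positivity, and $\rho_s=\tfrac12(I+\mathbf n_s\cdot\boldsymbol\sigma)$. Then $\operatorname{Tr}[\Pi\rho_s]=\alpha+\boldsymbol\beta\cdot\mathbf n_s$. The four BB84 Bloch vectors are $\pm\hat z,\pm\hat x$, which sum to zero, so the normalizer is $4\alpha$. Hence
\begin{equation*}
P_i(s)=\tfrac14\big(1+\mathbf v_i\cdot\mathbf n_s\big),\qquad \mathbf v_i=\boldsymbol\beta/\alpha .
\end{equation*}
Since the $\mathbf n_s$ lie in the $XZ$ plane, only the projection of $\mathbf v_i$ onto that plane matters, and I would redefine $\mathbf v_i$ as this projection. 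Its norm is at most $|\boldsymbol\beta|/\alpha\le1$.

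The main obstacle is justifying the factorization of $\Pr[y\mid s]$ under adaptivity. The two things to check are that the conditional POVM depends only on past outcomes, and that no quantum side information carries correlations between qubits. Assumption (A2), which excludes joint measurements and entangling ancillas across positions, is exactly what rules this out.
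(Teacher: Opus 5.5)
Your proof is correct and is the paper's argument: under (A2) and honest uniform preparation the joint law $\Pr[s,y]$ is a product of one-position factors $\tfrac14\operatorname{Tr}[\Pi^{(i)}_{y_i|y_{<i}}\rho_{s_i}]$, so conditioning on $y$ gives the product form. The Bloch form then follows with the normalizer $\sum_s\operatorname{Tr}[\Pi\rho_s]=2\operatorname{Tr}\Pi$, which is your $4\alpha$, and dropping the untested $\sigma_Y$ component to put $\mathbf{v}_i$ in the $XZ$ plane is what the paper does implicitly (cf.\ $\tilde\Pi_i$ in Lemma~\ref{lem:cert_sample}).
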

\begin{proof}
$\Pr[s_1,s_2,\ldots,y]=\prod_i\tfrac14\operatorname{Tr}[\Pi^{(i)}_{y_i|y_{<i}}\rho_{s_i}]$, for $y_i$ depends on the past only through the choice of POVM and on the prepared states only through $\rho_{s_i}$.  For fixed $y$, each factor involves a single $s_i$, so the conditional law factorizes, and its $i$th factor, normalized, is $P_i$.  The second claim is the Bloch form of a qubit effect $\Pi$, normalized by $\sum_s\operatorname{Tr}[\Pi\rho_s]=2\operatorname{Tr}\Pi$, since the two states of each basis add up to $I$.
\end{proof}

In words, $P_i(s)$ is Bob's posterior probability, given his whole record $y$, that Alice prepared $s$ at position $i$; it is all he knows about that position.  It is fixed by one point $\mathbf{v}_i$ of the unit disc: $P_i(\ket0)=\tfrac14(1+v_{i,Z})$, $P_i(\ket1)=\tfrac14(1-v_{i,Z})$, $P_i(\ket+)=\tfrac14(1+v_{i,X})$ and $P_i(\ket-)=\tfrac14(1-v_{i,X})$.  Note that adaptivity does not spoil the product form: it couples the outcomes $y_i$, not the states $s_i$ once $y$ is fixed.  A joint measurement of several qubits would put several $s_i$ into one factor, which is why we need (A2).

Note that conditioning on the announced sets, on the outcome of the test (an event in the states $s_i$ with $i\in\mathcal{T}$, in $y$ and in $\mathcal{T}$) or on Alice's retained list keeps the product form on the retained positions.  Indeed, the announced sets are functions of $y$; the test event involves only the states at the positions of $\mathcal{T}$, which are discarded; and retention is the event $s_i\in S_{G_i}$, position by position, which restricts $P_i$ to the two states of $S_{G_i}$ and renormalizes it.

\begin{lemma}[Erasure domination]
\label{lem:erasure_domination}
Let a bit $S$ have posterior $(g,1-g)$ with $g\ge\tfrac12$ given Bob's record.  Hand Bob a flag $K$: if $S$ is the likelier value, $K=1$ with probability $(2g-1)/g$; otherwise $K=0$.  Then $\Pr[K=1]=2g-1$, the event $K=1$ reveals $S$, and given $K=0$ the bit $S$ is exactly uniform.
\end{lemma}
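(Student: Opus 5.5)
The proof is a direct computation with the joint law of the bit $S$ and the flag $K$ given Bob's record. Write $s^{*}$ for the likelier value, so that $\Pr[S=s^{*}]=g$ and $\Pr[S=1-s^{*}]=1-g$. The flag is drawn with fresh randomness conditioned only on $S$ and the record. Its conditional law is therefore $\Pr[K=1\mid S=s^{*}]=(2g-1)/g$ and $\Pr[K=1\mid S=1-s^{*}]=0$. This is a valid probability because $g\ge\tfrac12$ gives $0\le 2g-1\le g$. The case $g=\tfrac12$, in which the flag is never set, is covered automatically.

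The plan then has three steps.

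\textbf{Step 1.} Compute the flag rate by the law of total probability:
\begin{equation*}
\Pr[K=1]=g\cdot\tfrac{2g-1}{g}+(1-g)\cdot0=2g-1.
\end{equation*}

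\textbf{Step 2.} Since $K=1$ can occur only when $S=s^{*}$, the event $K=1$ reveals $S$. Bob computes $s^{*}$ from his record, so $K=1$ tells him $S=s^{*}$.

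\textbf{Step 3.} Apply Bayes' rule on the complement. We have $\Pr[S=s^{*},K=0]=g-(2g-1)=1-g$ and $\Pr[S=1-s^{*},K=0]=1-g$. The two values of $S$ are therefore equally likely given $K=0$, which requires $\Pr[K=0]=2(1-g)>0$, that is, $g<1$. When $g=1$ the event $K=0$ has probability zero and the claim is vacuous.

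The only point needing care is the meaning of ``uniform given $K=0$''. It must hold conditioned on Bob's whole record together with the flag, not just marginally. It does, because the construction is carried out conditionally on the record $y$, and $g$ itself is a function of $y$.

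The lemma is meant to be applied per position, with the flags drawn independently across positions. Lemma~\ref{lem:cert_independence} shows that the posteriors factorize given $y$. Hence, after conditioning on the flags, the unflagged bits are jointly uniform and independent across positions. This is what the erasure hypothesis of Lemma~\ref{lem:seeded_parity} requires.

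Handing Bob $K$ only enlarges his view, so any leakage bound proved for the flagged view also holds for his true view, by monotonicity of trace distance under tracing out $K$. There is no real obstacle here. The one thing to state explicitly is that the flag randomness is independent of everything except $(S,y)$, so that the product structure survives.
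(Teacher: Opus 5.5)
Your proposal is correct and follows the paper's own proof. The total-probability computation gives $\Pr[K=1]=2g-1$, and the joint weights $\Pr[S=s^{*},K=0]=1-g=\Pr[S=1-s^{*},K=0]$ give exact uniformity given $K=0$. Your additional checks (that $(2g-1)/g$ is a valid probability, that the case $g=1$ is vacuous, and that independent per-position flags preserve the product form) are sound, and the paper leaves them implicit or states them in the surrounding text.
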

\begin{proof}
$\Pr[K=1]=g\cdot\tfrac{2g-1}{g}=2g-1$, and $K=1$ occurs only when $S$ takes the likelier value.  Moreover, $\Pr[S\text{ likelier},K=0]=g\cdot\tfrac{1-g}{g}=1-g=\Pr[S\text{ other},K=0]$.
\end{proof}

So a bit that Bob guesses with bias $2g-1$ is worth no more to him than a bit that he knows with probability $2g-1$ and otherwise not at all.  The flags are a device of the proof, not a step of the protocol: think of a referee who knows $S$ and hands them to Bob.  The flags are drawn independently per position.  Extra information only helps Bob, so any bound against the view $(y,K)$ holds against $y$.  More importantly, with the flagged positions counted as conclusive and all others as ambiguous, the erasure hypothesis of Lemma~\ref{lem:seeded_parity} holds \emph{exactly}.  An honest noiseless Bob has $g\in\{\tfrac12,1\}$, so nothing is lost.

How large can the flag rate be?  For position $i$ put $r_i=\sum_{s\in S_{G_i}}P_i(s)$, the posterior probability that the set guess is correct, and $\alpha_i=|P_i(u_0)-P_i(u_1)|$ for the two states $u_0,u_1$ of $S_{G_i}$.  Once Alice confirms the guess, Bob's posterior on the bit is $(P_i(u_0),P_i(u_1))/r_i$; let $g_i$ be its larger entry.  The bit of a correctly guessed position then has $q_i:=2g_i-1=\alpha_i/r_i$, which by Lemma~\ref{lem:erasure_domination} is the probability that the position is flagged.  For example (Table~\ref{tab:posterior_examples}), an honest Bob who guessed $S_0$ has $r_i=\alpha_i=\tfrac14$ and $q_i=1$ after a conclusive outcome, and $r_i=\tfrac12(1+2^{-1/2})$ and $\alpha_i=q_i=0$ after the inconclusive one; averaged over his outcomes, $\EX\alpha_i/\EX r_i=\USDprob$.  The guessing Bob of Section~\ref{subsec:usd_test} has $r_i=\tfrac12$ and $q_i=2^{-1/2}$ at every position.

\begin{table*}[t]
\caption{\label{tab:posterior_examples}Bob's posterior $P_i(s)$ and the numbers $r_i$, $\alpha_i$, $q_i$ at a position where he announces $S_0$, so that $u_0=\ket0$ and $u_1=\ket+$: the three outcomes of the measurement of Step~\ref{QSPIR:initial}, and one outcome of the guessing Bob of Section~\ref{subsec:usd_test}, whose other outcome gives the same $r_i$, $\alpha_i$, $q_i$.  The posteriors follow from Bayes' rule, $P_i(s)\propto\operatorname{Tr}[\Pi\rho_s]$, and ``probability'' is $\tfrac12\operatorname{Tr}\Pi$, the probability of the outcome.}
\begin{ruledtabular}
\begin{tabular}{llcccccc|ccc}
Bob's outcome & element $\Pi$ & $(v_{i,X},v_{i,Z})$ & $P_i(\ket0)$ & $P_i(\ket+)$ & $P_i(\ket1)$ & $P_i(\ket-)$ & probability & $r_i$ & $\alpha_i$ & $q_i$\\
\colrule
claims $\ket+$ & $(2-\sqrt2)\ket1\bra1$ & $(0,-1)$ & $0$ & $\tfrac14$ & $\tfrac12$ & $\tfrac14$ & $\USDprob$ & $\tfrac14$ & $\tfrac14$ & $1$\\
claims $\ket0$ & $(2-\sqrt2)\ket-\bra-$ & $(-1,0)$ & $\tfrac14$ & $0$ & $\tfrac14$ & $\tfrac12$ & $\USDprob$ & $\tfrac14$ & $\tfrac14$ & $1$\\
inconclusive & $2(\sqrt2-1)\ket{\phi_0}\bra{\phi_0}$ & $(2^{-1/2},2^{-1/2})$ & $\tfrac{2+\sqrt2}{8}$ & $\tfrac{2+\sqrt2}{8}$ & $\tfrac{2-\sqrt2}{8}$ & $\tfrac{2-\sqrt2}{8}$ & $\sqrt2-1$ & $\tfrac{2+\sqrt2}{4}$ & $0$ & $0$\\
guessing, ``$+$'' & $\tfrac12\big[I+(\sigma_Z-\sigma_X)/\sqrt2\big]$ & $(-2^{-1/2},2^{-1/2})$ & $\tfrac{2+\sqrt2}{8}$ & $\tfrac{2-\sqrt2}{8}$ & $\tfrac{2-\sqrt2}{8}$ & $\tfrac{2+\sqrt2}{8}$ & $\tfrac12$ & $\tfrac12$ & $\tfrac{1}{2\sqrt2}$ & $\tfrac{1}{\sqrt2}$\\
\end{tabular}
\end{ruledtabular}
\end{table*}

Why do we work with $\alpha_i$ and $r_i$ rather than with $q_i$?  Whether a set guess is correct is random: given $y$, the indicator $I_i$ of the event $s_i\in S_{G_i}$ is a $\mathrm{Bern}(r_i)$ variable (Lemma~\ref{lem:cert_independence}).  Hence
\begin{equation}
    \EX\sum_iI_iq_i=\sum_ir_iq_i=\sum_i\alpha_i,\qquad\EX\sum_iI_i=\sum_ir_i ,
    \label{eq:alpha_r_identity}
\end{equation}
so $\sum_i\alpha_i/\sum_ir_i$ is, up to fluctuations, the mean of $q_i$ over the positions that will be retained.  Both sums are linear in the posteriors, and hence in the frequencies that the test measures on correct and incorrect set guesses alike; the ratio $q_i$ is not.

The certificate rests on a bound for a single qubit.  Let $M_d\ge0$ be positive operators, one for each of the six declarations $d$ of Table~\ref{tab:usd_test}, whose table is within $\delta$ of the honest one, $|\operatorname{Tr}(\rho_sM_d)-T_{s,d}|\le\delta$ in all $24$ cells.  Completeness, $\sum_dM_d=I$, is not required, so a Bob who reports some qubits as lost is covered.  A \emph{private refinement} splits each $M_d$ into parts that only Bob can tell apart, and so may carry more information than $d$ itself.  Write $p_{\mathrm{guess}}$ for the probability that Bob guesses the bit of a position whose set guess is correct, for a uniformly prepared state.

\begin{lemma}[Single-qubit guessing bound]
\label{lem:guessing_bound}
For $0\le\delta<\tfrac16$ and every private refinement, $2p_{\mathrm{guess}}-1\le\min\{1,B(\delta)\}$, that is, $p_{\mathrm{guess}}\le G(\delta)$, where
\begin{equation}
\begin{split}
 A(\delta)&=\sqrt{2\delta+(8+4\sqrt2)\delta^2},\\
 B(\delta)&=\frac{\USDprob+4\delta+A(\delta)}{1-6\delta},\\
 G(\delta)&=\min\{1,[1+B(\delta)]/2\}.
\end{split}
\label{eq:decoy_G_bound}
\end{equation}
For $\delta\ge\tfrac16$ we use $G=1$.  Thus $B(0)=\USDprob$, $G(0)=(1+\USDprob)/2$ and $G(10^{-4})\le0.653813$.
\end{lemma}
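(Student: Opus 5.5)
The plan is to reduce $2p_{\mathrm{guess}}-1$ to a ratio of two linear functionals of the declaration operators. The numerator is then bounded declaration by declaration, using positivity and the $24$ table constraints. The denominator is bounded below by the table alone.

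\emph{Setting up the ratio.} Write every part $\Pi\le M_d$ of a private refinement in Bloch form $\Pi=\tfrac12(t_\Pi I+\mathbf{w}_\Pi\cdot\boldsymbol\sigma)$ with $|\mathbf{w}_\Pi|\le t_\Pi$. Only the $XZ$ components matter, as in Lemma~\ref{lem:cert_independence}. For a part attached to a declaration of guessed set $S_G$, with bit states $u_0,u_1$, Bob's best guess after confirmation of the set contributes $\max_b\operatorname{Tr}(\Pi\rho_{u_b})$. Hence
\begin{equation*}
2p_{\mathrm{guess}}-1=\frac{\sum_\Pi\big|\operatorname{Tr}\Pi(\rho_{u_0}-\rho_{u_1})\big|}{\sum_\Pi\operatorname{Tr}\Pi(\rho_{u_0}+\rho_{u_1})},
\end{equation*}
which is the ratio $\sum\alpha/\sum r$ of Eq.~\eqref{eq:alpha_r_identity} for a single qubit. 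The denominator is a sum of table cells, namely the correct-set entries of Table~\ref{tab:usd_test} summed over the six declarations. Honestly it equals $1$ in the normalization where the honest numerator is $\USDprob$. With each cell perturbed by at most $\delta$, it is at least $1-6\delta$ after the appropriate weighting. This lower bound supplies the denominator of $B(\delta)$. Since completeness is not assumed, only lower bounds on cells are available, and that suffices here.

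\emph{Conclusive declarations.} Take for instance the declaration ``claims $\ket0$ under $S_0$''. Its honest cell at $\ket+$ is zero, so $\operatorname{Tr}(M_d\rho_+)\le\delta$. Every part $\Pi\le M_d$ therefore has $\operatorname{Tr}(\Pi\rho_+)\le\operatorname{Tr}(\Pi\rho_0)+{}$(its share of $\delta$). Hence $|\operatorname{Tr}\Pi(\rho_0-\rho_+)|\le\operatorname{Tr}\Pi(\rho_0-\rho_+)+2\operatorname{Tr}(\Pi\rho_+)$, and summing over the parts gives at most $T_{0,d}-T_{+,d}+O(\delta)$. Adding the four conclusive declarations reproduces the honest $\USDprob$ plus a linear correction. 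I expect the bookkeeping to give exactly $4\delta$ in the numerator of $B$.

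\emph{Inconclusive declarations: the main obstacle.} Here the honest element is $\propto\ket{\phi_0}\bra{\phi_0}$. Its two relevant cells are equal, so the unrefined contribution is $O(\delta)$. A refinement, however, could split $M_?$ into parts tilted in opposite directions, each informative. I would argue as follows. The four cells fix $M_?$ up to its $\sigma_Y$ component to within $O(\delta)$, so its Bloch vector lies near the $\phi_0$ axis. Positivity then forces the smaller eigenvalue $\lambda_{\min}$ of $M_?$ to be $O(\delta)$; I would try to show $\lambda_{\min}\lambda_{\max}\le 2\delta+(8+4\sqrt2)\delta^2$ in the relevant units. The functional $\rho_{u_0}-\rho_{u_1}$ is, to leading order, off-diagonal in the eigenbasis of $M_?$, since $\phi_0$ is equidistant from $u_0$ and $u_1$. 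For each part $\Pi\le M_?$, positivity gives $|\Pi_{12}|\le\sqrt{\Pi_{11}\Pi_{22}}$. Cauchy--Schwarz then yields $\sum_\Pi|\Pi_{12}|\le\sqrt{\sum\Pi_{11}\sum\Pi_{22}}\le\sqrt{\lambda_{\max}\lambda_{\min}}$, and this is the square-root term $A(\delta)$. The diagonal residue, coming from the $O(\delta)$ tilt of the eigenbasis, is absorbed into the linear term.

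\emph{Conclusion.} Combining the three steps gives $2p_{\mathrm{guess}}-1\le(\USDprob+4\delta+A)/(1-6\delta)$, and the trivial bound $2p_{\mathrm{guess}}-1\le1$ gives the $\min$. The restriction $\delta<\tfrac16$ is exactly what keeps the denominator positive. The stated values $B(0)=\USDprob$, $G(0)=(1+\USDprob)/2$ and the bound on $G(10^{-4})$ then follow by evaluating Eq.~\eqref{eq:decoy_G_bound}.
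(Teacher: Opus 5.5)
You have a genuine gap in the inconclusive step. Up to that step your plan is the paper's: the ratio, the lower bound $1-6\delta$ on the normalized retained weight, and crediting every conclusive part with $\operatorname{Tr}\Pi(\rho_{u_0}+\rho_{u_1})$. As you expect, that last step uses up the \emph{whole} linear term $4\delta$ of $B$. (Contrary to your remark, the table bounds every cell from both sides and you use the upper bounds; only completeness is unavailable.)

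\textbf{The residue has nowhere to go.} Write $M=\tfrac12(m_0I+\mathbf m\cdot\boldsymbol\sigma)$ for the inconclusive element and $\pi_b=\operatorname{Tr}M\rho_{u_b}$. In the eigenbasis of $M$, the operator $\rho_{u_0}-\rho_{u_1}$ has diagonal entries $\pm a$ with $a=(\pi_0-\pi_1)/|\mathbf m|$, and an off-diagonal entry $c$. Your split $|\operatorname{Tr}\Pi(\rho_{u_0}-\rho_{u_1})|\le|a|\operatorname{Tr}\Pi+2|c|\,|\Pi_{12}|$, followed by Cauchy--Schwarz, gives $|\pi_0-\pi_1|\,m_0/|\mathbf m|+2|c|\sqrt{\det M}$.
\begin{itemize}
\item The first term is at least $|\pi_0-\pi_1|$, and the table only gives $|\pi_0-\pi_1|\le2\delta$. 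So it can be about $2\delta$ per inconclusive declaration.
\item It cannot be ``absorbed into the linear term'', because none of that term is left.
\item Carried through, it adds about $2\delta$ to the numerator of $B$. At $\delta=10^{-4}$ this gives $G\approx0.6539$, above the stated $0.653813$.
\end{itemize}
So your route, as written, does not give the lemma's $A$ and $B$.

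\textbf{The determinant step is only announced.} The quantity $\lambda_{\min}\lambda_{\max}=\det M$ should be bounded by $(1+\sqrt2)\delta(\sqrt2-1+2\delta)$, not by $A(\delta)^2$; the extra $4\delta^2$ inside $A^2$ is $(\pi_0-\pi_1)^2$. Also, the eigenbasis is tilted by order $\sqrt\delta$, not $\delta$, since the untested $m_Y$ can be that large. What actually controls $\det M$ is the null weight $\bra{\phi_0^\perp}M\ket{\phi_0^\perp}=\tfrac12[m_0-(m_X+m_Z)/\sqrt2]\le(1+\sqrt2)\delta$. It involves tested components only, and $\det M\le\bra{\phi_0^\perp}M\ket{\phi_0^\perp}\operatorname{Tr}M$.

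\textbf{How the paper avoids the residue.} It adds the diagonal and off-diagonal pieces in quadrature rather than linearly. It writes each part as $M_k=\sqrt M E_k\sqrt M$ and bounds $\sum_k|\operatorname{Tr}[E_kX]|\le\|X\|_1$ for $X=\sqrt M(\rho_{u_0}-\rho_{u_1})\sqrt M$. This is the Helstrom bound for the two L\"{u}ders-filtered states. From $\operatorname{Tr}X$ and $\det X=-\tfrac12\det M$ it then reads off $\|X\|_1^2=(\pi_0-\pi_1)^2+2\det M\le A(\delta)^2$.

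\textbf{Repairing your own route.} Your Cauchy--Schwarz idea also works, and more simply, if you change basis. In the fixed basis $\{\ket{\phi_0},\ket{\phi_0^\perp}\}$, the operator $\rho_{u_0}-\rho_{u_1}$ is \emph{exactly} off-diagonal, with entries of modulus $1/\sqrt2$, because $|\braket{\phi_0}{u_0}|=|\braket{\phi_0}{u_1}|$. Hence
\begin{equation*}
\begin{split}
\sum_\Pi\big|\operatorname{Tr}\Pi(\rho_{u_0}-\rho_{u_1})\big|
&\le\sqrt2\sum_\Pi|\Pi_{12}|\\
&\le\sqrt{2\bra{\phi_0}M\ket{\phi_0}\bra{\phi_0^\perp}M\ket{\phi_0^\perp}}\\
&\le\sqrt{2\delta+(4+4\sqrt2)\delta^2}\le A(\delta),
\end{split}
\end{equation*}
using $\bra{\phi_0}M\ket{\phi_0}\le m_0\le\sqrt2-1+2\delta$ and the null-weight bound. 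This version needs neither eigenvalues nor the factorization lemma, and it never involves $m_Y$. It even removes the $4\delta^2$ from under the root, so it gives slightly more than the lemma claims.
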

\begin{proof}
Follow one position.  Alice prepares each state with probability $\tfrac14$, Bob declares $d$, and a refinement splits $M_d=\sum_kM_{d,k}$ with $M_{d,k}\ge0$.  Write $S_d$ for the set named by $d$ and $u_0,u_1$ for its two states.  The position is retained if the state lies in $S_d$, which has weight
\begin{equation*}
 R_1=\tfrac14\sum_d\sum_{s\in S_d}\operatorname{Tr}[M_d\rho_s].
\end{equation*}
Given $(d,k)$, Bob guesses the likelier of $u_0$ and $u_1$, so the weights of a right and of a wrong guess add up to $R_1$ and differ by
\begin{equation}
 W_\text{right}-W_\text{wrong}=\tfrac14\sum_{d,k}\big|\operatorname{Tr}[M_{d,k}(\rho_{u_0}-\rho_{u_1})]\big| .
 \label{eq:guess_weight}
\end{equation}
Hence $2p_{\mathrm{guess}}-1=(W_\text{right}-W_\text{wrong})/R_1$.  We bound the denominator in Step~1 and the numerator in Steps~1--4.

\emph{Step 1: retention and conclusive weight.}  For each state $s$, the three cells of Table~\ref{tab:usd_test} whose declaration names a set containing $s$ add up to $\tfrac12$; for $s=\ket0$, for example, $\tfrac{2-\sqrt2}{4}+0+\tfrac{\sqrt2}{4}=\tfrac12$.  These are twelve cells in all, each off by at most $\delta$, and $R_1$ weights them by $\tfrac14$.  Likewise, the two conclusive cells among them add up to $\tfrac{2-\sqrt2}{4}=\tfrac12\USDprob$ for each $s$, eight cells in all.  Writing $C_1$ for the part of $R_1$ with a conclusive declaration, we get
\begin{equation}
 R_1\ge\tfrac12-3\delta,\qquad C_1\le\tfrac12\USDprob+2\delta.
\end{equation}
We credit Bob with knowing the bit after every conclusive declaration.  Each term of Eq.~\eqref{eq:guess_weight} is at most $\tfrac14\operatorname{Tr}[M_{d,k}(\rho_{u_0}+\rho_{u_1})]$, so the conclusive declarations contribute at most $C_1$.

\emph{Step 2: an inconclusive effect is nearly rank one.}  Take the inconclusive declaration of $S_0$; that of $S_1$ is the same with $\phi_1$ in place of $\phi_0$.  Its honest effect, including the factor $\tfrac12$ of the set guess, is $\lambda_\text{inc}\ket{\phi_0}\bra{\phi_0}$ with $\lambda_\text{inc}=\sqrt2-1$.  It vanishes on $\ket{\phi_0^\perp}$, whose Bloch vector is $-(\hat x+\hat z)/\sqrt2$.  Write Bob's actual effect as $M=(m_0I+m_X\sigma_X+m_Y\sigma_Y+m_Z\sigma_Z)/2$, so that $\operatorname{Tr}[M\rho_s]=\tfrac12(m_0+\mathbf{m}\cdot\mathbf{n}_s)$.  The four cells of this declaration then give
\begin{equation*}
\begin{split}
 m_0&=\operatorname{Tr}[M\rho_{\ket0}]+\operatorname{Tr}[M\rho_{\ket1}],\\
 m_Z&=\operatorname{Tr}[M\rho_{\ket0}]-\operatorname{Tr}[M\rho_{\ket1}],\\
 m_X&=\operatorname{Tr}[M\rho_{\ket+}]-\operatorname{Tr}[M\rho_{\ket-}],
\end{split}
\end{equation*}
so each of them lies within $2\delta$ of its honest value, $\lambda_\text{inc}$, $\lambda_\text{inc}/\sqrt2$ and $\lambda_\text{inc}/\sqrt2$ respectively.  The component $m_Y$ is not tested, and it does not enter below.  The weight of $M$ in the null direction,
\begin{equation*}
 \bra{\phi_0^\perp}M\ket{\phi_0^\perp}=\tfrac12\big[m_0-(m_X+m_Z)/\sqrt2\big],
\end{equation*}
vanishes for the honest effect, so it is at most $\tfrac12\big[2\delta+2\cdot2\delta/\sqrt2\big]=(1+\sqrt2)\delta$.  The weight in the direction $\ket{\phi_0}$ is at most $\operatorname{Tr}M=m_0\le\lambda_\text{inc}+2\delta$.  In the basis $\{\ket{\phi_0^\perp},\ket{\phi_0}\}$, $M$ is a positive $2\times2$ matrix, whose determinant is at most the product of its diagonal entries.  Hence
\begin{equation}
 \det M\le(1+\sqrt2)\,\delta\,(\lambda_\text{inc}+2\delta).
\end{equation}

\emph{Step 3: what a refinement of $M$ can reveal.}  Every refinement of $M$ has the form $M_k=\sqrt ME_k\sqrt M$ with $E_k\ge0$ and $\sum_kE_k\le I$: as $0\le M_k\le M$, one may take $E_k=M^{-1/2}M_kM^{-1/2}$ on the support of $M$, a finite-dimensional case of Douglas' factorization lemma~\cite{Douglas_factorization}.  In other words, Bob first applies the L\"{u}ders operation $\sqrt M(\cdot)\sqrt M$~\cite{Lueders_1950,Lueders_translation} and then measures $\{E_k\}$ on the filtered state.  Put $X=\sqrt M(\rho_{u_0}-\rho_{u_1})\sqrt M$ and split it into its positive and negative parts, $X=X_+-X_-$.  Then $|\operatorname{Tr}[M_k(\rho_{u_0}-\rho_{u_1})]|=|\operatorname{Tr}[E_kX]|\le\operatorname{Tr}[E_k(X_++X_-)]$, and the sum over $k$ bounds the contribution of $M$ to Eq.~\eqref{eq:guess_weight} by $\tfrac14\|X\|_1$; this is the Helstrom bound for the two states filtered by $M$.  Now $X$ is a Hermitian $2\times2$ matrix with
\begin{equation*}
\begin{split}
 \operatorname{Tr}X&=\pi_0-\pi_1,\\
 \det X&=\det M\,\det(\rho_{u_0}-\rho_{u_1})=-\tfrac12\det M,
\end{split}
\end{equation*}
where $\pi_b=\operatorname{Tr}[M\rho_{u_b}]$.  Here $\det(\rho_{u_0}-\rho_{u_1})=-\tfrac12$ because the Bloch vectors of $u_0$ and $u_1$ are perpendicular, so that $\rho_{u_0}-\rho_{u_1}$ has the eigenvalues $\pm1/\sqrt2$.  As $\det X\le0$, the eigenvalues $x_\pm$ of $X$ have opposite signs, and
\begin{equation}
\begin{split}
 \|X\|_1=x_+-x_-&=\sqrt{(\operatorname{Tr}X)^2-4\det X}\\
   &=\sqrt{(\pi_0-\pi_1)^2+2\det M}.
\end{split}
\end{equation}

\emph{Step 4: the function $A$.}  The honest inconclusive effect is equally likely for $u_0$ and $u_1$, so $|\pi_0-\pi_1|\le2\delta$.  With Step~2 and $(1+\sqrt2)(\sqrt2-1)=1$,
\begin{equation*}
\begin{split}
 \|X\|_1^2&\le4\delta^2+2(1+\sqrt2)\,\delta\,(\sqrt2-1+2\delta)\\
 &=2\delta+(8+4\sqrt2)\delta^2=A(\delta)^2 .
\end{split}
\end{equation*}
So each inconclusive declaration contributes at most $\tfrac14A(\delta)$.  This is where the square root comes from.

\emph{Step 5: the function $B$.}  Adding the two inconclusive declarations to Step~1,
\begin{equation}
\begin{split}
 2p_{\mathrm{guess}}-1&\le\frac{C_1+\tfrac12A(\delta)}{R_1}\\
 &\le\frac{\tfrac12\USDprob+2\delta+\tfrac12A(\delta)}{\tfrac12-3\delta}=B(\delta),
\end{split}
\end{equation}
where $\delta<\tfrac16$ keeps the denominator positive.  As $p_{\mathrm{guess}}\le1$, this gives $2p_{\mathrm{guess}}-1\le\min\{1,B(\delta)\}$ and $p_{\mathrm{guess}}\le G(\delta)$.  Note that $B(0)=\USDprob$: an exact table leaves Bob only his conclusive outcomes.  The terms $4\delta$ and $1-6\delta$ are the linear slack of $C_1$ and $R_1$, while $A(\delta)\approx\sqrt{2\delta}$ dominates for small $\delta$.
\end{proof}

We now apply the lemma to the test sample.  The hypothesis of the next lemma is the test of Step~\ref{QSPIR:usd_test} with Bob's posteriors in place of the observed frequencies; part~(a) of the proof of Proposition~\ref{prop:cert_rate} connects the two.

\begin{lemma}[Certificate on the test sample]
\label{lem:cert_sample}
Suppose that $\big|\tfrac{4}{n_\text{t}}\sum_{i\in\mathcal{T},\,D_i=d}P_i(s)-T_{s,d}\big|\le\delta_1<\tfrac16$ for all $24$ cells of Table~\ref{tab:usd_test}.  Then
\begin{equation}
    \sum_{i\in\mathcal{T}}\alpha_i\le\min\{1,B(\delta_1)\}\sum_{i\in\mathcal{T}}r_i,\qquad \frac{1}{n_\text{t}}\sum_{i\in\mathcal{T}}r_i\ge\tfrac12-3\delta_1,
\end{equation}
with $B$ given by Eq.~\eqref{eq:decoy_G_bound}.
\end{lemma}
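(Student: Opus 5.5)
The plan is to reduce the lemma to the single-qubit guessing bound, Lemma~\ref{lem:guessing_bound}, applied to one fictitious ``averaged'' qubit measurement built from the test sample. By Lemma~\ref{lem:cert_independence}, each tested position $i$ has posterior $P_i(s)=\operatorname{Tr}[\Pi_i\rho_s]/(2\operatorname{Tr}\Pi_i)$ for some effect $\Pi_i$ with Bloch form $\tfrac14(1+\mathbf{v}_i\cdot\mathbf{n}_s)$. For each declaration $d$ and each $i\in\mathcal{T}$ with $D_i=d$, I set
\begin{equation*}
M_{d,i}:=\tfrac{4}{n_\text{t}}\cdot\tfrac12\big(I+\mathbf{v}_i\cdot\boldsymbol{\sigma}\big)\cdot\tfrac12 ,
\end{equation*}
normalized so that $\operatorname{Tr}[M_{d,i}\rho_s]=\tfrac{4}{n_\text{t}}P_i(s)$. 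This operator is positive because $|\mathbf{v}_i|\le1$. I then put $M_d:=\sum_{i:D_i=d}M_{d,i}$, so that $\operatorname{Tr}[M_d\rho_s]=\tfrac4{n_\text{t}}\sum_{i\in\mathcal T,D_i=d}P_i(s)$. The hypothesis says exactly that this table of $M_d$ is within $\delta_1$ of Table~\ref{tab:usd_test} in all $24$ cells. The splitting $M_d=\sum_iM_{d,i}$ is a private refinement in the sense of Lemma~\ref{lem:guessing_bound}, and completeness is not required there, so the lemma applies with $\delta=\delta_1<\tfrac16$.

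Next I identify the quantities in the proof of Lemma~\ref{lem:guessing_bound} with the sums in the claim. For position $i$ with declared set $S_{G_i}=S_d$, the retention weight is
\begin{equation*}
\tfrac14\sum_{s\in S_d}\operatorname{Tr}[M_{d,i}\rho_s]=\tfrac1{n_\text{t}}\sum_{s\in S_d}P_i(s)=r_i/n_\text{t} .
\end{equation*}
Summing over $i$ therefore gives $R_1=\tfrac1{n_\text{t}}\sum_{i\in\mathcal T}r_i$. Likewise, the term of Eq.~\eqref{eq:guess_weight} for $(d,i)$ equals $\tfrac1{n_\text{t}}|P_i(u_0)-P_i(u_1)|=\alpha_i/n_\text{t}$, so $W_\text{right}-W_\text{wrong}=\tfrac1{n_\text{t}}\sum_i\alpha_i$.

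Step~5 of that proof shows that $(W_\text{right}-W_\text{wrong})/R_1\le B(\delta_1)$. Separately, each $\alpha_i\le r_i$ holds trivially, which supplies the $\min\{1,\cdot\}$. Together these give the first inequality after multiplying by $n_\text{t}R_1$. The second inequality is Step~1's bound $R_1\ge\tfrac12-3\delta_1$, which uses only the twelve retention cells.

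The main point to check is that Lemma~\ref{lem:guessing_bound} really holds for an unnormalized collection with $\sum_dM_d\neq I$, and that its steps use only the table entries, the positivity of the refinement parts, and the fixed bit-states $u_0,u_1$ of each declaration. Scanning Steps 1--5 confirms this: Step~2 uses $m_0,m_X,m_Z$ read from the cells, and Step~3 uses only $0\le M_k\le M$. A further subtlety is the factor $\tfrac12$ of the set guess, which is built into $T_{s,d}$. That factor is absorbed by the normalization $\tfrac4{n_\text{t}}$, which I will verify cell by cell against the honest case, where $P_i$ comes from the prescribed effects.
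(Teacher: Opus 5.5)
Your proposal is correct and follows the paper's proof. The paper builds the same averaged operators $M_d=\tfrac{2}{n_\text{t}}\sum_{i\in\mathcal{T},\,D_i=d}\tfrac12(I+\mathbf{v}_i\cdot\boldsymbol{\sigma})$, which equal your $\sum_i M_{d,i}$, treats the positions as a private refinement, identifies the weights of position $i$ with $r_i/n_\text{t}$ and $\alpha_i/n_\text{t}$ so that $2p_{\mathrm{guess}}-1=\sum_{i\in\mathcal{T}}\alpha_i/\sum_{i\in\mathcal{T}}r_i$, and uses $\alpha_i\le r_i$ for the minimum with $1$. It also re-derives $\tfrac{1}{n_\text{t}}\sum_{i\in\mathcal{T}}r_i\ge\tfrac12-3\delta_1$ from the twelve retention cells exactly as Step~1 does, and your closing honest-case check of the normalization is unnecessary, since the hypothesis is already stated for these sums.
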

\begin{proof}
Let $\tilde\Pi_i=\tfrac12(I+\mathbf{v}_i\cdot\boldsymbol{\sigma})\ge0$, which is Bob's element at position $i$ rescaled to unit trace and without its $\sigma_Y$ part, so that $\operatorname{Tr}[\tilde\Pi_i\rho_s]=2P_i(s)$, and let $M_d=\tfrac{2}{n_\text{t}}\sum_{i\in\mathcal{T},\,D_i=d}\tilde\Pi_i$ be the average element behind the declaration $d$.  Then $\operatorname{Tr}[M_d\rho_s]=\tfrac{4}{n_\text{t}}\sum_{i\in\mathcal{T},\,D_i=d}P_i(s)$.  So the $M_d$ are positive operators whose table is within $\delta_1$ of the honest one, and the positions are a private refinement of them: Bob knows not only $d$ but also which position contributed.  This is the hypothesis of Lemma~\ref{lem:guessing_bound}, whose proof uses the positivity of each $M_d$ and its four table entries only, never completeness.  With the prior $\tfrac14$ on each state, the part of $M_d$ contributed by position $i$ carries weight $P_i(s)/n_\text{t}$ on the state $s$.  Its weight on a correct set guess is therefore $r_i/n_\text{t}$, and on a correct set guess followed by a correct bit guess it is $\max\{P_i(u_0),P_i(u_1)\}/n_\text{t}=(r_i+\alpha_i)/(2n_\text{t})$.  Hence $2p_{\mathrm{guess}}-1=\sum_{i\in\mathcal{T}}\alpha_i/\sum_{i\in\mathcal{T}}r_i$, and the lemma gives the first claim; the minimum with $1$ is just $\alpha_i\le r_i$.  For the second claim, note that for every state $s$ the three cells whose declaration names a set containing $s$ add up to $\tfrac12$ in Table~\ref{tab:usd_test}.  Averaged over $s$, the same twelve cells of the $M_d$ give $\tfrac{1}{n_\text{t}}\sum_{i\in\mathcal{T}}r_i$, and their deviations, weighted by $\tfrac14$, add up to at most $3\delta_1$.
\end{proof}

As a check, the linear program of Section~\ref{subsec:usd_test} gives $+0.011$, $+0.037$ and $+0.133$, the values quoted below Eq.~\eqref{eq:usd_test_slack}, which lie below $B(\delta)-\USDprob=+0.015$, $+0.051$ and $+0.217$ at $\delta=10^{-4},10^{-3},10^{-2}$.  Without its completeness constraint, which Lemma~\ref{lem:cert_sample} does not use, they rise by at most $0.004$.

We now combine the lemmas.  The test sees only $\mathcal{T}$, while the blocks are drawn from the untested positions with a correct set guess; the next proposition bridges the two.  Set
\begin{equation}
    \begin{split}
    h_\text{t}&=\sqrt{\frac{\ln(104/\epsilon_\text{cert})}{2n_\text{t}}},\qquad
    h_\text{c}=\sqrt{\frac{\ln(4/\epsilon_\text{cert})}{2N_\text{c}}},\\
    h_\text{t}'&=\frac{h_\text{t}N_\text{acq}}{N_\text{c}},\qquad
    \delta_1=\delta_\text{test}+4h_\text{t},\\
    \USDprob^\text{cert}&=\frac{\min\{1,B(\delta_1)\}\big(\tfrac12-3\delta_1\big)+h_\text{t}'+h_\text{c}}{\big(\tfrac12-3\delta_1\big)-h_\text{t}'-h_\text{c}} .
    \end{split}
    \label{eq:p_cert}
\end{equation}
Here $h_\text{t}$ and $h_\text{c}$ are Hoeffding margins: the largest deviation, from its expectation, that we tolerate for an average over the $n_\text{t}$ test positions and over the $N_\text{c}$ untested ones, respectively.  The test-sample margin enters $\delta_1$ as $4h_\text{t}$ because $\hat f(s,d)$ is a count divided by $n_\text{t}/4$, and $h_\text{t}'$ carries it from the test sample to the untested positions.  The constants $104$ and $4$ split $\epsilon_\text{cert}$ into two halves.  The proof below uses $26$ two-sided bounds of Hoeffding type on the test sample, $24$ for the cells and $2$ for the means of $\alpha_i$ and $r_i$.  Each side fails with probability at most $e^{-2n_\text{t}h_\text{t}^{2}}$, so these $52$ one-sided events fail with probability at most $52e^{-2n_\text{t}h_\text{t}^{2}}$, which equals $\epsilon_\text{cert}/2$ when $e^{-2n_\text{t}h_\text{t}^{2}}=\epsilon_\text{cert}/104$.  On the untested positions it uses $2$ one-sided bounds, and $2e^{-2N_\text{c}h_\text{c}^{2}}=\epsilon_\text{cert}/2$ gives the $4$.

\begin{proposition}[Certified known-rate]
\label{prop:cert_rate}
Under (A1)--(A3), with $\delta_1<\tfrac16$ and a positive denominator in Eq.~\eqref{eq:p_cert}, let $\bar q$ be the mean of $q_i$ over the non-test positions with a correct set guess, the population from which the blocks are drawn.  Then $\Pr[\text{test passes and }\bar q>\USDprob^\text{cert}]\le\epsilon_\text{cert}$.
\end{proposition}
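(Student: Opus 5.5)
The plan is to bound $\bar q$ by the ratio $\sum_{i\notin\mathcal{T}}\alpha_i/\sum_{i\notin\mathcal{T}}r_i$ and transfer it to the test sample by sampling. Everything is conditioned on Bob's record $y$. By Lemma~\ref{lem:cert_independence} and the remark after it, the states $s_i$ are then independent with laws $P_i$, and the retention indicators $I_i$ of the untested positions are independent $\mathrm{Bern}(r_i)$ variables. The test set $\mathcal{T}$ is a uniform subset chosen after the announced sets are fixed, so it is independent of $y$ and of the states. The proof has three parts, each a Hoeffding-type bound; the union bound uses the $52+2$ one-sided events counted before the proposition.

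\emph{(a) From frequencies to posteriors on the test sample.} For each cell $(s,d)$, the observed count $\sum_{i\in\mathcal{T},D_i=d}1[s_i=s]$ is a sum of independent Bernoulli variables with means $P_i(s)$ given $y$. Hoeffding's inequality, with margin $h_\text{t}$ on the average and the factor $4$ from dividing by $n_\text{t}/4$, shows that the posterior sums $\tfrac{4}{n_\text{t}}\sum P_i(s)$ are within $4h_\text{t}$ of $\hat f(s,d)$. A passed test therefore implies the hypothesis of Lemma~\ref{lem:cert_sample} with $\delta_1=\delta_\text{test}+4h_\text{t}$. That lemma then gives
\[
\frac{1}{n_\text{t}}\sum_{i\in\mathcal{T}}\alpha_i\le\min\{1,B(\delta_1)\}\cdot\frac{1}{n_\text{t}}\sum_{i\in\mathcal{T}}r_i,
\qquad
\frac{1}{n_\text{t}}\sum_{i\in\mathcal{T}}r_i\ge\tfrac12-3\delta_1 .
\]

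\emph{(b) From the test sample to all acquisition positions.} The numbers $\alpha_i,r_i\in[0,1]$ are fixed by $y$, and $\mathcal{T}$ is a uniform sample without replacement. Hoeffding's inequality for sampling without replacement therefore puts the test means of $\alpha$ and $r$ within $h_\text{t}$ of their population means over all $N_\text{acq}$ positions. Passing to the untested positions multiplies these deviations by $N_\text{acq}/N_\text{c}$, which gives $h_\text{t}'$. Writing $\bar\alpha_\text{c},\bar r_\text{c}$ for the means over the untested positions and $m_\text{t}=\tfrac{1}{n_\text{t}}\sum_{i\in\mathcal{T}}r_i$, this yields
\[
\bar\alpha_\text{c}\le\min\{1,B\}\,m_\text{t}+h_\text{t}',\qquad \bar r_\text{c}\ge m_\text{t}-h_\text{t}' .
\]
This step is the main obstacle. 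The clean transfer needs the fixed-population form of Hoeffding's bound together with a careful split of the untested mean as $(N_\text{acq}\cdot\text{population mean}-n_\text{t}\cdot\text{test mean})/N_\text{c}$. Getting exactly the factor $N_\text{acq}/N_\text{c}$, rather than something worse, is where I expect bookkeeping trouble. I would first try to bound the test/untested difference directly.

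\emph{(c) From posteriors to realized retention.} By Eq.~\eqref{eq:alpha_r_identity}, $\sum_{i\notin\mathcal{T}}I_iq_i$ and $\sum_{i\notin\mathcal{T}}I_i$ are sums of independent $[0,1]$ variables with means $\sum\alpha_i$ and $\sum r_i$. Two one-sided Hoeffding bounds with margin $h_\text{c}$ bound the numerator above by $N_\text{c}(\bar\alpha_\text{c}+h_\text{c})$ and the denominator below by $N_\text{c}(\bar r_\text{c}-h_\text{c})$. Then
\[
\bar q=\frac{\sum I_iq_i}{\sum I_i}\le\frac{\min\{1,B\}\,m_\text{t}+h_\text{t}'+h_\text{c}}{m_\text{t}-h_\text{t}'-h_\text{c}} .
\]
The right side is nonincreasing in $m_\text{t}$, because $\min\{1,B\}\le1$ and the denominator is positive by assumption. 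Substituting $m_\text{t}\ge\tfrac12-3\delta_1$ therefore gives $\USDprob^\text{cert}$.

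Outside the failure events, which have total probability at most $\epsilon_\text{cert}/2+\epsilon_\text{cert}/2$ conditionally on every $y$, a passed test thus forces $\bar q\le\USDprob^\text{cert}$. Averaging over $y$ proves the claim.
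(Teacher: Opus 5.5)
Your proof is correct and follows the paper's argument step for step: Hoeffding on the $24$ cells to pass from $\hat f(s,d)$ to the posterior sums and invoke Lemma~\ref{lem:cert_sample}, sampling without replacement for $\alpha_i,r_i$, one-sided Hoeffding bounds on the retention indicators of the untested positions, and the worst case $\bar r_{\mathcal{T}}=\tfrac12-3\delta_1$, with the same $52+2$ failure events. Two small remarks. First, the bound is decreasing in $m_\text{t}$ because its derivative is $-h(1+\min\{1,B\})/(m_\text{t}-h)^2$ with $h=h_\text{t}'+h_\text{c}$, which is negative for every nonnegative $\min\{1,B\}$; the condition $\min\{1,B\}\le1$ is not what drives it. Second, the step you flagged in (b) is no obstacle, since your split of the population mean already gives exactly the factor $N_\text{acq}/N_\text{c}$.
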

\begin{proof}
(a) Given $y$, $\mathcal{T}$ and the declarations, the $s_i$ with $i\in\mathcal{T}$ are independent with laws $P_i$ (Lemma~\ref{lem:cert_independence}).  Conditioned on Bob's record, Alice's preparations are thus fresh randomness that he does not control, and $\hat f(s,d)$ is an average of independent indicators with mean $\tfrac{4}{n_\text{t}}\sum_{i\in\mathcal{T},\,D_i=d}P_i(s)$.  By Hoeffding's inequality and a union bound over the $24$ two-sided cells, every $\hat f(s,d)$ is within $4h_\text{t}$ of $\tfrac{4}{n_\text{t}}$$\sum_{i\in\mathcal{T},\,D_i=d}P_i(s)$ except with probability $48e^{-2n_\text{t}h_\text{t}^2}$.  On passing, the hypothesis of Lemma~\ref{lem:cert_sample} then holds with the stated $\delta_1$.  (b) The numbers $\alpha_i,r_i\in[0,1]$ are fixed by $y$ and the announced sets before $\mathcal{T}$ is drawn; the declarations may depend on $\mathcal{T}$, but $\alpha_i$ and $r_i$ do not depend on them.  Hoeffding's inequality for sampling without replacement~\cite{Hoeffding_1963} puts the sample means of $\alpha_i$ and $r_i$ within $h_\text{t}$ of the population means except with probability $4e^{-2n_\text{t}h_\text{t}^2}$; hence the means over the complement of $\mathcal{T}$ differ from the sample means by at most $h_\text{t}'$, since $N_\text{acq}$ times the population mean is $n_\text{t}$ times the sample mean plus $N_\text{c}$ times the mean over the complement.  (c) On the complement the correct-guess indicators $I_i$ are independent $\mathrm{Bern}(r_i)$, so that by Eq.~\eqref{eq:alpha_r_identity} the means of $\sum I_iq_i$ and $\sum I_i$ are $\sum\alpha_i$ and $\sum r_i$.  One-sided Hoeffding bounds give $\sum I_iq_i\le\sum \alpha_i+h_\text{c}N_\text{c}$ and $\sum I_i\ge\sum r_i-h_\text{c}N_\text{c}$ except with probability $2e^{-2N_\text{c}h_\text{c}^2}$.  (d) Combining, with $\bar\alpha_{\mathcal{T}}$ and $\bar r_{\mathcal{T}}$ the means of $\alpha_i$ and $r_i$ over $\mathcal{T}$, $\bar q\le(\bar\alpha_{\mathcal{T}}+h_\text{t}'+h_\text{c})/(\bar r_{\mathcal{T}}-h_\text{t}'-h_\text{c})$, by inserting (b) into the bounds of (c) on $\bar q=\sum I_iq_i/\sum I_i$.  The right-hand side decreases in $\bar r_{\mathcal{T}}$ once $\bar\alpha_{\mathcal{T}}\le\min\{1,B\}\bar r_{\mathcal{T}}$ is inserted, its derivative being $-h(1+\min\{1,B\})/(\bar r_{\mathcal{T}}-h)^2$ with $h=h_\text{t}'+h_\text{c}$, so the worst case is $\bar r_{\mathcal{T}}=\tfrac12-3\delta_1$, the lower bound of Lemma~\ref{lem:cert_sample}.  The three failure probabilities, $48e^{-2n_\text{t}h_\text{t}^2}$, $4e^{-2n_\text{t}h_\text{t}^2}$ and $2e^{-2N_\text{c}h_\text{c}^2}$, add up to $\epsilon_\text{cert}$ by the choice of $h_\text{t}$ and $h_\text{c}$ explained below Eq.~\eqref{eq:p_cert}.
\end{proof}

\begin{theorem}[Database privacy against a cheating Bob]
\label{thm:cert_privacy}
Assume (A1)--(A3).  Then Theorem~\ref{thm:success_protocol}(ii) holds without the erasure hypothesis, with the known-rate $\pknown=\USDprob^\text{cert}$:
\begin{equation}
    \SecretB\le\epsilon_\text{cert}+\frac{\num^{2}}{4}\,2^{-(\cthr-2\hmarg)/2}\Big[1+(\sqrt2-1)\big(\USDprob^\text{cert}\big)^{2}\Big]^{\numd}+\epsilon_{\mathrm{rnd}}.
    \label{eq:epsB_cert}
\end{equation}
\end{theorem}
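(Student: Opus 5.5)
The proof reduces the cheating Bob to the erasure setting of Theorem~\ref{thm:success_protocol}(ii) by handing him extra information, and then reuses the single-photon argument unchanged with the certified rate in place of $\pknown$. First, by (A1) Bob's state after the waiting time is a classical record $y$. By (A2), (A3) and Lemma~\ref{lem:cert_independence}, the prepared states are independent given $y$, each with posterior $P_i$. Conditioning on the announced sets, the test outcome and Alice's retained list preserves this product form on the retained positions, as noted after that lemma. On each retained position $i$, Bob's posterior on the bit is $(P_i(u_0),P_i(u_1))/r_i$ with larger entry $g_i$. We apply Lemma~\ref{lem:erasure_domination} position by position: a referee hands Bob an independent flag $K_i$ with $\Pr[K_i=1]=q_i=2g_i-1$. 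Extra side information can only increase Bob's trace distance to any simulated state, so it suffices to bound $\SecretB$ for the enlarged view $(y,K)$. In that view, flagged positions are known and unflagged ones are exactly uniform. By the product form they are independent across positions, and hence across blocks, so the erasure hypothesis of Lemma~\ref{lem:seeded_parity} holds exactly with $K_j$ the flagged locations of block $j$.

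Second, Proposition~\ref{prop:simulator} applies verbatim and gives $\SecretB\le\EX[W]$ for the ideal sampler. The bound on $\EX[W]$ then follows the proof of Eq.~\eqref{eq:EW_common}. We use $W\le\sum_{j<j'}\min(x_j,x_{j'})$, the pair bound of Lemma~\ref{lem:pair_bound}, which holds for every $\aset$, and the pair leak of Eq.~\eqref{eq:pair_leak}. Lemma~\ref{lem:overlap_mgf} with $\vartheta=\tfrac12\ln2$ then gives $\EX\,2^{o_{jj'}/2}\le[1+(\sqrt2-1)\bar q^{2}]^{\numd}$. This lemma needs only independent flags with mean $\bar q$ over the retained population and a uniform permutation independent of them. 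Its flag probabilities $q_i$ are functions of $y$ and of the retained list, both fixed before Step~\ref{QSPIR:shuffle}. The hiding property makes Bob's contribution independent of Alice's rank, so the permutation is uniform and independent of the flags in the ideal case.

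Third, we pass from $\bar q$ to the certified rate. Let $G$ be the good event $\{\bar q\le\USDprob^\text{cert}\}$. On $G$, since the right side of Eq.~\eqref{eq:EW_common} increases in $\pknown$, the conditional expectation of $W$ is at most the displayed term evaluated at $\USDprob^\text{cert}$. On the event that the test passes and $\bar q>\USDprob^\text{cert}$, we bound the trace distance by $1$. By Proposition~\ref{prop:cert_rate} this event has probability at most $\epsilon_\text{cert}$; if the test fails, the session aborts and nothing leaks. Finally, we replace the ideal sampler by the real one using Eq.~\eqref{eq:permutation_abort_bound}, applied to the $[0,1]$-valued quantity $\min(1,W)$, which adds $\epsilon_{\mathrm{rnd}}$. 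Summing the three contributions gives Eq.~\eqref{eq:epsB_cert}.

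The main obstacle is the conditioning in the third step. The flag rates $q_i$, and hence $\bar q$, are random, and the event $G$ depends on $y$ and on the retained list. I must check that conditioning on $G$ does not destroy the independence of the flags given $y$, nor the independence of the permutation from them. I would handle this by conditioning on the pair ($y$, retained list) first: these determine all $q_i$ and whether $G$ holds. Given that pair, the flags are fresh independent coins, and the permutation is sampled afterward. Lemma~\ref{lem:overlap_mgf} is therefore applied pointwise, and averaging over the pair costs only the $\epsilon_\text{cert}$ term.
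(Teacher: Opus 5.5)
Your overall route matches the paper's: hand Bob the flags of Lemma~\ref{lem:erasure_domination} so that the erasure hypothesis holds exactly, then reuse Proposition~\ref{prop:simulator} and the proof of Eq.~\eqref{eq:EW_common}, use monotonicity in $\pknown$, and pay $\epsilon_\text{cert}$ and $\epsilon_{\mathrm{rnd}}$ for the two failure terms. The step that does not work as written is your resolution of the conditioning.

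\textbf{The gap.} You apply Lemma~\ref{lem:overlap_mgf} to the retained list, with $\bar q$ the mean over that list, and you condition on the list. Proposition~\ref{prop:cert_rate}, however, certifies the mean over \emph{all} correctly guessed untested positions. Step~\ref{QSPIR:set_disclosure} keeps only a uniform subsample of $\numd\num$ of them, and that population is in general larger. Two things go wrong once you fix the subsample.
\begin{itemize}
\item The mean entering the lemma becomes the subsample mean, which nothing certifies. So the probability of the complement of your event $G$ is no longer bounded by $\epsilon_\text{cert}$.
\item You cannot recover the bound by averaging over subsamples. Since $[1+(\sqrt2-1)\bar q^{2}]^{\numd}$ is convex in $\bar q$, the averaged pointwise bound is at least its value at the population mean, not at most.
\end{itemize}
So your argument would need an additional concentration term for the subsample mean, which Eq.~\eqref{eq:epsB_cert} does not contain.

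\textbf{The fix.} This is the observation the paper makes.
\begin{itemize}
\item Condition only on $y$ and the correct-guess indicators. These fix all $q_i$, the population, and whether $\bar q\le\USDprob^\text{cert}$.
\item Do not condition on the subsample. A uniform subsample followed by a uniform partition places the $2\numd$ positions of any two blocks uniformly without replacement in the certified population, independently of the flags.
\item Lemma~\ref{lem:overlap_mgf} covers exactly such a draw. Its proof only uses that $(P_l,P'_l)$ is a uniform ordered pair of distinct positions of the population.
\end{itemize}
The lemma therefore applies directly with the certified $\bar q\le\USDprob^\text{cert}$, and the rest of your argument goes through unchanged.
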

\begin{proof}
On the event of Proposition~\ref{prop:cert_rate}, hand Bob the flags of Lemma~\ref{lem:erasure_domination}, one per correctly guessed position, set with probability $q_i$, and let $K_j$ be the flagged locations of block $j$.  The erasure hypothesis of Lemma~\ref{lem:seeded_parity} then holds exactly, the flags are independent given his record (Lemma~\ref{lem:cert_independence}), and he chooses $\aset$ from that record and the flags.  The retained list is a uniform subsample of the correctly guessed non-test positions, and the blocks are a uniform partition of it; so the positions of any two blocks are drawn uniformly without replacement from the population on which $\bar q$ is certified, and Lemma~\ref{lem:overlap_mgf} applies with $\bar q\le\USDprob^\text{cert}$, and its bound increases in $\bar q$.  The rest is the proof of Eq.~\eqref{eq:EW_common} in Appendix~\ref{app:simulator_proof}, which holds for every set $\aset$.  Note that only the mean $\bar q$ enters: by Step~3 of the proof of Lemma~\ref{lem:overlap_mgf}, an uneven spread of the $q_i$ can only lower the overlap of two blocks.
\end{proof}

What does the certificate cost?  As $\delta_\text{test}\to0$ and $n_\text{t}\to\infty$, $\USDprob^\text{cert}\to B(0)=\USDprob$, so the cost is the slack at finite $n_\text{t}$.  With $\num=10^{4}$ and the block length $\numd$ found below, $N_\text{acq}=2(\numd\num+n_\text{t})$, $\epsilon_\text{cert}=10^{-6}$ and $\delta_\text{test}=4h_\text{t}$, the slack $\USDprob^\text{cert}-\USDprob$ is $0.088$, $0.024$, $0.0071$ and $0.0007$ at $n_\text{t}=10^{8},10^{10},10^{12}$ and $10^{16}$.  These values rely on the acquisition of Step~\ref{QSPIR:initial}, which makes the untested complement $N_\text{c}$ larger than $n_\text{t}$ and so keeps $h_\text{t}'\le2h_\text{t}$; the test thus costs about $2n_\text{t}$ transmitted qubits.  It falls only as $n_\text{t}^{-1/4}$, the square root of Eq.~\eqref{eq:usd_test_slack} acting on a sampling error of order $n_\text{t}^{-1/2}$.  Since the certified rate enters Eq.~\eqref{eq:epsB_cert} only through its square, this slack is affordable: at $\num=10^{4}$ and a budget $\EX[W]\le10^{-6}$ the closed-form choice of Appendix~\ref{app:simulator_proof} gives $\numd=1640$, $1090$ and $1010$ at the first three test sizes, against $980$ without slack.  The certificate is, however, stated for the noiseless table: channel noise must be absorbed in $\delta_\text{test}$, and a zero entry lifted to $(2-\sqrt2)p/4$ costs a slack of order $\sqrt p$.

\subsection{Bob's Cheating by Delayed Measurement}
\label{sec:Bob_delayed_measurement}

What if Bob has long-term quantum memory?  Recall that we assumed that Bob does not, which is a fair assumption in the current experimental frontier and in the near future.  However, if Bob does have long-term quantum memory, he can store all the qubits from Alice, pass the measurement test of Step~\ref{QSPIR:usd_test} by measuring the tested positions as prescribed once the test sample is drawn, and keep the other qubits until Alice discloses the correct sets in Step~\ref{QSPIR:set_disclosure}.

Then, once Alice has disclosed the sets, Bob could replace USD by the best measurement for the two states of each set, the Helstrom measurement~\cite{Barnett_Croke_review}.  For two pure states with overlap $\nu$ it guesses correctly with probability $\tfrac12(1+\sqrt{1-\nu^2})$, here $\tfrac12+\tfrac{1}{2\sqrt2}\approx0.854$ for $\nu=|\braket{0}{+}|=1/\sqrt2$.  This is much larger than the USD success probability $\USDprob\approx0.293$, although Bob does not know which guesses are right.  Knowing the set does not help him, however.  For both sets the states carrying the bits $0$ and $1$ differ by $\ket{0}\bra{0}-\ket{+}\bra{+}=\ket{-}\bra{-}-\ket{1}\bra{1}=(\sigma_Z-\sigma_X)/2=:{}\Gamma_-$, so the Helstrom measurement is always the one along $(\sigma_Z-\sigma_X)/\sqrt2$, the fixed measurement of Section~\ref{subsec:usd_test}, which needs no memory.  Measured one by one, the stored qubits therefore give him nothing beyond a memoryless Bob who guesses every bit with probability $0.854$; his guesses do not decode a block, since correcting their $14.6\%$ errors takes $H_2(0.146)\approx0.60$ check symbols per location, whereas the protocol publishes $\hmarg/\cthr<0.31$.  What a memory adds is a joint measurement.  If Bob measures, jointly or not, before the announcements of Step~\ref{QSPIR:final_step}, we can still bound what each block leaks to him.  Let $f(\dstr{w})$ be his bias on the parity $\dstr{w}\cdot\Psubstr[j]{A}|_{\aset}$ given his view without the check symbols $\sigma_j$.  Expanding the indicator of the coset fixed by $\sigma_j$ in characters, $\Pr[\sigma_j=\sigma]=2^{-\hmarg}D_\sigma$ and the bias of the seeded parity given $\sigma_j=\sigma$ is $N_\sigma/D_\sigma$, where $N_\sigma=\sum_{\dstr{u}\in\{0,1\}^{\hmarg}}(-1)^{\dstr{u}\cdot\sigma}f(\dstr{u}\dstr{H}\oplus\dstr{z})$ and $D_\sigma$ is the same sum with $\dstr{z}=0$.  The leak is therefore $\tfrac12\,2^{-\hmarg}\sum_\sigma|N_\sigma|$, and the Cauchy--Schwarz inequality, Parseval's identity over $\sigma$ and Jensen's inequality bound its mean by
\begin{equation}
    \tfrac12\Big(\sum_{\dstr{u}\in\{0,1\}^{\hmarg}}\EX f(\dstr{u}\dstr{H}\oplus\dstr{z})^{2}\Big)^{1/2}.
    \label{eq:parity_leak_CS}
\end{equation}
For a uniform seed, $\dstr{u}\dstr{H}\oplus\dstr{z}$ is uniform.  If Bob applies the Helstrom measurement to each stored qubit, his bias on a parity of $|\dstr{w}|$ locations is $\pm2^{-|\dstr{w}|/2}$, so each of the $2^{\hmarg}$ terms equals $(3/4)^{\cthr}$ and the leak is at most $\tfrac12\,2^{(\hmarg-\beta\cthr)/2}$ with $\beta=\log_2(4/3)\approx0.415$.  At the noiseless point of Table~\ref{tab:common_set}, $\cthr=218$ and $\hmarg=20$, this is $1.2\times10^{-7}$ summed over all $\num=10^{4}$ blocks, below $\SecretB=10^{-6}$.  For a joint measurement, note that given the sets, the states of $k$ positions whose bits have even and odd parity differ by $2^{-(k-1)}\bigotimes_{i}(\pm\Gamma_-)$, of trace norm $2\cdot2^{-k/2}$ since $\Vert\Gamma_-\Vert_1=\sqrt2$.  By the Helstrom bound, his bias $f$ on the parity of $k$ positions fixed independently of his record therefore satisfies $\EX|f|\le2^{-k/2}$ for every measurement; tensoring with the other positions leaves the trace norm unchanged, so what he learns later about other blocks is covered as well.  With $f^2\le|f|$, Eq.~\eqref{eq:parity_leak_CS} bounds the leak of the seeded parity by
\begin{equation}
    \tfrac12\,2^{(\hmarg-\beta_\text{coll}\cthr)/2},\qquad \beta_\text{coll}=\log_2\frac{2}{1+2^{-1/2}}\approx0.228,
    \label{eq:collective_leak}
\end{equation}
against $\beta\approx0.415$ for the Helstrom measurement qubit by qubit.  For this leak, summed over $\num=10^{4}$ blocks, to stay below $10^{-6}$ over a noiseless channel, the set needs $\cthr=370$ locations instead of the $218$ of Table~\ref{tab:common_set}, and the leak falls with $\cthr$ only while $H_2(\tilde{\ebit}+\gamma)<\beta_\text{coll}$, that is, up to $\tilde{\ebit}\approx3.7\%$.  Note that the bound is for a set fixed in advance: it does not cover Bob's choice of the retained positions through his set guesses, nor his choice of $\aset$ from his record, which the certificate of Appendix~\ref{app:usd_certification} handles under (A2).  If he measures after $\dstr{H}$, $\dstr{z}$ and the check symbols are announced, we have no bound.

Thus the no-long-term-quantum-memory assumption is needed against a Bob who measures after the announcements of Step~\ref{QSPIR:final_step}, and, at the present set sizes, against one who measures jointly before them.

\section{Index Privacy Against Alice}
\label{app:alice_audit}
This appendix supports Section~\ref{sec:Alice_cheating}.  Bob is honest.  Alice has no long-term quantum memory either, so after the waiting time each position carries a state whose classical description she knows.  For Bob's guess $G\in\{S_0,S_1\}$ write $\kappa_G(\rho)$ for the probability that the measurement of Step~\ref{QSPIR:initial} is conclusive; the honest value on a retained position is $\USDprob$.  The conclusive elements of the two guesses add up to $(2-\sqrt2)(\ket0\bra0+\ket1\bra1+\ket+\bra++\ket-\bra-)=4\USDprob\,I$, so
\begin{equation}
    \begin{split}
    &\kappa_{S_0}(\rho)+\kappa_{S_1}(\rho)=4\USDprob\quad\text{for every }\rho,\\
    &\kappa_G(\rho)\in[3-2\sqrt2,\,1].
    \end{split}
    \label{eq:kappa_invariant}
\end{equation}
How can Alice cheat?
Wrongly prepared single-qubit states by a dishonest Alice can bias Bob's USD record, and the popular no-signaling argument does not exclude this: no local preparation lets Alice sense Bob's measurement choices during the quantum phase~\cite{No_signaling1, No_signaling2}, but Bob's \emph{later classical messages} depend on his USD record.  Indeed, a state $\ket{\psi}$ triggers a conclusive outcome of the guess-$S_0$ POVM with probability $(2-\sqrt{2})\left(|\bra{1}\ket{\psi}|^2+|\bra{-}\ket{\psi}|^2\right)\in[3-2\sqrt{2},\,1]$, compared with the honest $\USDprob\approx0.29$.  Together with her keep-or-discard choice in Step~\ref{QSPIR:set_disclosure}, this lets Alice mark locations: she knows the conclusive probability of every retained position, and the announced set is drawn from the conclusive positions of Bob's block.  A block whose over-conclusive locations are over-represented in $\aset$ is more likely to be his, and the shift then localizes $\pidx$.  The parameter estimation of Step~\ref{QSPIR:rate_estimation} catches it only when it is large, as we now show.

Nor can a finer error-rate test catch her, for the tampering is intrinsically cheap.  For a slightly rotated preparation $\ket{\psi_\theta}=\cos\theta\ket{0}-\sin\theta\ket{1}$, the conclusive probability
\begin{equation}
    \begin{split}
    &(2-\sqrt{2})\left[\sin^2\theta+\tfrac{1}{2}\left(1+\sin 2\theta\right)\right]\\
    &\qquad=\USDprob+(2-\sqrt{2})\,\theta+O(\theta^2)
    \end{split}
    \label{eq:tamper_linear_gain}
\end{equation}
grows \emph{linearly} in $\theta$, because the honest value lies in the interior of the achievable range $[3-2\sqrt{2},\,1]$; the matched-basis error rate $\sin^2\theta$ grows only \emph{quadratically}, because the honest state is an extremum (zero error) of that test. Therefore, if Alice prepares $\ket{\psi_\theta}$ for some small $\theta$, she can bias Bob's conclusive pattern by $O(\theta)$ while inflating the error rate by only $O(\theta^2)$, which is statistically invisible at large $\num$. Hence a dishonest Alice can potentially use it to get partial information on Bob's index.  Over a noisy channel, any test of Alice's states that accepts the honest noise also accepts a small rotation with random signs: averaged over the signs, which only Alice knows, its effect on Bob's record is exactly that of stronger depolarizing noise.  A false claim in Step~\ref{QSPIR:set_disclosure} that a wrong set guess was correct needs no tampering at all: $\ket0$ kept under the guess $S_1$ is conclusive with probability $3\USDprob\approx0.88$.  We therefore do not bound a tampering Alice, and analyze index privacy under assumption (P) of Section~\ref{sec:Alice_cheating}.

Bob's detectors do not mix photon numbers, so we may give Alice the photon number of every position, which can only help her.  After the permutation, write $\kappa_{jl}$ for the conclusive probability of location $l$ of block $j$ and $K_j$ for the set of conclusive locations of block $j$.

\begin{lemma}[Reduction to single blocks]
\label{lem:alice_reduction}
Let $P_j(\aset)$ be the probability that Bob, having drawn block $j$ in Step~\ref{QSPIR:block_draw}, does not abort and announces the set $\aset$, and let $\mathrm{TV}_j$ be the total-variation distance between the law of $K_j$ and the honest law $\mathrm{Bern}(\USDprob)^{\otimes\numd}$.  Then Alice's final states $\rho_A^{\pidx}$ and $\rho_A^{\pidx'}$ for two indices obey
\begin{equation}
    \begin{split}
    \tfrac12\big\Vert\rho_A^{\pidx}-\rho_A^{\pidx'}\big\Vert_1&=\frac{1}{\num}\sum_j\tfrac12\sum_{\aset}\big|P_j(\aset)-P_{j+\pidx'-\pidx}(\aset)\big|\\
    &\le\frac{2}{\num}\sum_j\mathrm{TV}_j .
    \end{split}
    \label{eq:alice_reduction}
\end{equation}
\end{lemma}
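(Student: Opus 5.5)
First I will pin down what Alice's final state contains, so that the equality in Eq.~\eqref{eq:alice_reduction} is a statement about classical distributions.  Under (P) and the no-memory assumption, after the waiting time Alice holds a classical record: her preparations, photon numbers and keep/discard choices, the permutation $P$, the public transcript, and her own randomness.  Given this record, the only index-dependent messages from Bob are the abort/no-abort outcome, the set $\aset$, and the padded shift $(\pidx-i)\oplus\keysh$.  The remaining messages ($\dstr{H}$, $\dstr{z}$, the check symbols, the padded database) are produced by Alice, and Bob never signals $\perp$.  So Alice's final state for index $\pidx$ is the law of $(\text{record},\aset,\text{padded shift})$, and it suffices to compare these laws for $\pidx$ and $\pidx'$ at each fixed record.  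I will treat the pad $\keysh$ as uniform and independent of Alice's view; any deviation is charged to the QKD term as in Section~\ref{sec:security_eve}.

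Next I condition on the revealed shift value $c=\pidx-i$.  Bob draws $i$ uniformly, so $c$ is uniform, and given $c$ the drawn block is $j=\pidx-c$.  The joint probability of $(c,\aset)$ is therefore $\tfrac1\num P_{\pidx-c}(\aset)$ when the index is $\pidx$, and $\tfrac1\num P_{\pidx'-c}(\aset)$ when it is $\pidx'$.  An abort carries no $\aset$ and no shift.  Its probability is $\tfrac1\num\sum_j(1-\sum_\aset P_j(\aset))$, which does not depend on the index, so the abort event contributes nothing to the distance.  Summing the absolute differences over $c$ and $\aset$ and reindexing $j=\pidx-c$ gives exactly $\frac1\num\sum_j\tfrac12\sum_\aset|P_j(\aset)-P_{j+\pidx'-\pidx}(\aset)|$.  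This is the equality in Eq.~\eqref{eq:alice_reduction}.  (I take the block-index shift convention as in the statement; the sign depends only on how $c$ is defined.)  Finally I average over Alice's record, which leaves the trace distance as the average over records of this quantity.  I read $P_j$ and $\mathrm{TV}_j$ as conditional on the record, or else I pass the absolute value outside by convexity.

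For the inequality, I observe that $P_j(\aset)$ is obtained from the law of $K_j$ by a fixed stochastic map.  Bob aborts if $|K_j|<\cthr$, and otherwise draws $\aset$ uniformly from the $\cthr$-subsets of $K_j$, with fresh randomness.  This map is the same for every block.  Write $\Phi$ for it and $Q$ for the honest law $\mathrm{Bern}(\USDprob)^{\otimes\numd}$.  The triangle inequality through $\Phi(Q)$, together with contractivity of total variation under $\Phi$, gives
\[\tfrac12\sum_\aset|P_j-P_{j'}|\le\mathrm{TV}_j+\mathrm{TV}_{j'}.\]
Summing over $j$, each block appears twice, once as $j$ and once as $j'=j+\pidx'-\pidx$, because the shift is a bijection of blocks.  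This yields $\frac2\num\sum_j\mathrm{TV}_j$.

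The step I expect to need the most care is justifying that $P_j$ depends on Alice's view only through the law of $K_j$, with the conclusive flags independent of everything else she holds.  For this I will invoke (P): every photon at a position carries one of the BB84 states Alice chose, Bob's detectors do not mix photon numbers, and Bob's randomness for the block choice and for $\aset$ is fresh.  Under these conditions, conditioning on Alice's record leaves Bob's measurement outcomes independent across positions with probabilities $\kappa_{jl}$.  The ordering of the steps also matters here: the retained list and $P$ are fixed before Bob's block draw, so no later message of Alice can correlate with $K_j$.
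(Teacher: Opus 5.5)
Your argument is the paper's own. You condition on Alice's view before Step~\ref{QSPIR:shift_idx} and note that the abort probability $\frac{1}{\num}\sum_j\Pr[|K_j|<\cthr]$ does not involve $\pidx$. You write the probability of the shift $c$ together with the set $\aset$ as $\frac{1}{\num}P_{\pidx-c}(\aset)$ and reindex. You then insert the honest image $P^{0}$ by the triangle inequality, use that the shift permutes the blocks, and bound $\tfrac12\sum_{\aset}|P_j-P^{0}|\le\mathrm{TV}_j$ by data processing through the fixed channel $K_j\mapsto(\text{abort},\aset)$.

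One premise you state, however, is false and contradicts the computation that follows it. Alice is not blind to $\keysh$: it is the QKD key she shares with Bob from Step~\ref{QSPIR:standard_secure_bits}, and the paper's proof says explicitly that she holds it. The pad hides the shift from Eve only. Alice reads $c=\pidx-i$ in the clear, and neither the uniformity nor the secrecy of $\keysh$ enters, so there is nothing to charge to the QKD term. Were the pad really uniform and independent of her view, the padded shift would be independent of $c$. Her view would reduce to $\aset$ with the $\pidx$-free law $\frac{1}{\num}\sum_jP_j(\aset)$, and the left side of Eq.~\eqref{eq:alice_reduction} would vanish. The equality would then fail whenever some $P_j$ differs from $P_{j+\pidx'-\pidx}$. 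Your next paragraph rightly treats $c$ as revealed, so replace the sentence about the pad by the observation that Alice knows $\keysh$.

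Two smaller points. First, the equality requires reading $P_j$ and $\mathrm{TV}_j$ conditionally on Alice's record. The record is classical and its law does not depend on $\pidx$, so the trace distance is exactly the average of the conditional ones. Pulling the absolute value outside by convexity would instead turn the first step into an inequality pointing the wrong way. Second, your last paragraph asks for more than the lemma uses. It suffices that, given Alice's view, Bob's draw of $i$ is uniform and fresh and his map from $K_j$ to his message is the same fixed channel for every block. Independence of the conclusive flags across positions is not needed. For $\tau_D>0$ it can also fail, since Alice then conditions on the non-product event $D\le\tau_D$ of Step~\ref{QSPIR:initial}.
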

\begin{proof}
Condition on Alice's view before Step~\ref{QSPIR:shift_idx}: her preparations, Bob's guesses and test declarations, and the permutation; $P_j$ and $\mathrm{TV}_j$ are conditional on it, and $\SecretA$ is its average.  Everything Alice sees before Bob's last message is independent of $\pidx$ and of his draw $i$.  If he aborts she learns only that, with a probability $\frac1\num\sum_j\Pr[|K_j|<\cthr]$ that does not involve $\pidx$.  Otherwise she holds $\keysh$ and sees $\aset$ and the shift $\Delta_\text{sh}=\pidx-i$, with probability $\frac1\num P_{\pidx-\Delta_\text{sh}}(\aset)$; substituting $j=\pidx-\Delta_\text{sh}$ gives the equality.  Let $P^{0}$ be $P_j$ under the honest law, which is the same for every block.  By the triangle inequality the middle expression is at most $\frac1\num\sum_j\sum_{\aset}|P_j(\aset)-P^{0}(\aset)|$.  The map from $K_j$ to Bob's message is a fixed channel that does not depend on Alice's states, so $\tfrac12\sum_{\aset}|P_j-P^{0}|\le\mathrm{TV}_j$ by data processing.
\end{proof}

Note that the private uniform draw of Step~\ref{QSPIR:block_draw} is what turns a maximum over blocks into an average.  Without it, the bound would be set by the block that happens to carry the most tampered positions.

\begin{proposition}[Honest preparation]
\label{prop:index_privacy_honest}
If every retained position is conclusive with one and the same probability, then $\SecretA=0$.  This holds when Alice sends one photon per position in one of the four legitimate states and reports the correct guesses truthfully, over a noiseless channel and over any channel whose noise does not distinguish the states she sent and acts alike at every position, such as a depolarizing channel of constant strength.
\end{proposition}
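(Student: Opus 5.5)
The plan is to reduce the claim to Lemma~\ref{lem:alice_reduction}.  That lemma bounds Alice's distinguishing advantage by $\frac{2}{\num}\sum_j\mathrm{TV}_j$.  So it suffices to show that, conditioned on Alice's entire view before Step~\ref{QSPIR:shift_idx}, the law of every block's conclusive set $K_j$ is exactly some fixed product law, identical for all blocks.  We take the reference law in the lemma to be this common law, rather than $\mathrm{Bern}(\USDprob)^{\otimes\numd}$ specifically.  The proof of the lemma only uses that $P^{0}$ is the same for every block, so the reference can be $\mathrm{Bern}(\kappa)^{\otimes\numd}$ for whatever common value $\kappa$ the channel produces.  Then every $\mathrm{TV}_j=0$, and the middle expression of Eq.~\eqref{eq:alice_reduction} vanishes, giving $\SecretA=0$.

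The first step is to verify the conditional independence and the common rate.  Fix Alice's view: her preparations $s_i$, Bob's announced sets $G_i$ and test declarations, the retained list, and the permutation.  Under the hypothesis, each retained position is conclusive with the same probability $\kappa$.  Conclusive events at distinct positions are independent, because Bob measures each qubit separately with a fixed POVM selected only by his own guess.  The channel acts alike, and independently, at every position.  The test positions are discarded and involve only other qubits.  One subtlety must be checked: Alice's view includes $G_i$ and the truthful correctness information.  We need that conditioning on $G_i$ with $s_i\in S_{G_i}$ leaves the conclusive probability equal to $\kappa$ independently of which of the two states was sent.  This is exactly the content of Table~\ref{tab:usd_test}, or of its noisy analogue from Appendix~\ref{sec:Depolarizing_and_Dephasing_Noise}: on a correct guess the conclusive probability is $\USDprob$, or $\USDprob(1+p)$ for depolarizing noise, for all four states.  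Bob's guess $G_i$ is uniform and independent of the qubit, so it carries no information about the outcome beyond this.  The permutation is independent of Bob's record, so each block gets $\numd$ retained positions, and $K_j\sim\mathrm{Bern}(\kappa)^{\otimes\numd}$ for every $j$.

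The second step is to instantiate the three settings named in the statement.  With one photon in a legitimate BB84 state and truthful reports, the noiseless computation (Appendix~\ref{app:USD_POVM}) gives $\kappa=\USDprob$ for all four states on a correct guess.  For a depolarizing channel of constant strength, Appendix~\ref{sec:Depolarizing_and_Dephasing_Noise} gives $\kappa=\USDprob(1+p)$, again independent of the state.  For a general channel, the assumption that the noise ``does not distinguish the states'' is read precisely as this equality of conclusive probabilities across states and positions.

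The main obstacle is to make sure that nothing else Alice sees correlates with $K_j$.  Bob's abort decision and his set $\aset$ are functions of $K_{i}$ and fresh randomness only.  The QKD-mode data, the commitment records, and the multi-click count are independent of the data-mode conclusive flags.  For the multi-click count, with single photons $D=0$ deterministically.  Once this independence is granted, the remaining argument is just substitution into Lemma~\ref{lem:alice_reduction}.
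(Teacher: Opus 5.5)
Your proposal is correct and follows the paper's route: reduce to Lemma~\ref{lem:alice_reduction}, observe that given Alice's view every block's conclusive set $K_j$ has one and the same law, and conclude that $P_j=P_{j'}$ for all $j,j'$, so that the middle expression of Eq.~\eqref{eq:alice_reduction} vanishes. The paper reads the result directly off the exact equality in that lemma, so re-referencing $\mathrm{TV}_j$ to $\mathrm{Bern}(\kappa)^{\otimes\numd}$ is harmless but unnecessary, while your explicit check of conditional independence given Alice's view spells out what the paper's one-line proof leaves implicit.
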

\begin{proof}
The law of $K_j$ is then the same for every block, so $P_j=P_{j'}$ for all $j,j'$ and the middle expression of Eq.~\eqref{eq:alice_reduction} vanishes.
\end{proof}
Note that Proposition~\ref{prop:index_privacy_honest} fails for an honest Alice who knows a systematic misalignment by a Bloch-sphere angle $\theta_\text{ch}$ of the channel, since then $\kappa=\USDprob(1\mp\theta_\text{ch})$ with a sign fixed by the state she sent; Pinsker's inequality and the $\chi^2$ bound on the relative entropy give $\mathrm{TV}_j^2\le\numd\USDprob\theta_\text{ch}^2/[2(1-\USDprob)]$, so that Lemma~\ref{lem:alice_reduction} gives $\SecretA\le0.91\sqrt{\numd}\,\theta_\text{ch}$, so the alignment must be calibrated to well below $1/\sqrt{\numd}$.  An Alice with quantum memory is not covered.

\subsection{Several photons per position}
\label{app:alice_multiphoton}
Assumption (P) fixes the state of every photon but not the number of photons per position.  Bob's detectors cannot tell one photon from several, and a pulse of $N$ photons in a legitimate state is conclusive with probability $1-2^{-N/2}$.  It does, however, make two detectors click with probability $1-\USDprob^N-2^{-N/2}$, and the count of Step~\ref{QSPIR:initial} limits how often Alice can afford this.

\begin{lemma}[One position under (P)]
\label{lem:photon_number}
Let the channel be noiseless (loss allowed), let a data-acquisition position carry $N\ge1$ photons in one BB84 state, and let Bob's set guess be uniform.  Let $M$ indicate that two or more detectors click, and on a correctly guessed kept position let $X$ indicate that it is conclusive.  There is a $Y$ with $\Pr[Y=1\mid M,\text{Alice's view}]=\USDprob$ such that
\begin{equation}
  \Pr[\text{guess correct},\,X\ne Y]\le r\,\Pr[M=1],\qquad r=0.390 .
  \label{eq:one_position}
\end{equation}
\end{lemma}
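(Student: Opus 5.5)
The plan is to build $Y$ by a coupling that is done separately within each cell of $M$, and then to reduce \eqref{eq:one_position} to a one-variable inequality in the photon number, which we check for all $N$. Alice's view contains the emitted photon number and the BB84 state. Loss over a noiseless channel is binomial thinning, so Bob receives $N'\le N$ photons in that state, and $N'$ is independent of his uniform set guess. We construct $Y$ conditionally on $N'$ so that $\Pr[Y=1\mid M,N',\text{view}]=\USDprob$; averaging over $N'$ then keeps this conditional law, and the bound \eqref{eq:one_position} is linear in the joint law. It therefore suffices to prove \eqref{eq:one_position} for a fixed delivered number $N'=k$. For $k=0$ we have $M=0$ and $X=0$, and we draw $Y$ afresh as $\mathrm{Bern}(\USDprob)$; the mismatch this causes needs care, and is discussed below. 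For $k=1$ there is no multi-click, $X\sim\mathrm{Bern}(\USDprob)$ on a correct guess, and $Y=X$ makes the left side vanish.

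For $k\ge2$ we use the projective form of Appendix~\ref{app:USD_POVM}. On a correct guess each photon goes independently to the one relevant conclusive detector with probability $\USDprob$, or to the inconclusive detector with probability $\nu=1/\sqrt2$. Hence $M=0$ and $X=1$ with probability $\USDprob^{k}$, $M=0$ and $X=0$ with probability $\nu^{k}$, and $M=1$ and $X=1$ with the remaining probability. On a wrong guess (weight $\tfrac12$) the three detectors have the single-photon probabilities read off Table~\ref{tab:usd_test}. This gives an explicit $\Pr[M=1\mid\text{wrong}]$, which only enlarges the right side. Within each $M$-cell we couple $Y\sim\mathrm{Bern}(\USDprob)$ to $X$ on the correct-guess part by the maximal coupling, with $Y$ independent on the wrong-guess part. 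The mismatch in the cell $M=1$ is then at most $(1-\USDprob)\Pr[M=1,\text{correct}]$. In the cell $M=0$ it is at most $\tfrac12\,\big|\USDprob(\USDprob^{k}+\nu^{k})-\USDprob^{k}\big|$, using that $\Pr[X=1\mid M=0,\text{correct}]=\USDprob^{k}/(\USDprob^{k}+\nu^{k})$ lies below $\USDprob$. Dividing by $\Pr[M=1]=\tfrac12\big[(1-\USDprob^{k}-\nu^{k})+\Pr[M=1\mid\text{wrong}]\big]$ gives a ratio $r_k$, and the claim reduces to $\sup_{k\ge2}r_k\le0.390$. We will check small $k$ numerically, with the maximum expected near $k=2$. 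For large $k$ the cell-$M=0$ term decays like $\nu^{k}$ while $\Pr[M=1]\to1$, so there $r_k\to\tfrac12(1-\USDprob)\approx0.35$ from the cell-$M=1$ term.

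The main obstacle is this optimization, together with the $k=0$ bookkeeping, and the two interact. Re-randomizing $Y$ on vacuum events costs mismatch, while those events carry no multi-click to pay for it. So we intend to merge the vacuum and single-photon events into one $M=0$ cell per emitted $N$, and to couple across $k$ inside that cell rather than for each $k$ separately. The surplus of conclusive single-photon events must then offset the deficit of the conclusive events with $k\ge2$ in the cell $M=0$. If the per-$k$ coupling does not reach $0.390$, we will instead optimize the coupling jointly over the loss distribution for each emitted $N$, and verify numerically that the worst case over $N$ and the transmittance stays at $0.390$.
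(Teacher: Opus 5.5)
For every delivered photon number $k\ge1$ your argument is the paper's proof, step by step:
\begin{itemize}
\item the maximal coupling inside each cell of $M$ (in the cell $M=1$, where $X=1$, this is the paper's fresh draw of $Y$);
\item the mismatch $\tfrac12\nu m_c+\tfrac12\USDprob\nu(\nu^{k-1}-\USDprob^{k-1})$ with $m_c=1-\USDprob^k-\nu^k$;
\item the wrong-guess multi-click probability $m_w=1-(2\USDprob)^k-\USDprob^k-(1-3\USDprob)^k$ in the denominator;
\item the maximum at $k=2$ and the limit $\nu/2$.
\end{itemize}
Loss also enters the paper as you propose, through linearity in the photon-number distribution (proof of Theorem~\ref{thm:alice_multiphoton}). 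The $N$ of the lemma counts the photons that reach Bob's detectors.

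The one real gap is the case $k=0$, which you call the main obstacle, and your proposed repair would not work. There is no surplus to spend. A single photon is conclusive with probability exactly $\USDprob$, and the $k\ge2$ part of the cell $M=0$ lies below it. So all deficits in that cell have the same sign, and neither merging sub-cells nor optimizing the coupling jointly over the loss can lower the total mismatch. A kept vacuum slot would cost $\tfrac12\USDprob$ with no multi-click to pay for it, and the bound would fail at high loss.

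The missing observation is that such slots are never kept. Over the noiseless channel of the lemma, a pulse from which no photon arrives clicks no detector. Bob therefore does not acknowledge its reception, and it is lost rather than a data-acquisition position; Step~\ref{QSPIR:initial} even discards a double conclusive click ``as if the qubit were lost''. Every delivered photon clicks some detector under either guess, so reception is exactly the event $k\ge1$, independent of the state and the guess. Conditioning on it leaves only $k\ge1$, where your choice $Y=X$ at $k=1$ and your per-$k$ bound for $k\ge2$ close the proof.

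One numerical caveat, shared with the paper: $r_2=(2+3\sqrt2)/16\approx0.3902$, slightly above $0.390$, so the constant has to be read as rounded up.
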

\begin{proof}
With a correct guess every photon reaches the one conclusive detector its state allows with probability $\USDprob$ and the inconclusive one with $\nu=1-\USDprob$.  So the position clicks only conclusively with probability $\USDprob^N$, only inconclusively with $\nu^N$, and on both with $m_c=1-\USDprob^N-\nu^N$, and it is kept in all three cases.  Given $M=1$ it is conclusive; draw $Y$ afresh, which disagrees with probability $\nu$.  Given $M=0$ it is conclusive with probability $q_0=\USDprob^N/(\USDprob^N+\nu^N)$; couple $Y$ to $X$ maximally, which disagrees with probability $|q_0-\USDprob|$, and $(1-m_c)|q_0-\USDprob|=\USDprob\nu(\nu^{N-1}-\USDprob^{N-1})$.  With a wrong guess a photon reaches the two conclusive detectors with probabilities $2\USDprob$ and $\USDprob$ and the inconclusive one with $1-3\USDprob$, so $M=1$ with probability $m_w=1-(2\USDprob)^N-\USDprob^N-(1-3\USDprob)^N$.  The ratio of the two sides of Eq.~\eqref{eq:one_position} is therefore $[\nu m_c+\USDprob\nu(\nu^{N-1}-\USDprob^{N-1})]/(m_c+m_w)$, which is $0.390$ at $N=2$ and decreases towards $\nu/2=0.354$.  For $N=1$, $M=0$ and $X$ itself has the law of $Y$.
\end{proof}

Multi-photon light thus pays for every change of Bob's conclusive record with multi-clicks.  The count bounds how many Alice can afford.

\begin{lemma}[Multi-click count]
\label{lem:multiclick_count}
Let $c_a$ be the probability, given everything before data-acquisition position $a$, that two or more detectors click there, and let $B_D:=\tau_D+\Lambda_D+\sqrt{2\tau_D\Lambda_D+\Lambda_D^{2}}$ with $\Lambda_D=\ln(1/\epsilon_D)$.  Then $\Pr[D\le\tau_D\ \text{and}\ \sum_ac_a\ge B_D]\le\epsilon_D$.
\end{lemma}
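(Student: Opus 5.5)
We treat the count $D=\sum_a M_a$ as a sum of adapted indicators, with $M_a$ the multi-click indicator at position $a$ and $c_a=\Pr[M_a=1\mid\mathcal F_{a-1}]$ its predictable compensator. The claim is a one-sided Freedman/Bernstein-type bound for the lower tail of a martingale: the compensator $C:=\sum_a c_a$ is large while the realized count $D$ stays small. Alice chooses her pulses adaptively, so we cannot treat the $M_a$ as independent. The natural tool is the exponential supermartingale.

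\emph{Step 1.} For $\lambda>0$ put
\[
Z_a=\exp\Big(-\lambda\sum_{b\le a}M_b+(1-e^{-\lambda})\sum_{b\le a}c_b\Big).
\]
Then $\EX[e^{-\lambda M_a}\mid\mathcal F_{a-1}]=1-c_a(1-e^{-\lambda})\le\exp(-c_a(1-e^{-\lambda}))$, so $Z_a$ is a nonnegative supermartingale with $Z_0=1$. Since the number of positions is finite (or by optional stopping at the end of acquisition), $\EX Z_{\mathrm{end}}\le1$.

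\emph{Step 2.} On the event $\{D\le\tau_D,\ C\ge B_D\}$ we have $Z_{\mathrm{end}}\ge\exp(-\lambda\tau_D+(1-e^{-\lambda})B_D)$. Markov's inequality then gives
\[
\Pr\le\exp\big(\lambda\tau_D-(1-e^{-\lambda})B_D\big).
\]
Optimizing, $e^{-\lambda}=\tau_D/B_D$, yields $\exp(-B_D\,D_{\mathrm{KL}}(\tau_D/B_D))$ with the Poisson rate function $B\,h(\tau/B)$, where $h(u)=u\ln u-u+1$. That is, the bound equals $\exp(-[B_D-\tau_D-\tau_D\ln(B_D/\tau_D)])$. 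For $\tau_D=0$ take $\lambda\to\infty$, which gives $e^{-B_D}$ with $B_D=2\Lambda_D$; that matches the formula.

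\emph{Step 3, the main obstacle.} We need to verify that $B_D=\tau_D+\Lambda_D+\sqrt{2\tau_D\Lambda_D+\Lambda_D^2}$ makes the exponent at least $\Lambda_D$. The formula is exactly the root of the quadratic $(B-\tau)^2/(2B)=\Lambda$: expanding $(B-\tau)^2=2\Lambda B$ gives $B=\tau+\Lambda+\sqrt{2\tau\Lambda+\Lambda^2}$. So it suffices to prove the elementary inequality
\[
B-\tau-\tau\ln(B/\tau)\ \ge\ \frac{(B-\tau)^2}{2B}\qquad\text{for }B\ge\tau .
\]
This is the standard lower-tail bound $D(x\Vert y)\ge(x-y)^2/(2y)$ in its Poisson form with the larger mean in the denominator. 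To prove it, write $u=\tau/B\in(0,1]$ and check that $g(u)=1-u+u\ln u-(1-u)^2/2\ge0$. We have $g(1)=0$ and $g'(u)=\ln u+(1-u)\le0$ on $(0,1]$, so $g$ is nonincreasing and hence nonnegative. This also matches the inequality $D(x\Vert y)\ge(x-y)^2/(2x)$ used for Eq.~\eqref{eq:gamma_closed}.

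The monotonicity in $C$ is automatic: if $C\ge B_D$, the same bound holds because the exponent increases in $B$. No further conditions are needed, because the filtration includes Alice's adaptive choices and everything before position $a$, exactly as in the definition of $c_a$.
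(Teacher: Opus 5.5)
Your proof is correct and follows the paper's argument. It uses the same exponential supermartingale $\exp[-\lambda D_a+(1-e^{-\lambda})C_a]$, Markov's inequality with $e^{-\lambda}=\tau_D/B_D$, and the bound $B_D-\tau_D-\tau_D\ln(B_D/\tau_D)\ge(B_D-\tau_D)^2/(2B_D)=\Lambda_D$. There are two small differences. The paper stops the supermartingale at the first position where the partial sum of the $c_b$ reaches $B_D$, whereas you evaluate it at the end of acquisition; both give the stated bound. You also verify the final elementary inequality, which the paper only asserts.
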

\begin{proof}
Write $D_a$ and $C_a$ for the partial sums of the multi-click indicators and of the $c_b$ up to position $a$.  For $\lambda>0$, $\exp[-\lambda D_a+(1-e^{-\lambda})C_a]$ is a supermartingale, since the conditional expectation of $e^{-\lambda}$ to the power of the next indicator is $1-c_a(1-e^{-\lambda})\le\exp[-c_a(1-e^{-\lambda})]$.  Stop it at the first $a$ with $C_a\ge B_D$.  On the event of the lemma the stopped value is at least $\exp[-\lambda\tau_D+(1-e^{-\lambda})B_D]$, so Markov's inequality with $e^{-\lambda}=\tau_D/B_D$ ($\lambda\to\infty$ if $\tau_D=0$) bounds its probability by $\exp[-B_D+\tau_D+\tau_D\ln(B_D/\tau_D)]\le\exp[-(B_D-\tau_D)^2/(2B_D)]$, which is at most $\epsilon_D$.
\end{proof}

\begin{theorem}[Index privacy]
\label{thm:alice_multiphoton}
Let Bob be honest, let (P) hold, and let the channel be noiseless (loss allowed).  With the click rules and the count of Step~\ref{QSPIR:initial},
\begin{equation}
    \SecretA\le\frac{0.78\,B_D}{\num}+\epsilon_D .
    \label{eq:epsA_multiphoton}
\end{equation}
\end{theorem}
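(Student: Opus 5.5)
The plan is to combine Lemma~\ref{lem:alice_reduction}, Lemma~\ref{lem:photon_number} and Lemma~\ref{lem:multiclick_count} by a coupling argument.

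First split on the event $\mathcal{G}=\{\sum_ac_a<B_D\}$. By Lemma~\ref{lem:multiclick_count}, the complement of $\mathcal{G}$ together with $D\le\tau_D$ has probability at most $\epsilon_D$. On $D>\tau_D$ an honest Bob aborts at the end of Step~\ref{QSPIR:initial}, before any index-dependent message, so Alice's state is then independent of $\pidx$. Hence it suffices to bound the contribution of $\mathcal{G}$ and add $\epsilon_D$. Since the index-dependent part of Alice's final state is a bounded function in $[0,1]$ of the run, restricting to $\mathcal{G}$ costs exactly this additive term.

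Next, on the retained positions I build a coupling. For every data-acquisition position $a$, Lemma~\ref{lem:photon_number} supplies a variable $Y_a$ that is $\mathrm{Bern}(\USDprob)$ given Alice's view and the multi-click indicator. It agrees with the true conclusive flag $X_a$, on a correctly guessed kept position, except with probability at most $0.39\,c_a$. I would draw the $Y_a$ with fresh randomness per position, so that, conditioned on Alice's view (her preparations, photon numbers, Bob's announced guesses and the permutation), they are i.i.d.\ $\mathrm{Bern}(\USDprob)$. Here assumption (P) and the truthful confirmations guarantee that the retained positions are exactly the correctly guessed ones. After the joint permutation, block $j$'s honest surrogate pattern $\{Y_a\}$ then has exactly the law $\mathrm{Bern}(\USDprob)^{\otimes\numd}$. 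By the coupling inequality,
\begin{equation*}
 \mathrm{TV}_j\le\Pr[\exists a\in\text{block }j:\ X_a\ne Y_a]\le 0.39\sum_{a\in\text{block }j}c_a .
\end{equation*}
A subtle point is that Lemma~\ref{lem:photon_number} is stated with $\Pr[\text{guess correct},X\ne Y]$, and the $c_a$ are conditional on the past. I would therefore apply the coupling adaptively position by position and use linearity of expectation, which avoids needing independence of the disagreement events.

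Finally, I sum over blocks. Lemma~\ref{lem:alice_reduction} gives
\begin{equation*}
 \SecretA\le\frac{2}{\num}\,\EX\sum_j\mathrm{TV}_j\le\frac{2\cdot0.39}{\num}\,\EX\Big[\sum_a c_a\,\mathbf 1_{\mathcal G}\Big]\le\frac{0.78\,B_D}{\num}.
\end{equation*}
Adding $\epsilon_D$ yields the claim.

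The main obstacle is the stopping argument: $\sum_ac_a$ is random and adaptively chosen by Alice, so bounding it by $B_D$ on $\mathcal{G}$ must be done inside the expectation. I would mirror the stopping time in the proof of Lemma~\ref{lem:multiclick_count}. Concretely, I would truncate Alice's strategy at the first position where $C_a\ge B_D$, treating positions after this stopping time as charged to the $\epsilon_D$ event. I would also check that the conditioning in Lemma~\ref{lem:alice_reduction} on Alice's view, which includes photon numbers, is compatible with the coupling being fixed before the permutation. This holds because the permutation is independent of the conclusive record.
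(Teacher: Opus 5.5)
Your proposal is essentially the paper's proof. The ingredients match: you replace the conclusive flags at retained positions by the $Y$ of Lemma~\ref{lem:photon_number}, bound the expected number of disagreements by $r\,\EX\sum_{a<T}c_a<rB_D$, let the uniform private block draw divide this by $\num$ while the two indices double it, and add $\Pr[E]\le\epsilon_D$ from Lemma~\ref{lem:multiclick_count}. The stopping-time truncation of your last paragraph is needed, not optional. $\mathbf 1_{\mathcal G}$ depends on future $c_a$, so your displayed bound $\EX[\sum_a c_a\mathbf 1_{\mathcal G}]$ cannot be reached through the position-by-position estimate. By contrast, $\{a<T\}$ is fixed before position $a$ and dominates $\mathbf 1_{\mathcal G}$, which is exactly how the paper does it.
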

\begin{proof}
Give Alice the photon number of every position, which she may choose adaptively.  Let $T$ be the first data-acquisition position at which $\sum_{b\le a}c_b\ge B_D$, and let $E$ be the event that Bob does not abort and $T$ exists.  By Lemma~\ref{lem:multiclick_count}, $\Pr[E]\le\epsilon_D$, and $E$ does not depend on $\pidx$.  Outside $E$, Bob either aborts, and Alice learns nothing about $\pidx$, or every retained position lies before $T$.  There, replace Bob's conclusive flag at every retained position by the $Y$ of Lemma~\ref{lem:photon_number}.  In this hybrid every retained position is conclusive with probability $\USDprob$, independently of everything Alice sees, so her final state does not depend on $\pidx$ (Proposition~\ref{prop:index_privacy_honest}).  Bob's messages differ from the hybrid only if the two flags differ at a retained position of his block, which is uniform and private.  By (P) Alice confirms only correct guesses, and Lemma~\ref{lem:photon_number}, applied given the past and linear in the photon-number distribution, bounds the expected number of differing retained positions before $T$ by $r\,\EX\sum_{a<T}c_a<rB_D$.  Hence $\SecretA\le\Pr[E]+2rB_D/\num$, and $2r=0.78$.
\end{proof}

For an ideal single-photon source, $\tau_D=0$ and $B_D=2\ln(1/\epsilon_D)$, so the bound is $0.002$ at $\num=10^{4}$ and $\epsilon_D=10^{-6}$.  With one photon per position, $\SecretA=0$ over any channel whose noise does not distinguish the four states and acts alike at every position (Proposition~\ref{prop:index_privacy_honest}).  The constant $r$ uses the multi-clicks that Alice's pulses cause under a wrong guess as well; counting only those under a correct guess would give $0.914$ in place of $0.390$.  The count costs an honest Bob nothing with a single-photon source.  With weak coherent pulses the honest source produces rare multi-photon detections, which Bob keeps, and the count must allow for them (Section~\ref{sec:decoy_state}).

\section{Proofs for the Decoy-State Analysis}
\label{app:decoy_proofs}

This appendix collects the proofs of Lemmas~\ref{lem:beff} and~\ref{lem:beff_plus}, Propositions~\ref{prop:four_subs} and~\ref{prop:db_privacy_decoy}, and Theorem~\ref{thm:decoy_success} of Section~\ref{sec:decoy_state}.

\begin{proof}[Proof of Lemma~\ref{lem:beff}]
The $N$-fold copies have overlap $\big(\bra{0}\ket{+}\big)^{N}=\nu^{N}$, so by the Ivanovi\'c--Dieks--Peres bound for two equiprobable pure states the optimal unambiguous success probability is $1-|\nu^{N}|$~\cite{Ivanovic_USD, Dieks_USD, Peres_USD}; the two alternatives are equiprobable because Alice prepares the four states uniformly. A coherent state of mean photon number $\mu_1$ transmitted through a channel of transmittance $\eta$ remains coherent with mean $\eta\mu_1$, so the number of photons reaching Bob is a $\mathrm{Poisson}$ distribution with mean $\eta\mu_1$, $p_N(\eta\mu_1)=e^{-\eta\mu_1}(\eta\mu_1)^N/N!$, with the Poisson weights $p_N(x):=e^{-x}x^N/N!$. Using $\sum_{N\ge0}p_N(\eta\mu_1)\nu^{N}=e^{-\eta\mu_1}\,e^{\eta\mu_1 \nu}=e^{-\eta\mu_1(1-\nu)}$ and $1-\nu=\USDprob$,
\begin{equation}
\begin{split}
    \pUSDeff&=\sum_{N\ge0}p_N(\eta\mu_1)\pUSDN{N}=1-\sum_{N\ge0}p_N(\eta\mu_1)\nu^{N}\\
    &=1-e^{-\eta\mu_1(1-\nu)}=1-e^{-\USDprob\,\eta\mu_1}.
\end{split}
\end{equation}
Conditioning on a click removes the vacuum weight $p_0(\eta\mu_1)=e^{-\eta\mu_1}$, giving $\pUSDsift=\pUSDeff/(1-p_0(\eta\mu_1))$; a first-order expansion in $\eta\mu_1$ yields the stated form. Finally, $\pUSDsift\ge \USDprob$ for all $\eta\mu_1>0$: writing $x=\eta\mu_1$, the claim is $g(x):=(1-\USDprob)-e^{-\USDprob x}+\USDprob\,e^{-x}\ge0$, and indeed $g(0)=0$ while $g'(x)=\USDprob\,(e^{-\USDprob x}-e^{-x})>0$ for $x>0$ since $\USDprob<1$.
\end{proof}

\begin{proof}[Proof of Lemma~\ref{lem:beff_plus}]
Among the detected signal events (per-pulse probability $\bar Q_{X,\mu_1}$, the \emph{true} gain), the vacuum and single-photon detected fractions are $e^{-\mu_1}Y_{X,0}/\bar Q_{X,\mu_1}$ and $\mu_1 e^{-\mu_1}Y_{X,1}/\bar Q_{X,\mu_1}$, so the multi-photon detected fraction is $\Delta=1-(e^{-\mu_1}Y_{X,0}+\mu_1 e^{-\mu_1}Y_{X,1})/\bar Q_{X,\mu_1}$. Three replacements turn this into the computable $\Delta^{+}$ of Eq.~\eqref{eq:Delta_plus}, each valid in the enlarging direction on a one-sided Hoeffding event that fails with probability at most $\epsilon_{\text{sec}}/\chi(3)$~\cite{Decoy_state_key_rate1}. First, the yields are replaced by their decoy lower bounds: in the three-intensity forms, $Y_{X,0}^{-}$ rests on two one-sided fluctuated gains ($Q_{X,\mu_3}$ from below, $Q_{X,\mu_2}$ from above) and $Y_{X,1}^{-}$ on three more ($Q_{X,\mu_2}$ from below, $Q_{X,\mu_3}$ and $Q_{X,\mu_1}$ from above; the opposite sides of $Q_{X,\mu_2},Q_{X,\mu_3}$ are distinct events). Second, the unobservable true gain in the denominator is replaced by the upward-fluctuated observed gain, valid on the event $Q_{X,\mu_1}+\Delta Q_{X,\mu_1}\ge\bar Q_{X,\mu_1}$ --- without this fluctuation an observed gain lying \emph{below} the true one would break the direction $\Delta^{+}\ge\Delta$. The denominator event coincides with the upward $Q_{X,\mu_1}$ event already used in $Y_{X,1}^{-}$, so the union bound runs over five distinct one-sided events, and $\Delta^{+}\ge\Delta$ holds except with probability $\le5\epsilon_{\text{sec}}/\chi(3)$; clipping to $[0,1]$ preserves the bound. For \eqref{eq:beff_plus}, let $f_N$ be the fraction of detected symbols carrying exactly $N$ photons and $f_{\ge2}:=\sum_{N\ge2}f_N$ the multi-photon fraction, so that the vacuum, single- and multi-photon fractions satisfy $f_0+f_1+f_{\ge2}=1$ and $f_{\ge2}\le\Delta^{+}$. A vacuum (dark-count) detection is conclusive with probability $0$, a single-photon detection with probability at most $\USDprob$ by the additional conditional hypothesis of Lemma~\ref{lem:beff_plus}, and a multi-photon detection with probability at most $1$. Hence
\begin{equation}
\begin{split}
    \pUSDsift&\le f_1 \USDprob+f_{\ge2}\\
    &\le\ (1-f_{\ge2})\,\USDprob+f_{\ge2}=\USDprob+f_{\ge2}(1-\USDprob)\\
    &\le\ \USDprob+\Delta^{+}(1-\USDprob),
\end{split}
\end{equation}
using $f_1\le1-f_{\ge2}$ and $\USDprob<1$.
\end{proof}

\begin{proof}[Proof of Proposition~\ref{prop:four_subs}]
(S1) is definitional: the downstream analysis of Section~\ref{subsec:reconciliation} enters the QKD layer only through the key length of the pads, which $R_{\text{decoy}}$ converts into a number of pulses; the margin $\hmarg$ is sized with the conclusive-bit error rate $\tilde{\ebit}$ of item (S4), not with a QKD error rate. Substituting the decoy-certified values for these scalars changes no inequality in the derivation.  For $r_{\text{decoy}}\le r$, read $r$ with $\ebit\mapsto e_{X,1}$ as in (S4).  $R_{\text{decoy}}$ credits only the single-photon and vacuum contributions~\cite{GLLP_security}; write $f_0,f_1,f_{\ge2}$ for the vacuum, single-photon and multi-photon fractions of the sifted detections, so that the overall error rate is $e_{\text{tol}}=f_0/2+f_1e_{X,1}+f_{\ge2}e_{\ge2}$, with $e_{\ge2}$ the error rate of the multi-photon detections.  Concavity of $H_2$ gives $H_2(e_{\text{tol}})\ge f_0+f_1H_2(e_{X,1})$, so for $\lambda\ge1$ the asymptotic rate obeys $r_{\text{decoy}}=f_0+f_1[1-H_2(e_p)]-\lambda H_2(e_{\text{tol}})\le f_1[1-H_2(e_{X,1})-H_2(e_p)]\le r$ whenever $r\ge0$. (S2) is Proposition~\ref{prop:cert_rate_wcp} and Proposition~\ref{prop:db_privacy_decoy}.
\end{proof}

\begin{proof}[Proof of Proposition~\ref{prop:db_privacy_decoy}]
On the event of Proposition~\ref{prop:cert_rate_wcp}, hand Bob the flags of Lemma~\ref{lem:erasure_domination}, a retained multi-photon position being charged as known.  The flags are independent given his record (Lemma~\ref{lem:decoy_sample}), their mean rate on the retained positions is at most $\pknown^{\mathrm{WCP}}$, and the blocks are a uniform partition of these positions. So Lemma~\ref{lem:overlap_mgf} applies, and the proof of Eq.~\eqref{eq:EW_common} in Appendix~\ref{app:simulator_proof} goes through verbatim; its right side increases with the known-rate. Theorem~\ref{thm:success_protocol}(ii) adds $\epsilon_{\mathrm{rnd}}$, and the two certificates add $\epsilon_{\mathrm{PE}}+\epsilon_\text{cert}$, as in Theorem~\ref{thm:cert_privacy}.
\end{proof}

\begin{proof}[Proof of Theorem~\ref{thm:decoy_success}]
By Proposition~\ref{prop:four_subs} the substitutions $r\mapsto r_{\text{decoy}}$, $\pknown\mapsto\pknown^{\mathrm{WCP}}$, $(e_b, e_p)\mapsto(e_{X,1}, e_p)$ enter Theorem~\ref{thm:success_protocol} only through the known-rate in the size rule of Eq.~\eqref{eq:size_rule}, which becomes Eq.~\eqref{eq:overhead_decoy}, through $\tilde{e}_b$ in Eq.~\eqref{eq:syndrome_length}, and through the key cost $\ell/R_{\text{decoy}}$ of item (S1), which replaces Eq.~\eqref{eq:nums_sifting}.  The conversion of the key length to a number of pulses has its only pole at $R_{\text{decoy}}=0$, and $R_{\text{decoy}}>0$ exactly when $e_p<e_{\text{thr}}^{\text{decoy}}$, so a finite cost exists throughout that range. Eve-secrecy holds because $r_{\text{decoy}}$ counts only decoy-certified single-photon and vacuum contributions, so the pads are secret against PNS.  Database privacy is Proposition~\ref{prop:db_privacy_decoy} together with Lemma~\ref{lem:seeded_parity}.  For an Alice who uses the prescribed source, index privacy is unaffected because all data pulses carry the single intensity $\mu_1$, so every kept position is conclusive with the same probability (Proposition~\ref{prop:index_privacy_honest}), and the estimates entering $\pknown^{\mathrm{WCP}},r_{\text{decoy}},e_p$ use only sacrificed test and QKD-mode slots, which never enter a block.
\end{proof}

\subsection{Intensity-Resolved Measurement Test}
\label{app:decoy_receiver_test}
Here we state in full the test summarized in Section~\ref{sec:decoy_receiver_test}.

For the WCP variant, use a fixed emitted batch. Bob fixes the received or lost list, QKD or data modes, and guessed sets before Alice samples the test slots or reveals preparation and intensity labels.

After the storage-enforcing wait, Alice samples uniformly from emitted slots. Bob opens the six declaration types of Table~\ref{tab:usd_test} on sampled data slots before learning their labels. All sampled slots are discarded, including QKD-mode slots.

For each intensity $\mu$ and preparation $s$, Alice records those six outcomes and an exclusion outcome $\perp$ for losses or non-data modes. Frequencies are normalized by the expected emitted test count $p_\mu n_\text{t}/4$ of that type.

Alice combines the intensity groups through
\begin{equation}
    Q_{\mu,s,d}=\sum_{N\ge0}e^{-\mu}\frac{\mu^N}{N!}Y_{N,s,d},
    \qquad d\in\mathcal{D}_\text{dec}\cup\{\perp\},
    \label{eq:decoy_declaration_gain}
\end{equation}
where $\mathcal{D}_\text{dec}$ contains the six declarations and $N$ is the emitted photon number. At fixed $N,s$, the source state is independent of its hidden intensity label~\cite{Ma_Qi_Zhao_Lo_decoy}.

Thus decoys bound the single-photon declaration statistics. Alice compares that inferred component with Table~\ref{tab:usd_test}; the different WCP intensity groups need not reproduce the same table directly.

Under assumptions (A1)--(A3) of Appendix~\ref{app:usd_certification}, read pulse by pulse, Appendix~\ref{app:decoy_receiver_certificate} gives intervals $L_{s,d}\le\tilde Y_{1,s,d}\le U_{s,d}$ for the single-photon yields \emph{of the test sample} (Lemma~\ref{lem:decoy_sample}), with total failure at most $\epsilon_{\mathrm{PE}}$.  A fixed instrument is not needed.

Let $T_{s,d}$ be the ideal entry of Table~\ref{tab:usd_test}, including the guessed-set factor $1/2$. For a reference yield $Y_\text{ref}>0$, set
\begin{equation}
\begin{split}
    \delta_1=\frac{1}{Y_\text{ref}}\max_{\substack{s\\d\in\mathcal{D}_\text{dec}}}
      \max\big\{&|L_{s,d}-Y_\text{ref}T_{s,d}|,\\
                 &|U_{s,d}-Y_\text{ref}T_{s,d}|\big\}.
\end{split}
\label{eq:decoy_single_deviation}
\end{equation}
The choice of $Y_\text{ref}$ may be optimized within this certificate. Unlike the raw tolerance $\delta_{\mathrm{test}}$, $\delta_1$ includes both decoy and sampling uncertainty.

Let $R$ be the event that a slot is reported in the data mode with a correct guessed set, before the final uniform subsampling, and let $X$ be the bit value of such a slot. Let $V$ contain Bob's full private single-signal record, guessed set, confirmation, and intensity label, before the announcements of Step~\ref{QSPIR:final_step}.

Lemma~\ref{lem:guessing_bound} of Appendix~\ref{app:usd_certification}, applied as in Appendix~\ref{app:decoy_receiver_certificate}, proves $p_{\mathrm{guess}}(X\mid V,N=1,R)\le G(\delta_1)$, including selective loss and private refinements of declared outcomes. Here $G$ is given in Eq.~\eqref{eq:decoy_G_bound}.

The photon fraction must refer to this same retained event. With $g(s)$ the prepared set and $g(d)$ the declared set, define
\begin{equation}
\begin{split}
    r_N&=\frac14\sum_s\sum_{\substack{d\in\mathcal{D}_\text{dec}\\g(d)=g(s)}}Y_{N,s,d},\\
    Q_R(\mu)&=\sum_{N\ge0}e^{-\mu}\frac{\mu^N}{N!}r_N,\\
    \Delta_R^+&=\operatorname{clip}_{[0,1]}
      \left(1-\frac{e^{-\mu_1}r_0^-+\mu_1e^{-\mu_1}r_1^-}
                        {Q_R^+(\mu_1)}\right).
\end{split}
\label{eq:decoy_retained_tag}
\end{equation}
The bounds follow from the same declaration intervals. Abort on inconsistent intervals, a zero denominator, or an insufficient retained-yield lower bound.

Vacuum bits remain unbiased in the ideal source model. Crediting every retained multiphoton bit as known gives the replacement information certificate
\begin{equation}
\begin{split}
    p_{\mathrm{guess}}(X\mid V,R)
       &\le g_R^{\mathrm{cert}},\\
    g_R^{\mathrm{cert}}
       &:=(1-\Delta_R^+)G(\delta_1)+\Delta_R^+ .
\end{split}
\label{eq:decoy_retained_guess}
\end{equation}
Accept only if $g_R^{\mathrm{cert}}$ meets a pre-agreed target, namely that the size rule of Eq.~\eqref{eq:size_rule} holds at the certified rate of Eq.~\eqref{eq:pk_wcp}. Alice then confirms correct guesses on untested signal-intensity data and uniformly retains $\numd\num$ positions, aborting if too few remain.

This yields $H_{\min}(X\mid V,R)\ge-\log_2g_R^{\mathrm{cert}}$ before the announcements of Step~\ref{QSPIR:final_step}~\cite{Koenig_operational_min_entropy}.

\subsection{Receiver-information certificate}
\label{app:decoy_receiver_certificate}
Read (A2) pulse by pulse: Bob measures pulse $i$ by a POVM $\{\Pi^{(i)}_{y_i|y_{<i}}\}$ on all its photons, chosen from his earlier outcomes.  The decoy equations survive such a receiver, because they hold for the test sample itself.

\begin{lemma}[Decoy structure on the sample]
\label{lem:decoy_sample}
Let slot $i$ carry the intensity $\mu(i)$, the state $s_i$ and $N_i$ photons, and put $\bar p_N=\sum_\mu p_\mu p_N(\mu)$ and $P_i(s,N)=\Pr[s_i=s,N_i=N\mid y]$.  Under (A2) and (A3), the slots are independent given Bob's record $y$, and $\Pr[\mu(i)=\mu\mid N_i=N,y]=p_\mu p_N(\mu)/\bar p_N$.  Hence, for every test set $\mathcal{T}$ and declaration $d$,
\begin{equation}
\begin{split}
    \tilde Q_{\mu,s,d}&:=\frac{4}{p_\mu n_\text{t}}\sum_{\substack{i\in\mathcal{T}\\D_i=d}}\Pr[\mu(i)=\mu,s_i=s\mid y]\\
    &\;=\sum_{N\ge0}p_N(\mu)\,\tilde Y_{N,s,d},\\
    \tilde Y_{N,s,d}&:=\frac{4}{n_\text{t}\bar p_N}\sum_{\substack{i\in\mathcal{T}\\D_i=d}}P_i(s,N)\ \ge0,
\end{split}
\label{eq:decoy_sample_gain}
\end{equation}
and $\tilde Y_{1,s,d}=\operatorname{Tr}[M_d\rho_s]$ for positive operators $M_d$ on the one-photon space.
\end{lemma}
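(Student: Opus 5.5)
The plan has three parts: first the factorization, then the conditional law of the intensity label, and finally the linear algebra that turns posteriors into the decoy identity of Eq.~\eqref{eq:decoy_sample_gain}.

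\emph{Factorization.} Write the joint probability of all labels and the record as a product over slots, as in the proof of Lemma~\ref{lem:cert_independence}. Slot $i$ carries intensity $\mu(i)$ with prior $p_\mu$, state $s_i$ with prior $\tfrac14$, and, given the intensity, photon number $N_i$ with law $p_N(\mu)$. The emitted phase-randomized state of slot $i$ is $\rho_{s_i}^{\otimes N_i}$ on the $N_i$-photon sector. By (A2), read pulse by pulse, outcome $y_i$ arises from a POVM $\Pi^{(i)}_{y_i|y_{<i}}$ chosen from past outcomes only. We therefore get
\[
\Pr[\{\mu(i),s_i,N_i\},y]=\prod_i p_{\mu(i)}\,\tfrac14\,p_{N_i}(\mu(i))\operatorname{Tr}\big[\Pi^{(i)}_{y_i|y_{<i}}\rho_{s_i}^{\otimes N_i}\big],
\]
with (A3) guaranteeing that honest Alice's labels are fresh i.i.d.\ draws. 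For fixed $y$ each factor involves only slot $i$'s labels, so the slots are conditionally independent given $y$. The factor depends on $\mu(i)$ only through $p_{\mu(i)}p_{N_i}(\mu(i))$, because the trace term sees only $(s_i,N_i)$; this is the phase-randomization fact quoted from Ref.~\cite{Ma_Qi_Zhao_Lo_decoy}. Conditioning on $N_i=N$ then gives $\Pr[\mu(i)=\mu\mid N_i=N,y]=p_\mu p_N(\mu)/\bar p_N$, independent of $s_i$ and of $y$.

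\emph{Decoy identity.} Expand $\Pr[\mu(i)=\mu,s_i=s\mid y]=\sum_N P_i(s,N)\,p_\mu p_N(\mu)/\bar p_N$. Multiply by $4/(p_\mu n_\text{t})$ and sum over $i\in\mathcal{T}$ with $D_i=d$; this yields $\sum_N p_N(\mu)\tilde Y_{N,s,d}$ exactly, with $\tilde Y\ge0$ since posteriors are nonnegative. The set $\mathcal{T}$ and the declarations $D_i$ may depend on $y$, but they are fixed once we condition on $y$, so the identity holds for every $\mathcal{T}$ and $d$. I would state this explicitly, since a selectively chosen test set could otherwise look problematic.

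\emph{Operator form.} For $N=1$, Bayes' rule gives $P_i(s,1)\propto\operatorname{Tr}[\Pi^{(i)}\rho_s]\,\bar p_1/4$ restricted to the one-photon sector. The normalization is by $\Pr[y_i\mid y_{<i}]$, summed over all photon numbers. I would absorb that positive constant into the operator: set
\[
M_d=\frac{1}{n_\text{t}}\sum_{i\in\mathcal{T},D_i=d}c_i\,\Pi^{(i)}_{1},
\]
where $\Pi^{(i)}_1$ is the compression of the effect to the one-photon space and $c_i\ge0$ is the reciprocal normalization. Then $\tilde Y_{1,s,d}=\operatorname{Tr}[M_d\rho_s]$ with $M_d\ge0$. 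Completeness of $M_d$ is not claimed and, as in Lemma~\ref{lem:cert_sample}, not needed.

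The main obstacle is this normalization bookkeeping. The posterior $P_i(s,N)$ is normalized over all $(s,N)$ jointly, so I must check that the constant $c_i$ is the same for all four $s$. This holds because the normalization depends only on $y$, not on $s$, and so the $s$-dependence is carried entirely by $\operatorname{Tr}[\Pi^{(i)}_1\rho_s]$.
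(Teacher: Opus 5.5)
Your proposal is correct and follows the paper's proof essentially step for step. The steps are the same: the product form of the joint law of labels and record, obtained from phase randomization and (A2); the slot-wise conditional law of the intensity label, which, as you note, also holds given $s_i$ (the paper leaves this implicit in ``summing over $N$''); and the operator $M_d=\frac{1}{n_\text{t}}\sum_{i\in\mathcal{T},\,D_i=d}\Pi_1^{(i)}/Z_i$ with $Z_i=\Pr[y_i\mid y_{<i}]$, which is exactly your $c_i$.
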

\begin{proof}
A phase-randomized pulse depends on its intensity only through its photon number~\cite{Ma_Qi_Zhao_Lo_decoy}, so
\begin{equation*}
\begin{split}
    \Pr[(s_i,N_i,\mu(i))_i,y]&=\prod_i\tfrac14p_{\mu(i)}p_{N_i}(\mu(i))\\
    &\quad\times\operatorname{Tr}\big[\Pi^{(i)}_{y_i|y_{<i}}\rho_{s_i}^{\otimes N_i}\big],
\end{split}
\end{equation*}
a product of one-slot factors; summing over $N$ gives Eq.~\eqref{eq:decoy_sample_gain}.  For $N=1$, $P_i(s,1)=\tfrac14\bar p_1\operatorname{Tr}[\Pi_1^{(i)}\rho_s]/Z_i$, with $\Pi_1^{(i)}\ge0$ the one-photon part of the effect and $Z_i$ the probability of $y_i$ given $y_{<i}$, so $M_d=\tfrac{1}{n_\text{t}}\sum_{i\in\mathcal{T},D_i=d}\Pi_1^{(i)}/Z_i$.
\end{proof}

Note that Eq.~\eqref{eq:decoy_sample_gain} is Eq.~\eqref{eq:decoy_declaration_gain} for the sample.  Given $y$, the number $X$ of test slots of type $(\mu,s)$ with declaration $d$ is therefore a sum of independent indicators with mean $\tfrac14p_\mu n_\text{t}\tilde Q_{\mu,s,d}$.  These cells are rare events, for which Hoeffding's inequality is wasteful; the Chernoff bounds, which also hold for a sample drawn without replacement~\cite{Hoeffding_1963}, give
\begin{equation}
    X-\sqrt{2X\Lambda}\;\le\;\EX X\;\le\;X+\Lambda+\sqrt{2X\Lambda+\Lambda^{2}},
    \label{eq:decoy_declaration_intervals}
\end{equation}
each side except with probability $e^{-\Lambda}$.  With $\Lambda=\ln(2K/\epsilon_{\mathrm{PE}})$ for the $K=3\cdot4\cdot7=84$ cells, they give bounds $Q_{\mu,s,d}^{\pm}$ that hold together except with probability $\epsilon_{\mathrm{PE}}$.  For each $s,d$ (indices suppressed, $\mu_3=0$) put
\begin{equation}
\begin{split}
 J&=e^{\mu_2}Q_{\mu_2}^- -\frac{\mu_2^2}{\mu_1^2}e^{\mu_1}Q_{\mu_1}^+-\Big(1-\frac{\mu_2^2}{\mu_1^2}\Big)Q_0^+,\\
 L&=\max\{0,\mu_1J/(\mu_1\mu_2-\mu_2^2)\},\\
 U&=(e^{\mu_2}Q_{\mu_2}^+-Q_0^-)/\mu_2.
\end{split}
\label{eq:decoy_outcome_Y1_bounds}
\end{equation}
Then $L\le\tilde Y_1\le U$ for every such receiver: the lower expression cancels $\tilde Y_0,\tilde Y_2$ and has nonpositive coefficients for $\tilde Y_{N\ge3}$, and the upper one follows from $e^{\mu_2}\tilde Q_{\mu_2}-\tilde Q_0\ge\mu_2\tilde Y_1$.

On this event, $|\operatorname{Tr}(\rho_sM_d)-Y_\text{ref}T_{s,d}|\le Y_\text{ref}\delta_1$ in all $24$ cells by Eq.~\eqref{eq:decoy_single_deviation}, so Lemma~\ref{lem:guessing_bound} applies to the operators $M_d/Y_\text{ref}$.  Rescaling every $M_d$ by the same factor leaves $p_{\mathrm{guess}}$ unchanged, so every private refinement guesses the bit of a retained single-photon position with probability at most $G(\delta_1)$.  For retained photon fractions $f_0,f_1,f_{\ge2}$, vacuum is unbiased and $f_{\ge2}\le\Delta_R^+$; since $G\ge1/2$, $p_{\mathrm{guess}}\le f_0/2+f_1G(\delta_1)+f_{\ge2}\le(1-\Delta_R^+)G(\delta_1)+\Delta_R^+$.  This proves Eq.~\eqref{eq:decoy_retained_guess}, except on an event of bad intervals that has probability at most $\epsilon_{\mathrm{PE}}$.

We now turn this certificate into a known-rate, as Proposition~\ref{prop:cert_rate} does for single photons.  Let $N_\text{c}$ be the number of untested slots, let $\bar r^{-}=p_{\mu_1}Q_R^{-}(\mu_1)$ be the lower bound on the retained fraction of the test slots that the twelve correct-set signal cells give, let $\Lambda_\text{c}=\ln(6/\epsilon_\text{cert})$, and set
\begin{equation}
\begin{split}
    \pknown^{\mathrm{WCP}}&:=\frac{q_\text{T}+\varsigma}{1-\varsigma},\\
    q_\text{T}&:=(1-\Delta_R^{+})\min\{1,B(\delta_1)\}+\Delta_R^{+},\\
    \varsigma&:=\frac{\Lambda_\text{c}+\sqrt{2n_\text{t}\bar r^{-}\Lambda_\text{c}+\Lambda_\text{c}^{2}}}{n_\text{t}\bar r^{-}}+2\,\frac{\Lambda_\text{c}+\sqrt{2N_\text{c}\bar r^{-}\Lambda_\text{c}}}{N_\text{c}\bar r^{-}}.
\end{split}
\label{eq:pk_wcp_full}
\end{equation}

\begin{proposition}[Certified known-rate under WCP]
\label{prop:cert_rate_wcp}
Assume (A1)--(A3), read pulse by pulse.  Let $\bar q$ be the mean flag rate of Lemma~\ref{lem:erasure_domination} over the untested signal slots that Bob reports in the data mode with a correct set guess.  Then $\Pr[\text{test passes and }\bar q>\pknown^{\mathrm{WCP}}]\le\epsilon_{\mathrm{PE}}+\epsilon_\text{cert}$.
\end{proposition}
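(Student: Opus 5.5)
The plan follows the proof of Proposition~\ref{prop:cert_rate} step by step, with Chernoff intervals in place of Hoeffding margins and the decoy certificate in place of Lemma~\ref{lem:cert_sample}. I first fix the failure budget. The $84$ declaration intervals of Eq.~\eqref{eq:decoy_declaration_intervals} together fail with probability at most $\epsilon_{\mathrm{PE}}$. The remaining $\epsilon_\text{cert}$ is split into six one-sided Chernoff events, each of probability $\epsilon_\text{cert}/6$, which accounts for $\Lambda_\text{c}=\ln(6/\epsilon_\text{cert})$.

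On the good interval event I argue as in Appendix~\ref{app:decoy_receiver_certificate}. By Lemma~\ref{lem:decoy_sample}, $L\le\tilde Y_1\le U$ holds for the sample itself, so Lemma~\ref{lem:guessing_bound} applies to the rescaled $M_d/Y_\text{ref}$. Every retained multi-photon slot is charged as known, at most a fraction $\Delta_R^+$, and vacuum slots are unbiased. I then pass from guessing probabilities to flags with Lemma~\ref{lem:erasure_domination}: the flag rate at slot $i$ is $q_i=\alpha_i/r_i$, with $\alpha_i$ and $r_i$ defined from the posteriors $P_i$ now taken over states and photon numbers. The single-photon part of the argument of Lemma~\ref{lem:cert_sample} then gives, on the test sample,
\[
\sum_{i\in\mathcal{T}}\alpha_i\le q_\text{T}\sum_{i\in\mathcal{T}}r_i .
\]
To get this, I split each $\alpha_i$ and $r_i$ by photon number. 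This is legitimate because $P_i(s,N)$ factorizes slot by slot (Lemma~\ref{lem:decoy_sample}), and both $\alpha_i$ and $r_i$ are linear in the posteriors. The $N=1$ part is bounded by $\min\{1,B(\delta_1)\}$ times its $r$-mass, the $N\ge2$ part by its full $r$-mass, and the $N\ge2$ share of the retained mass is at most $\Delta_R^+$.

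Next comes the transfer from the test sample to the untested slots. Given $y$, the quantities $\alpha_i$ and $r_i$ are fixed before $\mathcal{T}$ is drawn. I therefore use Chernoff bounds for sampling without replacement, which hold by Hoeffding's 1963 argument, to move the sums of $\alpha_i$ and $r_i$ from $\mathcal{T}$ to its complement. This is the first term of $\varsigma$, a relative error normalized by the retained count $n_\text{t}\bar r^-$. On the complement, the retention indicators $I_i\sim\mathrm{Bern}(r_i)$ are independent given $y$, and Eq.~\eqref{eq:alpha_r_identity} holds. Two one-sided multiplicative Chernoff bounds on $\sum I_iq_i$ and $\sum I_i$ give the second term of $\varsigma$, the factor $2$ counting the two bounds. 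These are relative deviations normalized by $N_\text{c}\bar r^-$, which is why $\varsigma$ enters as $(q_\text{T}+\varsigma)/(1-\varsigma)$ rather than as additive margins. Throughout I replace $\bar r$ by its certified lower bound $\bar r^-=p_{\mu_1}Q_R^-(\mu_1)$, coming from the twelve correct-set signal cells. Monotonicity of the ratio in $\bar r$, as in step~(d) of Proposition~\ref{prop:cert_rate}, makes this the worst case.

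The main obstacle is this transfer step. Chernoff bounds are naturally stated in terms of the unknown true mean, and I need them in the relative form with the certified $\bar r^-$ substituted. I would first try writing each deviation as $\Lambda_\text{c}+\sqrt{2\mu\Lambda_\text{c}+\Lambda_\text{c}^2}$ with $\mu\ge n_\text{t}\bar r^-$ and dividing by $\mu$, using that the relative deviation is decreasing in $\mu$. A second concern is that the retained population is restricted to signal-intensity slots, which must be checked to keep the conditional independence of Lemma~\ref{lem:decoy_sample} and to be sampled uniformly by $\mathcal{T}$. Collecting all failure events completes the bound $\epsilon_{\mathrm{PE}}+\epsilon_\text{cert}$.
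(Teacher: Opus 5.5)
Your outline has the structure of the paper's proof: the photon-number split with a vanishing vacuum term, Lemma~\ref{lem:guessing_bound} applied to $M_d/Y_\text{ref}$, and the cap $\Delta_R^+$ on the multi-photon mass, which give $\sum_{i\in\mathcal{T}}\alpha_i\le q_\text{T}\sum_{i\in\mathcal{T}}r_i$. Then come steps (b)--(d) of Proposition~\ref{prop:cert_rate} with the intervals of Eq.~\eqref{eq:decoy_declaration_intervals} and the worst case $\sum_{i\in\mathcal{T}}r_i=n_\text{t}\bar r^-$. Your ``second concern'' is settled in the paper by a definition, and your proof needs it. The intensity goes into both numbers: $r_i=\Pr[\mu(i)=\mu_1,s_i\in S_{G_i}\mid y]$ and $\alpha_i=|\Pr[\mu(i)=\mu_1,s_i=u_0\mid y]-\Pr[\mu(i)=\mu_1,s_i=u_1\mid y]|$, with $r_i=\alpha_i=0$ for slots not reported in the data mode, and the population is all emitted slots. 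Then $\mathcal{T}$ is a uniform sample of that population, and the retention indicators (signal intensity and correct set) are independent $\mathrm{Bern}(r_i)$ given $y$ by Lemma~\ref{lem:decoy_sample}. Moreover $\alpha_i/r_i$ is Bob's bias after a confirmation that also tells him $\mu(i)=\mu_1$, and $\frac{1}{n_\text{t}}\sum_{i\in\mathcal{T}}r_i$ is the sample version of $p_{\mu_1}Q_R(\mu_1)$, which is what $\bar r^-$ bounds from below. With $r_i,\alpha_i$ built from posteriors over states and photon numbers only, as you write, the retention indicators are not $\mathrm{Bern}(r_i)$. Nor is $\alpha_i/r_i$ Bob's post-retention bias, since retention reveals $\mu(i)=\mu_1$ and so reweights the photon numbers. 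And $\Delta_R^+$ describes only the signal-intensity part of the test slots. Note also that $\alpha_i$ is not linear in the posteriors; the split goes through the triangle inequality, $\alpha_i\le\alpha_i^{(1)}+\sum_{N\ge2}r_i^{(N)}$.

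Your assignment of the terms of $\varsigma$ is also off. The bounds on $\sum I_iq_i$ and on $\sum I_i$ enter $q_\text{T}+\varsigma$ and $1-\varsigma$ separately, so the factor $2$ cannot come from counting them together; each side of the ratio needs two complement-level terms of its own. In the paper the six events are the following: the sums of $\alpha_i$ and of $r_i$ over $\mathcal{T}$; separately, the same sums over its complement, each compared with its expectation; and the two retention bounds. On each side, the first term of $\varsigma$ covers the sample-side deviation. The factor $2$ counts the complement-side sampling deviation plus the retention deviation. As you assign them, the move from $\mathcal{T}$ to its complement carries no complement-side term, and the count of six only comes out if that move uses both a sample-side and a complement-side bound for each sum. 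If you instead transfer through the fixed total over $\mathcal{T}$ and its complement, the sample-side term picks up the factor $(n_\text{t}+N_\text{c})/N_\text{c}$, as $h_\text{t}'$ does in Proposition~\ref{prop:cert_rate}. That is a valid bound, but it is not the $\pknown^{\mathrm{WCP}}$ of Eq.~\eqref{eq:pk_wcp_full}.
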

\begin{proof}
The proof mirrors that of Proposition~\ref{prop:cert_rate}.  For a slot reported in the data mode put $r_i=\Pr[\mu(i)=\mu_1,s_i\in S_{G_i}\mid y]$ and $\alpha_i=|\Pr[\mu(i)=\mu_1,s_i=u_0\mid y]-\Pr[\mu(i)=\mu_1,s_i=u_1\mid y]|$, with $u_0,u_1$ the two states of $S_{G_i}$, and $r_i=\alpha_i=0$ for every other slot, so that a retained position has $q_i=\alpha_i/r_i$.  Split both numbers by photon number.  The vacuum state does not depend on $s$, so $\alpha_i\le\alpha_i^{(1)}+\sum_{N\ge2}r_i^{(N)}$.  On the interval event, Lemma~\ref{lem:guessing_bound} applied to the operators $M_d/Y_\text{ref}$ of Lemma~\ref{lem:decoy_sample} gives $\sum_{i\in\mathcal{T}}\alpha_i^{(1)}\le\min\{1,B(\delta_1)\}\sum_{i\in\mathcal{T}}r_i^{(1)}$, as in Lemma~\ref{lem:cert_sample}, and Eq.~\eqref{eq:decoy_retained_tag}, read for the sample yields, caps the multi-photon part of $\sum_{i\in\mathcal{T}}r_i$ at a fraction $\Delta_R^{+}$.  Hence $\sum_{i\in\mathcal{T}}\alpha_i\le q_\text{T}\sum_{i\in\mathcal{T}}r_i$, with $\sum_{i\in\mathcal{T}}r_i\ge n_\text{t}\bar r^{-}$.

The rest is steps (b)--(d) of the proof of Proposition~\ref{prop:cert_rate}, with Eq.~\eqref{eq:decoy_declaration_intervals} in place of Hoeffding's inequality, since the retained fraction is only of order $\eta\mu_1$.  The numbers $\alpha_i,r_i$ are fixed before $\mathcal{T}$ is drawn, so the sums of $\alpha_i$ and of $r_i$ over $\mathcal{T}$ and over its complement, and the sums of $I_iq_i$ and of $I_i$ over the complement, where the retained indicators $I_i$ are independent $\mathrm{Bern}(r_i)$, obey Eq.~\eqref{eq:decoy_declaration_intervals}.  These six events fail with probability at most $\epsilon_\text{cert}$ in total.  Chaining them gives $\sum I_iq_i\le\hat r(q_\text{T}+\varsigma)$ and $\sum I_i\ge\hat r(1-\varsigma)$ with $\hat r=\tfrac{N_\text{c}}{n_\text{t}}\sum_{i\in\mathcal{T}}r_i$, the worst case being $\sum_{i\in\mathcal{T}}r_i=n_\text{t}\bar r^{-}$.
\end{proof}

What does it cost?  For the honest receiver at $\eta=0.1$, with $\mu_2=\mu_1/10$, intensity probabilities $(0.5,0.4,0.1)$, $\epsilon_{\mathrm{PE}}=\epsilon_\text{cert}=10^{-6}$ and the test counts at their expected values, a test of $n_\text{t}=10^{11}$ pulses certifies $\pknown^{\mathrm{WCP}}=0.42$ and $0.45$ at $\mu_1=0.05$ and $0.1$, against $\pUSDplus=0.33$ and $0.36$; Fig.~\ref{fig:DS4_known_rate} shows the resulting blocks.  The slack has a floor, because the weak decoy resolves $\tilde Y_1$ only to a relative error of order $\mu_2$, and $B$ turns this error into its square root: $q_\text{T}\ge0.37$, $0.42$ and $0.51$ at $\mu_1=0.05$, $0.1$ and $0.2$, the last at the limit of Eq.~\eqref{eq:mu1_constraint}.

\section{Numerical Validation of the Decoy Layer}
\label{app:decoy_numerics_validation}

This appendix describes the Monte-Carlo simulation behind Section~\ref{sec:decoy_numerics}.  It checks the rate formulas of Lemmas~\ref{lem:beff} and~\ref{lem:beff_plus} and the binomial law of the per-block conclusive count, which fixes the honest abort probability.

Each detected data pulse delivers a zero-truncated $\mathrm{Poisson}(\eta\mu_1)$ number of photons to Bob, who applies the optimal $N$-copy unambiguous measurement of Eq.~\eqref{eq:bN}. Alongside this \emph{honest} receiver, which keeps every click as in Step~\ref{QSPIR:initial}, we simulate the strongest adversary charged by Lemma~\ref{lem:beff_plus} --- photon-number resolution, lossless collection, and $\pUSDN{N}\to1$ on every multi-photon emission. The decoy-certified quantities ($Y_{X,0}^{-}$, $Y_{X,1}^{-}$, $e_{X,1}$, $e_p$, $\Delta^{+}$, $\pUSDplus$) are computed from the simulated gains through the standard three-intensity decoy bounds~\cite{Ma_Qi_Zhao_Lo_decoy, Decoy_state_key_rate1}, with Hoeffding fluctuations at the stated sample sizes and the upward-fluctuated gain in the denominator of Eq.~\eqref{eq:Delta_plus}; the misalignment error is $e_{\text{det}}=0.01$ (defined in Section~\ref{sec:decoy_numerics}) and the $\mu_1$ sweep uses $\mu_2=\min(0.1,\mu_1/2)$ and $\mu_3=0$.  These figures check Lemmas~\ref{lem:beff} and~\ref{lem:beff_plus}.  The rate that enters the size rule is the certified rate $\pknown^{\mathrm{WCP}}$ of Appendix~\ref{app:decoy_receiver_certificate}, which also carries $B(\delta_1)$, depends on $\mu_2$, and lies higher: at least $0.37$ at $\mu_1=0.05$ for $\mu_2=\mu_1/10$.

Figure~\ref{fig:DS1_rates} confirms the rate formulas of Lemmas~\ref{lem:beff} and~\ref{lem:beff_plus}. The honest conclusive fraction follows the closed form of Lemma~\ref{lem:beff} to Monte-Carlo precision ($\sim10^{-3}$), staying within a few percent of the single-photon ceiling $\USDprob$ for all $\mu_1\le0.6$ at $\eta=0.1$. The adversary's conclusive fraction reaches $\pUSDplus=\USDprob+\Delta^{+}(1-\USDprob)$: within the simulation model, the ceiling of Lemma~\ref{lem:beff_plus} is attained by the worst-case receiver it is designed for, so no analysis of this form can certify a lower known-rate. Panel~(b) verifies Eq.~\eqref{eq:Delta_plus_scaling}: the multi-photon detected fraction \emph{grows} as the channel gets lossier and saturates at $1-e^{-\mu_1}$, confirming that, for exact statistics, the gap $\pUSDplus-\pUSDsift$ is $O(\mu_1)$ for every $\eta$. The fully certified $\Delta^{+}$ (computed through the yield bounds rather than from the true photon-number distribution) lies up to two percentage points above the idealized value even with infinite decoy statistics --- the residual looseness of the three-intensity estimator --- and finite statistics at $s_X=10^{8}$ add another ${\sim}0.7$ points to $\pUSDplus$.

Figure~\ref{fig:DS2_dist} shows that the per-block conclusive count is binomial for both receivers --- $\mathrm{Bin}(\numd,\pUSDsift)$ for the honest Bob and $\mathrm{Bin}(\numd,\pUSDplus)$ for the certified adversary --- with measured variance ratios $1.000\pm0.004$ over $2\times10^{5}$ blocks per case.  This fixes the honest abort probability; the privacy bound itself rests on the overlap law of Lemma~\ref{lem:overlap_mgf}, tested for single photons in Section~\ref{sec:sim_pair_attack}.

\begin{figure*}
    \centering
    \includegraphics[width=\textwidth]{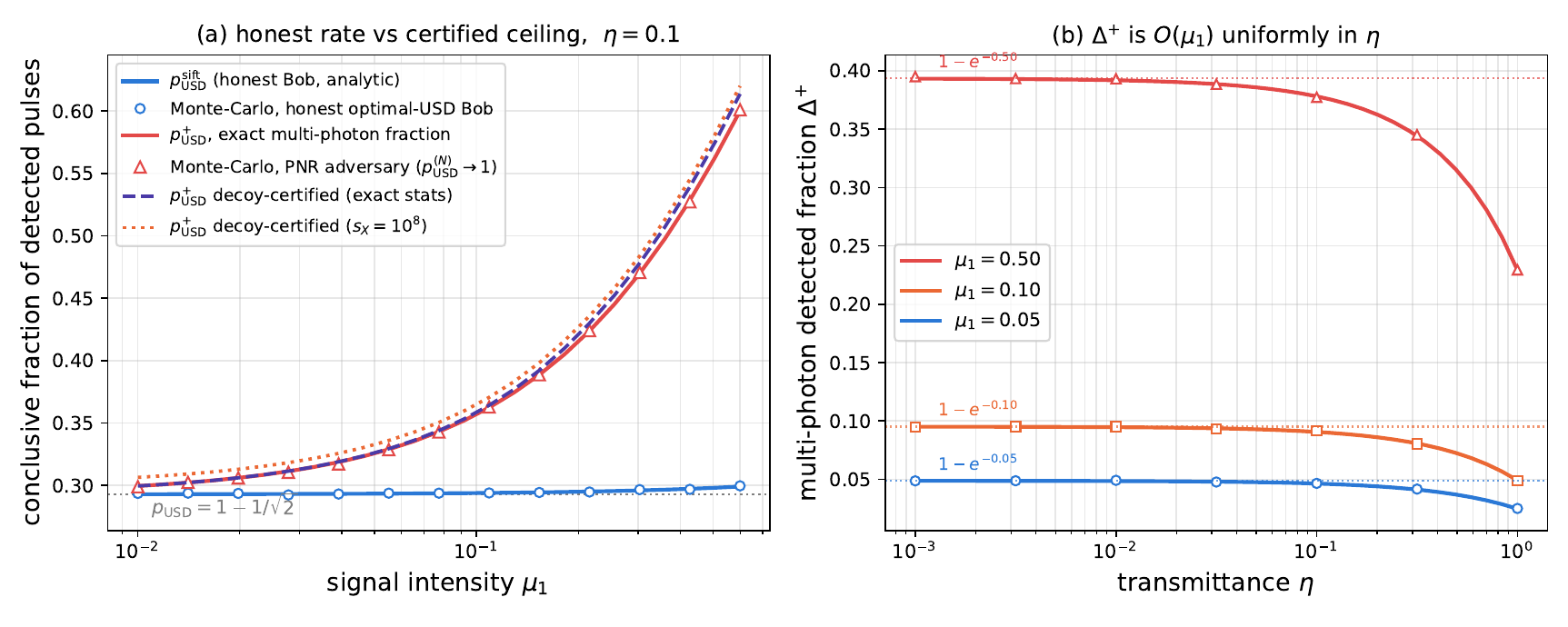}
    \caption{Photon-level Monte-Carlo validation of the effective USD rates. (a)~Conclusive fraction of detected data pulses vs signal intensity $\mu_1$ at $\eta=0.1$: honest optimal-USD Bob (circles) against the closed form of Lemma~\ref{lem:beff} (solid blue), and the photon-number-resolving adversary of Lemma~\ref{lem:beff_plus} (triangles), which saturates $\pUSDplus$ computed from the exact multi-photon fraction (solid red). The decoy-certified $\pUSDplus$ with exact statistics (dashed) and with $s_X=10^{8}$ $X$-sifted detections (dotted) lie slightly higher, showing the residual looseness of the three-intensity estimator. (b)~Multi-photon detected fraction $\Delta^{+}$ vs transmittance $\eta$ [Eq.~\eqref{eq:Delta_plus_scaling}]: markers are Monte-Carlo, curves the pure-loss prediction; $\Delta^{+}$ saturates at $1-e^{-\mu_1}$ (dotted asymptotes) as $\eta\to0$ --- the gap $\pUSDplus-\pUSDsift$ does not vanish with loss.}
    \label{fig:DS1_rates}
\end{figure*}

\begin{figure*}
    \centering
    \includegraphics[width=\textwidth]{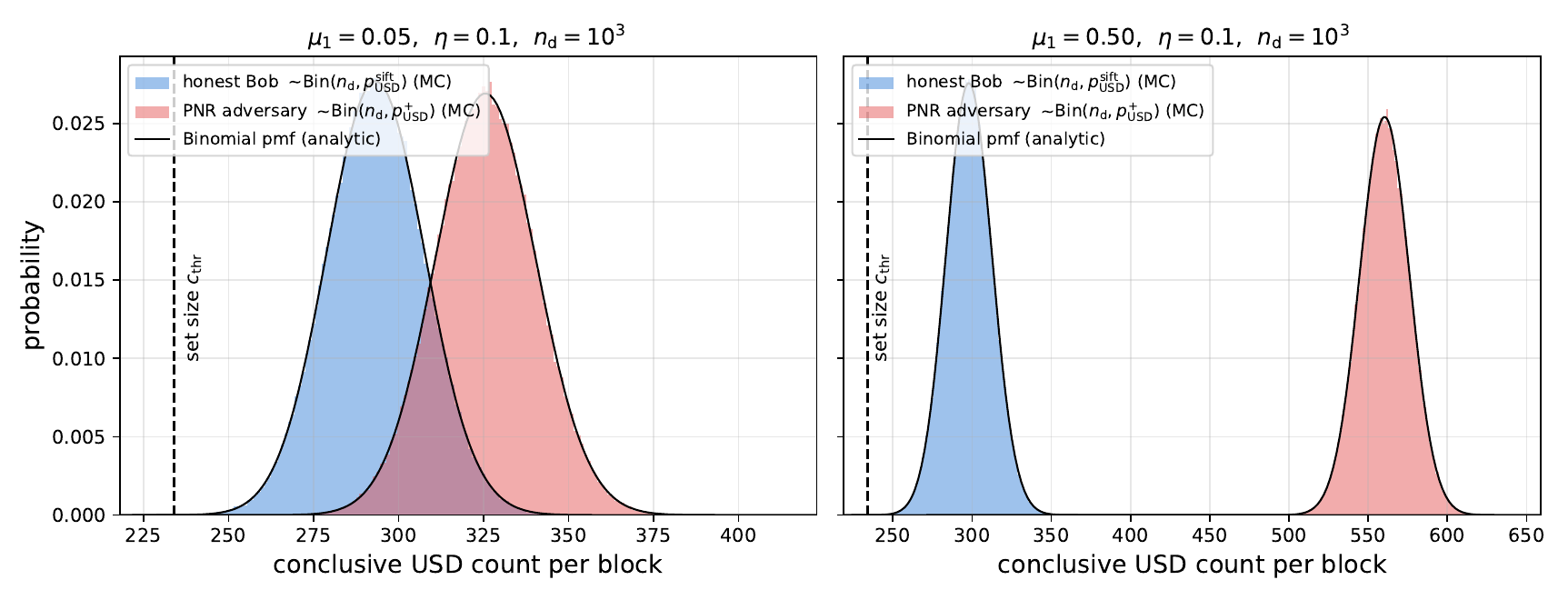}
    \caption{Per-block conclusive USD counts ($2\times10^{5}$ Monte-Carlo blocks of $\numd=10^{3}$ detected symbols, $\eta=0.1$) for the honest receiver (blue) and the certified adversary (red), with the binomial laws $\mathrm{Bin}(\numd,\pUSDsift)$ and $\mathrm{Bin}(\numd,\pUSDplus)$ (black curves). The dashed line is the set size $\cthr$ of Step~\ref{QSPIR:code_sizing} for $\epsilon_\text{abort}=10^{-3}$. The honest counts lie above it at both intensities, so correctness is unaffected. What changes is the tagging ceiling $\pUSDplus$ of Lemma~\ref{lem:beff_plus}, which moves from $\approx0.33$ at $\mu_1=0.05$ (left) to $0.56$ at $\mu_1=0.5$ (right), beyond the limit of Eq.~\eqref{eq:mu1_constraint}; the certified rate of Appendix~\ref{app:decoy_receiver_certificate} is higher still.}
    \label{fig:DS2_dist}
\end{figure*}

\clearpage
\bibliography{SPIR_Single_Database}%

\end{document}